\documentclass[12pt]{article}
\usepackage[height=8.5in,width=6.4in]{geometry}
\usepackage{color}
\usepackage{appendix}
\usepackage{empheq}
\usepackage{braket}
\usepackage{ascmac}
\usepackage{amsmath,amsthm,amssymb}
\theoremstyle{definition}
\newtheorem{dfn}{Definition}[section]
\newtheorem{prop}[dfn]{Proposition}
\newtheorem{lem}[dfn]{Lemma}
\newtheorem{thm}[dfn]{Theorem}
\newtheorem{cor}[dfn]{Corollary}

\usepackage{type1cm}
\usepackage{tikz}
\usetikzlibrary{intersections,calc,arrows.meta}
\usepackage[
 setpagesize=false,%
 bookmarks=true,%
 bookmarksdepth=tocdepth,%
 bookmarksnumbered=true,%
 colorlinks=true,%
 pdftitle={},%
 pdfsubject={},%
 pdfauthor={},%
 pdfkeywords={}%
]{hyperref}

\numberwithin{equation}{section}

\title{}
\author{}
\date{\empty}

\begin{document}

\begin{center}

{\LARGE \fontseries{mx}\selectfont
Generators of a Bosonic VOA and \\[1.5mm]
Connections to a Boundary VOA\par}

\vskip 1.2cm

Hikaru Sasaki,

\vskip .8cm

{\it
Department of Physics, Graduate School of Science\\
Osaka Metropolitan University, Osaka 558-8585, Japan
}

\end{center}

\vskip.7cm

\begin{abstract}
    We show how to identify generators of bosonic VOAs associated with $T_{[n-1,1]}^{[1^n]}(SU(n))$ and $T_{[n-1,1^2]}^{[2,1^{n-1}]}(SU(n+1))$,
    and conjecture that the algebraic structure of these VOAs can be constructed by these generators.
    We also find out that the boundary VOA associated with $T_{[n-1,1]}^{[1^n]}(SU(n))$ naturally includes the bosonic VOA.
\end{abstract}
\tableofcontents
\section{Introduction}
    \label{Introduction}
    It has been shown that a VOA can be constructed from a $3D$ $\mathcal{N}=4$ gauge theory on $\mathbb{R}_{\geq 0} \times \mathbb{C}$ \cite{Costello:2018fnz}.
    When using this method, we need an operation called topological twist. 
    In the case of $\mathcal{N}=4$, two kind of topological twists exist; H-twist and C-twist, and we study the VOA constructed by using the H-twist here.

    In the previous paper \cite{Nishinaka:2025nbe},
    T. Nishinaka and I studied bosonic VOAs of $T_{[n-1,1]}^{[1^n]}(SU(n))$ and $T^{[2,1^{n-1}]}_{[n-1,1^2]}(SU(n+1))$, where $U(1)$ gauge anomalies are canceled by using Heisenberg currents \cite{10.1093/imrn/rnaa031}.
    We also conjectured generators\footnote{In some examples \cite{Yoshida:2023wyt,Beem:2023dub,Coman:2023xcq}, the generators include the Higgs branch chiral operators of theories.} and compared these bosonic VOAs with VOAs associated with Argyres-Douglas theories of $(A_1,A_{2n-1})$ and $(A_1,D_{2n})$ \cite{Beem:2013sza,Creutzig:2017qyf}.
    However, we remain some problems whose results are only given:
    \begin{itemize}
        \item In \cite{Nishinaka:2025nbe}, we found the non-trivial operators which are different from the Higgs branch chiral operators. How do we check that these non-trivial operators are closed under the BRST cohomology?
        \item How do we derive the number of generators composed of these non-trivial operators?
    \end{itemize}
    Then, these procedures are given in this paper. 

    In \cite{Yoshida:2023wyt}, Y. Yoshida explained the idea that we could extend a bosonic VOA to a boundary VOA constructed by \cite{Costello:2018fnz}.
    Since the non-trivial operators which discovered in \cite{Nishinaka:2025nbe} were not included in \cite{Yoshida:2023wyt}, it is necessary for us to improve this idea.
    This improvement is so useful that we understand algebraic structures because of the following relationships:
    \begin{itemize}
        \item When a boundary VOA includes a bosonic VOA as a sub-VOA, the boundary VOA naturally inherits the properties of the bosonic VOA. 
        \item We observe that finding generators related of a boundary VOA is often easier than deriving those related of a bosonic VOA.
    \end{itemize}

    Thus, in this paper, we solve the above problems and supplement the results of the previous paper \cite{Nishinaka:2025nbe}.
    Then, we discover algebraic relationships between a bosonic VOA and a boundary VOA associated with an abelian quiver gauge theory based on Yoshida's idea \cite{Yoshida:2023wyt}.
    
    This paper is organized into three parts as follows.
    The first part covers the bosonic VOA associated with $T^{[1^n]}_{[n-1,1]} (SU(n))$ in Section.\ref{Preliminary of the bosonic VOA associated with T_{[n-1,1]}^{[1^n]}(SU(n))} to \ref{The OPEs with mathcal X I and mathcal Y I} and Appendix.\ref{OPE 1}.
    In Sec.\ref{Preliminary of the bosonic VOA associated with T_{[n-1,1]}^{[1^n]}(SU(n))}, we describe the BRST reduction and the idea about constructing operators except for Higgs branch chiral operators. In Sec.\ref{To derive non-trivial operators},
    we implement the idea given in Sec.\ref{Preliminary of the bosonic VOA associated with T_{[n-1,1]}^{[1^n]}(SU(n))}, and derive operators except for Higgs branch chiral operators. In Sec.\ref{How to construct primary fields from T i, mathcal X I, mathcal Y I} to \ref{Linearly independent operators consisting of X I and Y I},
    we identify generators of this bosonic VOA in generic case. In particular, we explain how to construct the stress tensor and intuitive primary fields in Sec.\ref{How to construct primary fields from T i, mathcal X I, mathcal Y I}. Then, in Sec.\ref{How to count independent operators} and \ref{Linearly independent operators consisting of X I and Y I}, we discuss the remaining problems that are not immediately clear in Sec.\ref{How to construct primary fields from T i, mathcal X I, mathcal Y I}.
    In Sec.\ref{The OPEs with mathcal X I and mathcal Y I} and Appendix.\ref{OPE 1}, the OPEs with this bosonic VOA are given. In particular, while some of OPEs given in Appendix.\ref{OPE 1} are highly nontrivial and intricate to calculate, two of them can nevertheless be derived directly through calculation as shown in Sec.\ref{The OPEs with mathcal X I and mathcal Y I}.
    
    The second part covers the bosonic VOA associated with $T^{[2,1^{n-1}]}_{[n-1,1^2]} (SU(n+1))$ in Sec.\ref{The bosonic algebra associated with T^{[2,1^{n-1}]}_{[n-1,1^2]}(SU(n+1))}, \ref{Generators of the bosonic VOA of T^{[2,1^{n-1}]}_{[n-1,1^2]}(SU(n+1))} and Appendix.\ref{OPE 2}. Sec.\ref{The bosonic algebra associated with T^{[2,1^{n-1}]}_{[n-1,1^2]}(SU(n+1))} is similar to Sec.\ref{Preliminary of the bosonic VOA associated with T_{[n-1,1]}^{[1^n]}(SU(n))} and \ref{To derive non-trivial operators};
    we construct the BRST cohomology and derive the candidates of generators regarding with the bosonic VOA associated with $T^{[2,1^{n-1}]}_{[n-1,1^2]}(SU(n+1))$. 
    Sec.\ref{Generators of the bosonic VOA of T^{[2,1^{n-1}]}_{[n-1,1^2]}(SU(n+1))} is also similar to Sec.{\ref{How to construct primary fields from T i, mathcal X I, mathcal Y I}} to \ref{Linearly independent operators consisting of X I and Y I}; we identify generators of this bosonic VOA. The OPEs with this bosonic VOA are given in Appendix.\ref{OPE 2}.

    The third part covers the relationship between a bosonic VOA and a boundary VOA in Sec.\ref{Fermionic extension} and Appendix.\ref{To generalize fermionic extension for abelian quiver gauge theories}. In Sec.\ref{Fermionic extension}, we consider the relationship between the bosonic VOA and the boundary VOA associated with $T_{[n-1,1]}^{[1^n]} (SU(n))$. 
    In Appendix.\ref{To generalize fermionic extension for abelian quiver gauge theories}, we propose how to generalize fermionic extension for other abelian quiver gauge theories based on an example in Sec.\ref{Fermionic extension}.
\section{Preliminary of the bosonic VOA associated with \texorpdfstring{$T_{[n-1,1]}^{[1^n]}(SU(n))$}{}}
    \label{Preliminary of the bosonic VOA associated with T_{[n-1,1]}^{[1^n]}(SU(n))}
    From this section to Sec.\ref{The OPEs with mathcal X I and mathcal Y I}, we study the bosonic VOA associated with $T^{[1^n]}_{[n-1,1]}\left(SU(n)\right)$ which is $3D$ $\mathcal{N}=4$ abelian quiver gauge theory.
    In order to analyze generators of the bosonic VOA associated with $T^{[1^n]}_{[n-1,1]}(SU(n))$, let us construct the BRST cohomology and introduce how Ideas on the construction of non-trivial operators.
    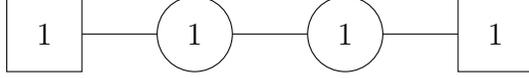
\begin{figure}[t]
        \center
        \begin{tikzpicture}
            \draw (0,-0.5) rectangle (1,0.5);
            \draw (0.5,0) node{1};
            \draw (1,0)--(2,0);
            \draw (2.5,0) circle[radius=0.5cm];
            \draw (2.5,0) node{1};
            \draw (3,0)--(4,0);
            \draw (4.5,0) circle[radius=0.5cm];
            \draw (4.5,0) node{1};
            \draw (5,0)--(6,0);
            \draw (6,-0.5) rectangle (7,0.5);
            \draw (6.5,0) node{1};
        \end{tikzpicture}
        \caption{The quiver diagram for $T_{[2,1]}^{[1^3]}(SU(3))$.}
    \end{figure}
    \subsection{BRST reduction}
        \label{BRST reduction 1}
        In this subsection, we construct the BRST cohomology by using the fields on the boundary.
        To do so, let us explain the fields on the boundary derived from the fields in the bulk, that is hypermultiplts and vector multiplets, by $(0,4)$-deformed boundary condition\cite{Costello:2018fnz,Gaiotto:2017euk}.
        
        By $(0,4)$-deformed boundary condition, symplectic bosons $(X_i,Y_i)$ satisfying the following OPEs can be derived from hypermultiplts $(q_i,\tilde{q}_i)$ in the bulk,
        \begin{align}
            X_i(z) Y_j(0) \sim \frac{\delta_{ij}}{z}~.
        \end{align}
        Similarly, vector multiplets in the bulk become $bc$-ghosts $(b_a,c^a)$ satisfying the following OPEs on the boundary,
        \begin{align}
            b_a(z) c^b(0) \sim \frac{\delta_{a}^{b}}{z}~.
        \end{align}
        
        The gauge group of $T_{[n-1,1]}^{[1^n]}(SU(n))$ is $\prod_{i=1}^{n-1} U(1)_i$, and $(q_i,\tilde{q}_i)$ has the gauge symmetry $U(1)_{i-1} \times U(1)_{i}$, where $q_i$ (resp. $\tilde{q}_i$) is the fundamental (resp. anti-fundamental) representation for $U(1)_{i-1}$
        and the anti-fundamental (resp. fundamental) representation for $U(1)_{i}$. By inheriting these properties to symplectic bosons, we can define the BRST current $\tilde{J}_{BRST}$ as follows \footnote{
            We use the normal order product defined by (\ref{product}) for $p=0$ and the following normal order product of operators $A$, $B$, $C$, ``$(A(BC)_0)_0$'' is abbreviated as ``$ABC$''.
        }
        \begin{align}
            \tilde{J}_{BRST} \coloneqq \sum_{a=1}^{n-1}c^a \left(X_a Y_a -X_{a+1} Y_{a+1}\right)~.
        \end{align}
        We can perform the BRST transformation regarding with $(X_i,Y_i)$ by using $\tilde{J}_{BRST}$.

        This definition of $\tilde{J}_{BRST}$, however, is satisfying $\tilde{J}_{BRST}(z) \tilde{J}_{BRST} \nsim 0$. In other words, the $U(1)$ gauge anomalies emerge on the boundary.
        In order to remove these gauge anomalies, we introduce the Heisenberg currents $h_i$ satisfying the following OPEs
        \begin{align}
            h_i(z) h_j(0) \sim \frac{C_{ij}}{z^2}~,
        \end{align}
        where $C_{ij}$ is the Cartan matrix of $\mathfrak{su}(n)$. By using these currents, we refine the BRST current as follows
        \begin{align}
            J_{BRST} \coloneqq \sum_{a=1}^{n-1}c^{a}\left(X_a Y_a -X_{a+1} Y_{a+1} +h_a\right)~.
        \end{align}
        Then, the BRST charge is defined by
        \begin{align}
            Q \coloneqq \oint\frac{dz}{2\pi i}J_{BRST}(z)~.
        \end{align}
        Since satisfying $J_{BRST}(z) J_{BRST}(0) \sim 0$, the BRST charge is nilpotent, that is $Q^2=0$.
        Thus, we can construct the BRST cohomology by using $Q$. 

        After this subsection, we will derive closed operators only consisting symplectic bosons under the BRST cohomology. In particular, we will construct non-trivial operators \cite{Nishinaka:2025nbe}.
        To do so, in the next subsection, we will study how to represent the stress tensor only consisting of symplectic bosons.
    \subsection{Ideas on the construction of non-trivial operators}
        \label{Ideas on the construction of non-trivial operators}
        In \cite{Nishinaka:2025nbe}, T.Nishinaka and I proposed generators of the algebraic construction under the BRST cohomology:
        $U$, $\mathcal{X}$ and $\mathcal{Y}$ which come from the Higgs branch chiral operators are defined by
        \begin{align}
            U \coloneqq -\frac{1}{n} \sum_{i=1}^{n} X_i Y_i~,\quad \mathcal{X} \coloneqq \prod_{i=1}^{n} X_i~,\quad \mathcal{Y} \coloneqq \prod_{i=1}^{n} Y_i~,
        \end{align}
        and the others $T_i$, $W_i$, $\mathcal{X}_{I}$ and $\mathcal{Y}_{I}$. \footnote{In this paper, $T_i$, $W_i$, $\mathcal{X}_{I}$ and $\mathcal{Y}_{I}$ will be defined in Sec.\ref{The non-trivial operators}.}
        However, we did not explain how we discovered $T_i$, $W_i$, $\mathcal{X}_{I}$ and $\mathcal{Y}_{I}$ before.
        Then, let us explain how we arrived at our notation in this subsection.

        The idea of our notation can be obtained by deforming the stress tensor $T$ under the BRST cohomology. The stress tensor $T$ in the bosonic VOA is defined by
        \begin{align}
            \label{Total T}
            T \coloneqq T_{sb} + T_h + T_{bc}~, \\
            \label{T SB}
            T_{sb} \coloneqq \frac{1}{2}\sum_{i=1}^{n}\left(X_i \partial Y_i - \partial X_i Y_i\right)~,\\
            \label{T h and T bc}
            T_h \coloneqq \frac{1}{2}\sum_{i,j=1}^{n-1} C^{ij}h_i h_j~,\quad T_{bc} \coloneqq -\sum_{a=1}^{n-1} b_a \partial c^a~.
        \end{align}
        Rewriting $T_h + T_{bc}$ using symplectic bosons $(X_i,Y_i)$ under the BRST cohomology, we can obtain the following expression of the stress tensor $T$ only using $X_i$, $Y_i$ \cite{Nishinaka:2025nbe};
        \footnote{Here, we define ``$A \equiv B$'' as ``$A-B = \left(Q\text{-exact}\right)$''.}
        \begin{align}
            \label{definition of T}
            T \equiv \frac{1}{2} \sum_{i=1}^{n} \left\{\left(X_i \partial Y_i -\partial X_i Y_i\right) +\mathcal{A}_i \mathcal{A}_i\right\} -\frac{n}{2}U^2~,
        \end{align}
        where we set $\mathcal{A}_i \coloneqq -X_i Y_i$.
        Since $T$ and $-\frac{n}{2} U^2$ are closed operators, the following operator is also closed;
        \begin{align}
            \sum_{i=1}^{n} \left\{\left(X_i \partial Y_i -\partial X_i Y_i\right) +\mathcal{A}_i \mathcal{A}_i\right\}~.
        \end{align}
        In particular, we can check that $\left(X_i \partial Y_i -\partial X_i Y_i\right) +\mathcal{A}_i \mathcal{A}_i$ is closed;
        \begin{align}
            \label{from mathcal D and hat A}
            Q  \left\{\left(X_i \partial Y_i -\partial X_i Y_i\right) +\mathcal{A}_i \mathcal{A}_i\right\} = 0~.            
        \end{align}

        Based on these examples, we would like to assume the existence of $\mathcal{D}_i$ and $\hat{\mathcal{A}}_i$ that perform the following actions;
        \begin{itemize}
            \item $\mathcal{D}_i$ performs the following actions;
            \begin{align*}
                \left\{
                    \begin{aligned}
                        X_j &\mapsto \delta_{ij} \partial X_j \\
                        Y_j &\mapsto - \delta_{ij} \partial Y_j
                    \end{aligned}
                \right.~.
            \end{align*}
            In particular, $\mathcal{D}_i$ are satisfied the Leibniz rule as follows;
            \begin{align*}
                \mathcal{D}_i \left(X_i Y_i \right) &= \left(\mathcal{D}_i X_i\right) Y_i + X_i \mathcal{D}_i Y_i \\
                &= \partial X_i Y_i - X_i \partial Y_i~.
            \end{align*}
            \item For any operator $\mathcal{O}$, $\hat{\mathcal{A}}_i$ performs the action;
            \begin{align*}
                \hat{\mathcal{A}}_i \mathcal{O} = \mathcal{A}_i \mathcal{O}~.
            \end{align*}
        \end{itemize}
        Then, $X_i \partial Y_i - \partial X_i Y_i + \mathcal{A}_i \mathcal{A}_i$ are equal to $\mathcal{D}_i \mathcal{A}_i + \hat{\mathcal{A}}_i \mathcal{A}_i$.
        These operators look very similar to ``field strength $F = dA + A \wedge A$''.
        
        Therefore, assuming now that $\mathcal{D}_i$ and $\hat{\mathcal{A}}_i$ can be defined,
        we find the following interesting possibilities. From some operator $\mathcal{O}$, we may construct a closed operator $\left(\mathcal{D}_i +e \hat{\mathcal{A}}_{i} \right)\mathcal{O}$,
        \begin{align}
            \label{Can this equation be solved?}
            Q \left(\mathcal{D}_i+ e \hat{\mathcal{A}}_i\right) \mathcal{O} = 0~,
        \end{align}
        where $e$ is a proper constant, and $\mathcal{D}_i + e \hat{\mathcal{A}}_i$ closely resemble ``covariant derivatives $d + e A$''. However, the difficulty between solving $Q \mathcal{O} = 0$ and solving (\ref{Can this equation be solved?}) is almost equal.
        
        In order to simplify this problem, we consider the commutator $\left[Q,\partial \right] = 0$. By using $\left[Q, \partial \right] = 0$, we know that $Q \left(\partial \mathcal{O} \right) = \partial \left(Q \mathcal{O}\right)$ always holds.
        In particular, when $Q \mathcal{O} = 0$, we can easily find out that $Q \left(\partial \mathcal{O} \right) =0$ holds. Then, based on a closed operator $\mathcal{O}$, we can easily construct closed operators $\partial \mathcal{O}$, $\partial^2 \mathcal{O}$ and so on.
        
        Similarly, we except that considering the commutator $\left[Q, \mathcal{D}_i +e \hat{\mathcal{A}}_i\right]$ can make solving (\ref{Can this equation be solved?}) easier.
        After computing $\left[Q,\mathcal{D}_i +e \hat{\mathcal{A}}_i\right] \mathcal{O}$ in general, it suffices to apply the condition (\ref{Can this equation be solved?}).
        In other words, we would like to study how to construct a closed operator by deforming $\mathcal{O}$. Since we deform the original Higgs branch to construct the bosonic VOA associated with $T^{[1^n]}_{[n-1,1]}(SU(n))$,
        we naturally predict that the candidates of $\mathcal{O}$ are related with the Higgs branch chiral operators
        and the F-term conditions before deforming the original Higgs branch.
        Then, in the next section, we will define $\mathcal{D}_{i}$ and $\hat{\mathcal{A}}_i$ and study a non-trivial closed operators using $\mathcal{D}_{i}$ and $\hat{\mathcal{A}}_i$.
\section{To derive non-trivial operators}
    \label{To derive non-trivial operators}
    In the previous subsection, we expected that an analogy of a gauge theory exists for the bosonic VOA. 
    Then, we define ``partial derivative'' $\mathcal{D}_{i}$, correction operator $\hat{\mathcal{A}}_{i}$, and ``covariant derivative'' $\mathcal{D}_i +e \hat{\mathcal{A}}$, where $e$ is a proper constant, in this section.
    Moreover, we study the existence of closed operators $\mathcal{D}_i \mathcal{O} + e \hat{\mathcal{A}}_i \mathcal{O}$.

    \subsection{The definition of ``covariant derivatives'' \texorpdfstring{$\mathcal{D}_i +e \hat{\mathcal{A}}_i$}{}}
        \label{The definition of ``covariant derivatives'' mathcalD i + e hat A_i}
        Before we define $\mathcal{D}_i$ and $\hat{\mathcal{A}}_i$, let us define the following product $(AB)_p$:\footnote{See for instances \cite{kac1998vertex,BA68850995}. We follow the notation of \cite{doi:10.1142/S0129183191001001}}
        \begin{align}
            \label{product}
            (AB)_p(w) \coloneqq \oint \frac{dz}{2\pi i}(z-w)^{p-1} A(z) B(w)~,
        \end{align}
        where $p$ is an integer. In particular, when $p=0$, (\ref{product}) is called for the normal order product.
        Using this product, the following relationships hold.
        \begin{align}
            \label{OPEdefs1}
            \left(B A\right)_q &= (-1)^{|A||B|}\sum_{l \geq q} \frac{(-1)^l}{(l-q)!} \partial^{l-q} (AB)_{l}~,\\
            \label{OPEdefs2}
            \left(A \left(B C\right)_p \right)_q &= (-1)^{|A||B|} \left(B \left(AC\right)_q\right)_p 
            +\sum_{l>0} \binom{q-1}{l-1} \left(\left(AB\right)_{l} C\right)_{p+q-l}~,\\
            \label{OPEdefs3}
            (\partial A B)_p &= -(p-1) \left(A B\right)_{p-1}~,\\
            \label{OPEdefs4}
            (A \partial B)_p &= (p-1) (A B)_{p-1} +\partial (A B)_p~,
        \end{align}
        where $p$ and $q$ are integers and $|A|$ (resp. $|B|$) is the parity of $A$ (resp. $B$).\footnote{$\binom{a}{b}$ is binomial coefficient.
        When we use a matrix in later sections, a matrix is written by ``$\left[*\right]$''.}
        
        Then, let us define the linear operators $\mathcal{D}_i$ which are similar to the ordinary partial derivatives,
        and $\hat{\mathcal{A}}_i$ which are similar to gauge fields.
        The latter is easily defined by 
        \begin{align}
            \hat{\mathcal{A}}_i \mathcal{O} \coloneqq \left(\mathcal{A}_i \mathcal{O}\right)_0~.
        \end{align}
        Since
        \begin{align}
            \left\{
                \begin{aligned}
                    \left(\mathcal{A}_i X_j\right)_0 &= \delta_{ij} \partial X_j -\left(X_i \left(Y_i X_j\right)_0 \right)_0\\
                    \left(\mathcal{A}_i Y_j\right)_0 &= -\delta_{ij} \partial Y_j -\left(X_i \left(Y_i Y_j\right)_0 \right)_0
                \end{aligned}
            \right.~,
        \end{align}
        we can define $\mathcal{D}_i$ as follows
        \begin{align}
            \label{definition of mathcal D}
            \mathcal{D}_i \mathcal{O} \coloneqq \left(\mathcal{A}_i \mathcal{O}\right)_0 + \left(X_i \left(Y_i \mathcal{O}\right)_0 \right)_0~,
        \end{align}
        where $\mathcal{O}$ is an operator.

        Based on these calculations, let us compute $\mathcal{D}_i \mathcal{O}$ and check whether this definition (\ref{definition of mathcal D}) is well-defined.
        When $\mathcal{O}= \partial^d X_j$ or $\mathcal{O} = \partial^d Y_j$,
        \begin{align}
            \mathcal{D}_i \left(\partial^d X_j\right) 
            &= \frac{\delta_{ij}}{d+1}\partial^{d+1} X_i~,\\
            \mathcal{D}_i \left(\partial^d Y_j\right)
            &= -\frac{\delta_{ij}}{d+1} \partial^{d+1} Y_i~.
        \end{align}
        More generally, when given inductively as $\mathcal{O} = \left(\partial^d X_j \tilde{O}\right)_0$ or $\mathcal{O}=\left(\partial^d Y_j \tilde{O}\right)_0$,
        \begin{align}
            \mathcal{D}_i\left(\partial^d X_j \tilde{O}\right)_0
            &= \frac{\delta_{ij}}{d+1}\left(\partial^{d+1} X_j \tilde{\mathcal{O}}\right)_0 + \left(\partial^d X_j \mathcal{D}_i \tilde{O}\right)_0~,\\
            \mathcal{D}_i \left(Y_j^{(d)} \tilde{O}\right)_0 &= -\frac{\delta_{ij}}{d+1} \left(\partial^{d+1} Y_j \tilde{O}\right)_0 +\left(\partial^d Y_j \mathcal{D}_i \tilde{O}\right)_0~.
        \end{align}
        Thus, ``partial derivative'' $\mathcal{D}_i$ performs the desired action.

        We note the properties of $\mathcal{D}_i$. 
        \begin{enumerate}
            \item Since the above calculations regarding with $\mathcal{D}_i$, these derivatives are clearly commutative; $[\mathcal{D}_i~,\mathcal{D}_j]=0~$.
            However, we note $\left[\partial, \mathcal{D}_i \right] \neq 0$ in general.
            \item The Leibniz rule for $\mathcal{D}_i$ may not hold in general. 
            For example, let us compute $\mathcal{D}_i \left(\left(X_iY_i\right)_0 X_i\right)_0$.        
            By performing following algebraic manipulation,
            \begin{align*}
                \left(\left(X_iY_i\right)_0 X_i\right)_0 = \left(X_i \left(X_i Y_i\right)_0 \right)_0 -\partial X_i~,
            \end{align*}
            we can derive the following operation that $\mathcal{D}_i$ performs on $\left(\left(X_i Y_i \right)_0 X_i\right)_0$.
            \begin{align*}
                \mathcal{D}_i\left(\left(X_iY_i\right)_0 X_i\right)_0 &= \mathcal{D}_i \left\{\left(X_i \left(X_i Y_i\right)_0 \right)_0 -\partial X_i\right\} \\
                &= 2\left(X_i \left(\partial X_i Y_i\right)_0 \right)_0-\left(X_i \left(X_i \partial Y_i\right)_0 \right)_0 -\frac{1}{2} \partial^2 X_i~.
            \end{align*}
            On the other hand,
            \begin{align*}
                \left(\mathcal{D}_i\left(X_iY_i\right)_0 X_i\right)_0 + \left(\left(X_iY_i\right)_0 \mathcal{D}_iX_i\right)_0 
                &=  2\left(X_i \left(\partial X_i Y_i\right)_0 \right)_0-\left(X_i \left(X_i \partial Y_i\right)_0 \right)_0 -2 \partial^2 X_i~.
            \end{align*}
        \end{enumerate}
        
        From the above results, it should be noted that
        \begin{align}
            \mathcal{D}_i\left(\mathcal{O}_1 \mathcal{O}_2\right)_0 \neq \left(\left\{\mathcal{D}_i\mathcal{O}_1\right\} \mathcal{O}_2\right)_0
            + \left(\mathcal{O}_1 \left\{\mathcal{D}_i\mathcal{O}_2\right\}\right)_0
        \end{align}
        in general. Thus, $\mathcal{D}_i$ is satisfied with the Leibniz rule if a operator $\mathcal{O}$ is a nest structure for normal order product.

        The operator $\hat{\mathcal{A}}_i$ have two following properties.
        \begin{itemize}
            \item The commutation relations between $\hat{\mathcal{A}}_i$ and $\hat{\mathcal{A}}_j$ are trivial.
            \begin{align}
                \left[\hat{\mathcal{A}}_i, \hat{\mathcal{A}}_j\right] = 0~.
            \end{align}
            \item The commutation relations between $\mathcal{D}_i$ and $\hat{\mathcal{A}}_j$ for $i \neq j$ are also trivial
            \begin{align}
                &\left[\mathcal{D}_i, \hat{\mathcal{A}}_{j} \right] = 0 \quad (i \neq j)~.
            \end{align}
        \end{itemize}
        
        Thus, we can naturally define ``covariant derivative'' $D_{i,(e)}$ as follows:
        \begin{align}
            D_{i,(e)} \mathcal{O} \coloneqq \mathcal{D}_i \mathcal{O} +e \hat{\mathcal{A}}_{i} \mathcal{O}~.
        \end{align}
        In the next section, we derive closed operators using ``covariant derivatives'' $D_{i,(e)}$.
    \subsection{Commutation relation for ``covariant derivatives'' and the BRST charge}
        \label{Commutation relation for ``covariant derivatives'' and the BRST charge}
        In this subsection, let us study the commutation relation for the BRST charge $Q$ and ``covariant derivative'' $\mathcal{D}_i + e \hat{\mathcal{A}}_i$.
        Here, we would like to generate new closed operators by using commutator $\left[Q,\mathcal{D}_i + e\hat{\mathcal{A}}_i\right]$.         Considered general cases, we need to use the following operator $\mathcal{O}$ such as
        \begin{align}
            \label{mono}
            \mathcal{O} = \partial^{a_1} A_1 \partial^{a_2} A_2 \cdots \partial^{a_{m-1}} A_{m-1} \partial^{a_m} A_m~,
        \end{align}
        where $A_1,\dots,A_m$ are $X_j$ or $Y_j$ for $j=1,\dots,n$.
        Then,
        \begin{align}
            \left\{
                \begin{aligned}
                    [Q~,\mathcal{D}_i]\mathcal{O} 
                    &= -\sum_{k=1}^{m}\frac{\delta_{A_k=X_i}+\delta_{A_k=Y_i}}{a_k+1}
                    \left\{\partial^{a_k +1} c_i\mathcal{O}_{[k]} -\partial^{a_k +1} c_{i-1}\mathcal{O}_{[k]}\right\} \\
                    [Q,\hat{\mathcal{A}}_i]\mathcal{O} 
                    &= \partial c_i \mathcal{O} - \partial c_{i-1} \mathcal{O}~.
                \end{aligned}
            \right.~,
        \end{align}
        where $\mathcal{O}_{[i]}$ is the operator with $\partial^{a_i} A_i$ replaced with $A_i$.

        Then, from the commutations $\left[Q,\mathcal{D}_i\right] \mathcal{O}$ and $\left[Q,\hat{\mathcal{A}}_i\right] \mathcal{O}$, we can easily derive the following relation
        \begin{align}
            [Q,D_{i,(e)}] \mathcal{O} &= -\sum_{k=1}^{m}\frac{\delta_{A_k=X_i} +\delta_{A_k=Y_i}}{a_k+1}
            \left\{\partial^{a_k +1} c_i \mathcal{O}_{[k]} -\partial^{a_k +1} c_{i-1} \mathcal{O}_{[k]} \right\} \notag\\
            &\quad +e \left\{\partial c_i \mathcal{O}  -\partial c_{i-1} \mathcal{O} \right\}~.
        \end{align}
        Separating the sum ``$\sum_{k=1}^m \cdots$'' into ``$\sum' \cdots $'' with $a_k \geq 1$, and ``$\sum'' \cdots $'' with $a_k=0$ for $k=1,\dots,m$, we can transform the commutation relation as follows:
        \begin{align}
            \label{commutation relation}
            [Q,D_{i,(e)}] \mathcal{O} &= -\sideset{}{'}{\sum}_{k=1}^{m}\frac{\delta_{A_k=X_i} +\delta_{A_k=Y_i}}{a_k+1}
            \left\{\partial^{a_k +1} c_i\mathcal{O}_{[k]} -\partial^{a_k +1} c_{i-1}\mathcal{O}_{[k]}\right\} \notag\\
            &\quad +\left(e -\sideset{}{''}{\sum}_{k=1}^{m}(\delta_{A_k=X_i} +\delta_{A_k=Y_i})\right)\left\{\partial c_i \mathcal{O} -\partial c_{i-1} \mathcal{O} \right\}~.
        \end{align}
        From this relation, we simply find the case that $[Q,D_{i,(e)}]\mathcal{O}=0$ can hold.
        \begin{itemize}
            \item The total number of $X_i$ and $Y_i$ contained on $\mathcal{O}$ is equal to $e$.
            \item $\mathcal{O}$ does not contain neither $\partial^a X_i$ or $\partial^a Y_i$ when $a$ is a positive integers.\footnote{In fact, when $\mathcal{O} = \sum_{h} \mathcal{O}_{h}$ where $\mathcal{O}_{h}$ are linearly independent operators and $h \geq 2$,
            we can include $\partial^a X_i$ and $\partial^a Y_i$.}
        \end{itemize}
        In particular, when $\mathcal{O}$ is satisfying these conditions and $Q \mathcal{O} = 0$, the following equality can also hold
        \begin{align*}
            Q\left(D_{i,(e)}\mathcal{O}\right) = D_{i,(e)}Q\mathcal{O} = 0~.
        \end{align*}
        Then, $D_{i,(e)}\mathcal{O}$ is either a closed operator or equal to $0$.

        It is also necessary for us to consider the non-commutative case $[Q,D_{i,(e)}]\mathcal{O} \neq 0$. In this case, the most important and interested condition for $e$ and $\mathcal{O}$ is $Q\mathcal{O} \neq 0$ and $QD_{i,(e)}\mathcal{O} = 0$ simultaneously.
        This relationship is equivalent of the following equality.
        \begin{align}
            \label{non-commutative case}
            D_{i,(e)}Q\mathcal{O} &= \sideset{}{'}{\sum}_{k=1}^{m}\frac{\delta_{A_k=X_j}\delta_{ij}+\delta_{A_k=Y_j}\delta_{ij}}{a_k+1}
            \left\{\partial^{a_k +1}c_j\mathcal{O}_k  -\partial^{a_k +1} c_{j-1}\mathcal{O}_k \right\} \notag \\
            &\quad +\left(-e +\sideset{}{''}{\sum}_{k=1}^{m}(\delta_{A_k=X_j} +\delta_{A_k=Y_j})\delta_{ij}\right)\left\{\partial c_i\mathcal{O} -\partial c_{i-1} \mathcal{O} \right\}~.
        \end{align}
        We consider the following examples:
        \begin{enumerate}
            \item For $\mathcal{O} = (X_i Y_i)_0$, we can directly derive the following equality. 
            \begin{align*}        
                D_{i,(e)}Q\mathcal{O} = 
                    e \left\{\partial c_{i} \mathcal{O}  -\partial c_{i-1} \mathcal{O} \right\}~.
            \end{align*}
            Compared with the equality (\ref{non-commutative case}), we naturally get the following equation. 
            \begin{align*}
                2(e-1) \left\{\partial c_{i} \mathcal{O} -\partial c_{i-1} \mathcal{O} \right\} = 0~.
            \end{align*}
            Therefore, when we choose $e=1$, $D_{i,(1)} \mathcal{O} = -X_i X_i Y_i Y_i +2 \partial X_i Y_i -2 X_i \partial Y_i$ just becomes closed.
            \item We define the operators $\mathcal{O}_{1}$, $\mathcal{O}_{2}$, $\mathcal{O}_{3}$ and $\mathcal{O}_4$ as follows:
            \begin{align*}
                \mathcal{O}_{1} \coloneqq X_i X_i Y_i Y_i~,\quad \mathcal{O}_{2} \coloneqq \partial X_i Y_i~,
                \quad \mathcal{O}_{3} \coloneqq X_i \partial Y_i~,\quad \mathcal{O}_{4} \coloneqq X_i Y_i~.
            \end{align*}
            Using these operators, we directly get the following relations : 
            \begin{align*}
                D_{i,(e)}Q\mathcal{O}_{1} &= 
                4e\left\{\partial c_{i} \mathcal{O}_{1}
                -\partial c_{i-1} \mathcal{O}_{1} \right\}
                -4(e+1)\left\{\partial c_{i} \mathcal{O}_{2} 
                - \partial c_{i-1} \mathcal{O}_{2} \right\} \\
                &\quad +4(e+1) \left\{\partial c_{i} \mathcal{O}_{3}
                - \partial c_{i-1} \mathcal{O}_{3}\right\}~,
            \end{align*}
            \begin{align*}
                D_{i,(e)}Q \mathcal{O}_{2} 
                &= e \left\{\partial c_i\mathcal{O}_{1} -\partial c_{i-1}\mathcal{O}_{1} \right\}
                -(e+1) \left\{\partial c_{i} \mathcal{O}_{2} -\partial c_{i-1} \mathcal{O}_{2} \right\} \\
                &\quad +(e+1) \left\{\partial c_{i} \mathcal{O}_{3} -\partial c_{i-1} \mathcal{O}_{3} \right\}
                +\frac{e}{2}\left\{\partial^2 c_{i}\mathcal{O}_{4} -\partial^2 c_{i-1} \mathcal{O}_{4} \right\}~,
            \end{align*}
            and
            \begin{align*}
                D_{i,(e)}Q \mathcal{O}_{3}
                &= -e \left\{\partial c_i \mathcal{O}_{1} -\partial c_{i-1} \mathcal{O}_{1} \right\}
                +(e+1) \left\{\partial c_{i} \mathcal{O}_{2} -\partial c_{i-1} \mathcal{O}_{2} \right\} \\
                &\quad -(e+1) \left\{\partial c_{i} \mathcal{O}_{3} -\partial c_{i-1} \mathcal{O}_{3} \right\}
                +\frac{e}{2}\left\{\partial^2 c_{i}\mathcal{O}_{4} -\partial^2 c_{i-1} \mathcal{O}_{4} \right\}~.
            \end{align*}
            Then, when we set $\mathcal{O} = a_1 \mathcal{O}_1 + a_2 \mathcal{O}_2 + a_3 \mathcal{O}_3$, we derive the following result:
            \begin{align*}
                D_{i,(e)} Q \mathcal{O} &=
                \left(4a_1 +a_2 -a_3 \right)e\left\{\partial c_{i} \mathcal{O}_{1}
                -\partial c_{j-1} \mathcal{O}_{1}\right\} \\
                &\quad +\left(4a_1 -a_2 +a_3 \right)(e+1)
                \left\{\partial c_{i} \mathcal{O}_{2} 
                - \partial c_{i-1} \mathcal{O}_{2}  \right\} \\
                &\quad +\left(4a_1 +a_2 -a_3 \right)(e+1) \left\{\partial c_{i} \mathcal{O}_{3}
                - \partial c_{i-1} \mathcal{O}_{3}  \right\} \\
                &\quad +\frac{a_2 +a_3}{2} e \left\{\partial^2 c_{i}\mathcal{O}_{4} -\partial^2 c_{i-1} \mathcal{O}_{4} \right\}~.
            \end{align*}

            On the other hand, from the calculation that  the commutator $\left[Q,D_{i,(e)}\right]$ acts on $\mathcal{O} = a_1 \mathcal{O}_1 + a_2 \mathcal{O}_2 + a_3 \mathcal{O}_3$,
            we can derive the following result.
            \begin{align*}
                [Q,D_{i,(e)}] \mathcal{O} 
                &= a_1 \left(e -4\right) \left\{\partial c_i \mathcal{O}_1 -\partial c_{i-1} \mathcal{O}_1 \right\} \\
                &\quad + a_2 \left(e -1\right) \left\{\partial c_i \mathcal{O}_2 -\partial c_{i-1} \mathcal{O}_2 \right\} \\
                &\quad + a_3 \left(e -1\right) \left\{\partial c_i \mathcal{O}_3 -\partial c_{i-1} \mathcal{O}_3 \right\} \\
                &\quad -\frac{a_2+a_3}{2} \left\{\partial^2 c_i \mathcal{O}_4 -\partial^2 c_{i-1} \mathcal{O}_4 \right\}~.
            \end{align*}
            Thus, when assuming that $QD_{i,(e)} \mathcal{O} = 0$ holds, we need to solve the following equations;
            \begin{align*}
                \left\{
                \begin{aligned}
                    \left(4 a_1 +a_2 -a_3 \right)e &= -a_1 \left(e-4\right) \\
                    \left(4 a_1 +a_2 -a_3 \right)\left(e+1\right) &= a_2 \left(e-1 \right) \\
                    \left(4 a_1 +a_2 -a_3 \right)\left(e+1\right) &= -a_3 \left(e-1\right) \\
                    \left(a_2 +a_3\right) e &= a_2 +a_3
                \end{aligned}
                \right.~.
            \end{align*}
            The solutions except for the trivial case $a_1 =a_2 = a_3 =0$ are 
            \begin{align}
                \label{solution 1}
                (a_1,a_2,a_3,e) &= \left(0,a_2,a_2,1\right)~,\\
                \label{solution 2}
                &\quad \left(a_1, \frac{-9 +\sqrt{15}i}{4}a_1, \frac{9 -\sqrt{15}i}{4}a_1,\frac{1-\sqrt{15}i}{2}\right)~,\\
                \label{solution 3}
                &\quad \left(a_1, \frac{-9-\sqrt{15}i}{4}a_1, \frac{9 +\sqrt{15}i}{4}a_1,\frac{1+\sqrt{15}i}{2}\right)~,
            \end{align}
            where $a_2 \neq 0$ for (\ref{solution 1}), $a_1 \neq 0$ for (\ref{solution 2}) and (\ref{solution 3}).

            For (\ref{solution 1}), we derive the following operator for $a_2 =1$;
            \begin{align}
                D_{i,(1)} \partial \mathcal{O}_{4} = -\frac{1}{2} \partial \left(D_{i,(1)} \mathcal{O}_{4}\right)~.
            \end{align}
            We have already found this operator in the previous example.

            Similarly, for (\ref{solution 2}) and (\ref{solution 3}), we also derive the following operators for $a_1 =1$;
            \begin{align}
                \frac{-1 \pm \sqrt{15}i }{4} &\left\{2 X_i X_i X_i Y_i Y_i Y_i +9 X_i X_i Y_i \partial Y_i -9 X_i \partial X_i Y_i Y_i \right. \notag \\
                &\quad \left. +3 X_i \partial^2 Y_i -12 \partial X_i \partial Y_i +3 \partial^2 X_i Y_i\right\}~.
            \end{align}
            Furthermore, since $a_2 = -a_3$ also holds by (\ref{solution 2}) or (\ref{solution 3}), we can consider the following formula:
            \begin{align*}
                D_{i,(A)}\mathcal{O}_{4} = -A\mathcal{O}_{1} +(A+1)\left\{\mathcal{O}_{2} -\mathcal{O}_{3}\right\}~,
            \end{align*}
            where we set $a_1 = -A$ and $a_2 = - a_3 =A +1$. Then,
            \begin{align}
                \label{solution 4}
                (e,A) = \left(\frac{1 +\sqrt{15}i}{2},\frac{5 -\sqrt{15}i}{10}\right)~,\quad \left(\frac{1-\sqrt{15}i}{2},\frac{5 +\sqrt{15}i}{10}\right)~.
            \end{align}
            Thus, $D_{i,(e)} D_{i,(A)} \mathcal{O}_4$ for (\ref{solution 4}) are closed operators and equal to
            \begin{align*}
                \frac{-5 \pm \sqrt{-15}}{10} &\left\{2 X_i X_i X_i Y_i Y_i Y_i +9 X_i X_i Y_i \partial Y_i -9 X_i \partial X_i Y_i Y_i \right. \notag \\
                    &\quad \left. +3 X_i \partial^2 Y_i -12 \partial X_i \partial Y_i + 3 X_i \partial^2 Y_i\right\}~.
            \end{align*}
        \end{enumerate}
    \subsection{The non-trivial operators}
        \label{The non-trivial operators}
        In the previous subsection, we checked how to construct non-trivial closed operators using $D_{i,(e)}$. From now on, let us reproduce the operators given in \cite{Nishinaka:2025nbe}.
        
        \begin{itemize}
            \item $\prod_{i \in I}D_{i,(1)}\mathcal{X}$ for $I \subset \{1,\dots,n\}$ are recursively constructible closed operators. Since $Q \mathcal{X} = 0$ and $[Q,D_{i,(1)}]\mathcal{X}=0$,
            then $D_{i,(1)} \mathcal{X}$ is a closed operator. Similarly, when $I \subset \{1,\dots,n\}$ and $j \notin I$, $[Q,D_{j,(1)}]\left(\prod_{i \in I}D_{i,(1)}\mathcal{X} \right)=0$ can be also satisfied.
            Therefore, $\prod_{i \in I \cup \{j\}}D_{i,(1)}\mathcal{X}$ is closed too. In particular, because of the commutation relation $[D_{i,(1)},D_{j,(1)}]=0$, it is not necessary for us to worry about the order of $D_{i,(1)}$. 
            \item Similarly, we can construct the closed operators $\prod_{i \in I}D_{i,(1)}\mathcal{Y}$ for $I \subset \{1,\dots,n\}$ recursively.
            \item $T_i \coloneqq \frac{1}{2}D_{i,(1)} \mathcal{A}_i$ are closed operators.
            since they are obtained from the calculation of $\left[Q,D_{i,(1)} \right] \mathcal{A}_{i}$. We consider that these operators are related with the F-term condition ``$q_1 \tilde{q}_1 = \cdots = q_n \tilde{q}_n$''
            and Higgs branch chiral operator ``$u= -\frac{1}{n} \sum_{i=1}^{n} q_i \tilde{q}_i$'' before deforming Higgs branch.
            Since $\mathcal{A}_i \not\equiv \mathcal{A}_j$ for $i \neq j$ and $Q \mathcal{A}_i = \partial c_i -\partial c_{i-1}$, where these BRST transformations are similar to the transformations of gauge fields, hold respectively under the BRST cohomology, 
            we can well construct $T_i$ as ``field strength'' of $\mathcal{A}_i$.
            \item Similarly, $W_i \coloneqq \frac{1}{3}\sqrt{\frac{2}{3}} \frac{5 \mp \sqrt{15}i}{8} \left(\mathcal{D}_i +\frac{1 \pm \sqrt{-15}}{2}\hat{\mathcal{A}}_i\right) \left\{\left(\mathcal{D}_i +\frac{5 \mp \sqrt{-15}}{10}\hat{\mathcal{A}}_{i} \right) \mathcal{A}_{i} \right\}$ are closed operators.\footnote{
            In \cite{Nishinaka:2025nbe}, $(T_i,W_i)$ for $i \in \{1,\dots,n\}$ are $W_3$-algebras with the center charge $c=-2$ independently by (\ref{T-1}), (\ref{T-2}) and (\ref{W-1}). 
            We can also expand $T_i$ and $W_i$ specifically as follows:
            \begin{align}
                \left\{
                    \begin{aligned}
                        T_i &= \frac{1}{2} X_i X_i Y_i Y_i +X_i \partial Y_i - \partial X_i Y_i \\
                        W_i &=-\sqrt{\frac{3}{2}} X_i X_i Y_i \partial Y_i +\sqrt{\frac{3}{2}} X_i \partial X_i Y_i Y_i 
                        +2\sqrt{\frac{2}{3}}\partial X_i \partial Y_i \\
                        &\quad -\frac{X_i \partial^2 Y_i}{\sqrt{6}}-\frac{\partial^2 X_i Y_i}{\sqrt{6}}-\frac{1}{3} \sqrt{\frac{2}{3}} X_i X_i X_i Y_i Y_i Y_i
                    \end{aligned}
                \right.~,
            \end{align}
            where these were known in \cite{10.1093/imrn/rnaa031, LINSHAW2009632}.}
            This definition, however, is very inconvenient and awkward. Therefore, we would like to define this operator in the equivalent representation. For example, we can derive the following representations using $\mathcal{D}_i$, $\hat{\mathcal{A}}_i$;
            \begin{align}
                \label{definition of W_i}
                W_{i} &= \frac{1}{3}\sqrt{\frac{2}{3}} \left\{\mathcal{D}_i \left(\mathcal{D}_i \mathcal{A}_i\right) +\frac{1}{2}\mathcal{D}_{i} \left(\hat{\mathcal{A}}_{i} \mathcal{A}_{i}\right) +\frac{1}{2}\hat{\mathcal{A}}_i \left(\mathcal{D}_i\mathcal{A}_i\right) +\hat{\mathcal{A}}_{i} \left(\hat{\mathcal{A}}_{i} \mathcal{A}_{i}\right) \right\}~,
            \end{align}
            or
            \begin{align}
                \label{definition of W_i ver.2}
                W_{i} &= \frac{1}{3}\sqrt{\frac{2}{3}} \left\{\frac{1}{2}\mathcal{D}_i \left(\mathcal{D}_i \mathcal{A}_i\right) +\frac{3}{2}\hat{\mathcal{A}}_{i} \left(\mathcal{D}_i \mathcal{A}_i\right) +\hat{\mathcal{A}}_{i} \left(\hat{\mathcal{A}}_{i} \mathcal{A}_{i}\right)\right\}.
            \end{align}
        \end{itemize}
        
        Therefore, we can define non-trivial closed operators by using $\mathcal{D}_i$ and $\mathcal{A}_i$.
        In particular, these closed operators except for $W_i$ can be defined using $D_{i,(1)}$. So, let us set $D_i \coloneqq D_{i,(1)} = \mathcal{D}_i + \mathcal{A}_i$.
        By using $D_i$, we can express the following candidate operators.
        \begin{align}
            T_i &\coloneqq \frac{1}{2} D_i \mathcal{A}_i~,\\
            \mathcal{X}_{I} &\coloneqq \prod_{i \in I} D_i \mathcal{X}~,\\
            \mathcal{Y}_{I} &\coloneqq \prod_{i \in I} D_i \mathcal{Y}~.
        \end{align}
        where $i \in \{1,\dots,n\}$ and $I \subset \{1,\dots,n\}$.

        Thus, from $\mathcal{D}_i$ and $\hat{\mathcal{A}}_i$ which can be found by deforming the stress tensor $T$ under the BRST cohomology, we can construct non-trivial closed operators. 
        Then, we conjecture that $U$, $\mathcal{X}$ and $\mathcal{Y}$ derived from the Higgs branch chiral operators, and part of non-trivial operators can generate the bosonic VOA with the BRST cohomology.
        The OPEs of these operators are given in Appendix.\ref{OPE 1}.
\section{How to construct primary fields from \texorpdfstring{$T_i, \mathcal{X}_{I},\mathcal{Y}_{I}$}{}}
    \label{How to construct primary fields from T i, mathcal X I, mathcal Y I}
    In Appendix.\ref{OPE 1}, from the OPEs with the stress tensor $T$, we can find out that $U$, $W_i$, $\mathcal{X}$ and $\mathcal{Y}$ are primary fields, 
    and the other fields $T_i$, $\mathcal{X}_{I}$ and $\mathcal{Y}_{I}$ for $I \neq \phi$ are not.
    We consider that generators in generic cases can be composed of the stress tensor $T$ and primary fields. \footnote{In \cite{Nishinaka:2025nbe}, since the following equalities can hold for $n=2$
    \begin{align*}
        \left\{
            \begin{aligned}
                T_1 + T_2 &=2 U^2 -\frac{1}{2}\mathcal{X} \mathcal{Y} -\frac{1}{2}\mathcal{Y} \mathcal{X}\\
                W_i &= (-1)^{i+1}\frac{1}{3}\sqrt{\frac{2}{3}} \left\{2 U\left(T_1 -T_2\right) -\mathcal{Y} \left(D_1-D_2\right)\mathcal{X} +\frac{1}{4}\partial \left(T_1 -T_2\right) \right\} \\
                &\quad +\sqrt{\frac{2}{3}} \left\{-U \partial U +\frac{1}{4} \mathcal{X} \partial \mathcal{Y}  -\frac{1}{4}\mathcal{Y} \partial \mathcal{X} + \frac{4}{3} U^3 -U \mathcal{X} \mathcal{Y} +\partial^2 U\right\}
            \end{aligned}
        \right.~,
    \end{align*}
    $T$, $W_1$, and $W_2$ are no longer generators.

    Similarly, since we can derive the following expression for $n=3$
    \begin{align*}
        W_1 + W_2 + W_3 = \sqrt{\frac{2}{3}}\left(3UT + 3U\partial U
        -\mathcal{X}\mathcal{Y} -\frac{1}{2}\partial T - \frac{1}{2}\partial^2
        U\right)~,
    \end{align*}
    $W_1 +W_2 +W_3$ is no longer a generator, but $W_1 -W_2$ and $W_2 -W_3$ are generators.}
    In this section, we try to construct primary fields from some summations of $T_i$, $\mathcal{X}_{I}$ and $\mathcal{Y}_{I}$ for $I \neq \phi$.
    \subsection{\texorpdfstring{$T_i$}{}}
        $T_i$ are $n$ linearly independent and quasi primary operators. 
        Since it is natural for us to select the following summation $\sum_{i=1}^{n}T_i$ by (\ref{definition of T}), 
        the others should be 
        \begin{align}
            \label{primary from T_i}
            T_1 -T_2~,\dots,\ T_{n-1}-T_{n}~,
        \end{align}
        where the operators (\ref{primary from T_i}) are primary.
        Then, the above construction (\ref{primary from T_i}) can be realized by the following transformation
        \begin{align}
            \label{complete set T_i}
            \begin{bmatrix}
                1 & 1 & 1 & \cdots & 1 & 1 \\
                1 & -1 & 0 & \cdots & 0 & 0 \\
                0 & 1 & -1 & \cdots & 0 & 0 \\
                \vdots & \vdots & \vdots & & \vdots & \vdots\\
                0 & 0 & 0 & \cdots & 1 & -1 
            \end{bmatrix}
            \begin{bmatrix}
                T_1 \\
                T_2 \\
                \vdots \\
                T_{n-1} \\
                T_n
            \end{bmatrix}
            =
            \begin{bmatrix}
                \sum_{i=1}^{n}T_i \\
                T_1 -T_2\\
                T_2 -T_3 \\
                \vdots \\
                T_{n-1} -T_{n}
            \end{bmatrix}
            \equiv 
            \begin{bmatrix}
                T +\frac{n}{2}U^2 \\
                T_1 -T_2\\
                T_2 -T_3 \\
                \vdots \\
                T_{n-1} -T_{n}
            \end{bmatrix}~,
        \end{align}
        where the matrix on the left-hand side of (\ref{complete set T_i}) is invertible. Thus, we can construct the stress tensor $T$ and $n-1$ primary fields (\ref{primary from T_i}) by (\ref{complete set T_i}).
    \subsection{\texorpdfstring{$\mathcal{X}_{I}$}{} and \texorpdfstring{$\mathcal{Y}_{I}$}{}}
        Let us construct the primary fields from $\mathcal{X}_{I}$ and $\mathcal{Y}_{I}$ for $I \neq \phi$. Since (\ref{T XI}) and (\ref{T YI}),
        the case of $\mathcal{Y}_{I}$ is similar to the case of $\mathcal{X}_{I}$.
        Then, we examine primary fields composed of $\mathcal{X}_{I}$ in this subsection.

        From (\ref{T XI}), we can clearly find out how to construct primary fields;
        we remove the term of $z^{-3}$ from the OPE (\ref{T XI}).
        Then, we introduce the following formula.
        \begin{itemize}
            \item For an positive integer $k$ satisfying $2k \leq n$, we can construct the following primary field
            \begin{align}
                \label{primary from XI}
                \prod_{j=1}^{k}\left(D_{i_{2j-1}} -D_{i_{2j}}\right) \mathcal{X}~,
            \end{align}
            where $i_1,\dots,i_{2k}$ are distinct interger in $\{1,\dots,n\}$.
        \item For an positive integer $k$ satisfying $2k > n$, primary fields cannot be constructed by this formula (\ref{primary from XI}).
    \end{itemize}
    \begin{proof}
        The latter case is trivial. Then we will prove the former case.
        For an positive integer $m$ satisfying $k=m$ and $2m \leq n$, we show that the following field for $\{i_{(1,1)},i_{(1,2)},\dots,i_{(m,1)},i_{(m,2)}\} \subset \{1,\dots,n\}$ is primary;
        \begin{align}
            \prod_{c=1}^{m}\left(D_{i_{(c,1)}} -D_{i_{(c,2)}}\right)\mathcal{X} = \sum_{d_1,\dots,d_m \in \{1,2\}} (-1)^{\sum_{j=1}^{m} d_j -m} \mathcal{X}_{\{i_{(1,d_1)},\dots,i_{(m,d_m)}\}}~.
        \end{align}
        From the $(-3)$-th degree of the OPE between $T$ and the above operator, the following summation can be calculated.
        \begin{align*}
            &\quad \sum_{d_1,\dots,d_{m} \in \{1,2\}} (-1)^{\sum_{j=1}^{m} d_j -m} \mathcal{X}_{\{i_{(1,d_1)},\dots,i_{(m,d_{m})}\}} \notag \\
            &\mapsto 2 \sum_{d_1,\dots,d_{m} \in \{1,2\}} \sum_{l \in \{i_{(1,d_1)},\dots,i_{(m,d_{m})}\}}(-1)^{\sum_{j=1}^{m}d_{j}-(m)} \mathcal{X}_{\{i_{(1,d_1)},\dots,i_{(m,d_{m})}\}\setminus \{l\}} \notag \\
            &= 2\left(D_{i_{(m,1)}} -D_{i_{(m,2)}}\right)\sum_{d_1,\dots,d_{m-1} \in \{1,2\}} \sum_{l \in \{i_{(1,d_1)},\dots,i_{(m-1,d_{m-1})}\}}(-1)^{\sum_{j=1}^{m-1}d_{j}-(m-1)} \mathcal{X}_{\{i_{(1,d_1)},\dots,i_{(m-1,d_{m-1})}\}\setminus \{l\}} \notag \\
            &= 2 \prod_{c=2}^{m} \left(D_{i_{(c,1)}} -D_{i_{(c,2)}} \right) \sum_{d_1 \in \{1,2\}} \sum_{l \in \{i_{(1,d_1)}\}} (-1)^{d_1 -1} \mathcal{X}_{\{i_{(1,d_1)}\} \setminus {l}} \notag \\
            &= 2 \prod_{c=2}^{m} \left(D_{i_{(c,1)}} -D_{i_{(c,2)}} \right) \left(\mathcal{X} -\mathcal{X}\right) \notag \\
            &= 0~.
        \end{align*}

        Thus, for $2k \leq n$, the formulas (\ref{primary from XI}) are primary.
    \end{proof}

    Similarly, the above statement can be adapted for $\mathcal{Y}_{I}$. For $n \geq 2k$, the following operator composed of $\mathcal{Y}_{I}$ for $I \neq \phi$ is primary.
    \begin{align}
        \label{primary from YI}
        \prod_{j=1}^{k}\left(D_{i_{2j-1}} -D_{i_{2j}}\right) \mathcal{Y}~,
    \end{align}
    where $i_1, \dots,i_{2k}$ are distinct integer in $\{1,\dots,n\}$.

    However, we face three problems:
    \begin{itemize}
        \item The number of (\ref{primary from XI}) or (\ref{primary from YI}) are $(2k-1)!!\binom{n}{2k}$, and $(2k-1)!! \binom{n}{2k} > \binom{n}{k}$ holds for all pairs $(n,k)$, with only some exceptions.
        Then, how many linearly independent operators consisting of (\ref{primary from XI}) or (\ref{primary from YI}) exist?
        \item Can primary operators which are difference from linear sums of (\ref{primary from XI}) or (\ref{primary from YI}) exist?
        \item Is the sum of the number of linearly independent primary operators and non-primary operators regarding with $\mathcal{X}_{I}$ or $\mathcal{Y}_{I}$ equal to $\binom{n}{|I|}$?
    \end{itemize}
    The first will be solved in Sec.\ref{How to count independent operators}, and the others will be in Sec.\ref{Linearly independent operators consisting of X I and Y I}.
\section{How to count independent operators (\ref{primary from XI}) and (\ref{primary from YI})}
    \label{How to count independent operators}
    In this section, we study the number of the linearly independent operators consisting of (\ref{primary from XI}) or (\ref{primary from YI}).
    When $n$ is small, independent operators consisting of (\ref{primary from XI}) or (\ref{primary from YI}) can be explicit identified.
    For example, the linearly independent operator for $n=2$ is
    \begin{align*}
        (D_1 -D_2) \mathcal{X}~.
    \end{align*}
    For $n=3$, since $D_1 -D_3 = \left(D_1 -D_2\right) + \left(D_2 -D_3\right)$, we can choose the following linearly independent operators
    \begin{align*}
        (D_1-D_2)\mathcal{X}~,\quad (D_2 -D_3)\mathcal{X}~.
    \end{align*}
    However, when $n \geq 4$, it is more difficult for us to identify linearly independent operators composed of (\ref{primary from XI}) or (\ref{primary from YI}).
    This difficulty arises from the large number of degrees of freedom in the subsets of $\{1,\dots,n\}$. Thus, this problem should be solved by using combinatorics method rather than linear algebra's method.

    \subsection{Combinatorial construction}
        To solve this problem as a combinatorics problem, let us consider how to express $D_i-D_j$, $(D_i -D_j)(D_k -D_l)$, and so on appeared in (\ref{primary from XI}) or (\ref{primary from YI}).

        For example, the case of $D_i-D_j$ for $1 \leq i < j \leq n$ is drawn in Figure \ref{1 i j n}.
        \begin{figure}[t]
            \center
            \begin{tikzpicture}
                \draw (-1.2,0.2) node{$1$};
                \draw (-0.6,0.2) node{$\cdots$};
                \draw (0,0.2) node{$i$};
                \draw (0.6,0.2) node{$\cdots$};
                \draw (1.2,0.2) node{$j$};
                \draw (1.8,0.2) node{$\cdots$};
                \draw (2.4,0.2) node{$n$};
                \draw (0,0)--(0,-0.5)--(1.2,-0.5)--(1.2,0);
                \draw[<->] (2.8,0)--(4.1,0);
                \draw (5,0) node{$D_i -D_j$};
            \end{tikzpicture}
            \caption{The case of $D_i -D_j$ for $1 \leq i < j \leq n$.}
            \label{1 i j n}
        \end{figure}
        Similarly, the cases for $(D_i -D_j)(D_k -D_l)$ are also drawn in Figure.\ref{1 i j k l n}, \ref{1 i k j l n} and \ref{1 i k l j n}.
        \begin{figure}[t]
            \center
            \begin{tikzpicture}
                \draw (-1.2,0.2) node{$1$};
                \draw (-0.6,0.2) node{$\cdots$};
                \draw (0,0.2) node{$i$};
                \draw (0.6,0.2) node{$\cdots$};
                \draw (1.2,0.2) node{$j$};
                \draw (1.8,0.2) node{$\cdots$};
                \draw (2.4,0.2) node{$k$};
                \draw (3,0.2) node{$\cdots$};
                \draw (3.6,0.2) node{$l$};
                \draw (4.2,0.2) node{$\cdots$};
                \draw (4.8,0.2) node{$n$};
                \draw (0,0)--(0,-0.5)--(1.2,-0.5)--(1.2,0);
                \draw (2.4,0)--(2.4,-0.5)--(3.6,-0.5)--(3.6,0);
                \draw (8.7,0) node{$(D_i -D_j)(D_k -D_l)$};
                \draw[<->] (5.2,0)--(6.5,0);
            \end{tikzpicture}
            \caption{The case of $(D_i -D_j)(D_k -D_l)$ for $1 \leq i < j <k < l \leq n$.}
        \label{1 i j k l n}
        \end{figure}
        \begin{figure}[t]
            \center
            \begin{tikzpicture}
                \draw (-1.2,0.2) node{$1$};
                \draw (-0.6,0.2) node{$\cdots$};
                \draw (0,0.2) node{$i$};
                \draw (0.6,0.2) node{$\cdots$};
                \draw (1.2,0.2) node{$k$};
                \draw (1.8,0.2) node{$\cdots$};
                \draw (2.4,0.2) node{$j$};
                \draw (3,0.2) node{$\cdots$};
                \draw (3.6,0.2) node{$l$};
                \draw (4.2,0.2) node{$\cdots$};
                \draw (4.8,0.2) node{$n$};
                \draw (0,0)--(0,-0.5)--(1.1,-0.5);
                \draw (1.3,-0.5)--(2.4,-0.5)--(2.4,0);
                \draw (1.2,0)--(1.2,-1)--(3.6,-1)--(3.6,0);
                \draw (8.7,0) node{$(D_i -D_j)(D_k -D_l)$};
                \draw[<->] (5.2,0)--(6.5,0);
            \end{tikzpicture}
            \caption{The case of $(D_i -D_j)(D_k -D_l)$ for $1 \leq i < k < j < l \leq n$.}
        \label{1 i k j l n}
        \end{figure}
        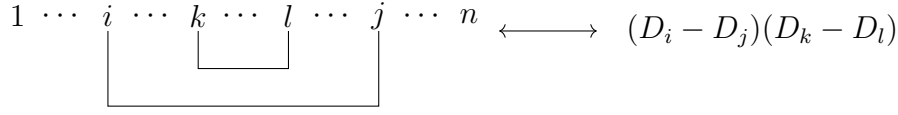
\begin{figure}[t]
            \center
            \begin{tikzpicture}
                \draw (-1.2,0.2) node{$1$};
                \draw (-0.6,0.2) node{$\cdots$};
                \draw (0,0.2) node{$i$};
                \draw (0.6,0.2) node{$\cdots$};
                \draw (1.2,0.2) node{$k$};
                \draw (1.8,0.2) node{$\cdots$};
                \draw (2.4,0.2) node{$l$};
                \draw (3,0.2) node{$\cdots$};
                \draw (3.6,0.2) node{$j$};
                \draw (4.2,0.2) node{$\cdots$};
                \draw (4.8,0.2) node{$n$};
                \draw (0,0)--(0,-1)--(3.6,-1)--(3.6,0);
                \draw (1.2,0)--(1.2,-0.5)--(2.4,-0.5)--(2.4,0);
                \draw (8.7,0) node{$(D_i -D_j)(D_k -D_l)$};
                \draw[<->] (5.2,0)--(6.5,0);
            \end{tikzpicture}
            \caption{The case of $(D_i -D_j)(D_k -D_l)$ for $1 \leq i < k < l < j \leq n$.}
        \label{1 i k l j n}
        \end{figure}
        More generally, we can construct the above similar relations for $(D_{i_1} -D_{i_2}) \dots (D_{i_{2k-1}} -D_{i_{2k}})$ when $k$ is a positive integer satisfying $n \geq 2k$.

        In order to count the number of linearly independent operators such as (\ref{primary from XI}) and (\ref{primary from YI}),
        we need to use the following properties for $D_i$.
        \begin{enumerate}
            \item \label{no blank} When $1 \leq i < j \leq n$,
            \begin{align*}
                D_i -D_j = \sum_{k=0}^{j-i-1}\left(D_{i+k} -D_{i+k+1}\right)~.
            \end{align*}
            Similarly, when $1 \leq i <j < k < l \leq n$,
            \begin{align*}
                &\quad \left(D_{j} -D_{k}\right)\left(D_{i} -D_{l}\right) \\
                &= \delta_{j \geq i+2}\sum_{m=i}^{j-2}\left(D_{j} -D_{k}\right)\left(D_{m} -D_{m+1}\right) 
                +\delta_{j+1 < k}\left(D_{j} -D_{k}\right)\left(D_{j-1} -D_{j+1} \right) \\
                &\quad +\delta_{k \geq j+3} \sum_{m=j+1}^{k-2} \left(D_{j} -D_{k}\right)\left(D_{m} -D_{m+1}\right) 
                +\delta_{j+1<k} \left(D_{j} -D_{k}\right)\left(D_{k-1} -D_{k+1}\right) \\
                &\quad +\delta_{l \geq k+2} \sum_{m=k+1}^{l-1} \left(D_{j} -D_{k}\right) \left(D_{m} -D_{m+1}\right)~.
            \end{align*}
            In other words, when connecting $i$ to $j$ satisfying $1 \leq i < j \leq n$ and $i+1 < j$, we permit conditions that $k$ satisfying $i+1 \leq k \leq j-1$ must pair with $l$ satisfying $i+1 \leq l \leq j-1$ and $l \neq k$. 
            For example, we can permit the case of Figure \ref{permit}.
            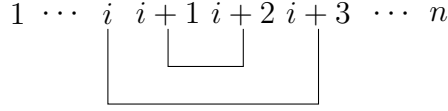
\begin{figure}[t]
                \center
                \begin{tikzpicture}
                    \draw (-1.2,0.2) node{$1$};
                    \draw (-0.6,0.2) node{$\cdots$};
                    \draw (0,0.2) node{$i$};
                    \draw (0.8,0.2) node{$i+1$};
                    \draw (1.8,0.2) node{$i+2$};
                    \draw (2.8,0.2) node{$i+3$};
                    \draw (3.8,0.2) node{$\cdots$};
                    \draw (4.4,0.2) node{$n$};
                    \draw (0,0)--(0,-1)--(2.8,-1)--(2.8,0);
                    \draw (0.8,0)--(0.8,-0.5)--(1.8,-0.5)--(1.8,0);
                \end{tikzpicture}
                \caption{Permitted case.}
                \label{permit}
            \end{figure}
            But Figure \ref{ban 1} must not be permitted.
            \begin{figure}[t]
                \center
                \begin{tikzpicture}
                    \draw (-1.2,0.2) node{$1$};
                    \draw (-0.6,0.2) node{$\cdots$};
                    \draw (0,0.2) node{$i$};
                    \draw (0.8,0.2) node{$i+1$};
                    \draw (1.8,0.2) node{$i+2$};
                    \draw (2.8,0.2) node{$i+3$};
                    \draw (3.8,0.2) node{$\cdots$};
                    \draw (4.4,0.2) node{$n$};
                    \draw (0,0)--(0,-0.5)--(2.8,-0.5)--(2.8,0);
                \end{tikzpicture}
                \caption{Denied case 1.}
                \label{ban 1}
            \end{figure}
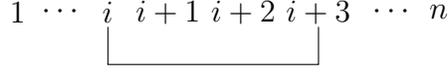
            \item \label{no intersection} Since the commutation relation for $D_i$ holds, the following equality holds for $1 \leq i < k < j < l \leq n$,
            \begin{align*}
                \left(D_{i} -D_{j}\right)\left(D_{k} -D_{l}\right) = \left(D_i -D_k\right)\left(D_j -D_l\right) +\left(D_i -D_l\right)\left(D_k-D_j\right)~.
            \end{align*}
            The lines in the figure cannot intersect each other. For example, the case of Figure \ref{ban 2} must not be permitted.
            \begin{figure}[t]
                \center
                \begin{tikzpicture}
                    \draw (-1.2,0.2) node{$1$};
                    \draw (-0.6,0.2) node{$\cdots$};
                    \draw (0,0.2) node{$i$};
                    \draw (0.6,0.2) node{$\cdots$};
                    \draw (1.2,0.2) node{$k$};
                    \draw (1.8,0.2) node{$\cdots$};
                    \draw (2.4,0.2) node{$j$};
                    \draw (3,0.2) node{$\cdots$};
                    \draw (3.6,0.2) node{$l$};
                    \draw (4.2,0.2) node{$\cdots$};
                    \draw (4.8,0.2) node{$n$};
                    \draw (0,0)--(0,-0.5)--(1.1,-0.5);
                    \draw (1.3,-0.5)--(2.4,-0.5)--(2.4,0);
                    \draw (1.2,0)--(1.2,-1)--(3.6,-1)--(3.6,0);
                \end{tikzpicture}
                \caption{Denied case 2.}
                \label{ban 2}
            \end{figure}
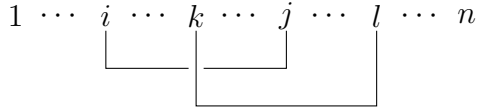
        \end{enumerate}
        Thus, under these conditions, counting the figures is correspondence to counting the independnt operators such as (\ref{primary from XI}) or (\ref{primary from YI}).

        Before solving this counting problem, let us set $n=2L+M$, where $L$ is equal to the number of pairs and $M$ is equal to the number of isolated elements.
        In particular, when $M=0$, the total number of the pairs satisfying the above conditions is equal to the Catalan number $C_L$ \cite{Stanley_Fomin_1999,Stanley_2015}:
        \begin{align}
            \label{Catalan number}
            C_L \coloneqq \frac{1}{L+1} \binom{2L}{L}~.
        \end{align}

        More generally, when $M \geq 1$, we consider the partition of integer $L$. When $L$ is divided into $p$ positive integers $i_1,\dots,i_p$, namely $L = \sum_{j=1}^{p} i_j$, 
        where $p$ is satisfying $1 \leq p \leq M+1$ and no greater or lesser relationship is among $i_1,\dots,i_p$. 
        For example, when $L=2$ and $M=1$, Figure.\ref{The cases for $p_1$ and $i_1=2$} and \ref{The case for $p=2$ and $(i_1,i_2)=(1,1)$} emerge. 
        \begin{figure}[t]
            \center
            \begin{tikzpicture}
                \draw (0,0.2) node{$1$};
                \draw (0.6,0.2) node{$2$};
                \draw (1.2,0.2) node{$3$};
                \draw (1.8,0.2) node{$4$};
                \draw (2.4,0.2) node{$5$};
                \draw (0,0)--(0,-0.3)--(0.6,-0.3)--(0.6,0);
                \draw (1.2,0)--(1.2,-0.3)--(1.8,-0.3)--(1.8,0);
                \draw (3.5,0.2) node{$1$};
                \draw (4.1,0.2) node{$2$};
                \draw (4.7,0.2) node{$3$};
                \draw (5.3,0.2) node{$4$};
                \draw (5.9,0.2) node{$5$};
                \draw (3.5,0)--(3.5,-0.6)--(5.3,-0.6)--(5.3,0);
                \draw (4.1,0)--(4.1,-0.3)--(4.7,-0.3)--(4.7,0);
                \draw (0,-1.8) node{$1$};
                \draw (0.6,-1.8) node{$2$};
                \draw (1.2,-1.8) node{$3$};
                \draw (1.8,-1.8) node{$4$};
                \draw (2.4,-1.8) node{$5$};
                \draw (0.6,-2)--(0.6,-2.3)--(1.2,-2.3)--(1.2,-2);
                \draw (1.8,-2)--(1.8,-2.3)--(2.4,-2.3)--(2.4,-2);
                \draw (3.5,-1.8) node{$1$};
                \draw (4.1,-1.8) node{$2$};
                \draw (4.7,-1.8) node{$3$};
                \draw (5.3,-1.8) node{$4$};
                \draw (5.9,-1.8) node{$5$};
                \draw (4.1,-2)--(4.1,-2.6)--(5.9,-2.6)--(5.9,-2);
                \draw (4.7,-2)--(4.7,-2.3)--(5.3,-2.3)--(5.3,-2);
            \end{tikzpicture}
            \caption{The cases for $p_1$ and $i_1=2$.}
            \label{The cases for $p_1$ and $i_1=2$}
        \end{figure}
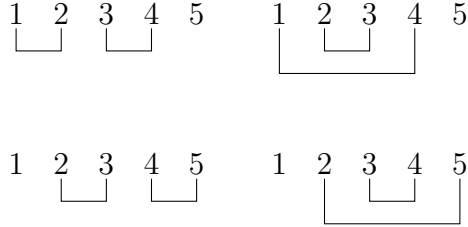
        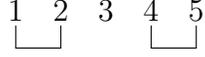
\begin{figure}[t]
            \center
            \begin{tikzpicture}
                \draw (0,0.2) node{$1$};
                \draw (0.6,0.2) node{$2$};
                \draw (1.2,0.2) node{$3$};
                \draw (1.8,0.2) node{$4$};
                \draw (2.4,0.2) node{$5$};
                \draw (0,0)--(0,-0.3)--(0.6,-0.3)--(0.6,0);
                \draw (1.8,0)--(1.8,-0.3)--(2.4,-0.3)--(2.4,0);
            \end{tikzpicture}
            \caption{The case for $p=2$ and $(i_1,i_2)=(1,1)$.}
            \label{The case for $p=2$ and $(i_1,i_2)=(1,1)$}
        \end{figure}
        Then,
        \begin{align*}
            2 \times C_2 +1 \times C_1^2 = 5~.
        \end{align*}

        To generalize this example, we will consider the following cases.
        \begin{itemize}
            \item When $L=0$, that is $n=M$, this case is simple, with only one instance. This situation is equivalent of $\mathcal{X}$ or $\mathcal{Y}$.
            \item When $M=0$, that is $n=2L$, the number of cases satisfying the condition \ref{no intersection} is equal to $C_L$.
            \item When $L,M \geq 1$, the number of cases satisfying condition.\ref{no blank} and \ref{no intersection} is equal to 
            \begin{align}
                \label{independent operators}
                \sum_{p=1}^{\min(L,M+1)} \binom{M+1}{p} g_{L,p}~,
            \end{align}
            where $g_{L,p}$ is defined by 
            \begin{align}
                \label{partition}
                g_{L,p} \coloneqq \sum \prod_{j=1}^{p} C_{i_j}~,
            \end{align}
            where the sum $\sum$ in (\ref{partition}) is taken over all partitions $(i_1,\dots,i_p)$ satisfying $L = \sum_{j=1}^{p}i_j,\ i_1 >0,\dots,i_p >0$.
        \end{itemize}  

        By summarizing these results, we can derive the number of independent operators consisting of (\ref{primary from XI}) or (\ref{primary from YI}) as follows;
        \begin{align}
            \label{the number of the important operators 1}
            \frac{n-2L+1}{n-L+1} \binom{n}{L}~,
        \end{align}
        where $M=n-2L$.
    \subsection{Solution}
        First, we solve the case for $M=0$, namely, derive the Catalan number (\ref{Catalan number}). 
        
        \begin{proof}
            In order to solve this case, we assume that we find out $C_1,\dots,C_L$, and construct the recurrence relation regarding with $C_{L+1}$. 
            When we set the $2(L+1)$ points, ``$1$'', ``$2$'',..., ``$2L+2$'', and can count how to connect under the condition \ref{no intersection} as follows.
            \begin{itemize}
                \item When connecting ``$1$'' and ``$2$'', the number of ways to connect all the rest from ``$3$'' to ``$2L+2$'' under the condition \ref{no intersection} is equal to $C_L$.
                \item When connecting ``$1$'' and ``$2L+2$'', we can also derive $C_L$.
                \item When connecting ``$1$'' and ``$2i$'' for $2 \leq i \leq L$, the original group is divided into two groups; one group is consisting of $2(i-1)$ points from ``$2$'' to ``$2i-1$'', and the other group is $2(L-i+1)$ points from ``$2i+1$'' to ``$2L+2$''. 
                Therefore, the number of ways to connect points is equal to $C_{i-1} C_{L-i+1}$. 
            \end{itemize}
            By summarizing the above situations, we can derive the following recurrence relation.
            \begin{align}
                \label{recurrence relation}
                C_{L+1} &= C_L + C_L + \sum_{k=2}^{L}C_{k-1}C_{L-k+1} \notag \\
                &= \sum_{i=0}^{L}C_i C_{L-i}~,
            \end{align}
            where we can set $C_0 = 1$ for convenience.

            Since $C_0 =1$ and (\ref{recurrence relation}), $C_L$ is correspondence to the Catalan number \cite{Stanley_Fomin_1999}
            \begin{align}
                C_L = \frac{1}{L+1} \binom{2L}{L}~.
            \end{align}
            This result is equal to (\ref{the number of the important operators 1}) for $M=0$.
        \end{proof}
                
        Next, in order to solve the statement for $L \geq 1$ and $M \geq 1$, we will prove two statements;
        \begin{enumerate}
            \item The following equality for $g_{L,p}$ holds.
            \begin{align}
                \label{other expression of g_{L,p}}
                g_{L,p} = \frac{p}{L}\binom{2L}{L-p}~.
            \end{align}
            \item (\ref{independent operators}) is equal to (\ref{the number of the important operators 1}).
        \end{enumerate}
        \begin{proof}
            In order to derive (\ref{other expression of g_{L,p}}), we define the generator function $g_p(x)$ by using the generating function of the Catalan number $f(x) \coloneqq \sum_{n=0}^{\infty}C_n x^n$ as follows:
            \[
                g_p(x) \coloneqq \left(f(x)-1\right)^p~.
            \]
            Since $C_0 = 1$ and $f(x)-1 = \sum_{n=1}^{\infty}C_n x^n$, we can set $g_p(x) = \sum_{n=p}^{\infty} g_{n,p}x^{n}$.

            From other viewpoint for $g_p(x)$, by using functions $A_p(x)$ and $B_p(x)$, we set $g_p(x) = A_p(x)f(x) - B_p(x)$.
            Here, we need the following equality regarding with $f(x)$
            \begin{align}
                \label{equality of fx}
                \left\{f(x)\right\}^2 = \frac{1}{x}f(x) - \frac{1}{x}\quad (x \neq 0)~.
            \end{align}
            \begin{enumerate}
                \item When $p=1$, we set $A_p(x) =1$ and $B_p(x)= 1$.
                \item When $p=2$, by using (\ref{equality of fx}), we can derive the following equality.
                \begin{align*}
                    g_2(x) &= \left(\frac{1}{x}-2\right)f(x) -\left(\frac{1}{x}-1\right)~.
                \end{align*}
                Therefore, we set $A_2(x) = \frac{1}{x} -2$ and $B_2(x) = \frac{1}{x}-1$.
                \item More generally, we consider the recurrent relation between $g_p(x)$ and $g_{p+1}(x)$. 
                \begin{align*}
                    g_{p+1}(x) &= \left\{\left(\frac{1}{x}-1\right)A_p(x) -B_p(x)\right\}f(x) -\left\{\frac{1}{x}A_p(x) -B_p(x)\right\}~.
                \end{align*}
                Then, we set $A_{p+1}(x) = \left(\frac{1}{x}-1\right)A_p(x) -B_p(x)$ and $B_{p+1}(x) = \frac{1}{x}A_p(x) -B_p(x)$.
            \end{enumerate}
            Thus, the sequences of functions $\{A_p(x)\}_{p \in \mathbb{Z}_{>0}}$ and $\{B_p(x)\}_{p \in \mathbb{Z}_{>0}}$ are satisfied with
            \begin{align}
                \label{recurrence relation for functions}
                \left\{
                \begin{aligned}
                    &A_1(x) = 1,\ B_1(x) = 1 \\
                    &A_2(x) = \frac{1}{x}-2,\ B_2(x) = \frac{1}{x}-1 \\
                    &A_{p+1}(x) = \left(\frac{1}{x}-1\right)A_p(x) -B_p(x)\\
                    &B_{p+1}(x) = \frac{1}{x}A_p(x) -B_p(x)
                \end{aligned}
                \right.~.
            \end{align}
            Since both $A_p(x)$ and $B_p(x)$ can be regarded as $p-1$ degree polynomials for $x^{-1}$, it is necessary for us to calculate the $L$-th order coefficients of $A_p(x)f(x)$, where $L \geq p$.

            We try to remove $B_p(x)$ from the recurrence relation regarding with $\{A_{p}(x)\}_{p \in \mathbb{Z}_{>0}}$. Adding $A_{p+2}(x) +A_{p+1}(x) = B_{p+2}(x)$ and $A_{p+1}(x) +A_p(x) = B_{p+1}(x)$ together, we obtain the recurrence relation for $A_p(x)$:
            \[
                A_{p+2}(x) = \left(\frac{1}{x}-2 \right) A_{p+1}(x) -A_p(x)~.
            \]
            Using these relations, let us derive the $L$-th order coefficient of $g_p(x)$.
            \begin{enumerate}
                \item When $p=1$, $A_1(x)f(x) = f(x)$. Then, we can derive 
                \begin{align*}
                    g_{L,1} = C_L = \frac{1}{L}\binom{2L}{L-1}\quad (L \geq 1)~.
                \end{align*}
                \item When $p=2$, $A_2(x)f(x) = \left(\frac{1}{x}-2\right)f(x)$. Then, we can calculate the $L$th order coefficient of $A_2(x) f(x)$ for $L \geq 2$;
                \begin{align*}
                    g_{L,2} &= C_{L+1} -2C_L \\
                    &= \frac{2}{L}\binom{2L}{L-2}~.
                \end{align*}
                \item Let us assume that for $p$ and $p+1$, the following relations hold.
                \begin{align*}
                    g_p(x) = \sum_{L=p}^{\infty}\frac{p}{L}\binom{2L}{L-p}x^L~,\quad g_{p+1}(x) = \sum_{L=p+1}^{\infty}\frac{p+1}{L}\binom{2L}{L-p-1}x^L~.
                \end{align*}
                Then, from the recurrece relations for $\{A_{p}(x)\}$, the $L$-th order coefficient of $g_{p+2}(x)$ for $L\ (\geq p+2)$ can be derived as follows;
                \begin{align*}
                    g_{L,p+2}&= \frac{p+1}{L+1}\binom{2L+2}{L-p} -\frac{2(p+1)}{L}\binom{2L}{L-p-1} -\frac{p}{L}\binom{2L}{L-p} \\
                    &= \frac{p+2}{L}\binom{2L}{L-p-2}~.
                \end{align*}
            \end{enumerate}
            Thus, for $L \geq p$, we can derive the $L$th order coefficient of $g_{p}(x)$:
            \[
                g_{L,p} = \frac{p}{L}\binom{2L}{L-p}~.
            \]
            This result is clearly equal to (\ref{other expression of g_{L,p}}).

            By using this result (\ref{other expression of g_{L,p}}), let us show the following equality:
            \[
                \sum_{p=1}^{\min\left(L,M+1\right)}\binom{M+1}{p}g_{L,p} = \frac{M+1}{L+M+1}\binom{2L+M}{L}~.
            \]
            Substituing $g_{L,p}$ for the left-hand side $g_{L,p}$, we get
            \begin{align*}
                    \sum_{p=1}^{\min\left(L,M+1\right)}\binom{M+1}{p}g_{L,p} 
                    &= \frac{M+1}{L} \left\{ \binom{2L}{L-1} + \delta_{L \geq 2} \sum_{p=1}^{\min \left(L-1,M\right)}\binom{2L}{L-p-1} \binom{M}{p} \right\}~.
            \end{align*}
            \begin{itemize}
                \item When $L=1$, because of $\min \left(1,M+1\right)=1$, we can easily execute the following calculation.
                \[
                    \left.\frac{M+1}{L} \left\{ \binom{2L}{L-1} + \delta_{L \geq 2} \sum_{p=1}^{\min \left(L-1,M\right)}\binom{2L}{L-p-1} \binom{M}{p} \right\}\right|_{L=1} = M+1~.
                \]
                For the right-hand side, we also get
                \[
                    \frac{M+1}{M+2} \binom{M+2}{1} = M+1~.
                \]
                These expressions are clearly equal.
                \item When $L \geq 2$, let us transform the right-hand side of the original relation into
                \begin{align*}
                    \frac{M+1}{L+M+1}\binom{2L+M}{L} = \frac{M+1}{L} \binom{2L+M}{L-1}~.
                \end{align*}
                Then, the following equality which we must prove is 
                \begin{align}
                    \label{combinatorial relation}
                    \sum_{p=1}^{\min \left(L-1,M\right)}\binom{2L}{L-p-1} \binom{M}{p} = \binom{2L+M}{L-1} -\binom{2L}{L-1}~.
                \end{align}
                The right-hand side of (\ref{combinatorial relation}) is equal to the $(L-1)$-th coefficients of $h(x) = (1+x)^{2L+M}-(1+x)^{2L}$.
                
                $h(x)$ can also be interpreted as follows:
                \begin{align*}
                        h(x) &= (1+x)^{2L}\left\{(1+x)^M -1\right\} \\
                        &= \sum_{i=0}^{2L} \sum_{j=1}^{M} \binom{2L}{i}\binom{M}{j} x^{i+j}~.
                \end{align*}
                Then, the $(L-1)$-th order coefficient of $h(x)$ is satisfied with
                \begin{align*}
                    \left.\sum_{i=0}^{2L} \sum_{j=1}^{M} \binom{2L}{i}\binom{M}{j} \right|_{i+j=L-1} 
                    = \left\{
                        \begin{aligned}    
                            &\sum_{p=1}^{M}\binom{2L}{L-p-1}\binom{M}{p} & &(L-1\geq M)\\
                            \\
                            &\sum_{p=1}^{L-1}\binom{2L}{L-p-1}\binom{M}{p} & &(L-1 < M)
                        \end{aligned}
                    \right.~.
                \end{align*}
                Therefore, we can derive the relation (\ref{combinatorial relation}).
            \end{itemize}
            Thus, the two statements are proved.
        \end{proof}

        In the special case for $L=0$ and $M \geq 2$, (\ref{the number of the important operators 1}) is equal to $1$. This result is correspondence to the case of $\mathcal{X}$ or $\mathcal{Y}$ for $n \geq 2$.
        In order to count independent operators such as (\ref{primary from XI}) or (\ref{primary from YI}), $L$ and $M$ is satisfying the following condition.
        \begin{align*}
            \left\{
                \begin{aligned}
                    &L \geq 0 \\
                    &M \geq 0 \\
                    &2L +M \geq 2
                \end{aligned}
            \right.~,
        \end{align*}
        where $2L+M$ is equal to $n$.

        From this result (\ref{the number of the important operators 1}), we find out the following inequality;
        \begin{align*}
            \frac{n -2L +1}{n -L+1} \binom{n}{L} \leq \binom{n}{L}~.
        \end{align*}
        Since the operators (\ref{primary from XI}) or (\ref{primary from YI}) alone for $L \geq 1$ are clearly insufficient, we will construct and add the missing operators from $\mathcal{X}_{I}$ or $\mathcal{Y}_{I}$ to satisfy linear independence in the next section.
\section{Linearly independent operators consisting of \texorpdfstring{$\mathcal{X}_{I}$}{} and \texorpdfstring{$\mathcal{Y}_{I}$}{}}
    \label{Linearly independent operators consisting of X I and Y I}
    In this section, we study the number of linearly independent operators consisting of $\mathcal{X}_{I}$ and $\mathcal{Y}_{I}$. In particular, (\ref{primary from XI}) or (\ref{primary from YI}) satisfying condition.\ref{no blank} and \ref{no intersection} can become generators in this bosonic VOA,
    and others consisting of $\mathcal{X}_{I}$ or $\mathcal{Y}_{I}$ cannot always become generators.
    In order to examine this, it is necessary to construct linearly independent operators regarding with $\mathcal{X}_{I}$ or $\mathcal{Y}_{I}$.
    Then, let us define the following operators $\sigma_{K}^{(l)}$.
    \begin{align}
        \label{symmetric polynomial}
        \left\{
            \begin{aligned}
            \sigma^{(0)}_{K} &= 1\\
            \sigma^{(1)}_{K} &= \sum_{i \in \{1,\dots,n\} \setminus K} D_i\\
            \sigma^{(2)}_{K} &= \sum_{\substack{1 \leq i < j \leq n \\ i,j \in \{1,\dots,n\} \setminus K}} D_i D_j\\
            &\vdots \\
            \sigma^{(n-|K|)}_K &= \prod_{i \in \{1,\dots,n\} \setminus K} D_i \\
            \sigma^{(l)}_K &= 0 &\quad (l < 0~,\quad n- |K| < l)
            \end{aligned}
        \right.~,
    \end{align}
    where we choose $K=\phi$ or $K = \left\{i_1,\dots,i_{2 L}\right\}$ satisfying $1 \leq L \leq \left[ \frac{n}{2} \right]$.
    Then, we can construct the following operators consisting of $\mathcal{X}_{I}$ or $\mathcal{Y}_{I}$
    \begin{align}
        \label{basis operators XI}
        \sigma_{\phi}^{|I|} \mathcal{X}~,\quad \prod_{j=1}^{L}\left(D_{i_{2j-1}} -D_{i_{2j}}\right) \sigma^{(|I|-L)}_{K} \mathcal{X}~,
    \end{align}
    and
    \begin{align}
        \label{basis operators YI}
        \sigma_{\phi}^{|I|}\mathcal{Y}~,\quad \prod_{i=1}^{L}\left(D_{i_{2j-1}} -D_{i_{2j}}\right) \sigma^{(|I|-L)}_{K} \mathcal{Y}~,
    \end{align}
    where $\prod_{j=1}^{L} \left(D_{i_{2j-1}} -D_{i_{2j}}\right)$ must satisfy condition.\ref{no blank} and \ref{no intersection}.
    Therefore, let us prove that the above operators are linearly independent with respect to $|I|$,
    and that (\ref{basis operators XI}) or (\ref{basis operators YI}) for $L < |I| \leq n-L$ are all composite operators.

    \begin{proof}
        Let us construct linear maps $F$ and $G$ from the OPEs regarding with the stress tensor $T$.
        $F$ is derived from the $(-3)$-th degree of (\ref{TX}), (\ref{T XI}), (\ref{TY}) and (\ref{T YI}) as follows.
        \begin{align}
            F: \mathcal{X}_{I} \mapsto \left\{
                \begin{aligned}
                    &2 \sum_{i \in I} \mathcal{X}_{I \setminus \{i\}} &\quad (I \neq \phi)\\
                    &0 &\quad (I = \phi)
                \end{aligned}
            \right.~,
        \end{align}
        and
        \begin{align}
            F: \mathcal{Y}_{I} \mapsto \left\{
                \begin{aligned}
                    &-2 \sum_{i \in I} \mathcal{Y}_{I \setminus \{i\}} &\quad (I \neq \phi) \\
                    &0 &\quad (I = \phi)
                \end{aligned}
            \right.~.
        \end{align}
        Then, we can derive the following actions;
        \begin{align}
            \label{I to I-1 for XI}
            \left\{
                \begin{aligned}
                    \sigma_{\phi}^{\left(|I|\right)} \mathcal{X} &\overset{F}{\mapsto} 2(n-|I|+1) \sigma_{\phi}^{\left(|I|-1\right)} \mathcal{X} \\
                    \prod_{j=1}^{L} \left(D_{i_{2j-1}} -D_{i_{2j}} \right) \sigma_{K}^{\left(|I| -L\right)} \mathcal{X} &\overset{F}{\mapsto} 2(n-|I|+L+1) \prod_{j=1}^{L} \left(D_{i_{2j-1}} -D_{i_{2j}} \right) \sigma_{K}^{\left(|I| -L -1\right)} \mathcal{X}
                \end{aligned}
            \right.~,
        \end{align}
        and
        \begin{align}
            \label{I to I-1 for YI}
            \left\{
                \begin{aligned}
                    \sigma_{\phi}^{\left(|I|\right)} \mathcal{Y} &\overset{F}{\mapsto} -2(n-|I|+1) \sigma_{\phi}^{\left(|I|-1\right)} \mathcal{Y} \\
                    \prod_{j=1}^{L} \left(D_{i_{2j-1}} -D_{i_{2j}} \right) \sigma_{K}^{\left(|I| -L\right)} \mathcal{Y} &\overset{F}{\mapsto} -2 (n-|I|+L+1) \prod_{j=1}^{L} \left(D_{i_{2j-1}} -D_{i_{2j}} \right) \sigma_{K}^{\left(|I| -L -1\right)} \mathcal{Y}
                \end{aligned}
            \right.~,
        \end{align}
        where $I \subset \{1,\dots,n\}$ and $K= \left\{i_1,\dots,i_{2L}\right\}$ satisfying $1 \leq L \leq \min \left(|I|,\left[ \frac{n}{2} \right]\right)$.
        In particular, when $n-|I|+L \geq 0$ holds, the kernel of this linear map $F$ come from $\sigma_{\phi}^{\left(|I|-1\right)}=0 $ or $\sigma_{K}^{\left(|I|-L-1\right)} = 0$, except for the trivial case.
        Since these conditions are equivalent of $|I|=L$, these operators are clearly correspondence to $\mathcal{X}$, $\mathcal{Y}$, (\ref{primary from XI}) and (\ref{primary from YI}).

        The other map $G$ is derived from the $(-1)$-th degree of the OPEs regarding with $T$ or the equivalent formula $\sum_{i=1}^{n} T_i + T_U$.\footnote{We can set $T_U \coloneqq -\frac{n}{2} U^2$ in this theory.}
        The former comes from (\ref{TX}), (\ref{T XI}), (\ref{TY}) and (\ref{T YI}):
        \begin{align*}
            T(z) \mathcal{X}_{I}(0) &\sim \cdots + \frac{\partial \mathcal{X}_{I}}{z}~,\\
            T(z) \mathcal{Y}_{I}(0) &\sim \cdots + \frac{\partial \mathcal{Y}_{I}}{z}~.
        \end{align*}
        The latter comes from (\ref{T-3}), (\ref{T-4}), (\ref{T-5}), (\ref{T-6}) and the OPEs regarding with $T_U$:
        \begin{align}
            T_i(z) \mathcal{X}_{I}(0) &\sim \cdots + \frac{2 T_i \mathcal{X}_{I \setminus \{i\}}}{z} &\quad (i \in I)~,\notag \\
            T_i(z) \mathcal{X}_{I}(0) &\sim \cdots + \frac{\mathcal{X}_{I \cup \{i\}}}{z} &\quad (i \notin I)~,\notag \\
            T_i(z) \mathcal{Y}_{I}(0) &\sim \cdots + \frac{-2 T_i \mathcal{Y}_{I \setminus \{i\}}}{z} &\quad (i \in I)~,\notag \\
            T_i(z) \mathcal{Y}_{I}(0) &\sim \cdots + \frac{-\mathcal{Y}_{I \cup \{i\}}}{z} &\quad (i \notin I)~,\notag
        \end{align}
        and
        \begin{align}
            T_U(z) \mathcal{X}_{I}(0) &\sim -\frac{n \mathcal{X}_{I}}{2z^2} -\frac{n U \mathcal{X}_{I}}{z}~,\\
            T_U(z) \mathcal{Y}_{I}(0) &\sim -\frac{n \mathcal{Y}_{I}}{2z^2} + \frac{n U \mathcal{Y}_{I}}{z}~.
        \end{align}
        Therefore, we can construct the following equivalences from the $(-1)$-th degree of the above OPEs.
        \begin{align}
            \label{composite XI 1}
            \sum_{i \in \{1,\dots,n\} \setminus I} \mathcal{X}_{I \cup \{i\}} &\equiv 
            \left\{
                \begin{aligned}
                    &\partial \mathcal{X} + n U \mathcal{X} &\quad (I = \phi) \\
                    &\partial \mathcal{X}_{I} -2 \sum_{i \in I} T_i \mathcal{X}_{I \setminus \{i\}} + n U \mathcal{X}_{I} &\quad (I \neq \phi, \{1,\dots,n\})
                \end{aligned}
            \right.~,\\
            \label{composite XI 2}
            0 &\equiv \partial \mathcal{X}_{I} -2 \sum_{i \in I} T_i \mathcal{X}_{I \setminus \{i\}} +n U \mathcal{X}_{I} \quad (I = \{1,\dots,n\})~, \\
            \label{composite YI 1}
            \sum_{i \in \{1,\dots,n\} \setminus I} \mathcal{Y}_{I \cup \{i\}} &\equiv 
            \left\{
                \begin{aligned}
                    &-\partial \mathcal{Y} + n U \mathcal{Y} &\quad (I = \phi) \\
                    &-\partial \mathcal{Y}_{I} -2 \sum_{i \in I} T_i \mathcal{Y}_{I \setminus \{i\}} + n U \mathcal{Y}_{I} &\quad (I \neq \phi, \{1,\dots,n\})
                \end{aligned}
            \right.~, \\
            \label{composite YI 2}
            0 &\equiv -\partial \mathcal{Y}_{I} -2 \sum_{i \in I} T_i \mathcal{Y}_{I \setminus \{i\}} +n U \mathcal{Y}_{I} \quad (I = \{1,\dots,n\})~.
        \end{align}
        Thus, $G$ can be defined by using (\ref{composite XI 1}) and (\ref{composite XI 2}), or (\ref{composite YI 1}) and (\ref{composite YI 2}) as follows.
        \begin{align*}
            G: \mathcal{X}_{I} \mapsto \left\{
                \begin{aligned}
                    &\sum_{i \in \{1,\dots,n\} \setminus I} \mathcal{X}_{I \cup \{i\}} &\quad (I \neq \{1,\dots,n\}) \\
                    &0 &\quad (I = \{1,\dots,n\})
                \end{aligned}
            \right.~,
        \end{align*}
        and
        \begin{align*}
            G: \mathcal{Y}_{I} \mapsto \left\{
                \begin{aligned}
                    &\sum_{i \in \{1,\dots,n\} \setminus I} \mathcal{Y}_{I \cup \{i\}} &\quad (I \neq \{1,\dots,n\}) \\
                    &0 &\quad (I = \{1,\dots,n\})
                \end{aligned}
            \right.~,
        \end{align*}
        repectively.
        Then, we can derive the following actions;
        \begin{align}
            \label{I to I+1 for XI}
            \left\{
                \begin{aligned}
                    \sigma_{\phi}^{\left(|I|\right)} \mathcal{X} &\overset{G}{\mapsto} \left(|I|+1\right) \sigma_{\phi}^{\left(|I|+1\right)} \mathcal{X} \\
                    \prod_{j=1}^{L} \left(D_{i_{2j-1}} -D_{i_{2j}} \right) \sigma_{K}^{\left(|I| -L\right)} \mathcal{X} &\overset{G}{\mapsto} \left(|I|-L+1\right) \prod_{j=1}^{L} \left(D_{i_{2j-1}} -D_{i_{2j}} \right) \sigma_{K}^{\left(|I| -L +1\right)} \mathcal{X}
                \end{aligned}
            \right.~,
        \end{align}
        and
        \begin{align}
            \label{I to I+1 for YI}
            \left\{
                \begin{aligned}
                    \sigma_{\phi}^{\left(|I|\right)} \mathcal{Y} &\overset{G}{\mapsto} \left(|I|+1\right) \sigma_{\phi}^{\left(|I|+1\right)} \mathcal{Y} \\
                    \prod_{j=1}^{L} \left(D_{i_{2j-1}} -D_{i_{2j}} \right) \sigma_{K}^{\left(|I| -L\right)} \mathcal{Y} &\overset{G}{\mapsto} (|I|-L+1) \prod_{j=1}^{L} \left(D_{i_{2j-1}} -D_{i_{2j}} \right) \sigma_{K}^{\left(|I| -L +1\right)} \mathcal{Y}
                \end{aligned}
            \right.~,
        \end{align}
        where $I \subset \{1,\dots,n\}$ and $K = \{i_1,\dots,i_{2L}\}$ satisfying $1 \leq L \leq \min \left(|I|, \left[ \frac{n}{2} \right]\right)$. 
        In particular, when $|I| -L\geq 0$ holds for $0 \leq L \leq \min \left(|I|, \left[ \frac{n}{2}\right]\right)$, the kernel of this linear map $G$ partially comes from $\sigma_{\phi}^{\left(|I|+1\right)} = 0$ or $\sigma_{K}^{\left(|I|-L+1\right)} = 0$,
        except for the trivial case.
        These conditions are equivalent of $|I|= n -L$.

        The total number of $\mathcal{X}_{I}$ (resp. $\mathcal{Y}_{I}$) is $\binom{n}{|I|}$ with respect to $|I|$. On the other hand,
        the number of (\ref{basis operators XI}) (resp. (\ref{basis operators YI})) is equal to 
        \begin{align}
            \frac{n-2L+1}{n-L+1} \binom{n}{L}~,
        \end{align}
        with respect to $L$ for $0 \leq L \leq \min \left(|I|,\left[\frac{n}{2}\right] \right)$. Since the following equality for $L \geq 1$ holds 
        \begin{align}
            \frac{n-2L+1}{n-L+1} \binom{n}{L} = \binom{n}{L} - \binom{n}{L-1}~,
        \end{align}
        the following summation with regarding $L$ can derive
        \begin{align}
            \binom{n}{|I|} =
            \left\{
                \begin{aligned}
                    &\sum_{L=0}^{|I|} \frac{n-2L+1}{n-L+1} \binom{n}{L} &\quad \left(0 \leq |I| \leq \left[ \frac{n}{2} \right]\right)\\
                    &\sum_{L=0}^{n-|I|} \frac{n -2L +1}{n-L+1} \binom{n}{L} &\quad \left(\left[ \frac{n}{2} \right]+1 \leq |I| \leq n \right)
                \end{aligned}
            \right.~,
        \end{align}
        with respect to $|I|$.
        Therefore, it satisfies requirement that (\ref{basis operators XI}) or (\ref{basis operators YI}) are basis operators.

        When $0 \leq |I| < \left[ \frac{n}{2} \right]$, the linear map $F \circ G$ acts as follows.
        \begin{align}
            \left\{
                \begin{aligned}
                    \sigma_{\phi}^{\left(|I| \right)} \mathcal{X} &\overset{F \circ G}{\mapsto} 2\left(|I|+1\right) \left(n-|I|\right) \sigma_{\phi}^{\left(|I| \right)}\mathcal{X} \neq 0 \\
                    \prod_{j=1}^{L} \left(D_{i_{2j-1}} -D_{i_{2j}}\right) \sigma_{K}^{\left(|I|-L\right)} \mathcal{X} &\overset{F \circ G}{\mapsto} 2\left(|I|-L+1\right) \left(n-|I|+L\right) \prod_{j=1}^{L} \left(D_{i_{2j-1}} -D_{i_{2j}}\right) \sigma_{K}^{\left(|I|-L\right)} \mathcal{X} \neq 0
                \end{aligned}
            \right.~,
        \end{align}
        and
        \begin{align}
            \left\{
                \begin{aligned}
                    \sigma_{\phi}^{\left(|I| \right)} \mathcal{Y} &\overset{F \circ G}{\mapsto} -2\left(|I|+1\right) \left(n-|I|\right) \sigma_{\phi}^{\left(|I| \right)}\mathcal{Y} \neq 0\\
                    \prod_{j=1}^{L} \left(D_{i_{2j-1}} -D_{i_{2j}}\right) \sigma_{K}^{\left(|I|-L\right)} \mathcal{Y} &\overset{F \circ G}{\mapsto} -2\left(|I|-L+1\right) \left(n-|I|+L\right) \prod_{j=1}^{L} \left(D_{i_{2j-1}} -D_{i_{2j}}\right) \sigma_{K}^{\left(|I|-L\right)} \mathcal{Y} \neq 0
                \end{aligned}
            \right.~,
        \end{align}
        where $I \subset \{1,\dots,n\}$ and $K=\left\{i_1,\dots,i_{2L}\right\}$ satisfying $1 \leq L \leq |I|$.
        Since these actions are equivalent to multiplying by a non-zero constant, $F \circ G$ is isomorphic for $0 \leq |I| < \left[ \frac{n}{2} \right]$.
        In other words, the kernel of $F \circ G$ must be trivial and the image of $G$ cannot be primary.
        Thus, when (\ref{basis operators XI}) or (\ref{basis operators YI}) are basis operators for $0 \leq |I| < \left[ \frac{n}{2} \right]$,
        (\ref{basis operators XI}) or (\ref{basis operators YI}) for $|I|+1$ are also basis.
        Since $\mathcal{X}$ or $\mathcal{Y}$ are always basis, (\ref{basis operators XI}) or (\ref{basis operators YI}) for $1 \leq |I| \leq \left[\frac{n}{2}\right]$ are also basis.

        Similarly, when $\left[ \frac{n}{2} \right] < |I| \leq n$, $G \circ F$ is also isomorphic as follows.
        \begin{align}
            \left\{
                \begin{aligned}
                    \sigma_{\phi}^{\left(|I| \right)} \mathcal{X} &\overset{G \circ F}{\mapsto} 2|I|\left(n-|I|+1\right) \sigma_{\phi}^{\left(|I| \right)}\mathcal{X} \neq 0 \\
                    \prod_{j=1}^{L} \left(D_{i_{2j-1}} -D_{i_{2j}}\right) \sigma_{K}^{\left(|I|-L\right)} \mathcal{X} &\overset{G \circ F}{\mapsto} 2\left(|I|-L\right) \left(n-|I|+L+1\right) \prod_{j=1}^{L} \left(D_{i_{2j-1}} -D_{i_{2j}}\right) \sigma_{K}^{\left(|I|-L\right)} \mathcal{X} \neq 0
                \end{aligned}
            \right.~,
        \end{align}
        and
        \begin{align}
            \left\{
                \begin{aligned}
                    \sigma_{\phi}^{\left(|I| \right)} \mathcal{Y} &\overset{G \circ F}{\mapsto} -2 |I|\left(n-|I|+1\right) \sigma_{\phi}^{\left(|I| \right)}\mathcal{Y} \neq 0\\
                    \prod_{j=1}^{L} \left(D_{i_{2j-1}} -D_{i_{2j}}\right) \sigma_{K}^{\left(|I|-L\right)} \mathcal{Y} &\overset{G \circ F}{\mapsto} -2\left(|I|-L\right) \left(n-|I|+L+1\right) \prod_{j=1}^{L} \left(D_{i_{2j-1}} -D_{i_{2j}}\right) \sigma_{K}^{\left(|I|-L\right)} \mathcal{Y} \neq 0
                \end{aligned}
            \right.~,
        \end{align}
        Therefore, when (\ref{basis operators XI}) or (\ref{basis operators YI}) are basis operators for $\left[ \frac{n}{2} \right] < |I| \leq n$,
        (\ref{basis operators XI}) or (\ref{basis operators YI}) for $|I|-1$ are also basis. In particular, since $\sigma_{\phi}^{(n)} \mathcal{X}$ or $\sigma_{\phi}^{(n)} \mathcal{Y}$ are always basis,
        (\ref{basis operators XI}) or (\ref{basis operators YI}) for $\left[\frac{n}{2} \right] \leq |I| \leq n-1$ are also basis.
        Moreover, no primary operator consisting of $\mathcal{X}_{I}$ and $\mathcal{Y}_{I}$ for $\left[ \frac{n}{2} \right] < |I| \leq n$ exist.

        Thus, (\ref{basis operators XI}) or (\ref{basis operators YI}) are basis operators with respect to $|I|$ for $0 \leq |I| \leq n$.
        In particular, all primary operator consisting of $\mathcal{X}_{I}$ or $\mathcal{Y}_{I}$ can be given by (\ref{primary from XI}) or (\ref{primary from YI}) satisfying condition.\ref{no blank} and \ref{no intersection},
        and all non-primary operator composed of $\mathcal{X}_{I}$ or $\mathcal{Y}_{I}$ can be expanded by comparing the linear map $G$ with (\ref{composite XI 1}) or (\ref{composite YI 1}). 
    \end{proof}
    Therefore, we conclude the discussion of generators of this bosonic VOA from Sec.\ref{To derive non-trivial operators} to this section.
    As mentioned in \cite{Nishinaka:2025nbe}, T. Nishinaka and I explicitly solved the BRST cohomology problem with Mathematica for $n = 2,3,\dots,7$, and checked generators of this bosonic VOA at all the dimensions $1,\frac{3}{2},2,\dots,\frac{13}{2}$.
    By the series of discussions, we can solidify the validity of the conjecture regarding with generators;
    \begin{itemize}
        \item Generators of this bosonic VOA for $n=2$ are $U$, $\mathcal{X}$, $\mathcal{Y}$, $T_1 -T_2$, $\left(D_1 -D_2\right) \mathcal{X}$ and $\left(D_1 -D_2\right)\mathcal{Y}$.
        \item Generators of this bosonic VOA for $n=3$ are $U$, $\mathcal{X}$, $\mathcal{Y}$, $T$, $T_1 -T_2$, $T_2 -T_3$, $W_1 -W_2$, $W_2 -W_3$, $\left(D_1 -D_2\right) \mathcal{X}$, $\left(D_2 -D_3\right) \mathcal{X}$, $\left(D_1 -D_2 \right) \mathcal{Y}$ and $(D_2 -D_3)\mathcal{Y}$.
        \item Generators of this bosonic VOA for $n \geq 4$ are $U$, $\mathcal{X}$, $\mathcal{Y}$, $T$, (\ref{primary from T_i}), $W_{i}$ for $i=1,\dots,n$ and (\ref{primary from XI}) and (\ref{primary from YI}) satisfying condition.\ref{no blank} and \ref{no intersection}.
    \end{itemize}
\section{The OPEs with \texorpdfstring{$\mathcal{X}_{I}$}{} and \texorpdfstring{$\mathcal{Y}_{I}$}{}}
    \label{The OPEs with mathcal X I and mathcal Y I}
    Until the previous section, we have identified linearly independent generators of this bosonic VOA. Then, it is necessary for us to calculate the OPEs with these generators. Since the OPEs of generators are too complex, these OPEs regarding with each constituent element of generators are given in Appendix.\ref{OPE 1}, from (\ref{TT}) to (\ref{XY}).
    While the OPEs from (\ref{TT}) to (\ref{W-5}) are easy or simple to calculate, the remaining OPEs with $\mathcal{X}_{I}$ and $\mathcal{Y}_{I}$ are too complex and difficult.
    In this section, we derive the OPEs between $\mathcal{X}_{I}$ and $\mathcal{X}_{J}$, or $\mathcal{Y}_{I}$ and $\mathcal{Y}_{J}$.
    \footnote{At least, I do not understand the complete OPEs between $\mathcal{X}_{I}$ and $\mathcal{Y}_{J}$. Therefore, I only give some OPEs in Appendix.\ref{OPE 1}.}
    In order for us to derive the results, let us introduce the sets $I_i$, $\left(I_i\right)_j$ and $\mathcal{J}_{LM,p}$ as follows. 

    First, $\mathcal{I}_{n}$ and $I_i$ are defined by
    \begin{align}
        \mathcal{I}_{n} &\coloneqq \{1,\dots,n\}~, \\
        I_i &\coloneqq \left\{\text{All sets of}\ i\ \text{element removed from}\ I \subset \mathcal{I}_{n}\right\}~.
    \end{align}
    For $I = \{1,2,3\}$, we construct the sets $I_0$, $I_1$, $I_2$ and $I_3$ as follows.
    \begin{align*}
        I_0 &= \{\{1,2,3\}\}~,\\
        I_1 &= \{\{2,3\},\{1,3\},\{1,2\}\}~,\\
        I_2 &= \{\{1\},\{2\},\{3\}\}~,\\
        I_3 &= \{ \phi\}~.
    \end{align*}
    
    Second, since the set $I_i$ for $i=0,\dots,|I|$ is clearly finite, we can introduce $\left(I_i\right)_j$ as follows.
    \begin{align}
        \left(I_i\right)_j \coloneqq \left(\text{The } j \text{-th element of } I_i\right)~.
    \end{align}
    For the previous example $I_1 =\left\{\{2,3\},\ \{1,3\},\ \{1,2\}\right\}$, we set $\left(I_1\right)_1 = \{2,3\}$, $\left(I_1\right)_2 = \{1,3\}$ and $\left(I_1\right)_3 = \{1,2\}$. 
    Thus, we can generally reconstruct $I_i$ as follows;
    \begin{align}
        I_i = \left\{\left(I_i\right)_j \left|~ j=1,\dots,\binom{|I|}{i}\right.\right\}~,
    \end{align}
    where we set $\phi_0=\{\phi\}$ and $\left(\phi_0\right)_1 = \phi$ for convenience.

    Finally, for sets $L,M \in \mathcal{I}_{n}$ and a positive integer $p$, let us define the following set;
    \begin{align}
        \label{J LM p}
        \mathcal{J}_{LM,p} \coloneqq \left\{(a,b) \in \mathbb{Z} \times \mathbb{Z} \mid 0 \leq a \leq |L|,\ 0 \leq b \leq |M|,\ p \leq a+b\right\}~.
    \end{align}

    Using the above sets $I_i$, $\left(I_i\right)_j$ and $\mathcal{J}_{LM,p}$, the OPEs between $\mathcal{X}_{I}$ and $\mathcal{X}_{J}$ can be derived as follows.
    \begin{itemize}
        \item When $I, J \in \mathcal{I}_{n}$ and $I \neq J$,
    \begin{align}
    \label{XX1}
        \mathcal{X}_{I}(z) \mathcal{X}_{J}(0) &\sim 
        \sum_{p=1}^{|L|+|M|}\sum_{(a,b)\in \mathcal{J}_{LM,p}}\sum_{i=1}^{|L_a|}\sum_{j=1}^{|M_b|}\frac{1}{z^p} \notag \\
        &\quad \times \frac{(-1)^{a-p}}{\left(a+b-p\right)!}\binom{a+b-1}{p-1}\left(\partial^{a+b-p}\mathcal{X}_{K\cup (L_a)_i}\right)\mathcal{X}_{K\cup (M_b)_j}~,
    \end{align}
        where we set $K = I\cap J$, $L = I \setminus K$ and $M = J \setminus K$.
        \item When $I,J \in \mathcal{I}_{n}$ and $I = J$,
    \begin{align}
    \label{XX2}
        \mathcal{X}_{I}(z) \mathcal{X}_{I}(0) &\sim 0~.
    \end{align}
    \end{itemize}
    
    Similarly, the OPEs between $\mathcal{Y}_{I}$ and $\mathcal{Y}_{J}$ can be calculated as follows.
    \begin{itemize}
        \item When $I,J \in \mathcal{I}_{n}$ and $I \neq J$,
    \begin{align}
    \label{YY1}        
        \mathcal{Y}_{I}(z) \mathcal{Y}_{J}(0) 
        &\sim \sum_{p=1}^{|L|+|M|}\sum_{(a,b)\in \mathcal{J}_{LM,p}}\sum_{i=1}^{|L_a|}\sum_{j=1}^{|M_b|}\frac{1}{z^p} \notag \\
        &\quad \times \frac{(-1)^{b-p}}{\left(a+b-p\right)!} \binom{a+b-1}{p-1}\left(\partial^{a+b-p}\mathcal{Y}_{K\cup (L_a)_i}\right)\mathcal{Y}_{K\cup (M_b)_j}~,
    \end{align}
    where we set $K = I\cap J$, $L = I \setminus K$ and $M = J \setminus K$.
    \item When $I,J \in \mathcal{I}_{n}$ and $I =J$,
    \begin{align}
    \label{YY2}
        \mathcal{Y}_{I}(z) \mathcal{Y}_{I}(0) 
        &\sim 0~.
    \end{align}
    \end{itemize}

    In this section, let us derive the results (\ref{XX1}), (\ref{XX2}), (\ref{YY1}) and (\ref{YY2}). 
    Since (\ref{YY1}) and (\ref{YY2}) are similar to (\ref{XX1}) and (\ref{XX2}) respectively, we will proof the OPE between $\mathcal{X}_{I}$ and $\mathcal{X}_{J}$, namely (\ref{XX1}) and (\ref{XX2}) specifically. 
    The proof involves very complicated computations and elaborate reordering of sums. Therefore, let us prepare the identities and the relations first.
    \subsection{Identities for proofs}
        \label{Identities for proofs}
        In this subsection, let us introduce and prove the following identities used proof of (\ref{XX1}) and (\ref{XX2}).

        \begin{lem}
            \label{lem1}
            For two integers $a$ and $b$ satisfying $a \leq b$, the following identity holds.
            \begin{align}
                \sum_{i=a}^{b+1} \frac{(-1)^i}{(i-a)!(b+1-i)!} = 0~.
            \end{align}
        \end{lem}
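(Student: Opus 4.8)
The plan is to collapse the alternating sum into a single instance of the binomial theorem by one reindexing. First I would substitute $j := i - a$, so that the summation index $j$ ranges over $0, 1, \dots, b+1-a$; abbreviating $N := b+1-a$, the hypothesis $a \le b$ is precisely what guarantees $N \ge 1$. Under this change of variable one has $i - a = j$ and $b+1-i = N - j$, so that
\[
\sum_{i=a}^{b+1}\frac{(-1)^i}{(i-a)!\,(b+1-i)!}
= \sum_{j=0}^{N}\frac{(-1)^{j+a}}{j!\,(N-j)!}
= \frac{(-1)^a}{N!}\sum_{j=0}^{N}\binom{N}{j}(-1)^j .
\]

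Next I would apply the binomial theorem to the remaining sum: $\sum_{j=0}^{N}\binom{N}{j}(-1)^j = (1-1)^N$, which equals $0$ for every integer $N \ge 1$. Combining the two displays yields the claimed identity.

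There is no genuine obstacle here; the computation is entirely routine. The only subtlety worth flagging is that the strict inequality $N \ge 1$ — equivalently the stated hypothesis $a \le b$ — is essential, since in the degenerate case $N = 0$ the same manipulation would instead give $(1-1)^0 = 1 \ne 0$, so the hypothesis cannot be dropped.
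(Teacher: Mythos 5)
Your proof is correct and follows essentially the same route as the paper: reindex the sum to start at zero, factor out $(-1)^a/(b+1-a)!$ to expose the binomial coefficients, and invoke $(1-1)^{b+1-a}=0$, with the hypothesis $a\leq b$ ensuring the exponent is positive. The remark about the degenerate case $N=0$ is a nice touch but not needed.
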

        \begin{proof}
            We can prove Lemma \ref{lem1} by the following easy calculations.
            \begin{align*}
                \sum_{i=a}^{b+1} \frac{(-1)^i}{(i-a)!(b+1-i)!}
                &= \frac{(-1)^{a}}{(b+1-a)!} \sum_{i=a}^{b+1} (-1)^{i-a} \binom{b+1-a}{i-a} \\
                &= \frac{(-1)^{a}}{(b+1-a)!} \sum_{i=0}^{b+1-a} (-1)^{i} \binom{b+1-a}{i} \\
                &= \frac{(-1)^{a}}{(b+1-a)!} \left(1-1\right)^{b+1-a} \\
                &=0~.
            \end{align*}
        \end{proof}
        \begin{lem}
            \label{lem2}
            For two positive integers $a$ and $b$, and a non-negative integer $c$, the following identity holds.
            \begin{align}
                \sum_{i=\max(0,c-b)}^{\min(a,c)} \binom{a}{i} \binom{b}{c-i} = \binom{a+b}{c}~.
            \end{align}
        \end{lem}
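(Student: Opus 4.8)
The plan is to prove Lemma~\ref{lem2} (the Vandermonde convolution) by a coefficient-extraction argument, comparing the two sides of the elementary polynomial identity $(1+x)^a(1+x)^b = (1+x)^{a+b}$.

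First I would expand each factor on the left-hand side using the binomial theorem:
\[
(1+x)^a = \sum_{i=0}^{a}\binom{a}{i}x^i~,\qquad (1+x)^b = \sum_{j=0}^{b}\binom{b}{j}x^j~,
\]
so that the coefficient of $x^c$ in the product $(1+x)^a(1+x)^b$ equals $\sum_{i+j=c}\binom{a}{i}\binom{b}{j}$, the sum being over all pairs $(i,j)$ of non-negative integers with $i\le a$, $j\le b$, $i+j=c$. Writing $j=c-i$, the conditions $0\le i\le a$ and $0\le c-i\le b$ together are equivalent to $\max(0,c-b)\le i\le\min(a,c)$, which is precisely the range stated in the lemma.

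Next I would extract the coefficient of $x^c$ from the right-hand side $(1+x)^{a+b}$, which is just $\binom{a+b}{c}$ by the binomial theorem. Equating the two coefficient extractions yields the asserted identity. Equivalently, one may give a purely combinatorial proof: both sides count the number of $c$-element subsets of an $(a+b)$-element set partitioned into a block of size $a$ and a block of size $b$, the left-hand side classifying such subsets by the number $i$ of elements drawn from the first block.

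There is no genuine obstacle here; the only point deserving a moment's attention is confirming that the natural index set $\{(i,j):i+j=c,\ 0\le i\le a,\ 0\le j\le b\}$ coincides with $\max(0,c-b)\le i\le\min(a,c)$, and noting that any binomial coefficients falling outside this range vanish, so the identity is unaffected by a harmless over-extension of the summation limits.
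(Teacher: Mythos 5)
Your proposal is correct and follows essentially the same route as the paper: both extract the coefficient of $x^c$ from the identity $(1+x)^a(1+x)^b=(1+x)^{a+b}$ and match the summation range $\max(0,c-b)\le i\le\min(a,c)$ (the paper does this by a four-case check on the relative sizes of $a$, $b$, $c$, while you handle it in one step). No gap.
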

        \begin{proof}
            Let us expand $(1+x)^{a+b}$ simply.
            \begin{align*}
                (1+x)^{a+b} = \sum_{c=0}^{a+b} \binom{a+b}{c} x^c.
            \end{align*}
            Since $(1+x)^{a+b} = (1+x)^a (1+x)^b$, $(1+x)^{a+b}$ can be also expanded as follows.
            \begin{align*}
                (1+x)^{a+b} = \sum_{i=0}^{a}\sum_{j=0}^{b} \binom{a}{i}\binom{b}{j} x^{i+j}.
            \end{align*}
            By comparing two polynomial expansions, we can prove the identity. 
            Consider the sum of the latter that satisfies $i+j=c$, where $c$ is the degree of $x$.
            \begin{itemize}
                \item When $0 \leq c \leq \min(a,b)$, we know that the following equality holds
                \begin{align*}
                    \binom{a}{0} \binom{b}{c} +\dots +\binom{a}{c} \binom{b}{0} = \sum_{i=0}^{c} \binom{a}{i}\binom{b}{c-i} =\binom{a+b}{c}~.
                \end{align*}
                \item For $a \leq c \leq b$, noting the upper bound of the range of the sum, the following equality holds
                \begin{align*}
                    \binom{a}{0} \binom{b}{c} +\dots +\binom{a}{a} \binom{b}{c-a} = \sum_{i=0}^{a} \binom{a}{i} \binom{b}{c-i} = \binom{a+b}{c}~.
                \end{align*}
                \item When $b \leq c \leq a$, noting the lower bound of the range of the sum similarly, the following equality holds
                \begin{align*}
                    \binom{a}{c-b} \binom{b}{b} +\dots + \binom{a}{c} \binom{b}{0} = \sum_{i=c-b}^{c} \binom{a}{i} \binom{b}{c-i} = \binom{a+b}{c}~.
                \end{align*}
                \item When $\max(a,b) \leq c \leq a+b$, the following equality holds Similarly
                \begin{align*}
                    \binom{a}{c-b} \binom{b}{b} +\dots +\binom{a}{a} \binom{b}{c-a} = \sum_{i=c-b}^{a} \binom{a}{i} \binom{b}{c-i} = \binom{a+b}{c}~.
                \end{align*}
            \end{itemize}
            Thus, we can prove that the following identity holds
            \begin{align*}
                \sum_{i=\max(0,c-b)}^{\min(a,c)} \binom{a}{i} \binom{b}{c-i} = \binom{a+b}{c}.
            \end{align*}
        \end{proof}
        \begin{lem}
            \label{lem3}
            For non-negative integers $a$ and $b$, the following identity holds
            \begin{align}
                \sum_{c=0}^{b} \frac{(-1)^{c}}{c+a+1} \binom{b}{c} = \frac{a! b!}{\left(a+b+1\right)!}~.
            \end{align}
        \end{lem}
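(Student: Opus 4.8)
The plan is to reduce the sum to the Euler Beta integral. Writing $\frac{1}{c+a+1}=\int_0^1 t^{c+a}\,dt$ and exchanging the finite sum with the integral, I get
\[
\sum_{c=0}^{b}\frac{(-1)^c}{c+a+1}\binom{b}{c}=\int_0^1 t^{a}\sum_{c=0}^{b}\binom{b}{c}(-t)^c\,dt=\int_0^1 t^{a}(1-t)^{b}\,dt,
\]
where the inner sum collapses by the binomial theorem. The remaining integral is the Beta value $B(a+1,b+1)=\frac{a!\,b!}{(a+b+1)!}$, which is exactly the right-hand side; for integer arguments this Beta evaluation is itself obtainable by the induction described below, so no analytic input beyond the definition of the factorial is truly needed.

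If one prefers a purely combinatorial argument in the style of Lemmas \ref{lem1} and \ref{lem2}, the alternative is induction on $b$ with $a$ kept as a free parameter. The base case $b=0$ is the trivial identity $\frac{1}{a+1}=\frac{a!}{(a+1)!}$. For the inductive step I would apply Pascal's rule $\binom{b+1}{c}=\binom{b}{c}+\binom{b}{c-1}$, split the sum accordingly, observe that the first piece is the statement at $b$ (the term $c=b+1$ drops since $\binom{b}{b+1}=0$), and reindex the second piece by $c\mapsto c+1$ to recognise it, up to an overall sign, as the statement at $b$ with $a$ replaced by $a+1$. Collecting the two contributions gives
\[
\frac{a!\,b!}{(a+b+1)!}-\frac{(a+1)!\,b!}{(a+b+2)!}=\frac{a!\,b!}{(a+b+2)!}\bigl[(a+b+2)-(a+1)\bigr]=\frac{a!\,(b+1)!}{(a+b+2)!},
\]
which is precisely the claim for $b+1$.

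The computation is entirely routine; the only point that requires a moment's care in the inductive version is that the induction must be carried out with $a$ arbitrary, so that the inductive hypothesis is available both at $a$ and at $a+1$ — without this the reindexed second sum cannot be closed. With the integral proof there is effectively no obstacle at all beyond recalling the Beta-function evaluation at integer arguments.
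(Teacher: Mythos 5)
Your integral argument is correct and is essentially the paper's own proof: the paper integrates $t^a(1+t)^b$ over $[0,-1]$ and expands by the binomial theorem, which is your Beta integral $\int_0^1 t^a(1-t)^b\,dt$ up to the substitution $t\mapsto -t$ and a sign that cancels on both sides. Your supplementary induction on $b$ (with $a$ free) is a valid alternative but not needed; the main route matches the paper.
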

        \begin{proof}
            Let us compute the following integral.
            \begin{align*}
                \int_{0}^{-1}t^a (1+t)^b dt = \int_{0}^{-1} \sum_{c=0}^{b} \binom{b}{c}t^{c+a} dt.
            \end{align*}
            The right-hand side of this integral can be computed as follows:
            \begin{align*}
                    \int_{0}^{-1} \sum_{c=0}^{b} \binom{b}{c}t^{c+a} dt 
                    &= \sum_{c=0}^{b} \frac{(-1)^{c+a+1}}{c+a+1} \binom{b}{c}~.
            \end{align*}
            On the other hand, the left-hand side is equal to the following result
            \begin{align*}
                    \int_{0}^{-1}t^a (1+t)^b dt &= (-1)^{a+1} \frac{a! b!}{\left(a+b+1\right)!}~.
            \end{align*}
            Thus, since the following equality holds
            \begin{align}
                \sum_{c=0}^{b} \frac{(-1)^{c+a+1}}{c+a+1} \binom{b}{c} = (-1)^{a+1} \frac{a! b!}{\left(a+b+1\right)!}~,
            \end{align}
            this lemma is proved.
        \end{proof}
        
        For the proof of the next Lemma, we need some technical calculations.
        \begin{lem}
            \label{lem4}
            For three positive integers $p~$, $q~$ and $r~$ satisfying $p \leq q < r$, the following identity holds.
            \begin{align}
                \sum_{s=p}^{q} (-1)^{p+q+s}\binom{r}{s} \binom{s-1}{p-1}\binom{r-s-1}{r-q-1}
                = (-1)^{p} \binom{r-1}{p-1} +(-1)^{q}\binom{r-1}{q}.
            \end{align}
        \end{lem}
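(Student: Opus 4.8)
The plan is to reduce this alternating triple-binomial identity to two completely standard tools: the Chu--Vandermonde convolution and the partial alternating row sum $\sum_{k=0}^{m}(-1)^{k}\binom{r}{k}=(-1)^{m}\binom{r-1}{m}$ (itself a one-line telescoping consequence of Pascal's rule). The whole point is to choose the rewriting that makes the $s$-dependence of the sign disappear, so that after swapping the order of summation the inner sum is an honest Vandermonde convolution.

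First I would rewrite the last factor by upper negation, $\binom{r-s-1}{r-q-1}=\binom{r-s-1}{q-s}=(-1)^{q-s}\binom{q-r}{q-s}$. The $s$-dependent sign $(-1)^{q-s}$ then cancels the $(-1)^{s}$ hidden in the prefactor $(-1)^{p+q+s}$, leaving the $s$-independent sign $(-1)^{p+2q}=(-1)^{p}$, so that
\begin{align*}
\sum_{s=p}^{q}(-1)^{p+q+s}\binom{r}{s}\binom{s-1}{p-1}\binom{r-s-1}{r-q-1}
=(-1)^{p}\sum_{s=p}^{q}\binom{r}{s}\binom{s-1}{p-1}\binom{q-r}{q-s}.
\end{align*}
Next I would expand the middle factor as $\binom{s-1}{p-1}=\sum_{k=p}^{s}(-1)^{k-p}\binom{s}{k}$ and apply the trinomial revision $\binom{r}{s}\binom{s}{k}=\binom{r}{k}\binom{r-k}{s-k}$; interchanging the two sums over the region $p\le k\le s\le q$ yields
\begin{align*}
(-1)^{p}\sum_{k=p}^{q}(-1)^{k-p}\binom{r}{k}\sum_{s=k}^{q}\binom{r-k}{s-k}\binom{q-r}{q-s}.
\end{align*}

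The inner sum is now a Chu--Vandermonde convolution, equal to $\binom{(r-k)+(q-r)}{q-k}=\binom{q-k}{q-k}=1$; here the hypothesis $q<r$ is exactly what guarantees $r-k>q-k\ge 0$, so the finite sum coincides with the full convolution and no early truncation occurs. What remains is $(-1)^{p}\sum_{k=p}^{q}(-1)^{k-p}\binom{r}{k}=\sum_{k=p}^{q}(-1)^{k}\binom{r}{k}$, and by the partial alternating row sum this equals $(-1)^{q}\binom{r-1}{q}-(-1)^{p-1}\binom{r-1}{p-1}=(-1)^{p}\binom{r-1}{p-1}+(-1)^{q}\binom{r-1}{q}$, which is the right-hand side.

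The main obstacle is purely bookkeeping rather than conceptual: carefully tracking the signs and the summation bounds through the negation, the expansion, and the interchange, and checking the degenerate cases such as $p=q$ and $k=q$ where some binomials collapse. A secondary point worth flagging is that Lemma~\ref{lem2} is stated only for nonnegative upper indices, so the inner Vandermonde step with the negative upper index $q-r$ must be justified either via the polynomial form of Chu--Vandermonde or by rephrasing it so that Lemma~\ref{lem2} applies directly; confirming that $q<r$ is precisely the condition making this reduction clean is the key consistency check.
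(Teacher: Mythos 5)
Your proof is correct, but it takes a genuinely different route from the paper's. The paper's proof splits $\binom{r}{s}$ by Pascal's rule into $\binom{r-1}{s}+\binom{r-1}{s-1}$, rewrites each product so that the $s$-dependence sits in a single factor of the form $\frac{1}{s}\binom{q-p}{s-p}$ resp.\ $\frac{1}{r-s}\binom{q-p}{q-s}$, and then evaluates the two alternating sums over $s$ with Lemma \ref{lem3} (the beta-integral identity), each piece producing one of the two terms on the right-hand side. You instead remove the $s$-dependent sign by upper negation of $\binom{r-s-1}{r-q-1}$, expand $\binom{s-1}{p-1}$ as a partial alternating row sum, apply trinomial revision, swap the sums, collapse the inner sum to a Chu--Vandermonde convolution equal to $1$, and finish with $\sum_{k=0}^{m}(-1)^k\binom{r}{k}=(-1)^m\binom{r-1}{m}$; this bypasses Lemma \ref{lem3} entirely and is more elementary, at the cost of needing Vandermonde with the negative upper index $q-r$, which Lemma \ref{lem2} as stated does not cover. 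As you note, the polynomial form $\sum_{t=0}^{c}\binom{x}{t}\binom{y}{c-t}=\binom{x+y}{c}$, valid identically in $x,y$, closes that gap, so the issue is one of citation rather than substance. One correction of emphasis: the hypothesis $q<r$ is not what makes the Vandermonde step legitimate (the polynomial identity needs no such condition); its real role is in the opening rewriting $\binom{r-s-1}{r-q-1}=\binom{r-s-1}{q-s}$, which requires $r-q-1\ge 0$, and indeed for $q=r$ the identity itself acquires an extra term (cf.\ Lemma \ref{lem5}).
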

        \begin{proof}
            Using the following identity,
            \begin{align*}
                \binom{r}{s} = \binom{r-1}{s} + \binom{r-1}{s-1}~,
            \end{align*}
            we can transform as follows.
            \begin{align*}
                \binom{r-1}{s} \binom{s-1}{p-1}\binom{r-s-1}{r-q-1} 
                &= \binom{r-1}{q} \binom{s-1}{p-1} \binom{q}{s} \\
                &= \frac{q}{s} \binom{r-1}{q} \binom{q-1}{p-1} \binom{q-p}{s-p}~.
            \end{align*}
            Similarly, 
            \begin{align*}
                \binom{r-1}{s-1} \binom{s-1}{p-1} \binom{r-s-1}{r-q-1} 
                &=\frac{r-p}{r-s}\binom{r-1}{p-1} \binom{r-p-1}{r-q-1} \binom{q-p}{q-s}~.
            \end{align*}
            From these results and Lemma \ref{lem3}, the identity can be derived by the following calculation.
            \begin{align*}
                \sum_{s=p}^{q} (-1)^{p+q+s}\binom{r}{s} \binom{s-1}{p-1}\binom{r-s-1}{r-q-1}
                &= (-1)^{q}\binom{r-1}{q}  q \binom{q-1}{p-1} \sum_{s=0}^{q-p} \frac{(-1)^{s}}{s+p} \binom{q-p}{s} \\
                &\quad +(-1)^{p} \binom{r-1}{p-1}  (r-p) \binom{r-p-1}{r-q-1} \sum_{s=0}^{q-p} \frac{(-1)^{s}}{s+r-q} \binom{q-p}{s} \\
                &= (-1)^{p} \binom{r-1}{p-1} +(-1)^{q}\binom{r-1}{q}~.
            \end{align*}
        \end{proof}

        The next Lemma is the most important in the proof of (\ref{XX1}).
        \begin{lem}
            \label{lem5}
            For three positive integer $p~$, $q~$ and $r~$ satisfying $p \leq q \leq r$, the following identity holds.
            \begin{align}
                \sum_{s=p}^{q} (-1)^{p+q+s}\binom{r}{s} \binom{s-1}{p-1}\binom{r-s-1}{r-q}
                = \delta_{q=r} (-1)^{p} \binom{r-1}{p-1} -(-1)^{p} \binom{r-1}{p-1} +(-1)^{q}\binom{r-1}{q-1}~.
            \end{align}
        \end{lem}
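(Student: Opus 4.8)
The plan is to reduce Lemma~\ref{lem5} to Lemma~\ref{lem4} and Lemma~\ref{lem3} by a single application of Pascal's rule to the last binomial factor, together with a direct evaluation of one auxiliary sum. Concretely, I would write $\binom{r-s-1}{r-q}=\binom{r-s}{r-q}-\binom{r-s-1}{r-q-1}$ and split the left-hand side as $\Sigma = A-B$ with
\begin{align*}
    A &\coloneqq \sum_{s=p}^{q}(-1)^{p+q+s}\binom{r}{s}\binom{s-1}{p-1}\binom{r-s}{r-q}~,\\
    B &\coloneqq \sum_{s=p}^{q}(-1)^{p+q+s}\binom{r}{s}\binom{s-1}{p-1}\binom{r-s-1}{r-q-1}~.
\end{align*}

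The term $B$ is handled by Lemma~\ref{lem4}. When $q<r$ it is exactly the left-hand side appearing there, so $B=(-1)^{p}\binom{r-1}{p-1}+(-1)^{q}\binom{r-1}{q}$; when $q=r$ every summand vanishes because $\binom{r-s-1}{-1}=0$, so $B=0$. This is the only place where the strict inequality $q<r$ of Lemma~\ref{lem4} is used, and it is precisely the source of the $\delta_{q=r}$ term in the statement.

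For $A$ the point is that no denominator $r-s$ ever occurs, so it can be evaluated uniformly in $p,q,r$. Using $\binom{r-s}{r-q}=\binom{r-s}{q-s}$ together with the trinomial revision $\binom{r}{s}\binom{r-s}{q-s}=\binom{r}{q}\binom{q}{s}$, then the elementary rearrangement $\binom{q}{s}\binom{s-1}{p-1}=\frac{q}{s}\binom{q-1}{p-1}\binom{q-p}{s-p}$, and finally reindexing $t=s-p$, the sum collapses to
\begin{align*}
    A = (-1)^{q}\,q\binom{r}{q}\binom{q-1}{p-1}\sum_{t=0}^{q-p}\frac{(-1)^{t}}{t+p}\binom{q-p}{t}~.
\end{align*}
Applying Lemma~\ref{lem3} with $a=p-1$ and $b=q-p$ evaluates the remaining sum to $\frac{(p-1)!(q-p)!}{q!}$, and the accumulated constants telescope to $A=(-1)^{q}\binom{r}{q}$, valid for all $1\le p\le q\le r$.

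It then remains to combine the two pieces. When $q<r$, one more use of Pascal's rule in the form $\binom{r}{q}=\binom{r-1}{q}+\binom{r-1}{q-1}$ gives $\Sigma = A-B = (-1)^{q}\binom{r-1}{q-1}-(-1)^{p}\binom{r-1}{p-1}$, which is the claimed expression with $\delta_{q=r}=0$. When $q=r$ one simply has $\Sigma = A = (-1)^{r}\binom{r}{r}=(-1)^{r}$, while the right-hand side of the lemma collapses to $(-1)^{r}\binom{r-1}{r-1}=(-1)^{r}$ because the two $(-1)^{p}\binom{r-1}{p-1}$ terms cancel. The only genuine care needed is the standard bookkeeping of the index ranges and of the degenerate binomials at the endpoints $s=p$ and $s=q$ (which cause no trouble inside $A$, since $s\ge p\ge 1$ keeps every denominator positive), and I expect the main obstacle to be nothing more than recognizing that the $\delta_{q=r}$ correction is not an artifact of the bound in Lemma~\ref{lem4} but genuinely reflects $B$ degenerating to $0$ at $q=r$ while $A$ continues to contribute $(-1)^{r}$.
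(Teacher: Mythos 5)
Your proof is correct, but it organizes the reduction differently from the paper. The paper handles $p=q$ as a separate base case (the single surviving term is $(-1)^p\binom{r}{p}\binom{r-p-1}{r-p}=\delta_{p=r}(-1)^p$), and for $p<q$ it peels off the $s=q$ summand and recognizes the remaining sum over $s=p,\dots,q-1$ as exactly the Lemma~\ref{lem4} sum with $q$ replaced by $q-1$ (using $\binom{r-s-1}{r-q}=\binom{r-s-1}{r-(q-1)-1}$); the $\delta_{q=r}$ term then comes from the peeled-off factor $\binom{r-q-1}{r-q}$. You instead apply Pascal's rule to the last binomial, split the sum as $A-B$, evaluate $B$ by Lemma~\ref{lem4} at the same $q$ (noting it degenerates to $0$ at $q=r$, which is your source of the $\delta_{q=r}$ term), and compute $A=(-1)^q\binom{r}{q}$ from scratch by trinomial revision plus Lemma~\ref{lem3} — essentially replaying the internal manipulation used in the paper's proof of Lemma~\ref{lem4}. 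Both routes are sound; yours avoids the separate $p=q$ case (Lemma~\ref{lem4} with $p=q<r$ covers it) and makes the origin of the $\delta_{q=r}$ correction transparent, at the price of the extra closed-form evaluation of $A$, while the paper's version needs no new evaluation beyond Lemma~\ref{lem4} but requires the base case and the shift $q\to q-1$. The only point worth making explicit in your write-up is the binomial convention $\binom{n}{k}=0$ for $k<0$ (and $\binom{-1}{0}=1$), which is what legitimizes the Pascal split at the corner term $s=q=r$ and the vanishing of $B$ there; with that stated, the argument is complete.
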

        \begin{proof}
            For $p=q$, we can prove it by the following easy calculation
            \begin{align*}
                \sum_{s=p}^{q} (-1)^{p+q+s}\binom{r}{s} \binom{s-1}{p-1}\binom{r-s-1}{r-q}
                &= (-1)^{p} \binom{r}{p} \binom{r-p-1}{r-p} \notag \\
                &= \delta_{p=r} (-1)^{p}~.
            \end{align*}
            When $p <q$, we can derive the following result by using Lemma \ref{lem4}
            \begin{align*}
                &\quad \sum_{s=p}^{q} (-1)^{p+q+s}\binom{r}{s} \binom{s-1}{p-1}\binom{r-s-1}{r-q} \notag \\
                &= -\sum_{s=p}^{q-1} (-1)^{p+(q-1)+s}\binom{r}{s} \binom{s-1}{p-1}\binom{r-s-1}{r-(q-1)-1}
                + (-1)^{p} \binom{r}{q} \binom{q-1}{p-1} \binom{r-q-1}{r-q} \notag \\
                &= -\left\{(-1)^p \binom{r-1}{p-1} +(-1)^{q-1} \binom{r-1}{q-1}\right\}
                + \delta_{q=r}(-1)^{p} \binom{r-1}{p-1} \notag \\
                &= \delta_{q=r}(-1)^p \binom{r-1}{p-1}
                -(-1)^p \binom{r-1}{p-1} +(-1)^q \binom{r-1}{q-1}~.
            \end{align*}
        \end{proof}
    \subsection{The construction of recursive relational formulas}
        \label{The construction of recursive relational formulas}
        In this subsection, let us derive the recursive relation for the proof of (\ref{XX1}) and (\ref{XX2}). 
        Since we use mathematical induciton to prove them, it is convenience for us to study the relation between the case for $n=k$ and $n= k+1$.

        Let $\mathcal{P}_{n}$ and $\mathcal{Q}_{n}$ set as follows;            
        \begin{align}
            \left\{
                \begin{aligned}
                    \mathcal{P}_{1} &\coloneqq P_1 &\quad \mathcal{Q}_{1} &\coloneqq Q_1\\
                    \mathcal{P}_{n+1} &\coloneqq \left(P_{n+1}\mathcal{P}_{n}\right)_0 &\quad \mathcal{Q}_{n+1} &\coloneqq \left(Q_{n+1} \mathcal{Q}_{n}\right)_0
                \end{aligned}
            \right.~,
        \end{align}  
        where $P_i$ and $Q_i$ are equal to $X_i$ or $D_i X_i$.
        The OPEs between $P_i$ and $Q_j$ are given by
        \begin{align}
            \left(P_i Q_j\right)_p =\left\{
                \begin{aligned}
                    &\frac{1}{(-p)!}\left(\partial^{(-p)}P_i Q_j\right)_0 &(p \leq 0)\\
                    &\delta_{ij}R_{P_i,Q_i} &(p=1)\\
                    &0 &(p \geq 2)
                \end{aligned}
            \right.~,
        \end{align}
        where $R_{P_i,Q_i}$ is the following expression;
        \begin{align}
            R_{P_i,Q_i} \coloneqq \left\{
                \begin{aligned}
                    &-\left(X_i X_i\right)_0 &(P_i = X_i,\ Q_i = D_i X_i)\\
                    &\left(X_i X_i\right)_0 &(P_i = D_i X_i,\ Q_i = X_i)\\
                    &0 &(P_i = Q_i)
                \end{aligned}
            \right.~.
        \end{align}

        Here, let derive the relation between $\left(\mathcal{P}_{n}\mathcal{Q}_{n}\right)_p$ and $\left(\mathcal{P}_{n+1}\mathcal{Q}_{n+1}\right)_q$.
        As the calculations are too complicated to derive this relationship, it is shown in the following two steps, Lemma \ref{lem6} and Theorem \ref{thm1}.
        \begin{lem}
            \label{lem6}
            For a positive integer $n$ and an integer $p$, the following recursive relational formula holds
            \begin{align}
                \left(\mathcal{P}_{n+1} \mathcal{Q}_{n+1}\right)_p 
                &= \sum_{l \geq p-1} \sum_{q \geq l} \frac{(-1)^l(-1)^q}{(q-l)!(l-p+1)!} \left(R_{P_{n+1},Q_{n+1}} \partial^{q -p +1}\left( \mathcal{P}_{n} \mathcal{Q}_{n}\right)_q \right)_{0} \notag \\
                &\quad +\sum_{l \geq p}\sum_{m=0}^{l-p} \sum_{q \geq l} \frac{(-1)^l(-1)^q}{(q-l)!(l-p-m)!} \left(P_{n+1} \left(Q_{n+1} \partial^{q -m -p}\left( \mathcal{P}_{n} \mathcal{Q}_{n}\right)_q \right)_0 \right)_{-m}.
            \end{align}
        \end{lem}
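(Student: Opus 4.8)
The plan is to establish the recursion of Lemma~\ref{lem6} by a direct computation that iterates the reordering identities~(\ref{OPEdefs1})--(\ref{OPEdefs4}), driven by two structural facts. First, $\mathcal{P}_n$ and $\mathcal{Q}_n$ are built only from the index-$\le n$ fields $X_1,\dots,X_n$ (and the $Y_1,\dots,Y_n$ appearing inside the $D_iX_i$), whereas $P_{n+1}$, $Q_{n+1}$ and $R_{P_{n+1},Q_{n+1}}=\pm(X_{n+1}X_{n+1})_0$ involve only the index $n+1$; hence every positive-mode product between one of $P_{n+1},Q_{n+1},R_{P_{n+1},Q_{n+1}}$ and $\mathcal{P}_n$ or $\mathcal{Q}_n$ vanishes. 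Second, among the new fields the only nonvanishing positive-mode product is $(P_{n+1}Q_{n+1})_1=R_{P_{n+1},Q_{n+1}}$, while $(P_{n+1}Q_{n+1})_p=0$ for $p\ge 2$. Consequently, in $(\mathcal{P}_{n+1}\mathcal{Q}_{n+1})_p=\bigl((P_{n+1}\mathcal{P}_n)_0(Q_{n+1}\mathcal{Q}_n)_0\bigr)_p$ the factor $P_{n+1}$ has exactly two fates: it either contracts once, necessarily with $Q_{n+1}$, which produces the first sum in the statement, or it survives as a spectator normal-ordered outside everything, which produces the second.

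Concretely I would first apply~(\ref{OPEdefs1}) to bring $(Q_{n+1}\mathcal{Q}_n)_0$ to the left, producing a prefactor $\sum_{l\ge p}\frac{(-1)^l}{(l-p)!}\partial^{l-p}$ acting on a product of mode $l$, and then apply quasi-associativity~(\ref{OPEdefs2}) twice: once to pull $Q_{n+1}$ out of $(Q_{n+1}\mathcal{Q}_n)_0$ and once to pull $P_{n+1}$ out of $(P_{n+1}\mathcal{P}_n)_0$. At each step the correction sum $\sum_{m>0}\binom{\cdot-1}{m-1}(\cdots)$ in~(\ref{OPEdefs2}) collapses dramatically by the first structural fact: extracting $Q_{n+1}$ keeps only the branch in which $Q_{n+1}$ is forced to meet $P_{n+1}$, and extracting $P_{n+1}$ retains only the mode-$1$ term of its correction, i.e.\ the $R_{P_{n+1},Q_{n+1}}$ contraction. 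The index-$\le n$ remnant is then a normal-ordered product of $\mathcal{P}_n$ and $\mathcal{Q}_n$, which I rewrite through the modes $(\mathcal{P}_n\mathcal{Q}_n)_q$ by a final use of~(\ref{OPEdefs1}); the derivatives accumulated along the way are pushed onto these modes via~(\ref{OPEdefs3})--(\ref{OPEdefs4}), becoming shifts of the mode index together with the factorial weights $\frac{1}{(q-l)!}$, $\frac{1}{(l-p+1)!}$ and $\frac{1}{(l-p-m)!}$ recorded in the statement.

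The main obstacle is the combinatorial bookkeeping once all the expansions are in place: one is left with three (or four) nested mode sums carrying signs $(-1)^l$, $(-1)^q$ and interdependent summation limits, and the point is to reindex them so that both the coefficients and the ranges come out exactly as written---$l\ge p-1$ with $q\ge l$ for the $R_{P_{n+1},Q_{n+1}}$-term, versus $l\ge p$ with $0\le m\le l-p$ and $q\ge l$ for the spectator term. Collapsing the intermediate sums will need the elementary alternating-sum identity Lemma~\ref{lem1} (indeed, carrying out the $l$-sum of the $R_{P_{n+1},Q_{n+1}}$-term already forces $q=p-1$, consistent with it being a single contraction), together with a Chu--Vandermonde convolution of the type of Lemma~\ref{lem2} wherever two binomial coefficients are multiplied. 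I do not expect to invoke Lemmas~\ref{lem3}--\ref{lem5} for the present recursion; those heavier identities are reserved for the later passage from this recursion to~(\ref{XX1}).
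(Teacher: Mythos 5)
Your plan is correct and is essentially the paper's own proof: the paper likewise applies the reordering identities (\ref{OPEdefs1})--(\ref{OPEdefs3}) to $\left(\mathcal{P}_{n+1}\left(Q_{n+1}\mathcal{Q}_{n}\right)_0\right)_p$, splitting it into the branch where $Q_{n+1}$ is pulled outside and the branch where it contracts with $P_{n+1}$ to produce $R_{P_{n+1},Q_{n+1}}$ (the only nonvanishing positive mode, all cross-OPEs with the index-$\le n$ part being regular), and then reorganizes the nested mode sums into the two stated families. One small caveat: neither Lemma~\ref{lem1} nor Lemma~\ref{lem2} is actually needed at this stage (they enter only later, in Theorem~\ref{thm1}); in the paper the final step is instead a reshuffling of the $m$-sum in which the $R_{P_{n+1},Q_{n+1}}$-contributions at modes $-m-1$ with $m\ge 0$ arising from the two branches cancel, leaving exactly the normally ordered $R$-term and the spectator term of the statement.
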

        \begin{proof}
            Let us transform $\left(\mathcal{P}_{n+1} \mathcal{Q}_{n+1} \right)$ into the following expression by using (\ref{OPEdefs2}).
        \begin{align}
            \label{1-lem6}
            \left(\mathcal{P}_{n+1}\mathcal{Q}_{n+1}\right)_p 
            &= \left(Q_{n+1}\left(\mathcal{P}_{n+1} \mathcal{Q}_{n}\right)_p\right)_0 
            + \sum_{l>0}\binom{p-1}{l-1}\left(\left(\mathcal{P}_{n+1}Q_{n+1}\right)_l \mathcal{Q}_{n}\right)_{p-l}~.
        \end{align}
        
        For the first term of the right-hand side of (\ref{1-lem6}), the following equality can be carefully derived by using (\ref{OPEdefs1}), (\ref{OPEdefs2}) and (\ref{OPEdefs3}).
        \begin{align}
            &\quad \left(Q_{n+1}\left(\mathcal{P}_{n+1} \mathcal{Q}_{n}\right)_p\right)_0 \notag \\
            &= \sum_{l \geq p}\sum_{m=0}^{l-p}\sum_{q \geq l} \frac{(-1)^l(-1)^q}{(l-p-m)!(q-l)!} \notag\\
            &\quad \times \left\{\left(P_{n+1} \left(Q_{n+1} \partial^{q-p-m}\left(\mathcal{P}_{n} \mathcal{Q}_{n}\right)_q\right)_{0} \right)_{-m}
            -\left(R_{P_{n+1},Q_{n+1}} \partial^{q-p-m}\left(\mathcal{P}_{n} \mathcal{Q}_{n}\right)_q \right)_{-m-1}\right\}~.
            \label{2-lem6}
        \end{align}

        Similarly, for the second term of the right-hand side of (\ref{1-lem6}),
        \begin{align}
            &\quad \sum_{l>0}\binom{p-1}{l-1}\left(\left(\mathcal{P}_{n+1}Q_{n+1}\right)_l \mathcal{Q}_{n}\right)_{p-l} \notag\\
            &= \sum_{l \geq p-1}\sum_{m=0}^{l-p+1} \sum_{q \geq l}
            \frac{(-1)^l(-1)^q}{(q-l)!(l-p+1-m)!} \left(R_{P_{n+1},Q_{n+1}} \partial^{q -m -p +1}\left( \mathcal{P}_{n} \mathcal{Q}_{n}\right)_q \right)_{-m}~.
            \label{3-lem6}
        \end{align}

        To summarize these equalities, it is necessary for us to transform the sum ``$\sum_{l \geq p-1} \sum_{m=0}^{l-p+1} \sum_{q \geq l} \cdots$'' in (\ref{3-lem6}) as follows.
        \begin{align}
            &\quad \sum_{l \geq p-1}\sum_{m=0}^{l-p+1} \sum_{q \geq l} \frac{(-1)^l(-1)^q}{(q-l)!(l-p+1-m)!} \left(R_{P_{n+1},Q_{n+1}} \partial^{q -m -p +1}\left( \mathcal{P}_{n} \mathcal{Q}_{n}\right)_q \right)_{-m} \notag\\
            &= \sum_{l \geq p-1} \sum_{q \geq l} \frac{(-1)^l(-1)^q}{(q-l)!(l-p+1)!} \left(R_{P_{n+1},Q_{n+1}} \partial^{q -p +1}\left( \mathcal{P}_{n} \mathcal{Q}_{n}\right)_q \right)_{0} \notag\\
            &\quad +\sum_{l \geq p}\sum_{m=0}^{l-p} \sum_{q \geq l} \frac{(-1)^l(-1)^q}{(q-l)!(l-p-m)!} \left(R_{P_{n+1},Q_{n+1}} \partial^{q -m -p}\left( \mathcal{P}_{n} \mathcal{Q}_{n}\right)_q \right)_{-m-1}~.
            \label{4-lem6}
        \end{align}

        Thus, we can derive the recursive relational formula for $\left(\mathcal{P}_{n}\mathcal{Q}_{n}\right)_p$ from (\ref{2-lem6}) and (\ref{4-lem6}).
        \begin{align*}
            \left(\mathcal{P}_{n+1}\mathcal{Q}_{n+1}\right)_{p} 
            &= \sum_{l \geq p}\sum_{m=0}^{l-p}\sum_{q \geq l} \frac{(-1)^l(-1)^q}{(l-p-m)!(q-l)!} \left(P_{n+1} \left(Q_{n+1} \partial^{q-p-m}\left(\mathcal{P}_{n} \mathcal{Q}_{n}\right)_q\right)_{0} \right)_{-m} \\
            &\quad +\sum_{l \geq p-1} \sum_{q \geq l} \frac{(-1)^l(-1)^q}{(l-p+1)!(q-l)!} \left(R_{P_{n+1},Q_{n+1}} \partial^{q -p +1}\left( \mathcal{P}_{n} \mathcal{Q}_{n}\right)_q \right)_{0}~.
        \end{align*}
        \end{proof}
        \begin{thm}
            \label{thm1}
            For a positive integer $n$ and a integer $p$, the following recursive relational formula holds
            \begin{align}
                \left(\mathcal{P}_{n+1} \mathcal{Q}_{n+1}\right)_p = \left\{
                    \begin{aligned}
                        &\left(R_{P_{n+1},Q_{n+1}} \left(\mathcal{P}_{n} \mathcal{Q}_{n}\right)_{p-1}\right)_0  \\
                        &\quad +\sum_{q=p}^{n} \left(P_{n+1} \left(Q_{n+1} \left(\mathcal{P}_{n}\mathcal{Q}_{n}\right)_{q} \right)_0 \right)_{p-q} &(p \leq n)\\
                        &\left(R_{P_{n+1},Q_{n+1}} \left(\mathcal{P}_{n} \mathcal{Q}_{n}\right)_{p-1}\right)_0 &(p = n+1)\\
                        &0 &(p \geq n+2)
                    \end{aligned}
                \right.~.
            \end{align}
        \end{thm}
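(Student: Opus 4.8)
The plan is to feed the recursive formula of Lemma \ref{lem6} into two reorganizations of its multiple sums, after which the identity of Lemma \ref{lem1} collapses each inner sum over $l$ to a single surviving term; a short separate induction then shows $(\mathcal{P}_n\mathcal{Q}_n)_q=0$ for $q\ge n+1$, which truncates the surviving sum and produces the three cases.

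\emph{Step 1 (collapsing the sums of Lemma \ref{lem6}).} For the first sum in Lemma \ref{lem6} I would exchange $\sum_{l\ge p-1}\sum_{q\ge l}$ for $\sum_{q\ge p-1}\sum_{l=p-1}^{q}$; the inner sum $\sum_{l=p-1}^{q}\frac{(-1)^l}{(q-l)!\,(l-p+1)!}$ is, up to the sign $(-1)^{p-1}$, exactly the alternating sum of Lemma \ref{lem1} with $a=p-1$ and $b=q-1$, hence vanishes for $q\ge p$ and equals $(-1)^{p-1}$ for $q=p-1$. Combined with the overall factor $(-1)^{q}$ and the resulting $\partial^{0}$, the first sum collapses to $(R_{P_{n+1},Q_{n+1}}(\mathcal{P}_n\mathcal{Q}_n)_{p-1})_0$. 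For the second sum I would rewrite $\sum_{l\ge p}\sum_{m=0}^{l-p}\sum_{q\ge l}$ as $\sum_{q\ge p}\sum_{m=0}^{q-p}\sum_{l=p+m}^{q}$; now the inner sum over $l$ is Lemma \ref{lem1} with $a=p+m$, $b=q-1$, so it vanishes unless $m=q-p$, in which case it equals $(-1)^{q}$ and forces $\partial^{q-m-p}=\partial^{0}$ and outer level $-m=p-q$. Hence, for every integer $p$ and every $n\ge1$,
\begin{align*}
(\mathcal{P}_{n+1}\mathcal{Q}_{n+1})_p &= (R_{P_{n+1},Q_{n+1}}(\mathcal{P}_n\mathcal{Q}_n)_{p-1})_0 \\
&\quad + \sum_{q\ge p}(P_{n+1}(Q_{n+1}(\mathcal{P}_n\mathcal{Q}_n)_q)_0)_{p-q}.
\end{align*}

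\emph{Step 2 (vanishing of high modes).} I would then prove by induction on $n$ that $(\mathcal{P}_n\mathcal{Q}_n)_q=0$ for all $q\ge n+1$. The base case $n=1$ is the stated OPE $(P_1 Q_1)_q=0$ for $q\ge 2$. For the inductive step, apply the identity of Step 1 at level $p=q\ge n+2$: the first term vanishes because $p-1\ge n+1$, and every term of the sum vanishes because $q'\ge p\ge n+2$, so by the induction hypothesis all the relevant modes of $\mathcal{P}_n\mathcal{Q}_n$ are zero.

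\emph{Step 3 (reading off the cases).} Finally, substitute the vanishing of Step 2 back into the identity of Step 1. For $p\le n$ the sum truncates to $\sum_{q=p}^{n}$ while $(\mathcal{P}_n\mathcal{Q}_n)_{p-1}$ survives since $p-1\le n$; this is the first case. For $p=n+1$ the sum is empty and only $(R_{P_{n+1},Q_{n+1}}(\mathcal{P}_n\mathcal{Q}_n)_{n})_0$ remains, the second case. For $p\ge n+2$ both terms vanish, the third case. The only real difficulty is the bookkeeping in Step 1: one must transpose the constraint $\{\,l\ge p,\ 0\le m\le l-p,\ q\ge l\,\}$ into $\{\,q\ge p,\ 0\le m\le q-p,\ p+m\le l\le q\,\}$ correctly and match the indices $a,b$ of Lemma \ref{lem1} against $p,q,m$ precisely; once the summation order is fixed, every remaining step is mechanical.
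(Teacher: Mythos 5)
Your proposal is correct and follows essentially the same route as the paper: both rest on Lemma \ref{lem6}, a reordering of the triple sums so that Lemma \ref{lem1} collapses the inner $l$-sums to the $q=p-1$ and $m=q-p$ terms, and an induction on $n$ giving $(\mathcal{P}_n\mathcal{Q}_n)_q=0$ for $q\geq n+1$ to truncate the surviving sum. The only difference is cosmetic ordering (you collapse first and then run the vanishing induction on the collapsed identity, while the paper proves the vanishing directly from Lemma \ref{lem6} before collapsing), and your derivation in fact yields $(\mathcal{P}_n\mathcal{Q}_n)_{p-1}$ in the $R$-term, consistent with the theorem and correcting what appears to be a typo in the paper's intermediate equation (\ref{2-thm1}).
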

        \begin{proof}
            First, we prove the statement that $\left(\mathcal{P}_{n} \mathcal{Q}_{n}\right)_{p} = 0$ always holds for $p \geq n+1$, using mathematical induction.
            \begin{enumerate}
                \item When $n=1$, it is trivial.
                \item When $n=k$, where $k$ is a positive integer, let us assume that this statement holds.
                For $n=k+1$, the following equality is derived by applying Lemma \ref{lem6}
                \begin{align}
                        &\quad \left(\mathcal{P}_{k+1}\mathcal{Q}_{k+1}\right)_{p} \notag \\
                        &= \sum_{l \geq p}\sum_{m=0}^{l-p}\sum_{q \geq l} \frac{(-1)^l(-1)^q}{(l-p-m)!(q-l)!} \left(P_{k+1} \left(Q_{k+1} \partial^{q-p-m}\left(\mathcal{P}_{k} \mathcal{Q}_{k}\right)_q\right)_{0} \right)_{-m} \notag\\
                        &\quad +\sum_{l \geq p-1} \sum_{q \geq l} \frac{(-1)^l(-1)^q}{(l-p+1)!(q-l)!} \left(R_{P_{k+1},Q_{k+1}} \partial^{q -p +1}\left( \mathcal{P}_{k} \mathcal{Q}_{k}\right)_q \right)_{0}~.
                        \label{1-thm1}
                \end{align}
                Since $\left(\mathcal{P}_{k} \mathcal{Q}_{k}\right)_q =0$ for $q \geq k+1$ holds from the assumption here, $\left(\mathcal{P}_{k+1} \mathcal{Q}_{k+1}\right)_p =0$ can be satisfied for $p \geq k+2$.
                Therefore, this statement for $n=k+1$ is also true.
            \end{enumerate}
            Thus, this statement is proved, namely $\left(\mathcal{P}_{n+1}\mathcal{Q}_{n+1}\right)_{p} = 0$ for $p \geq n+2$.

            Second, for $p \leq n+1$, we can derive that the following equality holds by using Lemma \ref{lem1}
            \begin{align}
                \label{2-thm1}
                \sum_{l \geq p-1} \sum_{q \geq l} \frac{(-1)^l(-1)^q}{(l-p+1)!(q-l)!} \left(R_{P_{n+1},Q_{n+1}} \partial^{q -p +1}\left( \mathcal{P}_{n} \mathcal{Q}_{n}\right)_q \right)_{0} =
                \left(R_{P_{n+1},Q_{n+1}} \left(\mathcal{P}_{n} \mathcal{Q}_{n}\right)_{n}\right)_0~.
            \end{align}

            Finally, when $p \leq n+1$, we can similarly derive that the following equality holds by using Lemma \ref{lem1}
            \begin{align}
                \label{3-thm1}
                &\quad \sum_{l \geq p}\sum_{m=0}^{l-p}\sum_{q \geq l} \frac{(-1)^l(-1)^q}{(l-p-m)!(q-l)!} \left(P_{n+1} \left(Q_{n+1} \partial^{q-p-m}\left(\mathcal{P}_{n} \mathcal{Q}_{n}\right)_q\right)_{0} \right)_{-m} \notag\\
                &= \left\{
                    \begin{aligned}
                        &\sum_{q=p}^{n}\left(P_{n+1} \left(Q_{n+1} \left(\mathcal{P}_{n} \mathcal{Q}_{n}\right)_q \right)_0 \right)_{p-q} &\quad (p \leq n) \\
                        &0 &\quad (p=n+1)
                    \end{aligned}
                \right.~.
            \end{align}

            Thus, since (\ref{1-thm1}), (\ref{2-thm1}) and (\ref{3-thm1}), the following equality can be derived
            \begin{align*}
                \left(\mathcal{P}_{n+1}\mathcal{Q}_{n+1}\right)_p =\left\{
                \begin{aligned}
                    &\left(R_{P_{n+1},Q_{n+1}} \left(\mathcal{P}_{n} \mathcal{Q}_{n}\right)_{p-1}\right)_0  \\
                    &\quad +\sum_{q=p}^{n} \left(P_{n+1} \left(Q_{n+1} \left(\mathcal{P}_{n}\mathcal{Q}_{n}\right)_{q} \right)_0 \right)_{p-q} &(p \leq n)\\
                    &\left(R_{P_{n+1},Q_{n+1}} \left(\mathcal{P}_{n} \mathcal{Q}_{n}\right)_{p-1}\right)_0 &(p = n+1)\\
                    &0 &(p \geq n+2)
                \end{aligned}
                \right.~.
            \end{align*}
        \end{proof}
        \begin{cor}
            \label{cor1}
            For a positive integer $n$ and a integer $p$, when $P_{n+1}=Q_{n+1}$, the recursive relational equality holds
            \begin{align}
                \left(\mathcal{P}_{n+1} \mathcal{Q}_{n+1}\right)_p = \left\{
                \begin{aligned}
                    &\sum_{q=p}^{n} \left(P_{n+1} \left(Q_{n+1} \left(\mathcal{P}_{n}\mathcal{Q}_{n}\right)_{q} \right)_0 \right)_{p-q} &(p \leq n)\\
                    &0 &(p \geq n+1)
                \end{aligned}                        
                \right.~.
            \end{align}
        \end{cor}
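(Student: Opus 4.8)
The plan is to read Corollary \ref{cor1} off directly from Theorem \ref{thm1} by specializing to $P_{n+1}=Q_{n+1}$. The only extra input is the definition of $R_{P_i,Q_i}$ used to state Theorem \ref{thm1}: its third branch records $R_{P_i,Q_i}=0$ whenever $P_i=Q_i$ (both equal $X_i$ or both equal $D_iX_i$), which is precisely the statement that $P_i(z)Q_i(0)$ has no simple pole, i.e.\ the $p=1$ line of the OPE table $(P_iQ_j)_p$ gives $0$ in this case. So I would first recall this fact, with $i=n+1$.

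Next I would substitute $R_{P_{n+1},Q_{n+1}}=0$ into each of the three branches of Theorem \ref{thm1}. In every branch the term $\bigl(R_{P_{n+1},Q_{n+1}}(\mathcal{P}_n\mathcal{Q}_n)_{p-1}\bigr)_0$ becomes $\bigl(0\cdot(\mathcal{P}_n\mathcal{Q}_n)_{p-1}\bigr)_0=0$, since the normal-ordered product with the zero operator vanishes. For $p\le n$ this leaves exactly $\sum_{q=p}^{n}\bigl(P_{n+1}(Q_{n+1}(\mathcal{P}_n\mathcal{Q}_n)_q)_0\bigr)_{p-q}$; for $p=n+1$ the whole right-hand side of Theorem \ref{thm1} was just the $R$-term, so it now vanishes; and for $p\ge n+2$ it was already $0$. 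Merging the last two ranges gives $(\mathcal{P}_{n+1}\mathcal{Q}_{n+1})_p=0$ for all $p\ge n+1$, which is the claimed two-case formula.

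There is essentially no obstacle: all of the analytic work lives in Lemma \ref{lem6} and Theorem \ref{thm1}, and Corollary \ref{cor1} merely packages the $P_{n+1}=Q_{n+1}$ case in the form needed when both tensor factors are built from the same sequence — i.e.\ when computing $\mathcal{X}_I(z)\mathcal{X}_I(0)$, from which (\ref{XX2}) and likewise (\ref{YY2}) follow by induction on $n$ with base case $X_1(z)X_1(0)\sim0$. If one prefers a self-contained argument avoiding the explicit form of $R$, one can instead re-run the proof of Theorem \ref{thm1} with $P_{n+1}=Q_{n+1}$ imposed from the outset, observing that the $R$-contributions in (\ref{2-lem6}), (\ref{3-lem6}) and (\ref{4-lem6}) drop out because $(P_{n+1}Q_{n+1})_1=0$; this is more laborious but conceptually identical.
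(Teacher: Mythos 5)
Your proposal is correct and follows exactly the paper's own route: the paper proves Corollary \ref{cor1} by noting that $P_{n+1}=Q_{n+1}$ forces $R_{P_{n+1},Q_{n+1}}=0$ and then specializing Theorem \ref{thm1}, which is precisely your argument, including merging the $p=n+1$ and $p\geq n+2$ cases into the single vanishing branch.
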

        \begin{proof}
            When $P_{n+1} = Q_{n+1}$, it can be proven from $R_{P_{n+1},Q_{n+1}}=0$ and Theorem \ref{thm1}.
        \end{proof}

        We have prepared the identities and the recursive relational equalities. In the next subsection, we will explain the proof of (\ref{XX1}) and (\ref{XX2}).
    \subsection{Proof of the OPEs between \texorpdfstring{$\mathcal{X}_I$}{} and \texorpdfstring{$\mathcal{X}_J$}{}, \texorpdfstring{$\mathcal{Y}_I$}{} and \texorpdfstring{$\mathcal{Y}_J$}{}}
        \label{Proof of the OPEs between mathcal X I and mathcal X J, mathcal Y I and mathcal Y J}
        Let us prove (\ref{XX1}) and (\ref{XX2}), that is the following theorem, by Lemma \ref{lem2}, Lemma \ref{lem5}, Theorem \ref{thm1} and Corollary \ref{cor1}.
        \begin{thm}
            \label{thm2}
            For a positive integer $n$ satisfying $n \geq 2$,
            the product $\left(\mathcal{X}_{I} \mathcal{X}_{J}\right)_p$ for $p \geq 1$ is equal to the following result:
            \begin{itemize}
                \item When $I \neq J$,
                \begin{align*}
                    \left(\mathcal{X}_{I} \mathcal{X}_{J}\right)_p
                    &= \left\{
                    \begin{aligned}
                            &\sum_{(a,b) \in \mathcal{J}_{LM,p}} \sum_{i=1}^{|L_a|} \sum_{j=1}^{|M_b|} (-1)^{a-p} \binom{a+b-1}{p-1}\\
                            &\quad \times \left(\mathcal{X}_{K \cup \left(L_a\right)_i} \mathcal{X}_{K \cup \left(M_b\right)_j}\right)_{p-a-b} &(1 \leq p \leq |L| +|M|)\\
                            &0 &(p \geq |L| + |M| +1)
                    \end{aligned}
                    \right.~,
                \end{align*}
                where $K = I \cap J$, $L = I \setminus K$ and $M = J \setminus K$.
                \item When $I=J$,
                \begin{align*}
                    \left(\mathcal{X}_{I} \mathcal{X}_{I}\right)_p = 0~.
                \end{align*}
            \end{itemize}
        \end{thm}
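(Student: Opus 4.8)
The plan is to prove Theorem~\ref{thm2} by induction on $n$, after recalling that $\mathcal{X}_I$ is precisely the nested normal-ordered product $\mathcal{P}_n$ of Section~\ref{The construction of recursive relational formulas}, in which $P_i=D_iX_i$ exactly when $i\in I$ and $P_i=X_i$ otherwise (and $\mathcal{X}_J=\mathcal{Q}_n$ with $J$ in place of $I$); this holds because $\mathcal{D}_i$ obeys the Leibniz rule on nested products while $\hat{\mathcal{A}}_i\mathcal{O}=(\mathcal{A}_i\mathcal{O})_0$, and each $D_i$ with $i\in I$ acts on the single factor $X_i$ of $\mathcal{X}$. Under this presentation $(\mathcal{X}_I\mathcal{X}_J)_p=(\mathcal{P}_n\mathcal{Q}_n)_p$, and Theorem~\ref{thm1} — with Corollary~\ref{cor1} in the subcase $P_{n+1}=Q_{n+1}$ — rewrites $(\mathcal{P}_{n+1}\mathcal{Q}_{n+1})_p$ in terms of the lower-level products $(\mathcal{P}_n\mathcal{Q}_n)_q$, which correspond to $\mathcal{X}_{I\cap\{1,\dots,n\}}$ and $\mathcal{X}_{J\cap\{1,\dots,n\}}$, i.e. exactly the objects controlled by the inductive hypothesis. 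The base case $n=1$ is immediate from $\mathcal{P}_1=P_1$, $\mathcal{Q}_1=Q_1$ and the tabulated values of $(P_1Q_1)_p$.

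The inductive step separates into four cases according to the membership of node $n+1$. If $n+1\notin I\cup J$, then $P_{n+1}=Q_{n+1}=X_{n+1}$ and $R_{P_{n+1},Q_{n+1}}=0$, so Corollary~\ref{cor1} applies: $K,L,M$ are unchanged and the outer $X_{n+1}$'s are reabsorbed into the $\mathcal{X}$-factors on each side. If $n+1\in I\cap J$, the same holds with $X_{n+1}$ replaced by $D_{n+1}X_{n+1}$ and with $n+1$ adjoined to $K$. If $n+1$ lies in exactly one of $I,J$ — say $n+1\in I\setminus J$, so that $n+1$ is adjoined to $L$ and $|L|+|M|$ increases by one — then $P_{n+1}=D_{n+1}X_{n+1}$, $Q_{n+1}=X_{n+1}$ and $R_{P_{n+1},Q_{n+1}}=(X_{n+1}X_{n+1})_0\neq 0$, so one uses the full Theorem~\ref{thm1}: the term $\sum_q\bigl(P_{n+1}(Q_{n+1}(\mathcal{P}_n\mathcal{Q}_n)_q)_0\bigr)_{p-q}$ produces the level-$(n+1)$ contributions in which $n+1$ belongs to the left-hand subset, while $\bigl(R_{P_{n+1},Q_{n+1}}(\mathcal{P}_n\mathcal{Q}_n)_{p-1}\bigr)_0$ produces those in which $n+1$ has been contracted away and is absent from both subsets. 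In each case, after substituting the inductive formula one reorganizes the iterated sums and the nested normal-ordered products by means of (\ref{OPEdefs1})--(\ref{OPEdefs4}); the recombination of subsets of the enlarged $L$ and $M$ into subsets at level $n$ is handled with Lemma~\ref{lem2}, and the matching of the signs $(-1)^{a-p}$, the weights $\binom{a+b-1}{p-1}$ and the derivative orders $a+b-p$ against the two source terms of Theorem~\ref{thm1} is exactly what Lemma~\ref{lem5} supplies (with Lemmas~\ref{lem3} and~\ref{lem4} feeding into it, and Lemma~\ref{lem1} already consumed inside Theorem~\ref{thm1}).

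The second assertion, $(\mathcal{X}_I\mathcal{X}_I)_p=0$ for $p\ge 1$, is the degenerate case $L=M=\phi$ and is the easy part: when $I=J$ we have $P_i=Q_i$ for all $i$, hence $R_{P_i,Q_i}=0$ for all $i$, and Corollary~\ref{cor1} gives $(\mathcal{P}_{n+1}\mathcal{Q}_{n+1})_p=\sum_{q=p}^{n}\bigl(P_{n+1}(Q_{n+1}(\mathcal{P}_n\mathcal{Q}_n)_q)_0\bigr)_{p-q}$, every summand of which involves $(\mathcal{P}_n\mathcal{Q}_n)_q$ with $q\ge p\ge 1$ and so vanishes by induction; on the formula side $\mathcal{J}_{LM,p}$ is empty for $p\ge 1$, so the claimed sum is $0$ as well. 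One should also verify, case by case, the vanishing bound $(\mathcal{X}_I\mathcal{X}_J)_p=0$ for $p>|L|+|M|$, which propagates correctly under the four cases above.

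I expect the asymmetric case $n+1\in I\triangle J$ to be the main obstacle: there the two source terms of Theorem~\ref{thm1} must be reassembled into a single sum over $(a,b)\in\mathcal{J}_{LM,p}$ carrying the correct signs and binomial weights, and this reassembly is essentially the content of Lemma~\ref{lem5}. The delicate point throughout is to track which symplectic-boson factors end up inside which $\mathcal{X}_{K\cup(L_a)_i}$ after repeated application of (\ref{OPEdefs2}), given that the normal-ordered product is neither commutative nor associative; this is what consumes all of the combinatorial preparation of Section~\ref{Identities for proofs}, whereas the index bounds are a comparatively minor bookkeeping matter.
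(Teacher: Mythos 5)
Your proposal is correct and follows essentially the same route as the paper: induction on $n$ driven by the recursion of Theorem~\ref{thm1} and Corollary~\ref{cor1}, split according to whether $P_{n+1}=Q_{n+1}$, $(P_{n+1},Q_{n+1})=(D_{n+1}X_{n+1},X_{n+1})$ or $(X_{n+1},D_{n+1}X_{n+1})$ (the paper's Propositions~\ref{prop2}--\ref{prop4}), with the redundancy counting and reassembly of the sums carried out by Lemma~\ref{lem2} and Lemma~\ref{lem5}, and the $I=J$ case disposed of by Corollary~\ref{cor1}. The only differences are organizational: the paper anchors the induction at an explicitly computed $n=2$ base case (Proposition~\ref{prop1}) rather than $n=1$, and handles $n+1\in J\setminus I$ by commuting factors to reduce to the earlier cases instead of treating it as a mirror image.
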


        Here, let us prove Theorem \ref{thm2} by mathematical induction, but we think that this proof is too long. Then, we divide the proof as follows.
        \begin{enumerate}
            \item Prove Theorem \ref{thm2} for $n=2$, which we set as Proposition \ref{prop1}.
            \item Assuming that Theorem \ref{thm2} for $n =k \geq 2$ is true, prove Theorem \ref{thm2} for $n=k+1$ and $P_{k+1} =Q_{k+1}$. Let this statement be Proposition \ref{prop2}.
            \item Assuming that Theorem \ref{thm2} for $n =k \geq 2$ holds, prove Theorem \ref{thm2} for $n=k+1$ and $(P_{k+1},Q_{k+1})=(D_{k+1} X_{k+1},X_{k+1})$ similarly. Let this statement be Proposition \ref{prop3}.
            \item Assuming that Theorem \ref{thm2} for $n =k \geq 2$ is true, prove Theorem \ref{thm2} for $n=k+1$ and $(P_{k+1},Q_{k+1})=(X_{k+1}, D_{k+1} X_{k+1})$ in the same way. Let this statement be Proposition \ref{prop4}.
        \end{enumerate}
        After we prove Proposition \ref{prop1} to Proposition \ref{prop4}, we would like to regard the series of proofs of these Propositions as the proof of Theorem \ref{thm2}.

        \begin{prop}
            \label{prop1}
            Theorem \ref{thm2} for $n=2$ is true.
        \end{prop}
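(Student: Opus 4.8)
The plan is to obtain Proposition~\ref{prop1} as the one‑step specialization of the recursion of Theorem~\ref{thm1} and Corollary~\ref{cor1}, with the $n=1$ layer supplied directly by the elementary $P$–$Q$ table of Section~\ref{The construction of recursive relational formulas}. The first thing I would do is record that for $n=2$ every $\mathcal{X}_{I}$ with $I\subseteq\{1,2\}$ is exactly a $\mathcal{P}_2$: put $P_i\coloneqq D_iX_i$ if $i\in I$ and $P_i\coloneqq X_i$ otherwise, and check that $\mathcal{X}_{I}=\mathcal{P}_2$. This uses that $\mathcal{D}_i$ obeys the Leibniz rule on the nested normal‑ordered product $\mathcal{X}=X_1X_2$, that $\hat{\mathcal{A}}_i$ passes through $X_j$ for $j\neq i$ because there is no singular contraction (apply~(\ref{OPEdefs2})), and that $[\mathcal{D}_i,\hat{\mathcal{A}}_j]=0=[\hat{\mathcal{A}}_i,\hat{\mathcal{A}}_j]$ for $i\neq j$; hence $D_1D_2\mathcal{X}=(D_1X_1)(D_2X_2)$ and likewise for the one‑element subsets. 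Then I would note the base case: for $n=1$ the only operators are $X_1$ and $D_1X_1$, all of whose products $(\mathcal{P}_1\mathcal{Q}_1)_p=(P_1Q_1)_p$ are read off verbatim from the table (vanishing for $p\geq 2$, equal to $R_{P_1,Q_1}$ for $p=1$, equal to $\tfrac{1}{(-p)!}(\partial^{-p}P_1Q_1)_0$ for $p\leq 0$), and that these coincide with the right‑hand side of Theorem~\ref{thm2} evaluated at $n=1$, where $|L|,|M|\leq 1$ and every $\mathcal{X}_{K\cup(L_a)_i}$ degenerates to $X_1$.

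Next I would run the recursion once. For a pair $(\mathcal{X}_{I},\mathcal{X}_{J})$ with associated $P_i,Q_i$, Theorem~\ref{thm1} (when $P_2\neq Q_2$, i.e.\ $2\in I$ xor $2\in J$) or Corollary~\ref{cor1} (when $P_2=Q_2$) expresses $(\mathcal{X}_{I}\mathcal{X}_{J})_p=(\mathcal{P}_2\mathcal{Q}_2)_p$ through the $n=1$ products $(\mathcal{P}_1\mathcal{Q}_1)_{p'}$ together with the single contraction datum $R_{P_2,Q_2}$. Substituting the explicit $n=1$ values and reorganizing the resulting short sums into the $\mathcal{J}_{LM,p}$‑indexed form of Theorem~\ref{thm2} is then bookkeeping: Lemma~\ref{lem2} collapses products of binomial coefficients into the single $\binom{a+b-1}{p-1}$, and Lemma~\ref{lem5} (needed only for $r\leq 2$) fixes the alternating sign and the index shift in $\big(\mathcal{X}_{K\cup(L_a)_i}\mathcal{X}_{K\cup(M_b)_j}\big)_{p-a-b}$. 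The case $I=J$ is immediate and should be handled first: then $P_2=Q_2$, so $R_{P_2,Q_2}=0$, the surviving base products vanish at the relevant degrees, and $(\mathcal{X}_{I}\mathcal{X}_{I})_p=0$, in agreement with $\mathcal{J}_{\phi\phi,p}=\emptyset$ for $p\geq 1$; then one treats the few genuinely distinct $I\neq J$ pairs (differing in one element, or differing in both), each via the corresponding recursion formula.

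The main obstacle I expect is purely the reindexing rather than any analytic difficulty: one must match the nested bracket structure $\big(P_{n+1}(Q_{n+1}(\cdots)_0)_0\big)_{p-q}$ produced by Theorem~\ref{thm1} with the flat products appearing on the right of Theorem~\ref{thm2}, correctly identifying which generator index was stripped by a contraction and confirming that the sign $(-1)^{a-p}$, the coefficient $\binom{a+b-1}{p-1}$, and the vanishing for $p>|L|+|M|$ all emerge as stated. Because $n=2$ forces $|L|,|M|\in\{0,1\}$ and hence $a,b\in\{0,1\}$, every sum is short and the identities of Section~\ref{Identities for proofs} are invoked only in their smallest instances, so no new idea beyond this careful bookkeeping — together with the preliminary identification $\mathcal{X}_{I}=\mathcal{P}_2$ — is required; that is simply where the length of the argument sits, and it is what motivates splitting the full induction into Propositions~\ref{prop1} and the three that follow.
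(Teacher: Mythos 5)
Your proposal is correct and follows essentially the same route as the paper: the paper's proof of Proposition \ref{prop1} also enumerates the four cases according to whether $P_1=Q_1$ and $P_2=Q_2$, applies Theorem \ref{thm1} or Corollary \ref{cor1} once on top of the elementary $n=1$ products $(P_1Q_1)_p$, and verifies by direct computation that the resulting expressions coincide with the $n=2$ instance of Theorem \ref{thm2}. The only cosmetic differences are that you make the identification $\mathcal{X}_I=\mathcal{P}_2$ explicit and invoke Lemmas \ref{lem2} and \ref{lem5} for the bookkeeping, whereas the paper handles the short $n=2$ sums by writing them out directly; neither affects the argument.
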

        \begin{proof}
            In this proof, let us write down all cases.
            \begin{itemize}
                \item For $P_1=Q_1$ and $P_2 =Q_2$. By Corollary \ref{cor1} and $\left(P_1 Q_1\right)_1 = 0$, the following result for $p \geq 1$ holds.
                \begin{align*}
                    \left(\mathcal{P}_{2} \mathcal{Q}_{2}\right)_p = 0.
                \end{align*}
                \item For $P_1 \neq Q_1$ and $P_2 = Q_2$. By Corollary \ref{cor1}, the following results for $p \geq 1$ can be derived.
                \begin{align*}
                    \left(\mathcal{P}_{2} \mathcal{Q}_{2}\right)_p = \left\{
                    \begin{aligned}
                        &\left\{
                        \begin{aligned}
                            &\left(P_2 \left(Q_2 \left(X_1 X_1\right)_0 \right)_0\right)_0 &((P_1,Q_1) = (D_1 X_1,X_1)) \\
                            &-\left(P_2 \left(Q_2 \left(X_1 X_1\right)_0 \right)_0\right)_0 &((P_1,Q_1) = (X_1,D_1X_1)) \\
                        \end{aligned}
                        \right. &(p=1)\\
                        &0 &(p \geq 2)
                    \end{aligned}
                    \right.~.
                \end{align*}
                On the other hand, by Corollary \ref{cor1} we can also derive the following equality.
                \begin{align*}
                    \left(\left(P_2 X_1 \right)_0 \left(Q_2 X_1\right)_0\right)_0 = \left(P_2 \left(Q_2 \left(X_1 X_1\right)_0 \right)_0\right)_0~.
                \end{align*}
                Thus, we can find out that the following equality holds.
                \begin{align*}
                    \left(\mathcal{P}_{2} \mathcal{Q}_{2}\right)_p = \left\{
                    \begin{aligned}
                        &\left\{
                        \begin{aligned}                                
                            &\left(\left(P_2 X_1\right)_0 \left(Q_2 X_1\right)_0 \right)_0 &((P_1,Q_1) = (D_1 X_1,X_1)) \\
                            &-\left(\left(P_2 X_1\right)_0 \left(Q_2 X_1\right)_0\right)_0 &((P_1,Q_1) = (X_1,D_1X_1)) \\
                        \end{aligned}
                        \right. &(p=1)\\
                        &0 &(p \geq 2)
                    \end{aligned}
                    \right.~.
                \end{align*}
                \item For $P_1=Q_1$ and $P_2 \neq Q_2$. Using $\left(P_2 P_1\right)_p = \left(P_1 P_2\right)_p$ and $\left(Q_2 Q_1\right)_p = \left(Q_1 Q_2\right)_p$ for $p \geq 0$,
                we can similarly discover the following relation.
                \begin{align*}
                    \left(\mathcal{P}_{2} \mathcal{Q}_{2}\right)_p = \left\{
                    \begin{aligned}
                        &\left\{
                        \begin{aligned}                                
                            &\left(\left(X_2 P_1\right)_0 \left(X_2 Q_1\right)_0 \right)_0 &((P_2,Q_2) = (D_2 X_2,X_2)) \\
                            &-\left(\left(X_2 P_1\right)_0 \left(X_2 Q_1\right)_0\right)_0 &((P_2,Q_2) = (X_2,D_2X_2)) \\
                        \end{aligned}
                        \right. &(p=1)\\
                        &0 &(p \geq 2)
                    \end{aligned}
                    \right.~.
                \end{align*}
                \item For $P_1 \neq Q_1$ and $P_2 \neq Q_2$. Unlike the previous cases, we need four kind of terms in this case, as follows.
                \begin{align*}
                    &\quad \left(\left(X_2 P_1\right)_0 \left(X_2 Q_1\right)_0 \right)_{0} \\
                    \\
                    &= \left\{
                    \begin{aligned}
                        &\left(R_{P_2,Q_2}\left(P_1 Q_1\right)_0\right)_0 +\left(\left(X_2 X_2\right)_{-1} R_{P_1,Q_1}\right)_0 &\left(\left(P_2,Q_2\right) = \left(D_2X_2,X_2 \right) \right) \\
                        \\
                        &-\left(R_{P_2,Q_2}\left(P_1 Q_1\right)_0\right)_0 +\left(\left(X_2 X_2\right)_{-1} R_{P_1,Q_1}\right)_0 &\left(\left(P_2,Q_2\right) = \left(X_2,D_2X_2 \right) \right)
                    \end{aligned}
                \right.~,
                \end{align*}
                \begin{align*}
                    &\quad \left(\left(P_2 X_1\right)_0 \left(Q_2 X_1\right)_0 \right)_{0} \\
                    \\
                    &= \left\{
                        \begin{aligned}
                            &\left(R_{P_1,Q_1}\left(P_2 Q_2\right)_0\right)_0 +\left(\left(X_1 X_1\right)_{-1} R_{P_2,Q_2}\right)_0 &\left(\left(P_1,Q_1\right) = \left(D_1X_1,X_1 \right) \right) \\
                            \\
                            &-\left(R_{P_1,Q_1}\left(P_2 Q_2\right)_0\right)_0 +\left(\left(X_1 X_1\right)_{-1} R_{P_2,Q_2}\right)_0 &\left(\left(P_1,Q_1\right) = \left(X_1,D_1X_1 \right) \right)
                        \end{aligned}
                    \right.~,
                \end{align*}
                \begin{align*}
                    &\quad \left(\left(X_2 X_1\right)_0 \left(X_2 X_1\right)_0 \right)_{-1} \\
                    \\
                    &= \left\{
                        \begin{aligned}
                        &\left(R_{P_2,Q_2}\left(X_1 X_1\right)_{-1}\right)_0 + \left(\left(X_2 X_2\right)_{-1} R_{P_1,Q_1}\right)_0 
                        &\left\{
                            \begin{aligned}
                                \left(P_1,Q_1\right) &= \left(D_1X_1,X_1 \right) \\
                                \left(P_2,Q_2\right) &= \left(D_2X_2,X_2 \right)
                            \end{aligned}
                        \right. \\
                        \\
                        &\left(R_{P_2,Q_2}\left(X_1 X_1\right)_{-1}\right)_0 - \left(\left(X_2 X_2\right)_{-1} R_{P_1,Q_1}\right)_0
                        &\left\{
                            \begin{aligned}
                                \left(P_1,Q_1\right) &= \left(D_1X_1,X_1 \right) \\
                                \left(P_2,Q_2\right) &= \left(X_2,D_2X_2 \right)
                            \end{aligned}
                        \right. \\
                        \\
                        &-\left(R_{P_2,Q_2}\left(X_1 X_1\right)_{-1}\right)_0 + \left(\left(X_2 X_2\right)_{-1} R_{P_1,Q_1}\right)_0
                        &\left\{
                            \begin{aligned}    
                                \left(P_1,Q_1\right) &= \left(X_1,D_1X_1 \right) \\
                                \left(P_2,Q_2\right) &= \left(D_2X_2,X_2 \right)
                            \end{aligned}
                        \right. \\
                        \\
                        &-\left(R_{P_2,Q_2}\left(X_1 X_1\right)_{-1}\right)_0 - \left(\left(X_2 X_2\right)_{-1} R_{P_1,Q_1}\right)_0
                        &\left\{
                            \begin{aligned}    
                                \left(P_1,Q_1\right) &= \left(X_1,D_1X_1 \right) \\
                                \left(P_2,Q_2\right) &= \left(X_2,D_2X_2 \right)
                            \end{aligned}
                        \right.
                        \end{aligned}
                    \right.~,                  
                \end{align*}
                and
                \begin{align*}
                    \left(\left(X_2 X_1\right)_0 \left(X_2 X_1\right)_0 \right)_{0}
                    &= \left\{
                        \begin{aligned}
                        &\left(R_{P_2,Q_2} R_{P_1,Q_1}\right)_0
                        &\left\{
                            \begin{aligned}    
                                \left(P_1,Q_1\right) &= \left(D_1X_1,X_1 \right) \\
                                \left(P_2,Q_2\right) &= \left(D_2X_2,X_2 \right)
                            \end{aligned}
                        \right.\\
                        \\
                        &-\left(R_{P_2,Q_2} R_{P_1,Q_1}\right)_0
                        &\left\{
                            \begin{aligned}    
                                \left(P_1,Q_1\right) &= \left(D_1X_1,X_1 \right) \\
                                \left(P_2,Q_2\right) &= \left(X_2,D_2X_2 \right)
                            \end{aligned}
                        \right.\\
                        \\
                        &-\left(R_{P_2,Q_2} R_{P_1,Q_1}\right)_0
                        &\left\{
                            \begin{aligned}    
                                \left(P_1,Q_1\right) &= \left(X_1,D_1X_1 \right) \\
                                \left(P_2,Q_2\right) &= \left(D_2X_2,X_2 \right)
                            \end{aligned}
                        \right.\\
                        \\
                        &\left(R_{P_2,Q_2} R_{P_1,Q_1}\right)_0
                        &\left\{
                            \begin{aligned}    
                                \left(P_1,Q_1\right) &= \left(X_1,D_1X_1 \right) \\
                                \left(P_2,Q_2\right) &= \left(X_2,D_2X_2 \right)
                            \end{aligned}
                        \right.
                        \end{aligned}
                    \right.~.
                \end{align*}

                On the other hand, we know that $\left(\mathcal{P}_{2}\mathcal{Q}_{2}\right)_p$ for $p \geq 1$ is satisfying the following equality by Theorem \ref{thm1}.
                \begin{align*}
                    \left(\mathcal{P}_{2} \mathcal{Q}_2\right)_p =\left\{
                        \begin{aligned}
                            &\left(R_{P_2,Q_2} \left(P_1 Q_1\right)_0 \right)_0 +\left(P_2 \left(Q_2 R_{P_1,Q_1}\right)_0\right)_0&(p=1)\\
                            &\left(R_{P_2,Q_2} R_{P_1,Q_1}\right)_0 &(p =2)\\
                            &0 &(p \geq 3)
                        \end{aligned}
                    \right.~.
                \end{align*}
                Thus, when $p=1$, we can check that the following equality holds.
                \begin{align*}
                    &\quad \left(\mathcal{P}_{2}\mathcal{Q}_{2}\right)_1 \\
                    \\
                    &= \left\{
                        \begin{aligned}
                            &\left(\left(X_2 P_1\right)_0 \left(X_2 Q_1\right)_0 \right)_0 \\
                            &\quad +\left(\left(P_2 X_1\right)_0 \left(P_2 X_1\right)_0 \right)_0 
                            -\left(\left(X_2X_1\right)_0 \left(X_2 X_1\right)_0 \right)_{-1}
                            &\left\{
                                \begin{aligned}    
                                    \left(P_1,Q_1\right) &= \left(D_1X_1,X_1 \right) \\
                                        \left(P_2,Q_2\right) &= \left(D_2X_2,X_2 \right)
                                \end{aligned}
                            \right.\\
                            \\
                            &\left(\left(X_2 P_1\right)_0 \left(X_2 Q_1\right)_0 \right)_0 \\
                            &\quad -\left(\left(P_2 X_1\right)_0 \left(P_2 X_1\right)_0 \right)_0 
                            +\left(\left(X_2X_1\right)_0 \left(X_2 X_1\right)_0 \right)_{-1}
                            &\left\{
                                \begin{aligned}    
                                    \left(P_1,Q_1\right) &= \left(D_1X_1,X_1 \right) \\
                                    \left(P_2,Q_2\right) &= \left(X_2,D_2X_2 \right)
                                \end{aligned}
                            \right.\\
                            \\
                            &-\left(\left(X_2 P_1\right)_0 \left(X_2 Q_1\right)_0 \right)_0 \\
                            &\quad +\left(\left(P_2 X_1\right)_0 \left(P_2 X_1\right)_0 \right)_0 
                            +\left(\left(X_2X_1\right)_0 \left(X_2 X_1\right)_0 \right)_{-1}
                            &\left\{
                                \begin{aligned}
                                    \left(P_1,Q_1\right) &= \left(X_1,D_1X_1 \right) \\
                                    \left(P_2,Q_2\right) &= \left(D_2X_2,X_2 \right)
                                \end{aligned}
                            \right.\\
                            \\
                            &-\left(\left(X_2 P_1\right)_0 \left(X_2 Q_1\right)_0 \right)_0 \\
                            &\quad -\left(\left(P_2 X_1\right)_0 \left(P_2 X_1\right)_0 \right)_0 
                            -\left(\left(X_2X_1\right)_0 \left(X_2 X_1\right)_0 \right)_{-1}
                            &\left\{
                                \begin{aligned}    
                                    \left(P_1,Q_1\right) &= \left(X_1,D_1X_1 \right) \\
                                    \left(P_2,Q_2\right) &= \left(X_2,D_2X_2 \right)
                                \end{aligned}
                            \right.
                        \end{aligned}
                    \right.~.
                \end{align*}
                Similarly, when $p=2$, we can discover the following equality.
                \begin{align*}
                    \left(\mathcal{P}_{2}\mathcal{Q}_{2}\right)_{2}
                    &= \left\{
                        \begin{aligned}
                            &\left(\left(X_2X_1\right)_0 \left(X_2 X_1\right)_0\right)_0
                            &\left\{
                                \begin{aligned}
                                    \left(P_1,Q_1\right) &= \left(D_1X_1,X_1 \right) \\
                                    \left(P_2,Q_2\right) &= \left(D_2X_2,X_2 \right)
                                \end{aligned}
                            \right.\\
                            \\
                            &-\left(\left(X_2X_1\right)_0 \left(X_2 X_1\right)_0\right)_0
                            &\left\{
                                \begin{aligned}    
                                    \left(P_1,Q_1\right) &= \left(D_1X_1,X_1 \right) \\
                                    \left(P_2,Q_2\right) &= \left(X_2,D_2X_2 \right)
                                \end{aligned}
                            \right.\\
                            \\
                            &-\left(\left(X_2X_1\right)_0 \left(X_2 X_1\right)_0\right)_0
                            &\left\{
                                \begin{aligned}    
                                    \left(P_1,Q_1\right) &= \left(X_1,D_1X_1 \right) \\
                                    \left(P_2,Q_2\right) &= \left(D_2X_2,X_2 \right)
                                \end{aligned}
                            \right.\\
                            \\
                            &\left(\left(X_2X_1\right)_0 \left(X_2 X_1\right)_0\right)_0
                            &\left\{
                                \begin{aligned}    
                                    \left(P_1,Q_1\right) &= \left(X_1,D_1X_1 \right) \\
                                    \left(P_2,Q_2\right) &= \left(X_2,D_2X_2 \right)
                                \end{aligned}
                            \right.
                        \end{aligned}
                    \right.~.            
                \end{align*}
            \end{itemize}
            The above equalities are the very results of Theorem \ref{thm2} for $n=2$.
        \end{proof}
        \begin{prop}
            \label{prop2}
            Let $k$ be an integer satisfying $k \geq 2$. Assuming that Theorem \ref{thm2} for $n=k$ holds, 
            Theorem \ref{thm2} for $n=k+1$ and $P_{k+1} = Q_{k+1}$ also holds.
        \end{prop}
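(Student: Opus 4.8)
\emph{Setup.} Write $R := P_{k+1} = Q_{k+1} \in \{X_{k+1},\, D_{k+1}X_{k+1}\}$, and let $I, J \subseteq \{1,\dots,k\}$ be the level-$k$ index sets of $\mathcal{P}_k, \mathcal{Q}_k$. The level-$(k+1)$ index sets $I', J'$ satisfy $I' \cap \{1,\dots,k\} = I$, $J' \cap \{1,\dots,k\} = J$, with $k+1 \in I' \cap J'$ precisely when $R = D_{k+1}X_{k+1}$; hence $K' := I' \cap J'$ equals $K$ or $K \cup \{k+1\}$, while $L' := I' \setminus K' = L$ and $M' := J' \setminus K' = M$. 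Consequently $(L'_a)_i = (L_a)_i$, $(M'_b)_j = (M_b)_j$, and --- the link used repeatedly below --- $\mathcal{X}^{(k+1)}_{K' \cup (L_a)_i} = \bigl( R\, \mathcal{X}^{(k)}_{K \cup (L_a)_i} \bigr)_0$, and likewise with $M$. Finally $L \cap M = \emptyset$, so $|L| + |M| \le k$. Throughout, ``the inductive hypothesis'' means Theorem \ref{thm2} for $n = k \ge 2$.

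\emph{Degenerate cases.} If the two operators coincide ($I' = J'$, equivalently $I = J$ at level $k$), then Corollary \ref{cor1} gives $(\mathcal{P}_{k+1}\mathcal{P}_{k+1})_p = \sum_{q=p}^{k} \bigl( R(R(\mathcal{P}_k\mathcal{P}_k)_q)_0 \bigr)_{p-q}$, and every summand vanishes since $q \ge p \ge 1$ and $(\mathcal{X}_I \mathcal{X}_I)_q = 0$ by induction. If $I \ne J$ at level $k+1$ (so $I \ne J$ at level $k$ as well): for $p \ge k+1$ Corollary \ref{cor1} returns $0$, matching the empty right-hand side because $p > |L|+|M| = |L'|+|M'|$; and for $|L|+|M| < p \le k$ every index $q$ in $\sum_{q=p}^{k}$ obeys $q > |L|+|M|$, so $(\mathcal{P}_k\mathcal{Q}_k)_q = 0$ by induction and both sides are again empty. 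This leaves the range $1 \le p \le |L|+|M|$.

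\emph{Main range.} Here I would reduce both sides to one and the same formal sum over the ``atoms'' $\bigl( R(R(\mathcal{X}^{(k)}_{K\cup(L_a)_i} \mathcal{X}^{(k)}_{K\cup(M_b)_j})_r)_0 \bigr)_s$ with $r \le 0$. On the left, Corollary \ref{cor1} followed by one substitution of the inductive hypothesis into each $(\mathcal{P}_k\mathcal{Q}_k)_q$ --- whose terms already carry modes $q - a - b \le 0$ --- writes $(\mathcal{P}_{k+1}\mathcal{Q}_{k+1})_p$ in this form. On the right, using $\mathcal{X}^{(k+1)}_{K'\cup(L_a)_i} = (R\mathcal{X}^{(k)}_A)_0$, Corollary \ref{cor1} turns each level-$(k+1)$ product $\bigl( (R\mathcal{X}^{(k)}_A)_0 (R\mathcal{X}^{(k)}_B)_0 \bigr)_{p-a-b}$ into $\sum_q \bigl( R(R(\mathcal{X}^{(k)}_A\mathcal{X}^{(k)}_B)_q)_0 \bigr)_{p-a-b-q}$, after which the positive modes $(\mathcal{X}^{(k)}_A\mathcal{X}^{(k)}_B)_q$ with $q \ge 1$ are reduced by the inductive hypothesis once more. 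Comparing the two expansions then collapses to binomial--coefficient identities in the running indices $p, q, a, b$ --- of exactly the form supplied by Lemma \ref{lem2} and Lemma \ref{lem5}, together with the recursions of Sec.\ref{The construction of recursive relational formulas} --- out of which the sign $(-1)^{a-p}$ and the factor $\binom{a+b-1}{p-1}$ in the target emerge.

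\emph{Main obstacle.} The difficulty is concentrated in that last reorganization. Two points demand care. First, when $(a,b) = (|L|,|M|)$ the sets $K\cup(L_a)_i$ and $K\cup(M_b)_j$ both collapse to $K$, so ``diagonal'' products $(\mathcal{X}^{(k)}_K\mathcal{X}^{(k)}_K)_r$ appear; it is their vanishing positive modes --- the $I=J$ clause of the inductive hypothesis --- that truncate the otherwise-too-long $q$-summations so that the two sides actually agree. Second, one must commit to a reduction order making ``atom'' unambiguous, i.e.\ check, using (\ref{OPEdefs1})--(\ref{OPEdefs4}) and Corollary \ref{cor1}, that no two distinct atoms are accidentally identified, so that matching coefficients termwise is legitimate. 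Once this bookkeeping is carried out, the asserted formula for $n = k+1$ with $P_{k+1} = Q_{k+1}$ follows, completing this case of the induction.
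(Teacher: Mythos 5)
Your setup and reduction are the same as the paper's: identify the level-$(k+1)$ operators as $\bigl(P_{k+1}\mathcal{X}^{(k)}_{K\cup(L_a)_i}\bigr)_0$, dispose of $I=J$ and of $p>|L|+|M|$ via Corollary \ref{cor1} and the inductive vanishing, then compare (i) the expansion of $(\mathcal{P}_{k+1}\mathcal{Q}_{k+1})_p$ obtained from Corollary \ref{cor1} plus one application of the inductive hypothesis with (ii) the expansion of the claimed formula obtained by applying Corollary \ref{cor1} to each $\bigl((R\mathcal{X}_A)_0(R\mathcal{X}_B)_0\bigr)_{p-a-b}$ and re-reducing the positive modes by the inductive hypothesis, using Lemmas \ref{lem2} and \ref{lem5}. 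That is precisely the route of (\ref{1-prop2})--(\ref{11-prop2}). However, your proof stops exactly where the substance of that route lies. The statement ``comparing the two expansions then collapses to binomial-coefficient identities \dots out of which the sign $(-1)^{a-p}$ and the factor $\binom{a+b-1}{p-1}$ emerge'' is asserted, not carried out, and your own ``main obstacle'' paragraph concedes that the bookkeeping is left undone. In particular you never address the multiplicity problem that the paper handles in (\ref{5-prop2}): when the inner products $(\mathcal{X}_{K\cup(L_c)_i}\mathcal{X}_{K\cup(M_d)_j})_q$ are re-expanded by the inductive hypothesis, each final subset $(L_{c+e})_{i'}$ (resp.\ $(M_{d+f})_{j'}$) is produced $\binom{c+e}{c}$ (resp.\ $\binom{d+f}{d}$) times, and these multiplicities must be inserted before the change of variables $a=c+e$, $b=d+f$ and the interchange of sums; without them the coefficient comparison is simply wrong. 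Likewise, the actual collapse requires fixing $(a,b)$, summing over $(c,d)$ with $p\le c+d\le a+b$, and invoking Lemma \ref{lem2} together with the telescoping of Lemma \ref{lem5} as in (\ref{7-prop2})--(\ref{8-prop2}); none of this is verified.

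A second, smaller point: you propose to conclude by ``matching coefficients termwise'' on the atoms $\bigl(R(R(\mathcal{X}_A\mathcal{X}_B)_r)_0\bigr)_s$, which would additionally require showing these atoms are not linearly dependent in a way that spoils the matching. The paper avoids this entirely: it transforms one side into the other by exact operator identities, so no independence statement is needed. Your correct observations --- that $L'=L$, $M'=M$, and that the vanishing of $(\mathcal{X}_K\mathcal{X}_K)_q$ for $q\ge 1$ truncates the $q$-sums at $(a,b)=(|L|,|M|)$ --- show you see where the pressure points are, but as written the argument is an outline of the paper's proof rather than a proof: the combinatorial identity that constitutes Proposition \ref{prop2} in the range $1\le p\le |L|+|M|$ is not established.
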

        \begin{proof}
            When $P_{k+1} = Q_{k+1}$, the following equality for $p \geq 1$ can be derived from Corollary \ref{cor1}.
            \begin{align*}
                \left(\mathcal{P}_{k+1}\mathcal{Q}_{k+1}\right)_p = \left\{
                    \begin{aligned}
                        &\sum_{q=p}^{k} \left(P_{k+1} \left(Q_{k+1}\left(\mathcal{P}_{k} \mathcal{Q}_{k}\right)_q \right)_0 \right)_{p-q} &(1 \leq p \leq k) \\
                        &0 &(p \geq k+1)
                    \end{aligned}
                \right.~,
            \end{align*}
            where we set $\mathcal{X} = \prod_{i=1}^{k} X_i$, $\mathcal{P}_{k} = \mathcal{X}_{I}$ and $\mathcal{Q}_{k} = \mathcal{X}_{J}$ here.
            The relationship between $I$ and $J$ is devided into the following cases.
                
            When $I=J$, from the assumption for $\left(\mathcal{X}_{I} \mathcal{X}_{I}\right)_q = 0$ for $q \geq 1$,
            we can easily find out that the following equality for $p \geq 1$ holds
            \begin{align*}
                \left(\mathcal{P}_{k+1} \mathcal{Q}_{k+1}\right)_p = 0~.
            \end{align*}
            
            For $I \neq J$, we can use the following equality from the assumption.
            \begin{align*}
                &\quad \left(\mathcal{P}_{k} \mathcal{Q}_{k}\right)_q \\
                &= \left\{
                    \begin{aligned}
                        &\sum_{(a,b) \in \mathcal{J}_{LM,q}} \sum_{i=1}^{|L_a|}\sum_{j=1}^{|M_b|}(-1)^{a-q}\binom{a+b-1}{q-1} \left(\mathcal{X}_{K \cup \left(L_a\right)_i} \mathcal{X}_{K \cup \left(M_b\right)_j}\right)_{-(a+b-q)} &(1 \leq q \leq |L| +|M|)\\
                        \\
                        &0 &(q >|L| +|M|)
                    \end{aligned}
                \right.~.
            \end{align*}
            Since $|L| +|M| \leq k$, the above relation for $1 \leq p \leq |L| +|M|$ can be transformed as follows.
            \begin{align}
                \label{1-prop2}
                    &\quad \left(\mathcal{P}_{k+1} \mathcal{Q}_{k+1}\right)_{p} \notag \\
                    &= \sum_{(a,b) \in \mathcal{J}_{LM,p}} \sum_{q=p}^{a+b}\sum_{i=1}^{|L_a|} \sum_{j=1}^{|M_b|} (-1)^{a-q} \binom{a+b-1}{q-1} \left(P_{k+1} \left(Q_{k+1} \left(\mathcal{X}_{K \cup \left(L_a\right)_i} \mathcal{X}_{K \cup \left(M_b\right)_j}\right)_{-(a+b-q)}\right)_0 \right)_{p-q}
            \end{align}

            Let us prove that the following equality is equal to (\ref{1-prop2}) for $p \geq 1$.
            \begin{align}
                \label{2-prop2}
                \left(\mathcal{P}_{k+1} \mathcal{Q}_{k+1}\right)_p
                &= \left\{
                    \begin{aligned}
                        &\sum_{(a,b) \in \mathcal{J}_{LM,p}} \sum_{i=1}^{|L_a|}\sum_{j=1}^{|M_b|}(-1)^{a-p}  \binom{a+b-1}{p-1}\\ 
                        &\quad \times\left(\left(P_{k+1}\mathcal{X}_{K \cup \left(L_a\right)_i}\right)_0 \left(Q_{k+1}\mathcal{X}_{K \cup \left(M_b\right)_j}\right)_0\right)_{-(a+b-p)} &(1 \leq p \leq |L| +|M|)\\
                        \\
                        &0 &(p >|L| +|M|)
                    \end{aligned}
                \right.~.
            \end{align}
            Since (\ref{2-prop2}) for $p > |L|+|M|$ is satisfying this Proposition, we will prove that (\ref{2-prop2}) for $1 \leq p \leq |L| + |M|$ is equal to (\ref{1-prop2}).

            For $1 \leq p \leq |L| +|M|$, (\ref{2-prop2}) can be transformed by using Corollary \ref{cor1} as follows.
            \begin{align}
                \label{3-prop2}
                    &\sum_{(c,d) \in \mathcal{J}_{LM,p}} \sum_{i=1}^{|L_c|}\sum_{j=1}^{|M_d|}(-1)^{c-p} \binom{c+d-1}{p-1}
                    \left(\left(P_{k+1}\mathcal{X}_{K \cup \left(L_c\right)_i}\right)_0 \left(Q_{k+1}\mathcal{X}_{K \cup \left(M_d\right)_j}\right)_0\right)_{-(c+d-p)} \notag \\
                    &= \sum_{(a,b) \in \mathcal{J}_{LM,p}} \sum_{q=p}^{a+b}\sum_{i=1}^{|L_a|} \sum_{j=1}^{|M_b|} (-1)^{a-p}  \binom{a+b-1}{p-1}
                    \left(P_{k+1}\left(Q_{k+1} \left(\mathcal{X}_{K \cup \left(L_a\right)_i} \mathcal{X}_{K \cup \left(M_b\right)_j}\right)_{-(a+b-q)}\right)_0 \right)_{p-q} \notag \\
                    &\quad +\delta_{p \leq |L|+ |M|-1}\sum_{\substack{(c,d) \in \mathcal{J}_{LM,p} \\ (c,d) \neq \left(|L|,|M|\right)}} \sum_{q=1}^{|L| +|M| -c-d} \sum_{i=1}^{|L_c|}\sum_{j=1}^{|M_d|}(-1)^{c-p} \binom{c+d-1}{p-1}\notag \\ 
                    &\quad \times \left(P_{k+1} \left(Q_{k+1} \left(\mathcal{X}_{K \cup \left(L_c\right)_i} \mathcal{X}_{K \cup \left(M_d\right)_j}\right)_{q}\right)_0\right)_{-(c+d-p)-q}~.
            \end{align}
            
            In the second term on the right-hand side of (\ref{3-prop2}), $\left(\mathcal{X}_{K \cup \left(L_c\right)_i} \mathcal{X}_{K \cup \left(M_d\right)_j}\right)_{q}$ for $q \geq 1$ can be expanded by using the assumption for $n=k$.
            Here, it is important to note the redundancy for $(L_a)_i$ and $(M_b)_j$ respectively. We can expand $\left(\mathcal{X}_{K \cup \left(L_c\right)_i} \mathcal{X}_{K \cup \left(M_d\right)_j}\right)_{q}$ for $1 \leq q \leq |L| +|M| -c -d$ by using the assumption,
            but if expanding it with brute force, the awful sums are emerged.
            Therefore, it is necessary for us to execute the following wiser expansion;
            \begin{align}
                \label{5-prop2}
                &\delta_{p \leq |L|+ |M|-1} \sum_{\substack{(c,d) \in \mathcal{J}_{LM,p} \\ (c,d) \neq \left(|L|,|M|\right)}} \sum_{q=1}^{|L| +|M| -c-d} \sum_{i=1}^{|L_c|}\sum_{j=1}^{|M_d|}(-1)^{c-p} \binom{c+d-1}{p-1} \notag \\ 
                &\quad \times \left(P_{k+1} \left(Q_{k+1} \left(\mathcal{X}_{K \cup \left(L_c\right)_i} \mathcal{X}_{K \cup \left(M_d\right)_j}\right)_{q}\right)_0\right)_{-(c+d-p)-q} \notag \\
                &= \delta_{p \leq |L|+ |M|-1} \sum_{\substack{(c,d) \in \mathcal{J}_{LM,p} \\ (c,d) \neq \left(|L|,|M|\right)}} \sum_{q=c+d+1}^{|L| +|M|} \sum_{\substack{0 \leq e \leq |L| -c \\0 \leq f \leq |M| -d \\q -c-d\leq e+f \leq |L| +|M| -c-d}} \sum_{i=1}^{|L_{c+e}|} \sum_{j=1}^{|M_{d+f}|} \notag \\
                &\quad \times (-1)^{d+e -(p+q)} \binom{c+e}{c} \binom{d+f}{d} \binom{c+d-1}{p-1}\binom{e+f-1}{q-c-d-1} \notag \\
                &\quad \times \left(P_{k+1} \left(Q_{k+1} \left(\mathcal{X}_{K \cup \left(L_{c+e}\right)_i} \mathcal{X}_{K \cup \left(M_{d+f}\right)_j}\right)_{-(c+d+e+f-q)}\right)_0\right)_{p-q}~,
            \end{align}
            where $\binom{c+e}{c}$ (\text{resp.}~$\binom{d+f}{d}$) is the number of times $\left(L_{c+e}\right)_i$ (resp.~$\left(M_{d+f}\right)_j$) appears, that is, $\left(L_{c+e}\right)_i$ (resp.~$\left(M_{d+f}\right)_j$) represents the degree of redundancy.
            In fact, in order to construct $\left(L_{c+e}\right)_i$, we need to remove $c$ elements from $L \setminus \left(L_{c+e}\right)_i$ and then $e$ elements.
            With $c$ and $e$ as fixed integers, the total number of the way that we get $\left(L_{c+e}\right)_i$ can be derived as follows.
            \begin{align*}
                \binom{|L|-\left(|L|-c-e\right)}{c} \binom{|L|-c -\left(|L|-c-e\right)}{e} = \binom{c+e}{c}~.
            \end{align*}
            Therefore, the set $\left(L_{c+e}\right)_i$ appears $\binom{c+e}{c}$ times. Similarly, the set $\left(M_{d+f}\right)_j$ emerges $\binom{d+f}{d}$ times.

            For (\ref{5-prop2}), we set $a=c+e$ and $b=d+f$, and interchange the order of the sums. Then we can derive the following result.
            \begin{align}
                &\sum_{\substack{(c,d) \in \mathcal{J}_{LM,p}}} \sum_{a=c}^{|L|} \sum_{b=d}^{|M|} \sum_{q= c+d}^{a+b} \sum_{i=1}^{|L_a|} \sum_{j=1}^{|M_b|}
                (-1)^{(a-c-d) -(p+q)} \binom{a}{c} \binom{b}{d} \binom{c+d-1}{p-1}\binom{a+b-c-d-1}{a+b-q} \notag\\
                &\quad \times \left(P_{k+1} \left(Q_{k+1} \left(\mathcal{X}_{K \cup \left(L_{a}\right)_i} \mathcal{X}_{K \cup \left(M_{b}\right)_j}\right)_{-(a+b-q)}\right)_0\right)_{p-q} \notag\\
                &\quad -\sum_{(a,b) \in \mathcal{J}_{LM,p}} \sum_{i=1}^{|L|_{a}} \sum_{j=1}^{|M|_{b}} (-1)^{a-p} \binom{a+b-1}{p-1} 
                \left(P_{k+1} \left(Q_{k+1} \left(\mathcal{X}_{K \cup \left(L_{a}\right)_i} \mathcal{X}_{K \cup \left(M_{b}\right)_j}\right)_{0}\right)_0\right)_{p-a-b}~.
                \label{6-prop2}
            \end{align}

            Let us swap the order of the sums with respect to $(a,b)$ and with respect to $(c,d)$ in the first term on the right-hand side of (\ref{6-prop2}). To do so, fix the pair $(a,b)$, use $(a,b)$ to change the range of $(c,d)$,
            and execute the summation with respect to $(c,d)$.
            Namely, let the range of $(c,d)$ set ``$0 \leq c \leq a,\ 0 \leq d \leq b,\ p \leq c+d \leq a+b$'', and $c$ and $d$ be eliminated by Lemma \ref{lem2} and \ref{lem5}.
            \begin{align}
                &\sum_{\substack{0 \leq c \leq a \\0 \leq d \leq b \\ p \leq c+d \leq a+b}} \sum_{q= c+d}^{a+b} \sum_{i=1}^{|L_a|} \sum_{j=1}^{|M_b|} 
                (-1)^{(a-c-d) -(p+q)} 
                \binom{a}{c} \binom{b}{d} \binom{c+d-1}{p-1}\binom{a+b-c-d-1}{a+b-q} \notag \\
                &\quad \times \left(P_{k+1} \left(Q_{k+1} \left(\mathcal{X}_{K \cup \left(L_{a}\right)_i} \mathcal{X}_{K \cup \left(M_{b}\right)_j}\right)_{-(a+b-q)}\right)_0\right)_{p-q} \notag \\
                &= \sum_{i=1}^{|L_a|} \sum_{j=1}^{|M_b|}(-1)^{a-p} \binom{a+b-1}{p-1} \left(P_{k+1} \left(Q_{k+1} \left(\mathcal{X}_{K \cup \left(L_{a}\right)_i} \mathcal{X}_{K \cup \left(M_{b}\right)_j}\right)_{0}\right)_0\right)_{-(a+b-p)} \notag \\
                &\quad - \sum_{q=p}^{a+b} \sum_{i=1}^{|L_a|} \sum_{j=1}^{|M_b|}(-1)^{a-p} \binom{a+b-1}{p-1} \left(P_{k+1} \left(Q_{k+1} \left(\mathcal{X}_{K \cup \left(L_{a}\right)_i} \mathcal{X}_{K \cup \left(M_{b}\right)_j}\right)_{0}\right)_{-(a+b-q)}\right)_{p-q} \notag \\
                &\quad +\sum_{q=p}^{a+b} \sum_{i=1}^{|L_a|} \sum_{j=1}^{|M_b|}(-1)^{a-q} \binom{a+b-1}{q-1} \left(P_{k+1} \left(Q_{k+1} \left(\mathcal{X}_{K \cup \left(L_{a}\right)_i} \mathcal{X}_{K \cup \left(M_{b}\right)_j}\right)_{0}\right)_{-(a+b-q)}\right)_{p-q}~.
                \label{7-prop2}
            \end{align}
            Therefore, we can execute the summation with respect to $(a,b)$ for (\ref{7-prop2}) as follows.
            \begin{align}
                \label{8-prop2}
                    &\sum_{\substack{(c,d) \in \mathcal{J}_{LM,p}}} \sum_{a=c}^{|L|} \sum_{b=d}^{|M|} \sum_{q= c+d}^{a+b} \sum_{i=1}^{|L_a|} \sum_{j=1}^{|M_b|} 
                    (-1)^{(a-c-d) -(p+q)} \binom{a}{c} \binom{b}{d} \binom{c+d-1}{p-1}\binom{a+b-c-d-1}{a+b-q} \notag \\
                    &\quad \times \left(P_{k+1} \left(Q_{k+1} \left(\mathcal{X}_{K \cup \left(L_{a}\right)_i} \mathcal{X}_{K \cup \left(M_{b}\right)_j}\right)_{-(a+b-q)}\right)_0\right)_{p-q} \notag \\
                    &= \sum_{(a,b) \in \mathcal{J}_{LM,p}} \sum_{q=p}^{a+b} \sum_{i=1}^{|L_a|} \sum_{j=1}^{|M_b|} (-1)^{a-q} \binom{a+b-1}{q-1} 
                    \left(P_{k+1} \left(Q_{k+1} \left(\mathcal{X}_{K \cup \left(L_a\right)_i} \mathcal{X}_{K \cup \left(M_b\right)_j} \right)_{-(a+b-q)} \right)_0 \right)_{p-q} \notag\\
                    &\quad -\sum_{(a,b) \in \mathcal{J}_{LM,p}} \sum_{q=p}^{a+b} \sum_{i=1}^{|L_a|} \sum_{j=1}^{|M_b|} (-1)^{a-p} \binom{a+b-1}{p-1} 
                    \left(P_{k+1} \left(Q_{k+1} \left(\mathcal{X}_{K \cup \left(L_a\right)_i} \mathcal{X}_{K \cup \left(M_b\right)_j} \right)_{-(a+b-q)} \right)_0 \right)_{p-q} \notag \\
                    &\quad + \sum_{(a,b) \in \mathcal{J}_{LM,p}} \sum_{i=1}^{|L_a|} \sum_{j=1}^{|M_b|} (-1)^{a-p} \binom{a+b-1}{p-1}
                    \left(P_{k+1} \left(Q_{k+1} \left(\mathcal{X}_{K \cup \left(L_a\right)_i} \mathcal{X}_{K \cup \left(M_b\right)_j} \right)_0 \right)_0 \right)_{p-a-b}~.
            \end{align}
            
            Thus, the right-hand side of (\ref{6-prop2}) is equal to the following terms by using (\ref{8-prop2}).
            \begin{align}
                \label{10-prop2}
                    &\sum_{(a,b) \in \mathcal{J}_{LM,p}} \sum_{q=p}^{a+b} \sum_{i=1}^{|L_a|} \sum_{j=1}^{|M_b|} (-1)^{a-q}\binom{a+b-1}{q-1} 
                    \left(P_{k+1} \left(Q_{k+1} \left(\mathcal{X}_{K \cup \left(L_a\right)_i} \mathcal{X}_{K \cup \left(M_b\right)_j} \right)_{-(a+b-q)} \right)_0 \right)_{p-q} \notag\\
                    &\quad -\sum_{(a,b) \in \mathcal{J}_{LM,p}} \sum_{q=p}^{a+b} \sum_{i=1}^{|L_a|} \sum_{j=1}^{|M_b|} (-1)^{a-p} \binom{a+b-1}{p-1} 
                    \left(P_{k+1} \left(Q_{k+1} \left(\mathcal{X}_{K \cup \left(L_a\right)_i} \mathcal{X}_{K \cup \left(M_b\right)_j} \right)_{-(a+b-q)} \right)_0 \right)_{p-q}~.
            \end{align}

            Finally, (\ref{3-prop2}) can be transformed by (\ref{10-prop2}) as follows.
            \begin{align}
                \label{11-prop2}
                    &\sum_{(c,d) \in \mathcal{J}_{LM,p}} \sum_{i=1}^{|L_c|}\sum_{j=1}^{|M_d|}(-1)^{c-p} \binom{c+d-1}{p-1} 
                    \left(\left(P_{k+1}\mathcal{X}_{K \cup \left(L_c\right)_i}\right)_0 \left(Q_{k+1}\mathcal{X}_{K \cup \left(M_d\right)_j}\right)_0\right)_{-(c+d-p)} \notag \\
                    &= \sum_{(a,b) \in \mathcal{J}_{LM,p}} \sum_{q=p}^{a+b} \sum_{i=1}^{|L_a|} \sum_{j=1}^{|M_b|} (-1)^{a-q}  \binom{a+b-1}{q-1} 
                    \left(P_{k+1} \left(Q_{k+1} \left(\mathcal{X}_{K \cup \left(L_a\right)_i} \mathcal{X}_{K \cup \left(M_b\right)_j} \right)_{-(a+b-q)} \right)_0 \right)_{p-q}.
            \end{align}
            This result (\ref{11-prop2}) is equal to (\ref{1-prop2}) for $1 \leq p \leq |L| +|M|$.
        \end{proof}
        \begin{prop}
            \label{prop3}
            Let $k$ be an integer satisfying $k \geq 2$. Assuming that Theorem \ref{thm2} for $n=k$ holds,
            Theorem \ref{thm2} for $n=k+1$ and $\left(P_{k+1},Q_{k+1}\right) = (D_{k+1}X_{k+1},X_{k+1})$ also holds.
        \end{prop}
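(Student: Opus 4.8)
The plan is to run the argument of Proposition~\ref{prop2} once more, the only genuinely new feature being that now $R_{P_{k+1},Q_{k+1}} = R_{D_{k+1}X_{k+1},X_{k+1}} = \left(X_{k+1}X_{k+1}\right)_0 \neq 0$, so the full recursion of Theorem~\ref{thm1}, rather than Corollary~\ref{cor1}, must be used. Put $\mathcal{X}^{(k)} = \prod_{i=1}^{k}X_i$ and $\mathcal{X}^{(k)}_{S} = \prod_{i\in S}D_i\mathcal{X}^{(k)}$, and similarly for $k+1$; write $\mathcal{P}_{k+1} = \mathcal{X}^{(k+1)}_{I}$, $\mathcal{Q}_{k+1} = \mathcal{X}^{(k+1)}_{J}$. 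Since $P_{k+1} = D_{k+1}X_{k+1}$ and $Q_{k+1} = X_{k+1}$ we have $k+1 \in I$, $k+1 \notin J$, and $\mathcal{P}_{k} = \mathcal{X}^{(k)}_{I'}$, $\mathcal{Q}_{k} = \mathcal{X}^{(k)}_{J}$ with $I = I'\cup\{k+1\}$; setting $K = I\cap J = I'\cap J$, $M = J\setminus K$ and $L = I\setminus K = \{k+1\}\cup L'$ with $L' = I'\setminus K$, the $n=k$ hypothesis for $\mathcal{X}^{(k)}_{I'}\mathcal{X}^{(k)}_{J}$ is governed by the index set $\mathcal{J}_{L'M,p}$. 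First I would apply Theorem~\ref{thm1}, which gives for $1\leq p\leq k$
\begin{align*}
    \left(\mathcal{X}^{(k+1)}_{I}\mathcal{X}^{(k+1)}_{J}\right)_p = \left(\left(X_{k+1}X_{k+1}\right)_0\left(\mathcal{X}^{(k)}_{I'}\mathcal{X}^{(k)}_{J}\right)_{p-1}\right)_0 + \sum_{q=p}^{k}\left(P_{k+1}\left(Q_{k+1}\left(\mathcal{X}^{(k)}_{I'}\mathcal{X}^{(k)}_{J}\right)_q\right)_0\right)_{p-q},
\end{align*}
and then substitute the inductive hypothesis, Theorem~\ref{thm2} for $n=k$, into $\left(\mathcal{X}^{(k)}_{I'}\mathcal{X}^{(k)}_{J}\right)_{p-1}$ and into each $\left(\mathcal{X}^{(k)}_{I'}\mathcal{X}^{(k)}_{J}\right)_q$.

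The degenerate case $I' = J$ is treated first: then $I\neq J$, $L' = \phi$, $M = \phi$, $K = I'$, and $\left(\mathcal{X}^{(k)}_{I'}\mathcal{X}^{(k)}_{I'}\right)_q = 0$ for $q\geq 1$ by hypothesis, so only the $R$-term at $p=1$ survives; one checks that $\left(\left(X_{k+1}X_{k+1}\right)_0\left(\mathcal{X}^{(k)}_{I'}\mathcal{X}^{(k)}_{I'}\right)_0\right)_0$ reproduces the unique term $(a,b)=(1,0)$ of the target, using that $\mathcal{X}^{(k+1)}_{K}$ differs from $\mathcal{X}^{(k)}_{K}$ only by the inert factor $X_{k+1}$ and Corollary~\ref{cor1} applied at the single site $k+1$. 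The vanishing for $p > |L|+|M| = |L'|+|M|+1$ and the value at the top degree $p = |L|+|M|$ are immediate from the hypothesis, since then the $\sum_q$-term is empty and the $R$-term reduces to the top component $\left(\mathcal{X}^{(k)}_{I'}\mathcal{X}^{(k)}_{J}\right)_{|L'|+|M|}$.

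For the main range $1\leq p\leq |L|+|M|$ with $I'\neq J$, I would reorganize the two terms of Theorem~\ref{thm1} in parallel with Proposition~\ref{prop2}. After inserting the hypothesis, both terms become sums, indexed by $\mathcal{J}_{L'M,p}$ and $\mathcal{J}_{L'M,p-1}$, of products nested at the site $k+1$; these are converted into one another --- into the forms $\left(\left(P_{k+1}A\right)_0\left(Q_{k+1}B\right)_0\right)_r$ and $\left(P_{k+1}\left(Q_{k+1}A\right)_0\right)_r$ --- by invoking Theorem~\ref{thm1} once more at that site, which is exactly where the contraction $\left(X_{k+1}X_{k+1}\right)_0$ re-enters. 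The sums over the auxiliary indices $(c,d)$ and over the redundancy indices are then collapsed precisely as in (\ref{1-prop2})--(\ref{11-prop2}) using Lemma~\ref{lem2} and Lemma~\ref{lem5}. The essential book-keeping step peculiar to this case is that every subset $(L_a)_i$ of $L = \{k+1\}\cup L'$ appearing in the target must be split according to whether $k+1\in(L_a)_i$ --- so $(L_a)_i\setminus\{k+1\}\in L'_a$ and the $k+1$-factor of $\mathcal{X}^{(k+1)}_{K\cup(L_a)_i}$ is $P_{k+1}$ --- or $k+1\notin(L_a)_i$ --- so $(L_a)_i\in L'_{a-1}$ and that factor is $X_{k+1}$; after the relabelling $a\mapsto a+1$ in the second subcase the binomial weight $\binom{a+b-1}{p-1}$ of the target is split by the Pascal identity $\binom{m}{r}=\binom{m-1}{r}+\binom{m-1}{r-1}$ into the piece produced by the $\sum_q$-term of Theorem~\ref{thm1} and the piece produced by its $R$-term, with the signs $(-1)^{a-p}$ tracked consistently through the relabelling.

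The main obstacle is this combinatorial accounting: one must carry the redundancy multiplicities $\binom{c+e}{c}$, $\binom{d+f}{d}$ of Proposition~\ref{prop2} alongside the new splitting of each $L_a$-sum, match signs and binomial weights through $a\mapsto a+1$, and verify that the $R$-contribution and the $\sum_q$-contribution add up to the target with nothing double-counted and nothing missing. A subsidiary, routine but unavoidable, nuisance is the repeated conversion at the single site $k+1$ between $\left(P_{k+1}\left(Q_{k+1}A\right)_0\right)_r$, $\left(\left(P_{k+1}A\right)_0\left(Q_{k+1}B\right)_0\right)_r$, and the forms containing $\left(X_{k+1}X_{k+1}\right)_0$ --- the same kind of manipulation that produced the four-term case analysis in Proposition~\ref{prop1}. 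Proposition~\ref{prop4}, with $(P_{k+1},Q_{k+1}) = (X_{k+1},D_{k+1}X_{k+1})$ and $R_{P_{k+1},Q_{k+1}} = -\left(X_{k+1}X_{k+1}\right)_0$, will then follow by the same steps with the evident sign changes.
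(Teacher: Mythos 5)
Your proposal is correct and follows essentially the same route as the paper's proof: apply Theorem \ref{thm1} (with the nonzero contraction $R_{P_{k+1},Q_{k+1}}=\left(X_{k+1}X_{k+1}\right)_0$), insert the $n=k$ hypothesis, split each subset of the enlarged set $L\cup\{k+1\}$ according to whether it contains $k+1$, and recombine via Lemma \ref{lem2}, Lemma \ref{lem5}, the redundancy multiplicities of Proposition \ref{prop2}, and the Pascal identity. The only difference is directional --- the paper expands the claimed formula (\ref{3-prop3}) and reduces it to (\ref{2-prop3}), while you build up from the recursion toward the target --- which is the same computation read in reverse.
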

        \begin{proof}
            Let us prove Proposition \ref{prop3} in the same way as we proved Proposition \ref{prop2}. So, we set $\mathcal{X} = \prod_{i=1}^{k} X_i$, $\mathcal{P}_{k} = \mathcal{X}_{I}$ and $\mathcal{Q}_{k} = \mathcal{X}_{J}$, and we will omit some of the similar calculations that appeared in Proposition \ref{prop2}.

            Since $P_{k+1} \neq Q_{k+1}$ in Proposition \ref{prop3}, we can expand $\left(\mathcal{P}_{k+1} \mathcal{Q}_{k+1}\right)_p$ using Thorem \ref{thm1} for $p \geq 1$ and substitute $R_{P_{k+1},Q_{k+1}} = \left(X_{k+1} X_{k+1}\right)_0$. 
            To $\left(\mathcal{X}_{K \cup L} \mathcal{X}_{K \cup M}\right)_{p-1}$ for $p \geq 2$ and $\left(\mathcal{X}_{K \cup L} \mathcal{X}_{K \cup M}\right)_q$ for $q \geq p$ derived from the above expansion, we can adapt Theorem \ref{thm2} in $n=k$.

            For $I = J$, 
            \begin{align}
                \label{1-prop3}
                \left(\mathcal{P}_{k+1} \mathcal{Q}_{k+1}\right)_p = \delta_{p=1} \left(X_{k+1} \left(X_{k+1} \left(\mathcal{X}_{I} \mathcal{X}_{I}\right)_0 \right)_0 \right)_0~.
            \end{align}
            
            When $I \neq J$,
            \begin{align}
                \label{2-prop3}
                    \left(\mathcal{P}_{k+1} \mathcal{Q}_{k+1}\right)_p 
                    &= \delta_{p=1} \left(X_{k+1} \left(X_{k+1} \left(\mathcal{X}_{K \cup L} \mathcal{X}_{K \cup M} \right)_0 \right)_0 \right)_0 \notag \\
                    &\quad +\delta_{2 \leq p \leq |L|+|M| +1} \sum_{(a,b) \in \mathcal{J}_{LM,p-1}} \sum_{i=1}^{|L_a|} \sum_{j=1}^{|M_b|} (-1)^{a-p+1} \binom{a+b-1}{p-2} \notag\\
                    &\quad \times \left(X_{k+1} \left(X_{k+1} \left(\mathcal{X}_{K \cup \left(L_a\right)_i} \mathcal{X}_{K \cup \left(M_b\right)_j} \right)_{-(a+b-p+1)} \right)_0 \right)_0 \notag\\
                    &\quad + \delta_{1 \leq p \leq |L|+ |M|} \sum_{(a,b) \in \mathcal{J}_{LM,p}} \sum_{q=p}^{a+b} \sum_{i=1}^{|L_a|} \sum_{j=1}^{|M_b|} (-1)^{a-q}\binom{a+b-1}{q-1} \notag\\
                    &\quad \times \left(P_{k+1} \left(Q_{k+1} \left(\mathcal{X}_{K \cup \left(L_a\right)_i} \mathcal{X}_{K \cup \left(M_b\right)_j} \right)_{-(a+b-q)} \right)_0 \right)_{p-q}~.
            \end{align}

            We prove the following equality    
            \begin{align}
                \label{3-prop3}
                \left(\mathcal{P}_{k+1} \mathcal{Q}_{k+1}\right)_{p} = \left\{
                    \begin{aligned}
                        &\sum_{(a,b) \in \mathcal{J}_{L'M,p}} \sum_{i=1}^{|L'_a|} \sum_{j=1}^{|M_b|} (-1)^{a-p} \binom{a+b-1}{p-1} \\
                        &\quad \times \left(\tilde{\mathcal{X}}_{K \cup \left(L'_a\right)_i} \tilde{\mathcal{X}}_{K \cup (M_b)_j}\right)_{-(a+b-p)} &(1 \leq p \leq |L|+|M|+1) \\
                        \\
                        &0 &(p \geq |L| + |M| +2)
                    \end{aligned}
                \right.~,
            \end{align}
            where the set $L'$ is defined by $L' \coloneqq L \cup \{k+1\}$ and we set
            \begin{align*}
                \tilde{\mathcal{X}}_{K \cup \left(L'_a\right)_i} &\coloneqq \left\{
                    \begin{aligned}
                        &\left(D_{k+1} X_{k+1} \mathcal{X}_{K \cup \left(\left(L'_a\right)_i \setminus \{k+1\} \right)}\right)_0 &\quad \left(k+1 \in \left(L'_a\right)_i \right)\\
                        \\
                        &\left(X_{k+1} \mathcal{X}_{K \cup \left(L'_a\right)_i}\right)_0 &\quad \left(k+1 \notin \left(L'_a\right)_i \right)
                    \end{aligned}
                \right.~,
            \end{align*}
            and
            \begin{align*}
                \tilde{\mathcal{X}}_{K \cup \left(M_b\right)_j} &\coloneqq \left(X_{k+1} \mathcal{X}_{K \cup \left(M_b\right)_j}\right)_0~.
            \end{align*}
            Then, for $p \geq |L|+ |M| +2$, (\ref{1-prop3}) and (\ref{2-prop3}) are equal to (\ref{3-prop3}). Therefore, we prove the equality for $1 \leq p \leq |L| + |M|+1$.

            Let us expand and decomposite the right-hand side of (\ref{3-prop3}) for $1 \leq p \leq |L| +|M| +1$, as follows. 
            \begin{align}
                \label{4-prop3}
                    &\sum_{(c,d) \in \mathcal{J}_{L'M,p}} \sum_{i=1}^{|L'_{c}|} \sum_{j=1}^{|M_d|} (-1)^{c-p} \binom{c+d-1}{p-1} 
                    \left(\tilde{\mathcal{X}}_{K \cup \left(L'_c\right)_i} \tilde{\mathcal{X}}_{K \cup (M_d)_j}\right)_{-(c+d-p)} \notag \\
                    &= \delta_{1 \leq p \leq |L| + |M|} \sum_{\substack{(c,d) \in \mathcal{J}_{L'M,p} \\ c \leq |L|}} \sum_{i=1}^{|L_c|} \sum_{j=1}^{|M_d|} (-1)^{c-p} \binom{c+d-1}{p-1} \notag\\
                    &\quad \times \left( \left(P_{k+1} \mathcal{X}_{K \cup \left(L_c\right)_i} \right)_0 \left(Q_{k+1} \mathcal{X}_{K \cup \left(M_d\right)_j} \right)_0 \right)_{-(c+d-p)} \notag \\
                    &\quad + \delta_{1 \leq p \leq |L| + |M|+1} \sum_{\substack{(c,d) \in \mathcal{J}_{L'M},p \\ c \geq 1}} \sum_{i=1}^{|L_{c-1}|} \sum_{j=1}^{|M_d|} (-1)^{c-p} \binom{c+d-1}{p-1} \notag \\
                    &\quad \times \left( \left(X_{k+1} \mathcal{X}_{K \cup \left(L_{c-1}\right)_i} \right)_0 \left(X_{k+1} \mathcal{X}_{K \cup \left(M_d\right)_j}\right)_0 \right)_{-(c+d-p)}~,
            \end{align}
            where the first term on the right-hand side of (\ref{4-prop3}) only emerges in $I \neq J$.
            
            From now on, we calculate each term on the right-hand side of (\ref{4-prop3}).
            \begin{itemize}
                \item The first term only emerges for $I \neq J$ and $1 \leq p \leq |L| +|M|$. We can derive the following result in the similar way in Proposition \ref{prop2}.
                \begin{align}
                    &\sum_{(a,b) \in \mathcal{J}_{LM,p}} \sum_{i=1}^{|L_a|} \sum_{j=1}^{|M_b|} (-1)^{a-p} \binom{a+b-1}{p-1} \left(X_{k+1} \left(X_{k+1} \left(\mathcal{X}_{K \cup \left(L_a\right)_i} \mathcal{X}_{K \cup \left(M_b\right)_j}\right)_{-(a+b-p+1)} \right)_0 \right)_0 \notag \\
                    &\quad + \sum_{(a,b) \in \mathcal{J}_{LM,p}} \sum_{q=p}^{a+b} \sum_{i=1}^{|L_a|} \sum_{j=1}^{|M_b|} (-1)^{a-q} \binom{a+b-1}{q-1} 
                    \left(P_{k+1} \left(Q_{k+1} \left(\mathcal{X}_{K \cup \left(L_a\right)_i} \mathcal{X}_{K \cup \left(M_b\right)_j} \right)_{-(a+b-q)} \right)_0 \right)_{p-q}~.
                    \label{5-prop3}
                \end{align}
                
                \item The second term emerges for $1 \leq p \leq |L| + |M|+1$. We can derive the following result.
                \begin{align}
                    &\sum_{(a,b) \in \mathcal{J}_{LM,p-1}} \sum_{q=p-1}^{a+b} \sum_{i=1}^{|L_a|} \sum_{j=1}^{|M_b|} (-1)^{a-p+1} \binom{a+b}{p-1} \notag \\
                    &\quad \times \left(X_{k+1} \left(X_{k+1} \left(\mathcal{X}_{K \cup \left(L_a\right)_i} \mathcal{X}_{K \cup \left(M_b\right)_j} \right)_{-(a+b-q)} \right)_0 \right)_{p-q-1} \notag \\
                    &\quad -\sum_{(a,b) \in \mathcal{J}_{LM,p-1}} \sum_{i=1}^{|L_a|} \sum_{j=1}^{|M_b|}(-1)^{a-p+1} \binom{a+b}{p-1} \left(X_{k+1} \left(X_{k+1} \left(\mathcal{X}_{K \cup \left(L_a\right)_i} \mathcal{X}_{K \cup \left(M_b\right)_j} \right)_{0} \right)_0 \right)_{-(a+b-p+1)} \notag \\
                    &\quad + \sum_{(c,d) \in \mathcal{J}_{LM,p-1}} \sum_{a=c}^{|L|} \sum_{b=d}^{|M|} \sum_{q=c+d}^{a+b} \sum_{i=1}^{|L_a|} \sum_{j=1}^{|M_b|}
                    (-1)^{a-(p+q+c+d)+1} \binom{a}{c} \binom{b}{d} \binom{c+d}{p-1} \binom{a+b-c-d-1}{a+b-q} \notag \\
                    &\quad \times \left(X_{k+1} \left(X_{k+1} \left(\mathcal{X}_{K \cup \left(L_a\right)_i} \mathcal{X}_{K \cup \left(M_b\right)_j} \right)_{-(a+b-q)} \right)_0 \right)_{p-q-1}~.
                    \label{6-prop3}
                \end{align}
                Here, the third term on the right-hand side of (\ref{6-prop3}) can be transformed in the similar way of (\ref{7-prop2}) when we fix the pair $(a,b)$.
                \begin{align}
                    &\quad \sum_{\substack{0 \leq c \leq a \\ 0 \leq d \leq b \\ p-1 \leq c+d \leq a+b}} \sum_{q=c+d}^{a+b} \sum_{i=1}^{|L_a|} \sum_{j=1}^{|M_b|}
                    (-1)^{a-(p+q+c+d)+1} \binom{a}{c} \binom{b}{d} \binom{c+d}{p-1} \binom{a+b-c-d-1}{a+b-q} \notag \\
                    &\quad \times \left(X_{k+1} \left(X_{k+1} \left(\mathcal{X}_{K \cup \left(L_a\right)_i} \mathcal{X}_{K \cup \left(M_b\right)_j} \right)_{-(a+b-q)} \right)_0 \right)_{p-q-1} \notag \\
                    &= \delta_{a+b=p-1} \sum_{i=1}^{|L_a|} \sum_{j=1}^{|M_b|}(-1)^{a-p+1} \left(X_{k+1} \left(X_{k+1} \left(\mathcal{X}_{K \cup \left(L_a\right)_i} \mathcal{X}_{K \cup \left(M_b\right)_j} \right)_0 \right)_0 \right)_0 \notag \\
                    &\quad + \delta_{a+b \geq p} \sum_{q=p}^{a+b} \sum_{r=p-1}^{q} \sum_{i=1}^{|L_a|} \sum_{j=1}^{|M_b|} (-1)^{a-(p+q+r)+1} \binom{a+b}{r} \binom{r}{p-1} \binom{a+b-r-1}{a+b-q} \notag \\
                    &\quad \times \left(X_{k+1} \left(X_{k+1} \left(\mathcal{X}_{K \cup \left(L_a\right)_i} \mathcal{X}_{K \cup \left(M_b\right)_j} \right)_{-(a+b-q)} \right)_0 \right)_{p-q-1}~.
                    \label{7-prop3}
                \end{align}
                For $a+b \geq p$, the second term on the right-hand side of (\ref{7-prop3}) can be transformed by using Lemma \ref{lem5}:
                \begin{align}
                    &\sum_{i=1}^{|L_a|} \sum_{j=1}^{|M_b|} (-1)^{a-p+1} \binom{a+b}{p-1} \left(X_{k+1} \left(X_{k+1} \left(\mathcal{X}_{K \cup \left(L_a\right)_i} \mathcal{X}_{K \cup \left(M_b\right)_j} \right)_{0} \right)_0 \right)_{-(a+b-p+1)} \notag \\
                    &\quad - \sum_{q=p}^{a+b}\sum_{i=1}^{|L_a|} \sum_{j=1}^{|M_b|} (-1)^{a-p+1} \binom{a+b}{p-1} \left(X_{k+1} \left(X_{k+1} \left(\mathcal{X}_{K \cup \left(L_a\right)_i} \mathcal{X}_{K \cup \left(M_b\right)_j} \right)_{-(a+b-q)} \right)_0 \right)_{p-q-1}~.
                    \label{8-prop3}
                \end{align}
                Therefore, (\ref{7-prop3}) can be equal to the following expression by (\ref{8-prop3}).
                \begin{align}
                    &\sum_{i=1}^{|L_a|} \sum_{j=1}^{|M_b|} (-1)^{a-p+1} \binom{a+b}{p-1} \left(X_{k+1} \left(X_{k+1} \left(\mathcal{X}_{K \cup \left(L_a\right)_i} \mathcal{X}_{K \cup \left(M_b\right)_j} \right)_{0} \right)_0 \right)_{-(a+b-p+1)} \notag \\
                    &\quad - \delta_{a+b \geq p}\sum_{q=p}^{a+b}\sum_{i=1}^{|L_a|} \sum_{j=1}^{|M_b|} (-1)^{a-p+1} \binom{a+b}{p-1} \left(X_{k+1} \left(X_{k+1} \left(\mathcal{X}_{K \cup \left(L_a\right)_i} \mathcal{X}_{K \cup \left(M_b\right)_j} \right)_{-(a+b-q)} \right)_0 \right)_{p-q-1}~.
                    \label{9-prop3}
                \end{align}
                So, we can execute the summation with respect to $(a,b)$ for (\ref{9-prop3}), and derive the following result.
                \begin{align}
                    &\quad \sum_{(a,b) \in \mathcal{J}_{LM,p-1}}\sum_{i=1}^{|L_a|} \sum_{j=1}^{|M_b|} (-1)^{a-p+1} \binom{a+b}{p-1} \left(X_{k+1} \left(X_{k+1} \left(\mathcal{X}_{K \cup \left(L_a\right)_i} \mathcal{X}_{K \cup \left(M_b\right)_j} \right)_{0} \right)_0 \right)_{-(a+b-p+1)} \notag \\
                    &\quad - \delta_{1 \leq p \leq |L| +|M|}\sum_{(a,b) \in \mathcal{J}_{LM,p}}\sum_{q=p}^{a+b}\sum_{i=1}^{|L_a|} \sum_{j=1}^{|M_b|} (-1)^{a-p+1} \binom{a+b}{p-1} \notag \\
                    &\quad \times \left(X_{k+1} \left(X_{k+1} \left(\mathcal{X}_{K \cup \left(L_a\right)_i} \mathcal{X}_{K \cup \left(M_b\right)_j} \right)_{-(a+b-q)} \right)_0 \right)_{p-q-1}~.
                    \label{10-prop3}
                \end{align}
                
                Thus, (\ref{6-prop3}) is equal to the following summation by the above calculations.
                \begin{align}
                    &\sum_{(a,b) \in \mathcal{J}_{LM,p-1}} \sum_{i=1}^{|L_a|} \sum_{j=1}^{|M_b|} (-1)^{a-p+1} \binom{a+b}{p-1}
                    \left(X_{k+1} \left(X_{k+1} \left(\mathcal{X}_{K \cup \left(L_a\right)_i} \mathcal{X}_{K \cup \left(M_b\right)_j} \right)_{-(a+b-p+1)} \right)_0 \right)_{0}~.
                    \label{11-prop3}
                \end{align}
            \end{itemize}
            
            Since summarizing the above results (\ref{5-prop3}) and (\ref{11-prop3}), we can derive the following result.
            \begin{align}
                &\sum_{(c,d) \in \mathcal{J}_{L'M,p}} \sum_{i=1}^{|L'_{c}|} \sum_{j=1}^{|M_d|} (-1)^{c-p} \binom{c+d-1}{p-1} 
                \left(\tilde{\mathcal{X}}_{K \cup \left(L'_c\right)_i} \tilde{\mathcal{X}}_{K \cup (M_d)_j}\right)_{-(c+d-p)} \notag \\
                &= \delta_{1 \leq p \leq |L|+|M|}\sum_{(a,b) \in \mathcal{J}_{LM,p}} \sum_{i=1}^{|L_a|} \sum_{j=1}^{|M_b|} (-1)^{a-p} \binom{a+b-1}{p-1} \left(X_{k+1} \left(X_{k+1} \left(\mathcal{X}_{K \cup \left(L_a\right)_i} \mathcal{X}_{K \cup \left(M_b\right)_j}\right)_{-(a+b-p+1)} \right)_0 \right)_0 \notag \\
                &\quad + \sum_{(a,b) \in \mathcal{J}_{LM,p-1}} \sum_{i=1}^{|L_a|} \sum_{j=1}^{|M_b|} (-1)^{a-p+1} \binom{a+b}{p-1}
                \left(X_{k+1} \left(X_{k+1} \left(\mathcal{X}_{K \cup \left(L_a\right)_i} \mathcal{X}_{K \cup \left(M_b\right)_j} \right)_{-(a+b-p+1)} \right)_0 \right)_{0} \notag \\
                &\quad +\delta_{1 \leq p \leq |L|+|M|} \sum_{(a,b) \in \mathcal{J}_{LM,p}} \sum_{q=p}^{a+b} \sum_{i=1}^{|L_a|} \sum_{j=1}^{|M_b|} (-1)^{a-q} \binom{a+b-1}{q-1} \notag
            \end{align}
            \begin{align}
                \times \left(P_{k+1} \left(Q_{k+1} \left(\mathcal{X}_{K \cup \left(L_a\right)_i} \mathcal{X}_{K \cup \left(M_b\right)_j} \right)_{-(a+b-q)} \right)_0 \right)_{p-q}~.
                \label{12-prop3}
            \end{align}

            Finally, let us transform the first and second terms on the right-hand side of (\ref{12-prop3}) as follows.
            \begin{align}
                &\delta_{p=1} \left(X_{k+1} \left(X_{k+1} \left(\mathcal{X}_{K \cup L} \mathcal{X}_{K \cup M} \right)_{0} \right)_0 \right)_{0} \notag \\
                &\quad +\delta_{2 \leq p \leq |L|+|M|+1}\sum_{(a,b) \in \mathcal{J}_{LM,p-1}} \sum_{i=1}^{|L_a|} \sum_{j=1}^{|M_b|} (-1)^{a-p+1} \binom{a+b-1}{p-2} \notag \\
                &\quad \times \left(X_{k+1} \left(X_{k+1} \left(\mathcal{X}_{K \cup \left(L_a\right)_i} \mathcal{X}_{K \cup \left(M_b\right)_j}\right)_{-(a+b-p+1)} \right)_0 \right)_0~.
                \label{13-prop3}
            \end{align}
            Since (\ref{13-prop3}), we can make the right-hand side of (\ref{12-prop3}) the following expression.
            \begin{align}
                \label{14-prop3}
                    &\delta_{p=1} \left(X_{k+1} \left(X_{k+1} \left(\mathcal{X}_{K \cup L} \mathcal{X}_{K \cup M}\right)_0 \right)_0 \right)_0 \notag \\
                    &\quad +\delta_{2 \leq p \leq |L|+ |M|+1} \sum_{(a,b) \in \mathcal{J}_{LM,p-1}} \sum_{i=1}^{|L_a|} \sum_{j=1}^{|M_b|} (-1)^{a-p+1} \binom{a+b-1}{p-2} \notag \\
                    &\quad \times \left(X_{k+1} \left(X_{k+1} \left(\mathcal{X}_{K \cup \left(L_a\right)_i} \mathcal{X}_{K \cup \left(M_b\right)_j} \right)_{-(a+b-p+1)} \right)_0 \right)_0 \notag\\
                    &\quad +\delta_{1 \leq p \leq |L|+ |M|}\sum_{(a,b) \in \mathcal{J}_{LM,p}} \sum_{q=p}^{a+b} \sum_{i=1}^{|L_a|} \sum_{j=1}^{|M_b|} (-1)^{a-q} \binom{a+b-1}{q-1} \notag \\
                    &\quad \times \left(P_{k+1} \left(Q_{k+1} \left(\mathcal{X}_{K \cup \left(L_a\right)_i} \mathcal{X}_{K \cup \left(M_b\right)j} \right)_{-(a+b-q)} \right)_0 \right)_{p-q}.
            \end{align}
            We can find out that (\ref{14-prop3}) is equal to (\ref{2-prop3}) in $1 \leq p \leq |L|+|M|+1$.
        \end{proof}
        \begin{prop}
            \label{prop4}
            Let $k$ be an integer satisfying $k \geq 2$. Assuming that Theorem \ref{thm2} for $n=k$ holds,
            Theorem \ref{thm2} for $n=k+1$ and $(P_{k+1},Q_{k+1})=(X_{k+1},D_{k+1} X_{k+1})$ also holds.
        \end{prop}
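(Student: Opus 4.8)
The plan is to run the argument of Proposition \ref{prop3} with the two factors interchanged. Write $\mathcal{X} = \prod_{i=1}^{k} X_i$, $\mathcal{P}_{k} = \mathcal{X}_{I}$, $\mathcal{Q}_{k} = \mathcal{X}_{J}$, and $\mathcal{Q}_{k+1} = (Q_{k+1}\mathcal{Q}_k)_0$ with $Q_{k+1} = D_{k+1}X_{k+1}$. Since now $D_{k+1}$ is appended to the \emph{second} factor, the relevant enlarged set is $M' \coloneqq M \cup \{k+1\}$ (in place of the $L' = L \cup \{k+1\}$ used in Proposition \ref{prop3}), with $K = I\cap J$ and $L$ unchanged. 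First I would expand $\left(\mathcal{P}_{k+1}\mathcal{Q}_{k+1}\right)_p$ via Theorem \ref{thm1}, substituting $R_{P_{k+1},Q_{k+1}} = R_{X_{k+1},D_{k+1}X_{k+1}} = -\left(X_{k+1}X_{k+1}\right)_0$, and then apply the induction hypothesis (Theorem \ref{thm2} at $n=k$) to the products $\left(\mathcal{X}_{K\cup L}\mathcal{X}_{K\cup M}\right)_{p-1}$ and $\left(\mathcal{X}_{K\cup L}\mathcal{X}_{K\cup M}\right)_{q}$ that appear, obtaining the analogues of equations (\ref{1-prop3})--(\ref{2-prop3}).

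Next I would set up the mirrored auxiliary operators: on the unchanged $L$-side put $\tilde{\mathcal{X}}_{K\cup (L_a)_i} \coloneqq \left(X_{k+1}\mathcal{X}_{K\cup (L_a)_i}\right)_0$, and on the $M'$-side put $\tilde{\mathcal{X}}_{K\cup (M'_b)_j}$ equal to $\left(D_{k+1}X_{k+1}\,\mathcal{X}_{K\cup\left((M'_b)_j\setminus\{k+1\}\right)}\right)_0$ when $k+1\in (M'_b)_j$ and to $\left(X_{k+1}\mathcal{X}_{K\cup (M'_b)_j}\right)_0$ otherwise, so that the claimed formula for $n=k+1$ is exactly Theorem \ref{thm2} written with $\mathcal{J}_{LM',p}$. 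Expanding this conjectured right-hand side, I would split the sum over $(M'_b)_j$ into the part with $k+1$ retained (all $b$ deletions from $M$, contributing $|M_b|$) and the part with $k+1$ deleted (reindex $b\mapsto b-1$, contributing $|M_{b-1}|$), recognise the $k+1$-retained part via Corollary \ref{cor1}, and then reorganise the resulting triple sums — fixing the outer pair and summing out the inner pair as in the passage (\ref{5-prop3})--(\ref{14-prop3}) — using Lemma \ref{lem2} and Lemma \ref{lem5}.

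The main obstacle, and the only genuine departure from a verbatim copy of Proposition \ref{prop3}, is the placement of signs. In Proposition \ref{prop3} the enlarged side was the $L$-side, whose deletion index $a$ carries the prefactor $(-1)^{a-p}$ of Theorem \ref{thm2}, so the $k+1$-deleted branch of the target acquired a compensating $(-1)$ automatically from the shift $a\mapsto a-1$; here the enlarged side is the $M'$-side, whose index does not appear in the prefactor, so that compensating sign must instead be furnished by the extra minus in $R_{P_{k+1},Q_{k+1}} = -(X_{k+1}X_{k+1})_0$. Verifying that this minus cancels precisely the $(-1)^{a-(p-1)}$ coming from the $R$-term expansion at shifted argument $p-1$, leaving the correct coefficient $(-1)^{a-p}\binom{a+b-1}{p-1}$ after all the sum reorderings, is the delicate bookkeeping step; once the signs are correctly placed the combinatorial identities are applied exactly as before. (One could alternatively try to obtain this Proposition from Proposition \ref{prop3} by exchanging $\mathcal{P}_{k+1}$ and $\mathcal{Q}_{k+1}$ through the skew-symmetry relation (\ref{OPEdefs1}), but checking that the formula of Theorem \ref{thm2} is consistent with (\ref{OPEdefs1}) is a computation of comparable length, so the direct mirror argument is preferable.)
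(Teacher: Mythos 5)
Your plan is sound, and the computational engine you describe (Theorem \ref{thm1} with $R_{P_{k+1},Q_{k+1}}=-\left(X_{k+1}X_{k+1}\right)_0$, the induction hypothesis applied to $\left(\mathcal{X}_{K\cup L}\mathcal{X}_{K\cup M}\right)_{p-1}$ and $\left(\mathcal{X}_{K\cup L}\mathcal{X}_{K\cup M}\right)_q$, the split of the target over whether $k+1$ is retained in $(M'_b)_j$, and the reorganization of sums via Lemma \ref{lem2} and Lemma \ref{lem5}) is exactly the machinery the paper uses; your sign analysis is also correct in substance — the compensating sign that in Proposition \ref{prop3} came from shifting the $L'$-deletion index, which enters the prefactor $(-1)^{a-p}$, is here supplied by the extra minus in $R$, although the matching of $\binom{a+b-1}{p-2}$ against $\binom{a+b}{p-1}$ only closes after the Lemma \ref{lem5} reshuffling, not termwise. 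Where you genuinely diverge from the paper is in scope: you propose to redo the whole general-$(I,J)$ computation of Proposition \ref{prop3} in mirrored form, with double sums over $L$ and $M'$ and the redundancy binomials $\binom{c+e}{c}\binom{d+f}{d}$, whereas the paper first reduces the problem. Using the commutativity of the nested normal-ordered factors, it observes that if any slot $i\leq k$ has $P_i=Q_i$ or $(P_i,Q_i)=(D_iX_i,X_i)$, that slot can be permuted to the last position and the case handed off to Proposition \ref{prop2} or \ref{prop3}; hence only the extreme case $(P_i,Q_i)=(X_i,D_iX_i)$ for all $i$, i.e.\ $\mathcal{P}_k=\mathcal{X}$ and $\mathcal{Q}_k=\mathcal{X}_M$ with $M=\{1,\dots,k\}$, needs a direct computation. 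There $K=L=\phi$, all $L$-sums collapse, and the mirrored argument reduces to single sums (the paper's (\ref{2-prop4})--(\ref{7-prop4})), with the minus from $R$ appearing exactly as you predicted. Both routes prove the Proposition; your direct mirror buys uniformity with Proposition \ref{prop3} at the cost of substantially heavier bookkeeping, while the paper's permutation trick is the labor-saving idea your proposal misses.
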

        \begin{proof}
            If $P_i =Q_i$ for some $i=1,\dots,k$, this case is equivalent to Proposition \ref{prop2} because we can execute the following commutations for $1 \leq i < j \leq k+1$.
            \begin{align*}
                &\left(P_{k+1} \left(\dots \left(P_j \left(\dots \left(P_i \left(\dots \left(P_2 P_1\right)_0 \dots \right)_0 \right)_0 \dots \right)_0 \right)_0 \dots \right)_0 \right)_0 \\
                &= \left(P_{k+1} \left(\dots \left(P_i \left(\dots \left(P_j \left(\dots \left(P_2 P_1\right)_0 \dots \right)_0 \right)_0 \dots \right)_0 \right)_0 \dots \right)_0 \right)_0~,
            \end{align*}
            and
            \begin{align*}
                &\left(Q_{k+1} \left(\dots \left(Q_j \left(\dots \left(Q_i \left(\dots \left(Q_2 Q_1\right)_0 \dots \right)_0 \right)_0 \dots \right)_0 \right)_0 \dots \right)_0 \right)_0 \\
                &= \left(Q_{k+1} \left(\dots \left(Q_i \left(\dots \left(Q_j \left(\dots \left(Q_2 Q_1\right)_0 \dots \right)_0 \right)_0 \dots \right)_0 \right)_0 \dots \right)_0 \right)_0~.
            \end{align*}
            Similarly, if $(P_i,Q_i) = (D_i X_i,X_i)$ for some $i=1,\dots,k$, we can also make this case Proposition \ref{prop3}.
            Thus, we must only prove the case when $(P_i,Q_i)=(X_i,D_iX_i)$ for all $i=1,\dots,k+1$, that is, we set $\mathcal{X} = \prod_{i=1}^{k} X_i$, $\mathcal{P}_{k} = \mathcal{X}$ and $\mathcal{Q}_{k} = \mathcal{X}_{M} = \prod_{i=1}^{k} D_i \mathcal{X}$.
            
            Substituing $R_{P_{k+1},Q_{k+1}} = -\left(X_{k+1} X_{k+1}\right)_0$, let us prove Proposition \ref{prop4} in the similar way how to prove Proposition \ref{prop2} and \ref{prop3}.
            \begin{align}
                \label{2-prop4}
                    \left(\mathcal{P}_{k+1} \mathcal{Q}_{k+1}\right)_p 
                    &= -\delta_{p=1} \left(X_{k+1} \left(X_{k+1} \left(\mathcal{X} \mathcal{X}_{M} \right)_0 \right)_0 \right)_0 \notag\\
                    &\quad +\delta_{2 \leq p \leq |M| +1} \sum_{b=p-1}^{|M|} \sum_{j=1}^{|M_b|} (-1)^{-p} \binom{b-1}{p-2}
                    \left(X_{k+1} \left(X_{k+1} \left(\mathcal{X} \mathcal{X}_{\left(M_b\right)_j} \right)_{-(b-p+1)} \right)_0 \right)_0 \notag\\
                    &\quad + \delta_{1 \leq p \leq |M|} \sum_{b=p}^{|M|} \sum_{q=p}^{b} \sum_{j=1}^{|M_b|} (-1)^{-q} \binom{b-1}{q-1} 
                    \left(P_{k+1} \left(Q_{k+1} \left(\mathcal{X} \mathcal{X}_{\left(M_b\right)_j} \right)_{-(b-q)} \right)_0 \right)_{p-q}~.
            \end{align}

            When the set $M'$ is defined by $M' \coloneqq M \cup \{k+1\}$ and we set 
            \begin{align*}
                \tilde{\mathcal{X}}_{\phi} &\coloneqq \left(X_{k+1} \mathcal{X}\right)_0~,\\
                \tilde{\mathcal{X}}_{\left(M'_b\right)_j} &\coloneqq \left\{
                    \begin{aligned}
                        &\left(D_{k+1} X_{k+1} \mathcal{X}_{\left(M'_b\right)_j \setminus \{k+1\}}\right)_0 &\quad (k+1 \in \left(M'_b\right)_j) \\
                        &\left(X_{k+1} \mathcal{X}_{\left(M'_b\right)_j}\right)_0 &\quad (k+1 \notin \left(M'_b\right)_j)
                    \end{aligned}
                \right.~,
            \end{align*}
            we will prove the following equality.
            \begin{align}
                \label{3-prop4}
                \left(\mathcal{P}_{k+1} \mathcal{Q}_{k+1}\right)_{p} = \left\{
                    \begin{aligned}
                        &\sum_{d=p}^{|M'|} \sum_{j=1}^{|M'_d|} (-1)^{-p} \binom{d-1}{p-1} 
                        \left(\tilde{\mathcal{X}}_{\phi}\tilde{\mathcal{X}}_{\left(M'_d\right)_j}\right)_{-(d-p)} &(1 \leq p \leq |M|+1) \\
                        \\
                        &0 &(p \geq |M| +2)
                    \end{aligned}
                \right.~.
            \end{align}
            For $p \geq |M| +2$, (\ref{2-prop4}) are equal to (\ref{3-prop4}). So we will prove the equality for $1 \leq p \leq |M|+1$.

            Expanding the right-hand side of (\ref{3-prop4}) for $1 \leq p \leq |M|+1$, we can get the following result.
            \begin{align}
                \label{4-prop4}
                    &\sum_{d=p}^{|M'|} \sum_{j=1}^{|M'_d|} (-1)^{-p} \binom{d-1}{p-1} \left(\tilde{\mathcal{X}}_{\phi}\tilde{\mathcal{X}}_{\left(M'_d\right)_j}\right)_{-(d-p)} \notag\\
                    &= \delta_{1 \leq p \leq |M|}\sum_{d=p}^{|M|} \sum_{j=1}^{|M_d|} (-1)^{-p} \binom{d-1}{p-1} 
                    \left(\left(P_{k+1} \mathcal{X}\right)_0 \left(Q_{k+1} \mathcal{X}_{\left(M_d\right)_j}\right)_0 \right)_{-(d-p)} \notag\\
                    &\quad +\delta_{1 \leq p \leq |M|+1}\sum_{d=p}^{|M|+1} \sum_{j=1}^{|M_{d-1}|} (-1)^{-p} \binom{d-1}{p-1} 
                    \left(\left(X_{k+1} \mathcal{X}\right)_0 \left(X_{k+1} \mathcal{X}_{\left(M_{d-1}\right)_j}\right)_0 \right)_{-(d-p)}~,
            \end{align}
            where the first term on the right-hand side of (\ref{4-prop4}) only emerges in $J \neq \phi$.

            Similar to Proposition \ref{prop3}, the first term on the right-hand side of (\ref{4-prop4}) can be transformed as follows.
            \begin{align}
                \label{5-prop4}
                    &\delta_{1 \leq p \leq |M|}\sum_{d=p}^{|M|} \sum_{j=1}^{|M_d|} (-1)^{-p} \binom{d-1}{p-1} 
                    \left(\left(P_{k+1} \mathcal{X}\right)_0 \left(Q_{k+1} \mathcal{X}_{\left(M_d\right)_j}\right)_0 \right)_{-(d-p)} \notag \\
                    &= -\delta_{1 \leq p \leq |M|}\sum_{b=p}^{|M|} \sum_{j=1}^{|M_b|} (-1)^{-p} \binom{b-1}{p-1} \left(X_{k+1} \left(X_{k+1} \left(\mathcal{X} \mathcal{X}_{\left(M_b\right)_j} \right)_{-(b-p+1)} \right)_0 \right)_0 \notag \\
                    &\quad + \delta_{1 \leq p \leq |M|}\sum_{b=p}^{|M|} \sum_{q=p}^b \sum_{j=1}^{|M_b|} (-1)^{-q} \binom{b-1}{q-1}
                    \left(P_{k+1} \left(Q_{k+1} \left(\mathcal{X} \mathcal{X}_{\left(M_b\right)_j} \right)_{-(b-q)} \right)_0 \right)_{p-q}~.
            \end{align}
            The second term can also be the following equality.
            \begin{align}
                \label{6-prop4}
                    &\delta_{1 \leq p \leq |M|+1}\sum_{b=p}^{|M|+1} \sum_{j=1}^{|M_{d-1}|} (-1)^{-p} \binom{d-1}{p-1} 
                    \left(\left(X_{k+1} \mathcal{X}\right)_0 \left(X_{k+1} \mathcal{X}_{\left(M_{d-1}\right)_j}\right)_0 \right)_{-(d-p)} \notag\\
                    &= \delta_{1 \leq p \leq |M|+1} \sum_{b=p-1}^{|M|} \sum_{j=1}^{|M_b|} (-1)^{-p} \binom{b}{p-1}
                    \left(X_{k+1} \left(X_{k+1} \left(\mathcal{X} \mathcal{X}_{\left(M_b\right)_j}\right)_{-(b-p+1)} \right)_0 \right)_0~.
            \end{align}
            
            From (\ref{5-prop4}) and (\ref{6-prop4}), we can derive the following result.
            \begin{align}
                \label{7-prop4}
                    &-\delta_{p=1} \left(X_{k+1} \left(X_{k+1} \left(\mathcal{X} \mathcal{X}_{M}\right)_0 \right)_0 \right)_0 \notag\\
                    &\quad + \delta_{2 \leq p \leq |M|+1} \sum_{b=p-1}^{|M|} \sum_{j=1}^{|M_b|}(-1)^{-p} \binom{b-1}{p-2}
                    \left(X_{k+1} \left(X_{k+1} \left(\mathcal{X} \mathcal{X}_{\left(M_b\right)_j} \right)_{-(b-p+1)} \right)_0 \right)_0 \notag\\
                    &\quad + \delta_{1 \leq p \leq |M|} \sum_{b=p}^{|M|} \sum_{q=p}^{b} \sum_{j=1}^{|M_b|} (-1)^{-q} \binom{b-1}{q-1}
                    \left(P_{k+1} \left(Q_{k+1} \left(\mathcal{X} \mathcal{X}_{\left(M_b\right)_j} \right)_{-(b-q)}\right)_0 \right)_{p-q}~.
            \end{align}
            (\ref{7-prop4}) is equal to (\ref{2-prop4}) for $1 \leq p \leq |M|+1$.
        \end{proof}

        Summarizing the above propositions, we can finally prove Theorem \ref{thm2}.
        \begin{proof}
            The proof of Theorem \ref{thm2} is executed by mathematical induction. 
            Proposition \ref{prop1}, \ref{prop2}, \ref{prop3} and \ref{prop4} are equivalent to each step of this inductive proof.
        \end{proof}

        Rewriting Theorem \ref{thm2} in the OPE, we get the following Corollary.
        \begin{cor}
            \label{cor2}
            When $n$ is an integer satisfying $n \geq 2$, the OPE between $\mathcal{X}_{I}$ and $\mathcal{X}_{J}$ is given as follows.
            \begin{itemize}
                \item When $I \neq J$,
                \begin{align*}
                        \mathcal{X}_{I}(z) \mathcal{X}_{J}(0)
                        &\sim \sum_{p=1}^{|L|+|M|}\sum_{(a,b)\in \mathcal{J}_{LM,p}}\sum_{i=1}^{|L_a|}\sum_{j=1}^{|M_b|}\frac{1}{(z-w)^p} \\
                        &\quad \times \frac{(-1)^{a-p}}{\left(a+b-p\right)!} \binom{a+b-1}{p-1}\left(\partial^{a+b-p}\mathcal{X}_{K\cup (L_a)_i}\right)\mathcal{X}_{K\cup (M_b)_j}~,
                \end{align*}
                where we set $K= I \cap J$, $L = I \setminus K$ and $M= J \setminus K$.
                \item When $I=J$,
                \begin{align*}
                    \mathcal{X}_{I}(z) \mathcal{X}_{I}(0) \sim 0~.
                \end{align*}
            \end{itemize}
        \end{cor}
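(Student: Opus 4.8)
The statement in Corollary \ref{cor2} is simply the operator‑product translation of the mode identities already packaged in Theorem \ref{thm2}. The plan is therefore short. First I would invoke the standard dictionary $\mathcal{X}_I(z)\mathcal{X}_J(w)\sim\sum_{p\ge 1}(z-w)^{-p}\,(\mathcal{X}_I\mathcal{X}_J)_p(w)$, which follows directly from the definition \eqref{product} of $(AB)_p$; the key point that makes this a \emph{finite} sum is the vanishing part of Theorem \ref{thm2}, namely $(\mathcal{X}_I\mathcal{X}_J)_p=0$ for $p>|L|+|M|$ (and $(\mathcal{X}_I\mathcal{X}_I)_p=0$ for all $p\ge1$). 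Second, in the surviving range $1\le p\le|L|+|M|$ the products appearing in Theorem \ref{thm2} are at non‑positive level, $\bigl(\mathcal{X}_{K\cup(L_a)_i}\mathcal{X}_{K\cup(M_b)_j}\bigr)_{p-a-b}$ with $a+b\ge p$; using \eqref{OPEdefs3}–\eqref{OPEdefs4} repeatedly one has $(AB)_{-m}=\tfrac1{m!}(\partial^m A\,B)_0$, and taking $m=a+b-p$ turns this factor into exactly $\tfrac{1}{(a+b-p)!}\bigl(\partial^{a+b-p}\mathcal{X}_{K\cup(L_a)_i}\bigr)\mathcal{X}_{K\cup(M_b)_j}$, reproducing the stated OPE coefficient. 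This step is purely formal bookkeeping, so for Corollary \ref{cor2} itself there is no genuine obstacle.

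The substance all sits in Theorem \ref{thm2}, and that I would prove by induction on $n$. The point is that $\mathcal{X}_I=\prod_{i\in I}D_i\mathcal{X}$ is a nested normal‑ordered product whose $i$‑th innermost factor is $D_iX_i$ when $i\in I$ and $X_i$ otherwise, so both $\mathcal{X}_I$ and $\mathcal{X}_J$ are of the form $\mathcal{P}_n,\mathcal{Q}_n$ for which Theorem \ref{thm1} and Corollary \ref{cor1} give a recursion reducing level $n$ to level $n-1$. For the base case $n=2$ one enumerates the finitely many sign patterns of $(P_1,Q_1)$ and $(P_2,Q_2)$ and checks each directly against Theorem \ref{thm1} (using $R_{P_i,Q_i}=\pm(X_iX_i)_0$ or $0$). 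For the inductive step I would peel off the index $k{+}1$: up to the commutations that let one move any "$P_i=Q_i$" or "$(D_iX_i,X_i)$" factor inward, there are three cases for the outermost factor — $P_{k+1}=Q_{k+1}$ (apply Corollary \ref{cor1}, whose $R$‑term drops out), $(P_{k+1},Q_{k+1})=(D_{k+1}X_{k+1},X_{k+1})$, and $(P_{k+1},Q_{k+1})=(X_{k+1},D_{k+1}X_{k+1})$. In each case Theorem \ref{thm1} expresses $(\mathcal{P}_{k+1}\mathcal{Q}_{k+1})_p$ in terms of the level‑$q$ products at stage $k$, into which one substitutes the inductive hypothesis; in parallel the claimed answer at stage $k{+}1$ is re‑expanded by applying Corollary \ref{cor1} to the outer factors $(P_{k+1}\mathcal{X}_{K\cup(L_a)_i})_0$ and $(Q_{k+1}\mathcal{X}_{K\cup(M_b)_j})_0$, and the two expressions are matched.

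The hard part is the reorganisation of the nested sums this produces. Substituting the stage‑$k$ hypothesis inside a factor $\bigl(\mathcal{X}_{K\cup(L_c)_i}\mathcal{X}_{K\cup(M_d)_j}\bigr)_q$, done naively, yields a genuinely unwieldy double indexing; the trick is a careful expansion that records the multiplicity $\binom{c+e}{c}\binom{d+f}{d}$ with which a fixed subset $(L_{c+e})_i,(M_{d+f})_j$ is obtained by removing $c$ then $e$ (resp.\ $d$ then $f$) elements. After setting $a=c+e$, $b=d+f$ and swapping the order of summation, the inner sum over $(c,d)$ with $(a,b)$ fixed must collapse to the single coefficient $(-1)^{a-p}\binom{a+b-1}{p-1}$, and this is precisely where Lemma \ref{lem2} and, crucially, Lemma \ref{lem5} are used; Lemma \ref{lem1} in turn eliminates the auxiliary derivative/normal‑order sums produced by the $R_{P_{k+1},Q_{k+1}}$ contributions. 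The $I=J$ case needs none of this: at stage $k$ one has $(\mathcal{X}_I\mathcal{X}_I)_q=0$, which makes every term of the stage‑$(k{+}1)$ recursion vanish. Finally the $\mathcal{Y}$‑statements \eqref{YY1}–\eqref{YY2} (and their OPE form) follow from the identical argument with $X_i\mapsto Y_i$; the only change is the sign of $R$, which propagates to an overall factor $(-1)^{b-p}$ in place of $(-1)^{a-p}$. I expect the multiplicity bookkeeping together with the $(c,d)$‑summation via Lemma \ref{lem5} to be the main obstacle; everything else is a matter of patient reindexing.
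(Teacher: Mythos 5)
Your proposal is correct and follows essentially the same route as the paper: the Corollary is just the mode-to-OPE translation of Theorem \ref{thm2} (with $(AB)_{-m}=\tfrac{1}{m!}(\partial^m A\,B)_0$ supplying the $\tfrac{1}{(a+b-p)!}\partial^{a+b-p}$ factor), and Theorem \ref{thm2} itself is proved by induction on $n$ exactly as you outline — base case $n=2$, the three outer-factor cases handled via Theorem \ref{thm1}/Corollary \ref{cor1}, the multiplicity bookkeeping $\binom{c+e}{c}\binom{d+f}{d}$ with the reindexing $a=c+e$, $b=d+f$, and the collapse of the $(c,d)$-sum through Lemmas \ref{lem2} and \ref{lem5}, with Lemma \ref{lem1} disposing of the residual derivative sums. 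This matches the paper's Propositions \ref{prop1}–\ref{prop4} and the sign-flipped $\mathcal{Y}$ analogue in Theorem \ref{thm3}.
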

            
            Similarly, the following theorem also holds.
            \begin{thm}
                \label{thm3}
                When $n$ is an integer satisfying $n \geq 2$, the product $\left(\mathcal{Y}_{I} \mathcal{Y}_{J}\right)_p$ for $p \geq 1$ is given as follows.
                \begin{itemize}
                    \item When $I \neq J$,
                    \begin{align*}
                            \left(\mathcal{Y}_{I} \mathcal{Y}_{J}\right)_p 
                            &= \left\{
                                \begin{aligned}
                                    &\sum_{(a,b) \in \mathcal{J}_{LM,p}} \sum_{i=1}^{|L_a|} \sum_{j=1}^{|M_b|} (-1)^{b-p} \binom{a+b-1}{p-1}\\
                                    &\quad \times \left(\mathcal{Y}_{K \cup \left(L_a\right)_i} \mathcal{Y}_{K \cup \left(M_b\right)_j}\right)_{p-a-b} &(1 \leq p \leq |L| +|M|)\\
                                    &0 &(p \geq |L| + |M| +1)
                                \end{aligned}
                            \right.~,
                    \end{align*}
                    where we set $K = I \cap J$, $L = I \setminus K$ and $M = J \setminus K$.
                    \item When $I=J$,
                    \begin{align*}
                        \left(\mathcal{Y}_{I} \mathcal{Y}_{I}\right)_p = 0~.
                    \end{align*}
                \end{itemize}
            \end{thm}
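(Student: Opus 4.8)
The plan is to derive Theorem~\ref{thm3} from Theorem~\ref{thm2} (equivalently Corollary~\ref{cor2}) by means of an automorphism of the symplectic-boson VOA generated by $(X_i,Y_i)_{i=1}^{n}$ that interchanges $X_i$ and $Y_i$, rather than rerunning the induction of Propositions~\ref{prop1}--\ref{prop4} with $Y$'s in place of $X$'s. Concretely, I would introduce the map $\phi$ fixed on generators by $\phi(X_i)=Y_i$, $\phi(Y_i)=-X_i$ and extended so as to respect all the products (\ref{product}). Checking the four OPEs $X_i(z)X_j(0)$, $X_i(z)Y_j(0)$, $Y_i(z)X_j(0)$, $Y_i(z)Y_j(0)$ -- using $Y_i(z)X_j(0)\sim-\delta_{ij}/z$ from (\ref{OPEdefs1}) -- shows $\phi$ is a genuine vertex-algebra automorphism, the sign in $\phi(Y_i)=-X_i$ being precisely what compensates the antisymmetry of the symplectic pairing. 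All the objects below ($\mathcal{A}_i$, $\mathcal{D}_i$, $\hat{\mathcal{A}}_i$, $D_i$, $\mathcal{X}_I$, $\mathcal{Y}_I$, and the OPEs in question) live inside this VOA, so the BRST story is not needed for the proof; if one wants $\phi$ to descend to the cohomology as well, it suffices to set in addition $\phi(h_a)=-h_a$, $\phi(b_a)=-b_a$, $\phi(c^a)=-c^a$ and observe $\phi(J_{BRST})=J_{BRST}$.

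The second step is to compute $\phi$ on the building blocks. One has $\phi(\mathcal{A}_i)=-\mathcal{A}_i$ (since $(Y_iX_i)_0=(X_iY_i)_0$), and the identity $(Y_i(X_i\mathcal{O})_0)_0=(X_i(Y_i\mathcal{O})_0)_0$, which follows from (\ref{OPEdefs2}) because $(Y_iX_i)_1$ is minus the identity while the $(-1)$-st product of the identity with any field vanishes. Feeding these into (\ref{definition of mathcal D}) gives $\phi\circ\mathcal{D}_i=-\mathcal{D}_i\circ\phi$ and $\phi\circ\hat{\mathcal{A}}_i=-\hat{\mathcal{A}}_i\circ\phi$, hence $\phi\circ D_i=-D_i\circ\phi$. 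Since $\phi(\mathcal{X})=\mathcal{Y}$ and $\phi(\mathcal{Y})=(-1)^{n}\mathcal{X}$, pushing $\phi$ through the $|I|$ covariant derivatives defining $\mathcal{X}_I=\prod_{i\in I}D_i\,\mathcal{X}$ and $\mathcal{Y}_I=\prod_{i\in I}D_i\,\mathcal{Y}$ yields $\phi(\mathcal{X}_I)=(-1)^{|I|}\mathcal{Y}_I$ and $\phi(\mathcal{Y}_I)=(-1)^{|I|+n}\mathcal{X}_I$.

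The last step is to apply $\phi$ to both sides of the formula for $(\mathcal{X}_I\mathcal{X}_J)_p$ in Theorem~\ref{thm2}. Because $\phi$ commutes with every product, the left side becomes $(-1)^{|I|+|J|}(\mathcal{Y}_I\mathcal{Y}_J)_p$, and each summand on the right acquires the extra factor $(-1)^{|K\cup(L_a)_i|+|K\cup(M_b)_j|}=(-1)^{2|K|+|L|+|M|-a-b}$; since $|I|+|J|=2|K|+|L|+|M|$, the contributions of $2|K|$ and of $|L|+|M|$ cancel between the two sides, and $(-1)^{a-p}$ is converted into $(-1)^{a-p}(-1)^{-a-b}=(-1)^{-p-b}=(-1)^{b-p}$, exactly the sign of Theorem~\ref{thm3}; the index set $\mathcal{J}_{LM,p}$, the binomial $\binom{a+b-1}{p-1}$, and the shifted product of $\mathcal{Y}_{K\cup(L_a)_i}$ with $\mathcal{Y}_{K\cup(M_b)_j}$ are untouched, and the vanishing statements ($p\ge|L|+|M|+1$, and $I=J$) transfer because $\phi$ sends $0$ to $0$. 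I expect the only genuine bookkeeping to be this sign chase, which is routine once $\phi(\mathcal{X}_I)$ and $\phi(\mathcal{Y}_I)$ are in hand; if one instead insists on a self-contained argument, one can copy Lemma~\ref{lem6}, Theorem~\ref{thm1}, Corollary~\ref{cor1}, Lemma~\ref{lem5} and Propositions~\ref{prop1}--\ref{prop4} verbatim with $Y_i$, $D_iY_i$ in place of $X_i$, $D_iX_i$ and with $R_{P_i,Q_i}$ replaced by its $Y$-analogue (whose sign relative to the $X$-case is again what produces $(-1)^{b-p}$), the main obstacle there being the very same reordering of nested sums carried out in Proposition~\ref{prop2}, now with one more sign to track.
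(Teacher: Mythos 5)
Your argument is correct, but it is not the route the paper takes. The paper proves Theorem~\ref{thm3} exactly as your fallback suggests: it keeps the whole recursive machinery (Lemma~\ref{lem6}, Theorem~\ref{thm1}, Corollary~\ref{cor1}, Propositions~\ref{prop1}--\ref{prop4}) and simply replaces $X_i$, $D_iX_i$ by $Y_i$, $D_iY_i$, observing that the only input that changes is the table for $R_{P_i,Q_i}$, whose sign flip is what turns $(-1)^{a-p}$ into $(-1)^{b-p}$. Your main route instead deduces Theorem~\ref{thm3} from Theorem~\ref{thm2} via the symplectic automorphism $\phi(X_i)=Y_i$, $\phi(Y_i)=-X_i$ of the free symplectic-boson algebra, and the computations you need do check out: $\phi$ preserves the four generator OPEs; $(Y_i(X_i\mathcal{O})_0)_0=(X_i(Y_i\mathcal{O})_0)_0$ follows from (\ref{OPEdefs2}) because $(Y_iX_i)_1=-\mathbf{1}$ and $(\mathbf{1}\mathcal{O})_{-1}=0$; hence $\phi\circ\mathcal{D}_i=-\mathcal{D}_i\circ\phi$, $\phi\circ\hat{\mathcal{A}}_i=-\hat{\mathcal{A}}_i\circ\phi$, $\phi\circ D_i=-D_i\circ\phi$, $\phi(\mathcal{X}_I)=(-1)^{|I|}\mathcal{Y}_I$; and since $|K\cup(L_a)_i|+|K\cup(M_b)_j|-|I|-|J|=-a-b$, the sign chase gives $(-1)^{a-p}(-1)^{-a-b}=(-1)^{b-p}$, exactly the stated formula, while the vanishing cases ($I=J$ and $p\geq|L|+|M|+1$) transfer trivially. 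What your route buys is economy and transparency: one covariance argument explains why only the sign distinguishes (\ref{YY1}) from (\ref{XX1}), with no need to retrace the nested-sum reorderings of Proposition~\ref{prop2}; it also makes clear, as you note, that the statement lives in the free algebra so no BRST compatibility is required. What the paper's route buys is self-containedness: it never invokes the (standard, but nowhere stated in the paper) fact that a linear map of the free generators preserving their OPEs extends uniquely to an automorphism of the whole vertex algebra respecting every product $(\cdot\ \cdot)_p$; if you keep your version, state that extension property explicitly, citing that the symplectic-boson vertex algebra is freely generated.
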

            \begin{proof}
                In this case, $P_i$ and $Q_i$ are equal to $Y_i$ or $D_i Y_i$. Then, we define $R_{P_i,Q_i}$ as follows. 
                \begin{align}
                    R_{P_i,Q_i} \coloneqq \left\{
                        \begin{aligned}
                            &\left(Y_i Y_i\right)_0 &\left(P_i =Y_i,\ Q_{i} = D_i Y_i\right) \\
                            &- \left(Y_i Y_i\right)_0 &\left(P_i =D_i Y_i,\ Q_{i} = Y_i\right) \\
                            &0 &\left(P_i = Q_{i}\right)
                        \end{aligned}
                    \right.~.
                \end{align}
                Using this $R_{P_i,Q_i}$, we can prove Theorem \ref{thm3}, similar to Theorem \ref{thm2}.
            \end{proof}

            Similarly, we can get the following Corollary from Theorem \ref{thm3}.
            \begin{cor}
                \label{cor3}
                When $n$ is an integer satisfying $n \geq 2$, the OPE between $\mathcal{Y}_{I}$ and $\mathcal{Y}_{J}$ for $p \geq 1$ is given as follows.
                \begin{itemize}
                    \item When $I \neq J$,
                    \begin{align*}
                            \mathcal{Y}_{I}(z) \mathcal{Y}_{J}(0)
                            &\sim \sum_{p=1}^{|L|+|M|}\sum_{(a,b)\in \mathcal{J}_{LM,p}}\sum_{i=1}^{|L_a|}\sum_{j=1}^{|M_b|}\frac{1}{(z-w)^p} \\
                            &\quad \times \frac{(-1)^{b-p}}{\left(a+b-p\right)!} \binom{a+b-1}{p-1}\left(\partial^{a+b-p}\mathcal{Y}_{K\cup (L_a)_i}\right)\mathcal{Y}_{K\cup (M_b)_j}~,
                    \end{align*}
                    where we set $K = I \cap J$, $L = I \setminus K$ and $M = J \setminus K$.
                    \item When $I=J$,
                    \begin{align*}
                        \mathcal{Y}_{I}(z) \mathcal{Y}_{I}(0) \sim 0~.
                    \end{align*}
                \end{itemize}
            \end{cor}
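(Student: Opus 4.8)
The plan is to derive Corollary \ref{cor3} directly from Theorem \ref{thm3}, treating it as the passage from the mode (product) statement to the operator-product statement. By the definition (\ref{product}) of $(\cdot\,\cdot)_p$, the singular part of the OPE is $\mathcal{Y}_I(z)\mathcal{Y}_J(w)\sim\sum_{p\ge1}(z-w)^{-p}(\mathcal{Y}_I\mathcal{Y}_J)_p(w)$, so I would substitute the formula for $(\mathcal{Y}_I\mathcal{Y}_J)_p$ supplied by Theorem \ref{thm3}. In that formula the inner products $(\mathcal{Y}_{K\cup(L_a)_i}\mathcal{Y}_{K\cup(M_b)_j})_{p-a-b}$ all carry non-positive index, since membership of $(a,b)$ in $\mathcal{J}_{LM,p}$ forces $a+b\ge p$. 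Iterating (\ref{OPEdefs3}) gives $(AB)_{-m}=\tfrac{1}{m!}(\partial^m A\,B)_0$ for every $m\ge0$; applying this with $m=a+b-p$ and invoking the normal-ordering abbreviation turns each summand into $\tfrac{(-1)^{b-p}}{(a+b-p)!}\binom{a+b-1}{p-1}(\partial^{a+b-p}\mathcal{Y}_{K\cup(L_a)_i})\,\mathcal{Y}_{K\cup(M_b)_j}$, which is exactly the summand in Corollary \ref{cor3}. The case $I=J$ is immediate from $(\mathcal{Y}_I\mathcal{Y}_I)_p=0$. So the corollary reduces to routine rewriting once Theorem \ref{thm3} is available.

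The genuine content is Theorem \ref{thm3}, and besides repeating the induction of Propositions \ref{prop1}--\ref{prop4} with the $Y$-version of $R_{P_i,Q_i}$, I would prefer to deduce it from Theorem \ref{thm2} by symmetry. The symplectic-boson algebra carries the vertex-algebra automorphism $\varphi$ acting on each conjugate pair by the ninety-degree rotation $X_i\mapsto Y_i$, $Y_i\mapsto -X_i$ (and by the identity on $h_i$ and the $bc$-ghosts); it preserves all the OPEs $X_i(z)Y_j(w)\sim\delta_{ij}/(z-w)$. A short computation using (\ref{OPEdefs1}), (\ref{OPEdefs2}) and the fact that $(X_iY_i)_l$ vanishes for $l\ge2$ and is the identity field for $l=1$ shows $\varphi\mathcal{D}_i\varphi^{-1}=-\mathcal{D}_i$ and $\varphi\hat{\mathcal{A}}_i\varphi^{-1}=-\hat{\mathcal{A}}_i$, hence $\varphi D_i\varphi^{-1}=-D_i$; combined with $\varphi\mathcal{Y}=(-1)^n\mathcal{X}$ this yields $\varphi(\mathcal{Y}_I)=(-1)^{|I|+n}\mathcal{X}_I$, equivalently $\varphi^{-1}(\mathcal{X}_S)=(-1)^{|S|+n}\mathcal{Y}_S$. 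Applying $\varphi^{-1}$ to the product formula of Theorem \ref{thm2} and collecting the degree-dependent signs attached to $\varphi^{-1}(\mathcal{X}_{K\cup(L_a)_i})$ and $\varphi^{-1}(\mathcal{X}_{K\cup(M_b)_j})$, the overall sign collapses to $(-1)^{b-p}$: the contributions of $K$, $L$, $M$ and $n$ all enter with even multiplicity, leaving only $-b-p\equiv b-p\pmod{2}$. This reproduces Theorem \ref{thm3}, with the $I=J$ and vanishing cases transferring because $\varphi^{-1}(0)=0$.

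I expect the only real obstacle to lie in Theorem \ref{thm3} rather than in the corollary, and it is essentially sign-bookkeeping. In the direct-induction approach one must track the flipped sign of $R_{P_i,Q_i}$ through the nested reindexings in the analogues of Propositions \ref{prop2}--\ref{prop4} --- in particular through the auxiliary operators carrying a single $D_{k+1}Y_{k+1}$ slot and through the applications of Lemmas \ref{lem2} and \ref{lem5} --- and check that the accumulated sign is exactly $(-1)^{b-p}$, with no spurious global factor and with the $I=J$ cancellation preserved. In the automorphism approach the single point needing care is the assertion that $\varphi\mathcal{D}_i=-\mathcal{D}_i\varphi$ holds on \emph{all} operators and not merely on generators: since $\mathcal{D}_i$ is not a derivation one verifies this from its defining formula $\mathcal{D}_i\mathcal{O}=(\mathcal{A}_i\mathcal{O})_0+(X_i(Y_i\mathcal{O})_0)_0$ together with $\varphi\mathcal{A}_i=-\mathcal{A}_i$ and the associativity-type identity $(X_i(Y_i\mathcal{O})_0)_0=(Y_i(X_i\mathcal{O})_0)_0$. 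Once this is settled, everything else is forced.
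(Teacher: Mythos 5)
Your derivation of the corollary itself matches the paper: the paper obtains Corollary \ref{cor3} from Theorem \ref{thm3} by exactly the routine rewriting you describe, using $(\mathcal{Y}_I\mathcal{Y}_J)_{-m}$-type products with $(AB)_{-m}=\tfrac{1}{m!}(\partial^m A\,B)_0$ and the fact that $(a,b)\in\mathcal{J}_{LM,p}$ forces $a+b\ge p$. Where you diverge is in how Theorem \ref{thm3} is established. The paper simply flips the sign of $R_{P_i,Q_i}$ (now built from $Y_i$, $D_iY_i$) and reruns the induction of Propositions \ref{prop1}--\ref{prop4} verbatim; you instead deduce Theorem \ref{thm3} from Theorem \ref{thm2} via the symplectic automorphism $\varphi\colon X_i\mapsto Y_i$, $Y_i\mapsto -X_i$. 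That route is sound: $\varphi$ preserves the symplectic-boson OPEs, your identity $(X_i(Y_i\mathcal{O})_0)_0=(Y_i(X_i\mathcal{O})_0)_0$ follows from (\ref{OPEdefs2}) because the only correction term involves $(Y_iX_i)_1\propto\mathbf{1}$ and $(\mathbf{1}\,\mathcal{O})_{-1}=0$, hence $\varphi\mathcal{D}_i\varphi^{-1}=-\mathcal{D}_i$, $\varphi\hat{\mathcal{A}}_i\varphi^{-1}=-\hat{\mathcal{A}}_i$ and $\varphi(\mathcal{Y}_S)=(-1)^{|S|+n}\mathcal{X}_S$; applying $\varphi^{-1}$ to the Theorem \ref{thm2} formula, the sign $(-1)^{|I|+|J|}\,(-1)^{a-p}\,(-1)^{|I|+|J|-a-b}=(-1)^{b-p}$ comes out exactly right, and the $I=J$ case transfers trivially. (Note only that $\varphi$ does not preserve the BRST current, which is harmless here because Theorems \ref{thm2}, \ref{thm3} are exact operator identities in the symplectic-boson algebra, not statements modulo $Q$-exact terms.) Your approach buys a one-line transfer of the whole inductive machinery and makes the provenance of the $(-1)^{b-p}$ sign transparent, at the cost of verifying the automorphism properties; the paper's approach is self-contained but amounts to repeating the long Propositions \ref{prop2}--\ref{prop4} computation with the opposite $R_{P_i,Q_i}$.
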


            Therefore, we have shown the OPEs (\ref{XX1}), (\ref{XX2}), (\ref{YY1}) and (\ref{YY2}). 
            On the other hand, we do not perfectly find out that the remaining OPE (\ref{XY}) is written by $U$, $T_i$ and $W_i$. Studying this remaining OPE is one of future works regarding with this bosonic VOA.
\section{The bosonic algebra associated with \texorpdfstring{$T^{[2,1^{n-1}]}_{[n-1,1^2]}(SU(n+1))$}{}}
    \label{The bosonic algebra associated with T^{[2,1^{n-1}]}_{[n-1,1^2]}(SU(n+1))}
    From this section to Sec.\ref{Generators of the bosonic VOA of T^{[2,1^{n-1}]}_{[n-1,1^2]}(SU(n+1))}, we study another theory, $T^{[2,1^{n-1}]}_{[n-1,1^2]}(SU(n+1))$. This theory is similar to $T_{[n-1,1]}^{[1^n]}(SU(n))$, but the difference is whether $SU(2)$ flavor symmetry is included or not.
    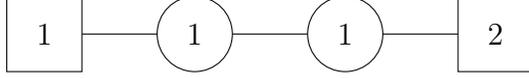
\begin{figure}[t]
        \center
        \begin{tikzpicture}
            \draw (0,-0.5) rectangle (1,0.5);
            \draw (0.5,0) node{1};
            \draw (1,0)--(2,0);
            \draw (2.5,0) circle[radius=0.5cm];
            \draw (2.5,0) node{1};
            \draw (3,0)--(4,0);
            \draw (4.5,0) circle[radius=0.5cm];
            \draw (4.5,0) node{1};
            \draw (5,0)--(6,0);
            \draw (6,-0.5) rectangle (7,0.5);
            \draw (6.5,0) node{2};
        \end{tikzpicture}
        \caption{The quiver diagram for $T^{[2,1^{2}]}_{[3,1^2]}(SU(4))$.}
    \end{figure}
    Then, the definition of the symplectic bosons and the $bc$-ghost are same for $T_{[n-1,1]}^{[1^n]}(SU(n))$ except for those regarding with $SU(2)$ flavor symmetry. The remined bosons $(X_{n}^{(1)},Y_{n,(1)})$ and $(X^{(2)}_{n},Y_{n,(2)})$ are satisfying the OPEs.
    \begin{align}
        X^{(i)}_{n}(z) Y_{n,(j)}(0) \sim \frac{\delta^i_j}{z}~.
    \end{align}
    Therefore, by using $(X_i,Y_i)$, $(X^{(i)}_{n},Y_{n,(i)})$, $(b_a,c^a)$ and the currents removing $U(1)$ gauge anomlies, let us construct the BRST cohomology and the bosonic VOA associated with $T^{[2,1^{n-1}]}_{[n-1,1^2]}(SU(n+1))$.
    \subsection{BRST reduction}
        To do the BRST reduction for $T^{[2,1^{n-1}]}_{[n-1,1^2]}(SU(n+1))$, we define the currents $J_{XY,i}$ as follows.
        \begin{align}
            \left\{
                \begin{aligned}
                    J_{XY,i} &\coloneqq X_i Y_i - X_{i+1}Y_{i+1} &\quad (i=1,\dots,n-2)\\
                    J_{XY,n-1} &\coloneqq X_{n-1} Y_{n-1} -\sum_{j=1}^{2}X_n^{(j)}Y_{n,(j)}
                \end{aligned}
            \right.~.
        \end{align}
        These currents are satisfying the following OPEs by the definition.
        \begin{align}
            J_{XY,i}(z) J_{XY,j}(0) \sim \frac{-\tilde{C}_{ij}}{z^2}~,
        \end{align}
        where $\tilde{C}_{ij} \coloneqq C_{ij} + \delta_{i,n-1} \delta_{j,n-1}$, and $C_{ij}$ is the Cartan matrix of $A_{n-1}$.
        In order to remove $U(1)$ gauge anomalies, we introduce the Heisenberg currents $h_i$ satisfying the following OPEs.
        \begin{align}
            h_i(z) h_j(0) \sim \frac{\tilde{C}_{ij}}{z^2}~.
        \end{align}

        Using $J_{XY,i}$ and $h_i$, we can define the BRST current $J_{BRST}$ as follows.
        \begin{align}
            J_{BRST} \coloneqq \sum_{a=1}^{n-1}c^a \left(J_{XY,a} + h_a\right)~.
        \end{align}
        By the above definition, the following result holds.
        \begin{align}
            J_{BRST}(z) J_{BRST}(0) &\sim 0~.
        \end{align}
        We also set the BRST charge $Q$ by $J_{BRST}$,
        \begin{align}
            Q \coloneqq \oint\frac{dz}{2\pi i}J_{BRST}(z)~,
        \end{align}
        and we can find out the nilpotency of $Q$, $Q^2 = 0$. Thus, we can construct the BRST cohomology for $T^{[2,1^{n-1}]}_{[n-1,1^2]}(SU(n+1))$.
    \subsection{Closed operators}
        In this subsection, let us derive closed operators under the BRST cohomology for this theory, $T^{[2,1^{n-1}]}_{[n-1,1^2]}(SU(n+1))$.
        For the bosonic VOA of $T_{[n-1,1]}^{[1^n]}(SU(n))$, we know that closed operators using $\mathcal{D}_i$ and $\hat{\mathcal{A}}_i$ exist and are needed.
        Similarly, we also need the following operations $\mathcal{D}_i,\hat{\mathcal{A}}_i$ and $\mathcal{D}_n^{(j)},\hat{\mathcal{A}}_{n}^{(j)}$ to construct closed operators;
        \begin{align}
            \left\{
                \begin{aligned}
                    \mathcal{D}_i \mathcal{O} &\coloneqq \mathcal{A}_i \mathcal{O} + X_i Y_i \mathcal{O} \\
                    \hat{\mathcal{A}}_i \mathcal{O} &\coloneqq \mathcal{A}_i \mathcal{O}\\
                    D_i \mathcal{O} &\coloneqq \mathcal{D}_i \mathcal{O} + \hat{\mathcal{A}}_i \mathcal{O}
                \end{aligned}
            \right.\quad (i=1,\dots,n-1)~,
        \end{align}
        and
        \begin{align}
            \left\{
                \begin{aligned}
                    \mathcal{D}_n^{(j)} \mathcal{O} &\coloneqq \mathcal{A}_n^{(j)} \mathcal{O} + X_n^{(j)} Y_{n,(j)} \mathcal{O} \\
                    \hat{\mathcal{A}}_n^{(j)} \mathcal{O} &\coloneqq \mathcal{A}_n^{(j)} \mathcal{O}\\
                    D_n^{(j)} \mathcal{O} &\coloneqq \mathcal{D}_n^{(j)} \mathcal{O} + \hat{\mathcal{A}}_n^{(j)} \mathcal{O}
                \end{aligned}
            \right.\quad (j=1,2)~,
        \end{align}
        where $\mathcal{A}_{n}^{(j)} \coloneqq -X^{(j)}_{n} Y_{n,(j)}$, $\mathcal{O}$ is an operator, and $(\mathcal{D}_i,\hat{\mathcal{A}}_i,D_i)$ and $(\mathcal{D}_n^{(j)},\hat{\mathcal{A}}_n^{(j)},D_n^{(j)})$ are studied in Sec.\ref{To derive non-trivial operators}.

        Thus, we can construct closed operators. The following operators come from the Higgs branch chiral operators for $T^{\left[2,1^{n-1}\right]}_{\left[n-1,1^2\right]} (SU(n+1))$.
        \begin{align}
            U \coloneqq \frac{1}{2n-1}\left(2 \sum_{i=1}^{n-1}\mathcal{A}_{i} +\sum_{j=1}^{2}\mathcal{A}^{(j)}_{n}\right)~,
        \end{align}
        \begin{align}
            H \coloneqq \frac{1}{2}\left(\mathcal{A}^{(1)}_{n} -\mathcal{A}^{(2)}_{n}\right)~,\quad E \coloneqq X_{n}^{(1)}Y_{n,(2)}~,\quad F \coloneqq X_{n}^{(2)}Y_{n,(1)}~,
        \end{align}
        \begin{align}
            \mathcal{X}^{+} \coloneqq \left(\prod_{i=1}^{n-1} X_{i}\right) X_{n}^{(1)}~,\quad \mathcal{X}^{-} \coloneqq \left(\prod_{i=1}^{n-1} X_{i}\right) X_{n}^{(2)}~,
        \end{align}
        \begin{align}
            \mathcal{Y}^{+} \coloneqq \left(\prod_{i=1}^{n-1} Y_i\right) Y_{n,(2)}~,\quad \mathcal{Y}^{-} \coloneqq \left(\prod_{i=1}^{n-1} Y_i\right) Y_{n,(1)}~.
        \end{align}
        Other operators using $\mathcal{D}_i$ and $\hat{\mathcal{A}}_i$ for $i =1,\dots,n-1$ can be given as follows.
        \begin{align}
            T_i \coloneqq \frac{1}{2}D_i \mathcal{A}_i~,
        \end{align}
        \begin{align}
            W_i \coloneqq \frac{1}{3}\sqrt{\frac{2}{3}} \left\{\mathcal{D}_i \left(\mathcal{D}_i \mathcal{A}_i\right) +\frac{1}{2}\mathcal{D}_{i} \left(\hat{\mathcal{A}}_{i} \mathcal{A}_{i}\right) +\frac{1}{2}\hat{\mathcal{A}}_i \left(\mathcal{D}_i\mathcal{A}_i\right) +\hat{\mathcal{A}}_{i} \left(\hat{\mathcal{A}}_{i} \mathcal{A}_{i}\right)\right\}~,
        \end{align}
        \begin{align}
            \mathcal{X}^{+}_{I} \coloneqq \prod_{i \in I}D_i \mathcal{X}^{+}~,\quad \mathcal{X}^{-}_{I} \coloneqq \prod_{i \in I}D_i \mathcal{X}^{-}~,\quad \mathcal{Y}^{+}_{I} \coloneqq \prod_{i \in I}D_i \mathcal{Y}^{+}~,\quad \mathcal{Y}^{-}_{I} \coloneqq \prod_{i \in I}D_i \mathcal{Y}^{-}~,
        \end{align}
        where we set $I \subset \{1,\dots, n-1\}$, and $\mathcal{X}^{\pm}_{\phi} =\mathcal{X}^{\pm}$ and $\mathcal{Y}^{\pm}_{\phi} = \mathcal{Y}^{\pm}$ for convenience.
        Moreover, the following operators using $\mathcal{D}_n^{(j)}$ and $\hat{\mathcal{A}}_n^{(j)}$ for $j =1,2$ can be also given as follows.
        \begin{align}
            T_n^{(j)} \coloneqq \frac{1}{2}D_{n}^{(j)} \mathcal{A}^{(j)}_{n}~,
        \end{align}
        \begin{align}
            W_n^{(j)} \coloneqq \frac{1}{3}\sqrt{\frac{2}{3}} \left\{\mathcal{D}_n^{(j)} \left(\mathcal{D}_n^{(j)} \mathcal{A}^{(j)}_{n}\right) +\frac{1}{2}\mathcal{D}_{n}^{(j)} \left(\hat{\mathcal{A}}_{n}^{(j)} \mathcal{A}^{(j)}_{n}\right) +\frac{1}{2}\hat{\mathcal{A}}_n^{(j)} \left(\mathcal{D}_n^{(j)} \mathcal{A}^{(j)}_{n}\right) +\hat{\mathcal{A}}_{n}^{(j)} \left(\hat{\mathcal{A}}_{n}^{(j)} \mathcal{A}^{(j)}_{n}\right)\right\}~,
        \end{align}
        \begin{align}
            E_{J} \coloneqq \prod_{j \in J} D_{n}^{(j)} E~,\quad F_{J} \coloneqq \prod_{j \in J} D_{n}^{(j)} F~,
        \end{align}
        \begin{align}
            D_n^{(1)} \mathcal{X}^{+}_{I}~,\quad D_n^{(2)} \mathcal{X}^{-}_{I}~,\quad D_n^{(2)}\mathcal{Y}^{+}_{I}~,\quad D_n^{(1)} \mathcal{Y}^{-}_{I}~,
        \end{align}
        where we set $J \subset \{1,2\}$, and $E_{\phi} =E$ and $F_{\phi} = F$ for convenience.

        Let us introduce the stress tensor $T$ for $T^{\left[2,1^{n-1}\right]}_{\left[n-1,1^2\right]} (SU(n+1))$.\footnote{This can be checked in \cite{Nishinaka:2025nbe}.} Since $\tilde{C}_{ij}$ is invertible,
        we can set the following formula by using $(X_i,Y_i)$, $(X_n^{(j)},Y_{n,(j)})$, $h_i$ and $(b_a,c^a)$.
        \begin{align}
            T \coloneqq T_{sb} + T_h + T_{bc}~,
        \end{align}
        \begin{align}
            T_{sb} \coloneqq \frac{1}{2}\sum_{i=1}^{n-1}\left(X_i \partial Y_i - \partial X_i Y_i\right)+\frac{1}{2}\sum_{j=1}^{2}\left(X_n^{(j)} \partial Y_{n,(j)} - \partial X_n^{(j)} Y_{n,(j)}\right)~,
        \end{align}
        \begin{align}
            T_h \coloneqq \frac{1}{2}\sum_{i,j=1}^{n-1} \tilde{C}^{ij}h_i h_j~,\quad T_{bc} \coloneqq -\sum_{a=1}^{n-1} b_a \partial c^a~.
        \end{align}
        Removing $T_h +T_{bc}$ from the above stress tensor under the BRST cohomology, we can also transform it to the following equivalent formula.
        \begin{align}
            T \equiv \sum_{i=1}^{n-1}T_i + \sum_{j=1}^{2}T_n^{(j)} -\frac{2n-1}{4}U^2 -H^2 ~.
        \end{align}
        Thus, the stress tensor $T$ can be also written by $T_i$, $T_{n}^{(j)}$, $U$ and $H$.
    \subsection{Composite operators}
        In the previous section, we derived closed operators. Before constructing generators from closed operators,
        we give composite relations to operators using $X_{n}^{(j)}$ or $Y_{n,(j)}$. The part of these relations can be inferred from the bosonic VOA associated with $T^{\left[1^n\right]}_{\left[n-1,1\right]} (SU(n))$ for $n=2$, as follows.
        \begin{align}
            T_{n}^{(1)} + T_{n}^{(2)} = 2H^2 +\frac{1}{2} \left(EF + FE\right)~,
        \end{align}
        \begin{align}
            W_n^{(j)} &= \frac{1}{3}\sqrt{\frac{2}{3}} \left\{
            2H \left(T_n^{(1)} -T_{n}^{(2)}\right) +\frac{1}{2}F \left(D_n^{(1)} +D_{n}^{(2)}\right)E +\frac{1}{4}\partial \left(T_n^{(1)}-T_{n}^{(2)}\right)\right\} \notag \\
            &\quad +(-1)^{j}\sqrt{\frac{2}{3}}\left\{H \partial H +\frac{1}{4}E \partial F - \frac{1}{4} F \partial E -\frac{4}{3} H^3 -H E F -\frac{1}{3} \partial^2 H\right\}~,
        \end{align}
        \begin{align}
            \label{composite E}
            \left\{
                \begin{aligned}
                    \begin{bmatrix}
                        1 & -1
                    \end{bmatrix}
                    \begin{bmatrix}
                        E_{\{1\}} \\
                        E_{\{2\}}
                    \end{bmatrix}
                    &=
                    \begin{bmatrix}
                        \partial E +2 HE
                    \end{bmatrix}\\
                    \begin{bmatrix}
                        -1 \\
                        1
                    \end{bmatrix}
                    \begin{bmatrix}
                        E_{\{1,2\}}
                    \end{bmatrix}
                    &=
                    \begin{bmatrix}
                        \partial E_{\{1\}} -2T_n^{(1)}E +2 HE_{\{1\}} \\
                        \partial E_{\{2\}} +2T_n^{(2)}E +2 HE_{\{2\}}
                    \end{bmatrix} \\
                \end{aligned}
            \right.~,
        \end{align}
        and
        \begin{align}
            \label{composite F}
            \left\{
                \begin{aligned}
                    \begin{bmatrix}
                        1 & -1
                    \end{bmatrix}
                    \begin{bmatrix}
                        F_{\{1\}} \\
                        F_{\{2\}}
                    \end{bmatrix}
                    &=
                    \begin{bmatrix}
                        -\partial F +2 HF
                    \end{bmatrix}\\
                    \begin{bmatrix}
                        1 \\
                        -1
                    \end{bmatrix}
                    \begin{bmatrix}
                        F_{\{1,2\}}
                    \end{bmatrix}
                    &=
                    \begin{bmatrix}
                        \partial F_{\{1\}} +2T_n^{(1)}F -2 HF_{\{1\}} \\
                        \partial F_{\{2\}} -2T_n^{(2)}F -2 HF_{\{2\}}
                    \end{bmatrix}
                \end{aligned}
            \right.~.
        \end{align}
        Furthermore, we can find out the following equalities.
        \begin{align}
            \left\{
                \begin{aligned}
                    D_{n}^{(1)}\mathcal{X}^{+}_{I} &= 2H \mathcal{X}^{+}_{I} -E \mathcal{X}^{-}_{I}\\
                    D_{n}^{(2)}\mathcal{X}^{-}_{I} &= -2H \mathcal{X}^{-}_{I} -F \mathcal{X}^{+}_{I}\\
                    D_{n}^{(2)}\mathcal{Y}^{+}_{I} &= -2H \mathcal{Y}^{+}_{I} -E \mathcal{Y}^{-}_{I}\\
                    D_{n}^{(1)}\mathcal{Y}^{-}_{I} &= 2H \mathcal{Y}^{-}_{I} -F \mathcal{Y}^{+}_{I}
                \end{aligned}
            \right.~.
        \end{align}

        Therefore, when we construct generators regarding with the bosonic VOA associated with $T^{\left[2,1^{n-1}\right]}_{\left[n-1,1^2\right]} (SU(n+1))$, we can remove the above operators,
        and we should set and use the following operators:
        \begin{align}
            \hat{H} \coloneqq T_{n}^{(1)} -T_{n}^{(2)}~,
        \end{align}
        \begin{align}
            \hat{E} \coloneqq E_{1} + E_{2}~,
        \end{align}
        \begin{align}
            \hat{F} \coloneqq F_{1} + F_{2}~.
        \end{align}

        Thus, the candidates of generators are $U$, $H$, $E$, $F$, $T_i$, $W_i$, $\hat{H}$, $\hat{E}$, $\hat{F}$, $\mathcal{X}^{\pm}_{I}$ and $\mathcal{Y}^{\pm}_{I}$.
        The OPEs with these candidates are given in Appendix.\ref{OPE 2}.
\section{Generators of the bosonic VOA of \texorpdfstring{$T^{[2,1^{n-1}]}_{[n-1,1^2]}(SU(n+1))$}{}}
    \label{Generators of the bosonic VOA of T^{[2,1^{n-1}]}_{[n-1,1^2]}(SU(n+1))}
    In this section, we construct generators in the same manner as in Sec.\ref{How to construct primary fields from T i, mathcal X I, mathcal Y I} to \ref{Linearly independent operators consisting of X I and Y I}.
    Since we know how to derive primary or composite operators, we apply a similar way in this case.
    \subsection{\texorpdfstring{$T_i$}{} and \texorpdfstring{$T_n^{(j)}$}{}}
        Since we can find out that $T_1-T_2,\dots,T_{n-2}-T_{n-1}$ and $\hat{H}$ are primary because of the OPEs with the stress tensor $T$ in Appendix.\ref{OPE 2}, we set the following relations for $T_i$ and $T_{n}^{(j)}$,
        \begin{align}
            \begin{bmatrix}
                1 & 1 & 1 & \cdots & 1 & 1 & 1 & 1 \\
                1 & -1 & 0 & \cdots & 0 & 0 & 0 & 0 \\
                0 & 1 & -1 & \cdots & 0 & 0 & 0 & 0 \\
                \vdots & \vdots & \vdots & & \vdots & \vdots & \vdots & \vdots\\
                0 & 0 & 0 & \cdots & 1 & -1 & 0 & 0 \\
                0 & 0 & 0 & \cdots & 0 & 0 & 1 & -1 \\
                0 & 0 & 0 & \cdots & 0 & 0 & 1 & 1 
            \end{bmatrix}
            \begin{bmatrix}
                T_1 \\
                T_2 \\
                \vdots \\
                T_{n-2}\\
                T_{n-1} \\
                T_n^{(1)} \\
                T_n^{(2)}
            \end{bmatrix}
            \equiv
            \begin{bmatrix}
                T +\frac{2n-1}{4}U^2 + H^2 \\
                T_1 -T_2\\
                \vdots \\
                T_{n-2} -T_{n-1} \\
                \hat{H} \\
                2H^2 +\frac{1}{2}(EF +FE)
            \end{bmatrix}~,
        \end{align}
        where the matrix of the left-hand side is invertible.
        Thus, generators composed of $T_i$ and $T_n^{(j)}$ are $n-1$ primary operators, 
        $T_1-T_2,\dots,T_{n-2}-T_{n-1}$, $\hat{H}$ and the stress tensor $T$.\footnote{When $n=2$, these operators become composite, in \cite{Nishinaka:2025nbe}.}
    \subsection{\texorpdfstring{$W_i$}{} and \texorpdfstring{$W_n^{(j)}$}{}}
        Since $W_i$ and $W_{n}^{(j)}$ are primary operators respectively, we would like to choose them as the generators.
        We, however, know that the following relations regarding $W_n^{(j)}$ hold.
        Thus, $W_1,\dots,W_{n-1}$ are only generators.\footnote{When $n=2$, $W_1$ is also composite. When $n=3$, we choose the generator as $W_1-W_2$ because $W_1+W_2$ is composite, in \cite{Nishinaka:2025nbe}.}
    \subsection{\texorpdfstring{$E_{J}$}{} and \texorpdfstring{$F_{J}$}{}}
        Let us study generators consisting $E_{J}$ and $F_{J}$ here. We define $T_U$ and $T_H$ as follows.
        \begin{align}
            T_U &\coloneqq -\frac{2n-1}{4}U^2~,\\
            T_H &\coloneqq - H^2~.
        \end{align}
        Then, we calculate the OPE with $T_U$ and $T_H$.
        \begin{align}            
            T_U(z) E_{J}(0) &\sim 0~,\\
            T_U(z) F_{J}(0) &\sim 0~,\\
            T_H(z) E_{J}(0) &\sim -\frac{E_{J}}{z^2} - \frac{2H E_{J}}{z}~,\\
            T_H(z) F_{J}(0) &\sim -\frac{F_{J}}{z^2} + \frac{2H F_{J}}{z}~.
        \end{align}
        Since $T \equiv \sum_{i=1}^{n-1} T_i + \sum_{j=1}^{2}T_n^{(j)} +T_U +T_H$, we can derive the following relations from the $(-1)$-th degree of the OPEs with $T$.
        \begin{align}
            \partial E_{J} &= \left\{
                \begin{aligned}
                    &\sum_{j=1}^{2}(-1)^{j+1} E_{\{j\}} -2H E_{J} &\quad (J=\phi) \\
                    &\sum_{j \in \{1,2\} \setminus J} (-1)^{j+1}E_{\{1,2\}} +2\sum_{j \in J} (-1)^{j+1}T_n^{(j)} E_{\phi}-2HE_{J} &\quad (J = \{1\},\{2\}) \\
                    &2\sum_{j \in J} (-1)^{j+1}T_n^{(j)}E_{J \setminus \{j\}} -2H E_{J} &\quad (J=\{1,2\})
                \end{aligned}
            \right.~,
        \end{align}
        and
        \begin{align}
            \partial F_{J} &= \left\{
                \begin{aligned}
                    &\sum_{j=1}^{2}(-1)^{j} F_{\{j\}} +2H F_{J} &\quad (J=\phi) \\
                    &\sum_{j \in \{1,2\} \setminus J} (-1)^{j}F_{\{1,2\}} +2\sum_{j \in J} (-1)^{j}T_n^{(j)} F_{\phi} +2HF_{J} &\quad (J = \{1\},\{2\}) \\
                    &2\sum_{j \in J} (-1)^{j}T_n^{(j)}F_{J \setminus \{j\}} +2H F_{J} &\quad (J=\{1,2\})
                \end{aligned}
            \right.~.
        \end{align}
        We can transform the above results to (\ref{composite E}) and (\ref{composite F}) like (\ref{composite XI 1}) and (\ref{composite YI 1}).
        
        On the other hand, $\hat{E}$ and $\hat{F}$ are primary.
        We can respectively summarize $E_{\{1\}} -E_{\{2\}}$ and $E_{\{1\}}+E_{\{2\}}$, and $F_{\{1\}} -F_{\{2\}}$ and $F_{\{1\}} +F_{\{2\}}$ as follows.
        \begin{align}
            \begin{bmatrix}
                1 & -1 \\
                1 & 1 
            \end{bmatrix}
            \begin{bmatrix}
                E_{\{1\}} \\
                E_{\{2\}}
            \end{bmatrix}
            =
            \begin{bmatrix}
                \partial E +2 HE \\
                \hat{E}
            \end{bmatrix}~,
            \quad
            \begin{bmatrix}
                1 & -1 \\
                1 & 1 
            \end{bmatrix}
            \begin{bmatrix}
                F_{\{1\}} \\
                F_{\{2\}}
            \end{bmatrix}
            =
            \begin{bmatrix}
                -\partial F +2 HF \\
                \hat{F}
            \end{bmatrix}
            ~,
        \end{align}
        where the square matrices of the left-hand side which emerge in the above formulas are clearly invertible.
        Thus, generators consisting of $E_{J}$ or $F_{J}$ are $E$, $F$, $\hat{E}$ and $\hat{F}$.\footnote{When $n=2$, $E_{J}$ and $F_{J}$ for $J \neq \phi$ become composite, in \cite{Nishinaka:2025nbe}.}
    \subsection{\texorpdfstring{$\mathcal{X}^{\pm}_{I}$}{} and \texorpdfstring{$\mathcal{Y}^{\pm}_{I}$}{}}
        Here, let us deal with $\mathcal{X}^{\pm}_{I}$ and $\mathcal{Y}^{\pm}_{I}$. Fortunately, since we presented important statements in Sec.\ref{How to construct primary fields from T i, mathcal X I, mathcal Y I} through \ref{Linearly independent operators consisting of X I and Y I},
        we can use these statements.

        In order to derive the $(-1)$-th degree of the OPE with $T$, let us calclate the OPEs with $T_U$ and $T_H$ as follows.(Re-check)
        \begin{align}
            &\left\{
                \begin{aligned}
                    T_U(z) \mathcal{X}^{\pm}_{I}(0) &\sim -\frac{\left(2n-1\right)\mathcal{X}^{\pm}_{I}}{4z^2} -\frac{\left(2n-1\right)U\mathcal{X}^{\pm}_{I}}{2z}\\
                    T_U(z) \mathcal{Y}^{\pm}_{I}(0) &\sim -\frac{\left(2n-1\right)\mathcal{Y}^{\pm}_{I}}{4z^2} +\frac{\left(2n-1\right)U\mathcal{Y}^{\pm}_{I}}{2z}\\
                \end{aligned}
            \right.~,
        \end{align}
        and
        \begin{align}
            &\left\{
                \begin{aligned}
                    T_H(z) \mathcal{X}^{+}_{I} (0) &\sim -\frac{\mathcal{X}^{+}_{I}}{4z^2} -\frac{H \mathcal{X}^{+}_{I}}{z}\\
                    T_H(z) \mathcal{X}^{-}_{I} (0) &\sim -\frac{\mathcal{X}^{-}_{I}}{4z^2} +\frac{H \mathcal{X}^{-}_{I}}{z}\\
                    T_H(z) \mathcal{Y}^{+}_{I} (0) &\sim -\frac{\mathcal{Y}^{+}_{I}}{4z^2} -\frac{H \mathcal{Y}^{+}_{I}}{z}\\
                    T_H(z) \mathcal{Y}^{-}_{I} (0) &\sim -\frac{\mathcal{Y}^{-}_{I}}{4z^2} +\frac{H \mathcal{Y}^{-}_{I}}{z}
                \end{aligned}
            \right.~.
        \end{align}
        Similarly, we also calculate the OPEs with $T_{n}^{(1)}$ and $T_{n}^{(2)}$ as follows.
        \begin{align}
            \left\{
                \begin{aligned}
                    T_{n}^{(1)}(z) \mathcal{X}^{+}_{I} (0) &\sim \frac{\mathcal{X}^{+}_{I}}{z^2} +\frac{2H \mathcal{X}^{+}_{I} -E \mathcal{X}^{-}_{I}}{z}\\
                    T_{n}^{(1)}(z) \mathcal{X}^{-}_{I} (0) &\sim 0\\
                    T_{n}^{(1)}(z) \mathcal{Y}^{+}_{I} (0) &\sim 0\\
                    T_{n}^{(1)}(z) \mathcal{Y}^{-}_{I} (0) &\sim \frac{\mathcal{Y}^{-}_{I}}{z^2} -\frac{2H \mathcal{Y}^{-} -F \mathcal{Y}^{+}_{I}}{z}
                \end{aligned}
            \right.~,
        \end{align}
        and
        \begin{align}
            \left\{
                \begin{aligned}
                    T_{n}^{(2)}(z) \mathcal{X}^{+}_{I} (0) &\sim 0\\
                    T_{n}^{(2)}(z) \mathcal{X}^{-}_{I} (0) &\sim \frac{\mathcal{X}^{-}_{I}}{z^2} -\frac{2H \mathcal{X}^{-}_{I} +F \mathcal{X}^{+}_{I}}{z}\\
                    T_{n}^{(2)}(z) \mathcal{Y}^{+}_{I} (0) &\sim \frac{\mathcal{Y}^{+}_{I}}{z^2} +\frac{2H \mathcal{Y}^{+}_{I} + E\mathcal{Y}^{-}_{I}}{z}\\ 
                    T_{n}^{(2)}(z) \mathcal{Y}^{-}_{I} (0) &\sim 0
                \end{aligned}
            \right.~,
        \end{align}

        From the above results and Appendix.\ref{OPE 2}, we can check that the following equalities regarding with $\mathcal{X}^{\pm}_{I}$ and $\mathcal{Y}^{\pm}_{I}$ hold
        \begin{align}
            \partial \mathcal{X}^{+}_{I} &\equiv \left\{
                \begin{aligned}
                    &\sum_{i \in \mathcal{I}_{n-1}} \mathcal{X}^{+}_{\{i\}} +2H \mathcal{X}^{+} -E \mathcal{X}^{-} -\frac{2n-1}{2}U \mathcal{X}^{+} -H\mathcal{X}^{+} &\quad (I = \phi)\\
                    &\sum_{i \in \mathcal{I}_{n-1} \setminus I} \mathcal{X}^{+}_{I \cup \{i\}} +2\sum_{i \in I} T_i \mathcal{X}^{+}_{I \setminus \{i\}} +2H \mathcal{X}^{+}_{I} -E \mathcal{X}^{-}_{I} \\
                    &\quad -\frac{2n-1}{2}U \mathcal{X}^{+}_{I} -H\mathcal{X}^{+}_{I} &\quad (I \neq \phi,\mathcal{I}_{n-1}) \\
                    &2\sum_{i \in I} T_i \mathcal{X}^{+}_{I \setminus \{i\}} +2H \mathcal{X}^{+}_{I} -E \mathcal{X}^{-}_{I} -\frac{2n-1}{2}U \mathcal{X}^{+}_{I} -H\mathcal{X}^{+}_{I} &\quad (I = \mathcal{I}_{n-1})
                \end{aligned}
            \right.~,
        \end{align}
        \begin{align}
            \partial \mathcal{X}^{-}_{I} &\equiv \left\{
                \begin{aligned}
                    &\sum_{i \in \mathcal{I}_{n-1}} \mathcal{X}^{-}_{\{i\}} -2H \mathcal{X}^{-} -F \mathcal{X}^{+} -\frac{2n-1}{2}U \mathcal{X}^{-} +H\mathcal{X}^{-} &\quad (I= \phi)\\
                    &\sum_{i \in \mathcal{I}_{n-1} \setminus I} \mathcal{X}^{-}_{I \cup \{i\}} +2\sum_{i \in I} T_i \mathcal{X}^{-}_{I \setminus \{i\}} -2H \mathcal{X}^{-}_{I} -F \mathcal{X}^{+}_{I} \\
                    &\quad -\frac{2n-1}{2}U \mathcal{X}^{-}_{I} +H\mathcal{X}^{-}_{I} &\quad (I \neq \phi,\mathcal{I}_{n-1}) \\
                    &2\sum_{i \in I} T_i \mathcal{X}^{-}_{I \setminus \{i\}} -2H \mathcal{X}^{-}_{I} -F \mathcal{X}^{+}_{I} -\frac{2n-1}{2}U \mathcal{X}^{-}_{I} +H\mathcal{X}^{-}_{I} &\quad (I = \mathcal{I}_{n-1})
                \end{aligned}
            \right.~,
        \end{align}
        \begin{align}
            \partial \mathcal{Y}^{+}_{I} &\equiv \left\{
                \begin{aligned}
                    &-\sum_{i \in \mathcal{I}_{n-1}} \mathcal{Y}^{+}_{\{i\}} +2H \mathcal{Y}^{+} +E \mathcal{Y}^{-} +\frac{2n-1}{2}U \mathcal{Y}^{+} -H\mathcal{Y}^{+} &\quad (I = \phi)\\
                    &-\sum_{i \in \mathcal{I}_{n-1} \setminus I} \mathcal{Y}^{+}_{I \cup \{i\}} -2\sum_{i \in I} T_i \mathcal{Y}^{+}_{I \setminus \{i\}} +2H \mathcal{Y}^{+}_{I} +E \mathcal{Y}^{-}_{I} \\
                    &\quad + \frac{2n-1}{2}U \mathcal{Y}^{+}_{I} -H\mathcal{Y}^{+}_{I} &\quad (I \neq \phi,\mathcal{I}_{n-1}) \\
                    &-2\sum_{i \in I} T_i \mathcal{Y}^{+}_{I \setminus \{i\}} +2H \mathcal{Y}^{+}_{I} +E \mathcal{Y}^{-}_{I} + \frac{2n-1}{2}U \mathcal{Y}^{+}_{I} -H\mathcal{Y}^{+}_{I} &\quad (I = \mathcal{I}_{n-1})
                \end{aligned}
            \right.~,
        \end{align}
        and
        \begin{align}
            \partial \mathcal{Y}^{-}_{I} &\equiv \left\{
                \begin{aligned}
                    &-\sum_{i \in \mathcal{I}_{n-1}} \mathcal{Y}^{-}_{\{i\}} -2H \mathcal{Y}^{-} +F \mathcal{Y}^{+} +\frac{2n-1}{2}U \mathcal{Y}^{-} +H\mathcal{Y}^{-} &\quad (I = \phi)\\
                    &-\sum_{i \in \mathcal{I}_{n-1} \setminus I} \mathcal{Y}^{-}_{I \cup \{i\}} -2\sum_{i \in I} T_i \mathcal{Y}^{-}_{I \setminus \{i\}} -2H \mathcal{Y}^{-}_{I} +F \mathcal{Y}^{+}_{I} \\
                    &\quad +\frac{2n-1}{2}U \mathcal{Y}^{-}_{I} +H\mathcal{Y}^{-}_{I} &\quad (I \neq \phi, \mathcal{I}_{n-1}) \\
                    &-2\sum_{i \in I} T_i \mathcal{Y}^{-}_{I \setminus \{i\}} -2H \mathcal{Y}^{-}_{I} +F \mathcal{Y}^{+}_{I} +\frac{2n-1}{2}U \mathcal{Y}^{-}_{I} +H\mathcal{Y}^{-}_{I} &\quad (I \neq \phi)
                \end{aligned}
            \right.~.
        \end{align}
        Then, we can transform the above equalities to the following relations respectively.
        \begin{align}
            \sum_{i \in \mathcal{I}_{n-1} \setminus I} \mathcal{X}^{+}_{I \cup \{i\}} &\equiv \left\{
                \begin{aligned}
                    &\partial \mathcal{X}^{+} -2H \mathcal{X}^{+} +E \mathcal{X}^{-} +\frac{2n-1}{2}U \mathcal{X}^{+} +H\mathcal{X}^{+} &\quad (I= \phi) \\
                    &\partial \mathcal{X}^{+}_{I} -2\sum_{i \in I} T_i \mathcal{X}^{+}_{I \setminus \{i\}} -2H \mathcal{X}^{+}_{I} +E \mathcal{X}^{-}_{I} \\
                    &\quad +\frac{2n-1}{2}U \mathcal{X}^{+}_{I} +H\mathcal{X}^{+}_{I} &\quad (I \neq \phi, \mathcal{I}_{n-1})
                \end{aligned}
            \right.~, \\
            0 &\equiv \partial \mathcal{X}^{+}_{I} -2\sum_{i \in I} T_i \mathcal{X}^{+}_{I \setminus \{i\}} -2H \mathcal{X}^{+}_{I} +E \mathcal{X}^{-}_{I} 
            +\frac{2n-1}{2}U \mathcal{X}^{+}_{I} +H\mathcal{X}^{+}_{I} \quad (I = \mathcal{I}_{n-1})~,
        \end{align}
        \begin{align}
            \sum_{i \in \mathcal{I}_{n-1} \setminus I} \mathcal{X}^{-}_{I \cup \{i\}} &\equiv \left\{
                \begin{aligned}
                    &\partial \mathcal{X}^{-} +2H \mathcal{X}^{-} +F \mathcal{X}^{+} +\frac{2n-1}{2}U \mathcal{X}^{-} -H\mathcal{X}^{-} &\quad (I= \phi) \\
                    &\partial \mathcal{X}^{-}_{I} -2\sum_{i \in I} T_i \mathcal{X}^{-}_{I \setminus \{i\}} +2H \mathcal{X}^{-}_{I} +F \mathcal{X}^{+}_{I} \\
                    &\quad +\frac{2n-1}{2}U \mathcal{X}^{-}_{I} -H\mathcal{X}^{-}_{I} &\quad (I \neq \phi, \mathcal{I}_{n-1})
                \end{aligned}
            \right.~, \\
            0 &\equiv \partial \mathcal{X}^{-}_{I} -2\sum_{i \in I} T_i \mathcal{X}^{-}_{I \setminus \{i\}} +2H \mathcal{X}^{-}_{I} +F \mathcal{X}^{+}_{I}
            +\frac{2n-1}{2}U \mathcal{X}^{-}_{I} -H\mathcal{X}^{-}_{I} \quad (I = \mathcal{I}_{n-1})~,
        \end{align}
        \begin{align}
            \sum_{i \in \mathcal{I}_{n-1} \setminus I} \mathcal{Y}^{+}_{I \cup \{i\}} &\equiv \left\{
                \begin{aligned}
                    &-\partial \mathcal{Y}^{+} +2H \mathcal{Y}^{+} +E \mathcal{Y}^{-} +\frac{2n-1}{2}U \mathcal{Y}^{+} -H\mathcal{Y}^{+} &\quad (I = \phi) \\
                    &-\partial \mathcal{Y}^{+}_{I} -2\sum_{i \in I} T_i \mathcal{Y}^{+}_{I \setminus \{i\}} +2H \mathcal{Y}^{+}_{I} +E \mathcal{Y}^{-}_{I} \\
                    &\quad + \frac{2n-1}{2}U \mathcal{Y}^{+}_{I} -H\mathcal{Y}^{+}_{I} &\quad (I \neq \phi, \mathcal{I}_{n-1})
                \end{aligned}
            \right.~,\\
            0 &\equiv -\partial \mathcal{Y}^{+}_{I} -2\sum_{i \in I} T_i \mathcal{Y}^{+}_{I \setminus \{i\}} +2H \mathcal{Y}^{+}_{I} +E \mathcal{Y}^{-}_{I}
            + \frac{2n-1}{2}U \mathcal{Y}^{+}_{I} -H\mathcal{Y}^{+}_{I} \quad (I = \mathcal{I}_{n-1}) ~,
        \end{align}
        and
        \begin{align}
            \sum_{i \in \mathcal{I}_{n-1} \setminus I} \mathcal{Y}^{-}_{I \cup \{i\}} &\equiv \left\{
                \begin{aligned}
                    &-\partial \mathcal{Y}^{-} -2H \mathcal{Y}^{-} +F \mathcal{Y}^{+} +\frac{2n-1}{2}U \mathcal{Y}^{-} +H\mathcal{Y}^{-} &\quad (I = \phi) \\
                    &-\partial \mathcal{Y}^{-}_{I} -2\sum_{i \in I} T_i \mathcal{Y}^{-}_{I \setminus \{i\}} -2H \mathcal{Y}^{-}_{I} +F \mathcal{Y}^{+}_{I} \\
                    &\quad +\frac{2n-1}{2}U \mathcal{Y}^{-}_{I} +H\mathcal{Y}^{-}_{I} &\quad (I \neq \phi, \mathcal{I}_{n-1})
                \end{aligned}
            \right.~, \\
            0 &\equiv -\partial \mathcal{Y}^{-}_{I} -2\sum_{i \in I} T_i \mathcal{Y}^{-}_{I \setminus \{i\}} -2H \mathcal{Y}^{-}_{I} +F \mathcal{Y}^{+}_{I} 
            +\frac{2n-1}{2}U \mathcal{Y}^{-}_{I} +H\mathcal{Y}^{-}_{I} \quad (I = \mathcal{I}_{n-1})~.
        \end{align}

        Therefore, it is sufficient for us to construct generators consisting of $\mathcal{X}^{\pm}_{I}$ or $\mathcal{Y}^{\pm}_{I}$ for $I \subset \mathcal{I}_{n-1}$.
        As derived In Sec.\ref{How to count independent operators} and \ref{Linearly independent operators consisting of X I and Y I},
        the following operators are generators:
        \begin{align}
            \label{primary from XPI}
            \prod_{j=1}^{L}\left(D_{i_{2j-1}} -D_{i_{2j}}\right) \mathcal{X}^{+}~,\\
            \label{primary from XMI}
            \prod_{j=1}^{L}\left(D_{i_{2j-1}} -D_{i_{2j}}\right) \mathcal{X}^{-}~,\\
            \label{primary from YPI}
            \prod_{i=1}^{L}\left(D_{i_{2j-1}} -D_{i_{2j}}\right) \mathcal{Y}^{+}~,\\
            \label{primary from YMI}
            \prod_{i=1}^{L}\left(D_{i_{2j-1}} -D_{i_{2j}}\right) \mathcal{Y}^{-}~,
        \end{align}
        where $\{i_1,\dots,i_{2L}\} \subset \mathcal{I}_{n-1}$ satisfying $1 \leq L \leq \left[\frac{n-1}{2} \right]$, and the above operators are satisfying condition.\ref{no blank} and \ref{no intersection}.
        Then, the number of generators composed of $\mathcal{X}^{\pm}_{I}$ or $\mathcal{Y}^{\pm}_{I}$ is equal to 
        \begin{align}
            \frac{n-2L}{n-L} \binom{n-1}{L}
        \end{align}
        with respect to $L$ for $0 \leq L \leq \left[\frac{n-1}{2}\right]$.
        In particular, when $n=2$, $\mathcal{X}^{\pm}$ and $\mathcal{Y}^{\pm}$ can only exist, and when $n \geq 3$, these operators (\ref{primary from XPI}), (\ref{primary from XMI}), (\ref{primary from YPI}) and (\ref{primary from YMI}) for $1 \leq L \leq \left[\frac{n-1}{2}\right]$
        can be constructed.

        The OPEs between $\mathcal{X}^{\pm}_{I}$ and $\mathcal{X}^{\pm}_{J}$, or $\mathcal{Y}^{\pm}_{I}$ and $\mathcal{Y}^{\pm}_{J}$ can be similarly derived in Sec.\ref{The OPEs with mathcal X I and mathcal Y I}.
        Since we, however, do not wholly calculate the OPEs between $\mathcal{X}^{\pm}_{I}$ and $\mathcal{Y}^{\pm}_{J}$, this is also a future work.
\section{Fermionic extension}
    \label{Fermionic extension}
    Until the previous section, we used the Heisenberg currents $h_i$ to remove $U(1)$ gauge anomalies \cite{10.1093/imrn/rnaa031}.
    On the other hand, fermi multiplets which are $\mathcal{N} = (0,2)$ multiplets on the boundary are also used to remove $U(1)$ gauge anomalies \cite{Costello:2018fnz}.
    Then, we would like to study the algebraic relationship between the former and the latter for VOAs associated with $T^{1^n}_{n-1,1} (SU(n))$.
    In \cite{Yoshida:2023wyt}, the latter VOA is called for a boundary VOA or a fermionic extension of the former VOA.
    In this section, While respecting Y. Yoshida's idea, we provide deeper insights into the algebraic relationship.

    Let us set the OPEs between fermi multiplets $\left(\psi_i,\chi_i\right)$ satisfied as follows.
    \begin{align}
        \psi_i(z) \chi_{j}(0) \sim \frac{\delta_{ij}}{z}~.
    \end{align}
    When we replace Heisenberg currents $h_i$ to $J_{f,i} \coloneqq \psi_i \chi_i -\psi_{i+1} \chi_{i+1}$, we can define the BRST current and the BRST charge after fermionic extension.
    \begin{align}
        J_{BRST}(z) &\coloneqq \sum_{a=1}^{n}c^a \left(X_a Y_a -X_{a+1} Y_{a+1} + \psi_i \chi_i -\psi_{i+1} \chi_{i+1} \right)~, \\
        Q_{BRST} &\coloneqq \oint \frac{dz}{2 \pi i} J_{BRST}(z)~.
    \end{align}
    Since $J_{BRST}(z) J_{BRST}(0) \sim 0$, $Q_{BRST}^2 = 0$.
    Then, we can also construct the BRST cohomology using the BRST charge $Q_{BRST}$.

    We also set generators by $(X_i,Y_i)$ and $(\psi_i,\chi_i)$ based on \cite{Yoshida:2023wyt,Beem:2023dub} as follows;
    \begin{align}
        \label{bosonic only}
        U \coloneqq -\frac{1}{n} \sum_{i=1}^{n} X_i Y_i~,\quad \hat{\mathcal{X}} \coloneqq \prod_{i=1}^{n} X_i~,\quad \hat{\mathcal{Y}} \coloneqq \prod_{i=1}^{n} Y_i~,\\
        \label{U_F}
        U_F \coloneqq -\frac{1}{N} \sum_{i=1}^{n} \psi_i \chi_i~,\\
        \label{M pm}
        M^{+}_{i} \coloneqq X_i \psi_i~,\quad M^{-}_i \coloneqq Y_i \psi_i~,\\
        \label{X I and Y I}
        \hat{\mathcal{X}}_I \coloneqq \prod_{i \in \{1,\dots,n\} \setminus I} X_i \prod_{i \in I} \chi_i~,\quad \hat{\mathcal{Y}}_I \coloneqq \prod_{i \in \{1,\dots,n\} \setminus I} Y_i \prod_{i \in I} \psi_i~,
    \end{align}
    where $\hat{\mathcal{X}} = \mathcal{X}$, $\hat{\mathcal{Y}} = \mathcal{Y}$, and we can regard $\hat{\mathcal{X}}_{\phi}$ as $\hat{\mathcal{X}}$ and $\hat{\mathcal{Y}}_{\phi}$ as $\hat{\mathcal{Y}}$, for convenience.
    In particular, $\hat{\mathcal{X}}_{I}$ and $\hat{\mathcal{Y}}_{I}$ are constructed as follows:
    \begin{align*}
        \hat{\mathcal{X}}_I = A_1 \cdots A_n~,\quad \hat{\mathcal{Y}}_{I} = B_1 \cdots B_n~,
    \end{align*} 
    where
    \begin{align*}
        A_i = \left\{
            \begin{aligned}
                &X_i &(i \notin I)\\
                &\chi_i &(i \in I)
            \end{aligned}
        \right.~,\quad 
        B_i = \left\{
            \begin{aligned}
                &Y_i &(i \notin I)\\
                &\psi_i &(i \in I)
            \end{aligned}
        \right.~.
    \end{align*}

    In Sec.\ref{To derive non-trivial operators} to \ref{Linearly independent operators consisting of X I and Y I}, generators of the bosonic VOA are $U$, $\mathcal{X}$, $\mathcal{Y}$, and operators composed of $T_i$, $W_i$, $\mathcal{X}_{I}$ or $\mathcal{Y}_{I}$.
    Compared with (\ref{bosonic only}), (\ref{U_F}), (\ref{M pm}) and (\ref{X I and Y I}), we can easily find out that $T_i$, $W_i$, $\mathcal{X}_{I}$ and $\mathcal{Y}_{I}$ are not generators of VOA after fermionic extension.
    In other words, $T_i$, $W_i$, $\mathcal{X}_{I}$ and $\mathcal{Y}_{I}$ can be written by using (\ref{bosonic only}), (\ref{U_F}), (\ref{M pm}), (\ref{X I and Y I}) and these derivatives. By using these equivalences,
    we can properly understand the algebraic relation between these VOAs.
    \subsection{\texorpdfstring{$T_i$}{}}
        Let us transform $T_{i}=\frac{1}{2}D_i \mathcal{A}_i$ to the following formula.
        \begin{align}
            \label{extension of T_i}
            T_i \equiv \frac{1}{2}\left(U +U_F\right)^2 -\frac{1}{2}[M^{+}_i, M^{-}_i]~.
        \end{align}
        \begin{proof}
            We can derive the following equivalences from the definition of $J_{BRST}$.
            \footnote{When using the following relation,
            \begin{align}
                \label{equivalency between fermi and symplectic boson}
                &\psi_i \chi_i - \psi_{i+1} \chi_{i+1} \equiv -X_i Y_i + X_{i+1} Y_{i+1} &\quad (i=1,\dots,n-1)~,
            \end{align}
            we find out that this situation is similar to the one using Heisenberg currents $h_i$}
            \begin{align}
                &X_i Y_i +\psi_i \chi_i \equiv X_{i+1} Y_{i+1} + \psi_{i+1} \chi_{i+1} &\quad (i=1,\dots,n-1)~.
            \end{align}
            From these relations, we can construct $U +U_F$ as follows,
            \begin{align}
                \label{equivalency of U+U_F}
                X_1 Y_1 +\psi_1 \chi_1 \equiv \dots \equiv X_n Y_n +\psi_n \chi_n \equiv \frac{1}{n} \sum_{i=1}^{n}\left(X_i Y_i +\psi_i \chi_i\right) = -\left(U+U_F\right)~.
            \end{align}
            Therefore, when we set $\mathcal{A}_{f,i} \coloneqq -\psi_i \chi_i$, the following relations hold.
            \begin{align*}
                \mathcal{A}_i = \left(\mathcal{A}_i +\mathcal{A}_{f,i}\right) - \mathcal{A}_{f,i} \equiv (U+U_F) - \mathcal{A}_{f,i}~.
            \end{align*}

            From now, we transform $D_i \mathcal{A}_i$ using $\mathcal{A}_{f,i}$ as follows.
            \footnote{
                Similar to $\hat{\mathcal{A}}_i$, $\hat{\mathcal{A}}_{f,i}$ is defined by
                \begin{align*}
                    \hat{\mathcal{A}}_{f,i} \mathcal{O} \coloneqq \left(\mathcal{A}_{f,i} \mathcal{O}\right)_0 ~,
                \end{align*}
                where $\mathcal{O}$ is an operator.
            }
            \begin{align}
                    D_i \mathcal{A}_i &= \hat{\mathcal{A}}_i \mathcal{A}_{i} +\mathcal{D}_i \mathcal{A}_i \notag\\
                    &= \left(\hat{\mathcal{A}}_i +\hat{\mathcal{A}}_{f,i}\right)\left(\mathcal{A}_i +\mathcal{A}_{f,i}\right) +\mathcal{D}_i \mathcal{A}_i -\hat{\mathcal{A}}_i \mathcal{A}_{f,i} -\hat{\mathcal{A}}_{f,i} \mathcal{A}_i -\hat{\mathcal{A}}_{f,i} \mathcal{A}_{f,i} \notag\\
                    &\equiv \left(U+U_f\right)^2 +\mathcal{D}_i \mathcal{A}_i -2\mathcal{A}_{i} \mathcal{A}_{f,i} -\mathcal{A}_{f,i}^2~.
            \end{align}
            
            By simple calculations, we can derive the following equalities.
            \begin{align}
                \label{equalities of U_i, U_fi, M pm}
                \left\{
                    \begin{aligned}
                        2 \mathcal{A}_i \mathcal{A}_{f,i} &= 2X_i Y_i \psi_i \chi_i\\
                        \mathcal{A}^2_{f,i} &= \partial \psi_i \chi_i -\psi_i \partial \chi_i\\
                        M^{+}_i M^{-}_i &= X_i Y_i \psi_i \chi_i + \partial \psi_i \chi_i + \partial X_i Y_i\\
                        M^{-}_i M^{+}_i &= -X_i Y_i \psi_i \chi_i + \psi_i \partial \chi_i +X_i \partial Y_i
                    \end{aligned}
                \right.~.
            \end{align}
            By (\ref{equalities of U_i, U_fi, M pm}), the following commutator can hold.
            \begin{align}
                \label{commutator of M}
                [M^{+}_i, M^{-}_i] = -\left(\mathcal{D}_i \mathcal{A}_i -2 \mathcal{A}_i \mathcal{A}_{f,i} -\mathcal{A}_{f,i}^2\right)~.
            \end{align}
            Thus, (\ref{extension of T_i}) holds.
        \end{proof}
    \subsection{\texorpdfstring{$W_i$}{}}
        Let us derive the following formula regarding with $W_{i}$.
        \begin{align}
            \label{extension of W_i}
            W_{i} &\equiv \frac{1}{3}\sqrt{\frac{2}{3}} \left\{\frac{3}{2}\left(M^{+}_{i} \partial M^{-}_{i} +M^{-}_{i} \partial M^{+}_{i}\right)
            -\frac{3}{2}\left(U+U_F\right)[M^{+}_{i},M^{-}_{i}] \right. \notag \\
            &\quad \left. +\left(U+U_F\right)^3 +\partial^2 \left(U+U_F\right) \right\}~.
        \end{align}
        \begin{proof}
            Let us use (\ref{definition of W_i ver.2}) and transform $W_i$ to (\ref{extension of W_i}).
            First, each term in (\ref{definition of W_i ver.2}) can be expressed using $\mathcal{A}_i$, $\mathcal{A}_{f,i}$ and $M^{\pm}_i$ as follows.
            \begin{align}
                    \frac{1}{2}\mathcal{D}_{i} \left(\mathcal{D}_{i} \mathcal{A}_i\right)
                    &= \frac{1}{2}\mathcal{D}_i \left\{-[M^{+}_i,M^{-}_i]
                    +\mathcal{A}_{f,i}^2
                    +2 \mathcal{A}_i\mathcal{A}_{f,i}\right\} \notag\\
                    &= -\frac{1}{2}\mathcal{D}_i[M^{+}_i,M^{-}_i]
                    -\mathcal{A}_{f,i} [M^{+}_i,M^{-}_i]
                    +2 \mathcal{A}_i\mathcal{A}_{f,i}^2 +\mathcal{A}_{f,i}^3~,
            \end{align}
            \begin{align}
                    \frac{3}{2} \hat{\mathcal{A}}_{i} \left(\mathcal{D}_i \mathcal{A}_i\right)
                    = 3 \mathcal{A}_{f,i} \mathcal{A}_i^2+\frac{3}{2}\mathcal{A}_i \mathcal{A}_{f,i}^2 -\frac{3}{2}\mathcal{A}_{i}[M_i^{+},M_i^{-}]~,
            \end{align}
            and
            \begin{align}
                \hat{\mathcal{A}}_i \left(\hat{\mathcal{A}}_i \mathcal{A}_i\right) = \mathcal{A}_i^3~,
            \end{align}
            where we can apply (\ref{commutator of M}) for $\mathcal{D}_i \mathcal{A}_i$.
            Summarizing these results, we can derive the following relation by (\ref{equivalency of U+U_F}).
            \begin{align}
                \label{relationship of W_i}
                \frac{1}{2}\mathcal{D}_i \left(\mathcal{D}_i \mathcal{A}_i\right) +\frac{3}{2} \hat{\mathcal{A}}_{i} \left(\mathcal{D}_i \mathcal{A}_i\right) +\hat{\mathcal{A}}_{i}\left(\hat{\mathcal{A}}_{i} \mathcal{A}_{i}\right)
                &= -\frac{1}{2}\mathcal{D}_i[M^{+}_i,M^{-}_i] +\frac{1}{2} \mathcal{A}_i\mathcal{A}_{f,i}^2 \notag\\
                &\quad -\left(\frac{3}{2}\mathcal{A}_i +\mathcal{A}_{f,i}\right) [M^{+}_i,M^{-}_i] + \left(\mathcal{A}_{i} +\mathcal{A}_{f,i}\right)^3 \notag \\
                &\equiv -\frac{1}{2}\mathcal{D}_i[M^{+}_i,M^{-}_i] +\frac{1}{2} \mathcal{A}_{f,i} [M^{+}_i,M^{-}_i]+\frac{1}{2} \mathcal{A}_i\mathcal{A}_{f,i}^2 \notag \\
                &\quad -\frac{3}{2}\left(U + U_F\right) [M^{+}_i,M^{-}_i] + \left(U + U_F\right)^3~.
            \end{align}
        
            Second, let us replace the terms using $\mathcal{D}_i$ and composed of $\mathcal{A}_i$ and $\mathcal{A}_{f,i}$ in (\ref{relationship of W_i}) with the terms consisting of $U$, $U_F$ and $M^{\pm}_i$.
            To do so, we calculate (\ref{relationship of W_i}) specifically as follows.
            \begin{align}
                &\quad -\frac{1}{2}\mathcal{D}_i[M^{+}_i,M^{-}_i] +\frac{1}{2} \mathcal{A}_{f,i} [M^{+}_i,M^{-}_i]+\frac{1}{2} \mathcal{A}_i\mathcal{A}_{f,i}^2 \notag \\
                &= -\frac{3}{2} \partial X_i Y_i \psi_i \chi_i +\frac{3}{2}X_i \partial Y_i \psi_i \chi_i -\frac{3}{2} X_i Y_i \partial \psi_i \chi_i +\frac{3}{2} X_i Y_i \psi_i \partial \chi_i \notag \\
                &\quad -\frac{1}{4} \partial^2 X_i Y_i + \partial X_i \partial Y_i -\frac{1}{4} X_i \partial^2 Y_i
                -\frac{1}{4} \partial^2 \psi_i \chi_i + \partial \psi_i \partial \chi_i -\frac{1}{4} \psi_i \partial^2 \chi_i~.
            \end{align}
            We also compute the following terms composed of $M^{\pm}_i$ or $U+U_F$ which emerge in (\ref{extension of W_i}), that is $M^{+}_i \partial M^{-}_i + M^{-}_i \partial M^{+}_i$ and $\partial^2 \left(U+U_F\right)$, as follows.
            \begin{align}
                M^{+}_i \partial M^{-}_i + M^{-}_i \partial M^{+}_i 
                &= -\partial X_i Y_i \psi_i \chi_i +X_i \partial Y_i \psi_i \chi_i - X_i Y_i \partial \psi_i \chi_i + X_i Y_i \psi_i \partial \chi_i \notag \\
                &\quad +\frac{1}{2} \partial^2 X_i Y_i +2 \partial X_i \partial Y_i +\frac{1}{2} X_i \partial^2 Y_i \notag \\
                &\quad +\frac{1}{2} \partial^2 \psi_i \chi_i +2 \partial \psi_i \partial \chi_i +\frac{1}{2} \psi_i \partial^2 \chi_i~,
            \end{align}
            and
            \begin{align}
                \partial^2 \left(U+U_F\right) &\equiv \partial^2 \left(\mathcal{A}_i + \mathcal{A}_{f,i}\right) \notag \\
                &= -\partial^2 X_i Y_i -2 \partial X_i \partial Y_i - X_i \partial^2 Y_i \notag \\
                &\quad - \partial^2 \psi_i \chi_i -2 \partial \psi_i \partial \chi_i - \psi_i \partial^2 \chi_i~.
            \end{align}
            Therefore, from these above results, we can find out that the following relation holds.
            \begin{align}
                -\frac{1}{2}\mathcal{D}_i[M^{+}_i,M^{-}_i] +\frac{1}{2} \mathcal{A}_{f,i} [M^{+}_i,M^{-}_i]+\frac{1}{2} \mathcal{A}_i\mathcal{A}_{f,i}^2
                \equiv \frac{3}{2}\left(M^{+}_i \partial M^{-}_i + M^{-}_i \partial M^{+}_i\right) + \partial^2 \left(U+ U_F\right)~.
            \end{align} 

            Finally, summarizing the above results, we can derive the following result.
            \begin{align}
                \frac{1}{2}\mathcal{D}_i \left(\mathcal{D}_i \mathcal{A}_i\right) +\frac{3}{2} \hat{\mathcal{A}}_{i} \left(\mathcal{D}_i \mathcal{A}_i\right) +\hat{\mathcal{A}}_{i} \left(\hat{\mathcal{A}}_{i} \mathcal{A}_{i}\right)
                &\equiv \frac{3}{2}\left(M^{+}_i \partial M^{-}_i + M^{-}_i \partial M^{+}_i \right) -\frac{3}{2}\left(U + U_F\right)[M^{+}_i,M^{-}_i] \notag \\
                &\quad + \left(U+U_F\right)^3 + \partial^2 \left(U+U_F\right)~.
            \end{align}
            Thus, (\ref{extension of W_i}) holds.
        \end{proof}
        Since (\ref{extension of T_i}) and (\ref{extension of W_i}) are satisfied, $W_3$-algebra $(T_i, W_i)$ can be constructed by $U+U_F$ and $M^{\pm}_{i}$ with respect to $i=1,\dots,n$. 
    \subsection{\texorpdfstring{$\mathcal{X}_{I}$}{} and \texorpdfstring{$\mathcal{Y}_{I}$}{}}
        To study $\mathcal{X}_{I}$ and $\mathcal{Y}_{I}$, let us define $\hat{\mathcal{X}}_{I,J}$ and $\hat{\mathcal{Y}}_{I,J}$ for $I \subset \mathcal{I}_{n}$ and $J \subset \mathcal{I} \setminus I$, as follows.
        \begin{align}
            \hat{\mathcal{X}}_{I,J} &\coloneqq \prod_{j \in J}D_j \hat{\mathcal{X}}_I~,\\
            \hat{\mathcal{Y}}_{I,J} &\coloneqq \prod_{j \in J}D_j \hat{\mathcal{Y}}_I~.
        \end{align}
        For example, $\hat{\mathcal{X}}_{I,\phi} = \mathcal{X}_{I}$, $\hat{\mathcal{X}}_{\phi,J} = \prod_{j \in J} D_j \hat{\mathcal{X}}_{\phi} = \mathcal{X}_{J}$ and so on.

        Transforming $\hat{\mathcal{X}}_{I,J}$ and $\hat{\mathcal{Y}}_{I,J}$ for $J \neq \phi$ by (\ref{bosonic only}),  (\ref{U_F}), (\ref{M pm}) and (\ref{X I and Y I}), the following relations can be derived.
        \begin{align}
            \label{transformation of XI, YI}
            \left\{
                \begin{aligned}
                    \hat{\mathcal{X}}_{I,J} = D_j \hat{\mathcal{X}}_{I,J \setminus \{j\}}\equiv \left(U+ U_F\right)\hat{\mathcal{X}}_{I,J \setminus \{j\}} + (-1)^{f_j} M_{j}^{+} \hat{\mathcal{X}}_{I\cup \{j\},J \setminus \{j\}} \\
                    \hat{\mathcal{Y}}_{I,J} = D_j \hat{\mathcal{Y}}_{I,J \setminus \{j\}}\equiv \left(U+ U_F\right)\hat{\mathcal{Y}}_{I,J \setminus \{j\}} - (-1)^{f_j} M_{j}^{-} \hat{\mathcal{Y}}_{I\cup \{j\},J \setminus \{j\}}
                \end{aligned}
            \right.~,
        \end{align}
        where $f_j$ is defined by
        \begin{align*}
            f_{j} =
            \left\{
                \begin{aligned}    
                    &0 &\quad (j=1) \\
                    &\left(
                        \begin{aligned}
                            &\text{The number of $\psi_k$ or $\chi_k$ ($k=1,\dots,i-1$)} \\
                            &\quad \text{contained in $\hat{\mathcal{X}}_{I,J \setminus \{j\}}$ or $\hat{\mathcal{Y}}_{I,J \setminus \{j\}}$}
                        \end{aligned}
                    \right) &\quad (2 \leq j \leq n)
                \end{aligned}
            \right.~.
        \end{align*}
        
        \begin{proof}
            First, we can transform $\mathcal{A}_i \hat{\mathcal{X}}_{I,J \setminus \{i\}}$ and $\mathcal{A}_i \hat{\mathcal{Y}}_{I,J \setminus \{i\}}$ as follows.
            \begin{align}
                \label{correction terms ver.fermi X} 
                \hat{\mathcal{A}}_i \hat{\mathcal{X}}_{I,J \setminus \{i\}} &= \left(\hat{\mathcal{A}}_i +\hat{\mathcal{A}}_{f,i}\right)\hat{\mathcal{X}}_{I,J \setminus \{i\}} -\hat{\mathcal{A}}_{f,i} \hat{\mathcal{X}}_{I, J \setminus \{i\}}\notag\\
                &\equiv \left(U+U_F\right) \hat{\mathcal{X}}_{I,J\setminus \{i\}} -\mathcal{A}_{f,i} \hat{\mathcal{X}}_{I, J \setminus \{i\}}~,
            \end{align}
            and
            \begin{align}
                \label{correction terms ver.fermi Y} 
                \hat{\mathcal{A}}_i \hat{\mathcal{Y}}_{I,J \setminus \{i\}} 
                &\equiv \left(U+U_F\right) \hat{\mathcal{Y}}_{I,J\setminus \{i\}} -\mathcal{A}_{f,i} \hat{\mathcal{Y}}_{I, J \setminus \{i\}}~.
            \end{align}
            Second, by using $M^{\pm}_i$ respectively, $\mathcal{A}_{f,i} \hat{\mathcal{X}}_{I, J \setminus \{i\}}$ and $\mathcal{A}_{f,i} \hat{\mathcal{Y}}_{I, J \setminus \{i\}}$ can be transformed as follows. 
            \begin{align}
                \left\{
                    \begin{aligned}
                        \mathcal{A}_{f,i} \hat{\mathcal{X}}_{I,J \setminus \{i\}} &= \{M^{+}_i,\chi_i\} \tilde{\mathcal{X}}_{I,J \setminus \{i\}} -(-1)^{f_i}M^{+}_i \hat{\mathcal{X}}_{I \cup \{i\},J \setminus \{i\}}\\
                        \mathcal{A}_{f,i} \hat{\mathcal{Y}}_{I,J \setminus \{i\}} &= -\{M^{-}_i,\psi_i\} \tilde{\mathcal{Y}}_{I,J \setminus \{i\}} +(-1)^{f_i}M^{-}_i \hat{\mathcal{Y}}_{I \cup \{i\},J \setminus \{i\}}
                    \end{aligned}
                \right.~,
            \end{align}
            where $\tilde{\mathcal{X}}_{I,J \setminus \{i\}}$ and $\tilde{\mathcal{Y}}_{I,J \setminus \{i\}}$ are defined by the following equalities
            \begin{align}
                \left\{
                    \begin{aligned}
                        \hat{\mathcal{X}}_{I,J \setminus \{i\}} &\eqqcolon X_i \tilde{\mathcal{X}}_{I, J \setminus \{i\}},\\
                        \hat{\mathcal{Y}}_{I,J \setminus \{i\}} &\eqqcolon Y_i \tilde{\mathcal{Y}}_{I, J \setminus \{i\}}
                    \end{aligned}
                \right.~,
            \end{align}
            respectively. Then, by easy calculations, $\{M^{+},\chi_i\}$ and $\{M^{-},\psi_i\}$ are given by 
            \begin{align}
                \left\{
                    \begin{aligned}
                        \{M^{+}_i, \chi_i\} &= \partial X_i\\
                        \{M^{-}_i, \psi_i\} &= \partial Y_i
                    \end{aligned}
                \right.~.
            \end{align}
            Therefore, $\{M^{+}_i, \chi_i\}\tilde{\mathcal{X}}_{I,J \setminus \{i\}}$ and $\{M^{-}_i, \psi_i\}\tilde{\mathcal{Y}}_{I,J \setminus \{i\}}$ are equal to 
            \begin{align}
                \left\{
                    \begin{aligned}
                        \{M^{+}_i, \chi_i\}\tilde{\mathcal{X}}_{I,J \setminus \{i\}} 
                        &= \mathcal{D}_i \hat{X}_{I,J \setminus \{i\}}\\
                        \{M^{-}_i, \psi_i\}\tilde{\mathcal{Y}}_{I,J \setminus \{i\}} 
                        &= -\mathcal{D}_i \hat{\mathcal{Y}}_{I,J \setminus \{i\}}
                    \end{aligned}
                \right.~,
            \end{align}
            respectively. 

            Thus, we can derive the following equalities.
            \begin{align}
                \label{covariant derivative ver.fermi X}
                \mathcal{D}_i \hat{\mathcal{X}}_{I,J \setminus \{i\}} - \hat{\mathcal{A}}_{f,i} \hat{\mathcal{X}}_{I,J \setminus \{i\}} = (-1)^{f_i} M^{+}_{i} \hat{\mathcal{X}}_{I \cup \{i\},J \setminus \{i\}}~,
            \end{align}
            and
            \begin{align}
                \label{covariant derivative ver.fermi Y}
                \mathcal{D}_i \hat{\mathcal{Y}}_{I,J \setminus \{i\}} - \hat{\mathcal{A}}_{f,i} \hat{\mathcal{Y}}_{I,J \setminus \{i\}} = - (-1)^{f_i} M^{-}_{i} \hat{\mathcal{Y}}_{I \cup \{i\},J \setminus \{i\}}~.
            \end{align}
            So, by (\ref{correction terms ver.fermi X}), (\ref{correction terms ver.fermi Y}), (\ref{covariant derivative ver.fermi X}) and (\ref{covariant derivative ver.fermi Y}), we prove that (\ref{transformation of XI, YI}) holds.
        \end{proof}
        
        By repeatedly using this relation (\ref{transformation of XI, YI}), we can easily find out that $\mathcal{X}_{I}$ (resp. $\mathcal{Y}_{I}$) can be written by (\ref{bosonic only}), (\ref{U_F}), (\ref{M pm}) and (\ref{X I and Y I}).
        In particular, applying (\ref{transformation of XI, YI}) for $\mathcal{X}_{I}$ (resp. $\mathcal{Y}_{I}$), we can finally find out that $\hat{\mathcal{X}}_{I}$ (reps. $\hat{\mathcal{Y}}_{I}$) emerges.
    \subsection{Stress tensor \texorpdfstring{$T$}{}}
        When removing $U(1)$ gauge anomalies by Heisenberg currents $h_i$, we can construct the stress tensor $T$ (\ref{definition of T});
        \begin{align*}
                \sum_{i=1}^{n} T_i -\frac{n}{2} U^2~.
        \end{align*}
        This stress tensor can be also derived by substituting $h_i$ to $J_{f,i} = \psi_i \chi_i - \psi_{i+1} \chi_{i+1}$ and using (\ref{equivalency between fermi and symplectic boson}).
        Therefore, after fermionic extension, we naively consider that (\ref{definition of T}) is equivalent of
        \begin{align}
            \label{definition of T ver.2}
            \frac{n}{2}\left(U+U_F\right)^2 -\frac{n}{2}U^2 -\frac{1}{2} \sum_{i=1}^{n}[M_i^{+},M^{-}_i]~,
        \end{align}
        by (\ref{extension of T_i}). 
        
        In \cite{Yoshida:2023wyt}, the stress tensor corresponding to (\ref{definition of T ver.2}) was used, 
        and from its stress tensor and the generators (\ref{bosonic only}), (\ref{U_F}), (\ref{M pm}) and (\ref{X I and Y I}), several examples of the algebraic structures after fermionic extension were derived.
        The stress tensor (\ref{definition of T ver.2}), however, does not work well with generators using $\psi_i$ and $\chi_i$.
        Therefore, let us reconstruct a stress tensor that is more appropriate than (\ref{definition of T ver.2}).

        In order to construct more appropriate stress tensor, it is necessary for us to use the following equality.\footnote{In fact, this equality (\ref{canonical stress tensor}) was given in \cite{Yoshida:2023wyt}.
        The method for deriving (\ref{canonical stress tensor}) is similar to the method for deriving $T_h+T_{bc} \equiv \frac{1}{2}\sum_{i=1}^{n}\hat{\mathcal{A}}_i \mathcal{A}_i-\frac{n}{2}U^2$. This specific calculation was given in \cite{Nishinaka:2025nbe}.
        In Appendix.\ref{To generalize fermionic extension for abelian quiver gauge theories}, we will generalize these relationship.}
        \begin{align}
            \label{canonical stress tensor}
            T_f +\frac{n}{2} U_F^2 = \frac{1}{2} \sum_{i=1}^{n} \left(\partial \psi_i \chi_i -\psi_i \partial \chi_i\right)~,
        \end{align}
        where $T_f \coloneqq \frac{1}{2}\sum_{i,j=1}^{n-1}C^{ij}J_{f,i} J_{f,j}$ and the right-hand side is the canoninal stress tensor for $\psi_i$ and $\chi_i$.
        By the equality (\ref{canonical stress tensor}), we should replace $T_h$ with $T_f + \frac{n}{2}U_F^2$ to do fermionic extension properly.
        In other words, after fermionic extension, we define the stress tensor $T$ as follows.
        \begin{align}
            \label{definition of T ver.3}
            T &\coloneqq T_{sb} +T_f + T_{bc} + \frac{n}{2} U_F^2 \\
            \label{definition of T ver.4}
            &\equiv \sum_{i=1}^{n}T_i -\frac{n}{2}U^2 +\frac{n}{2}U_F^{2} \\
            \label{definition of T ver.5}
            &\equiv \frac{n}{2}\left(U+U_F\right)^2 -\frac{n}{2}U^2 + \frac{n}{2}U_F^2 -\frac{1}{2}\sum_{i=1}^{n}[M^{+}_i,M^{-}_i]~,
        \end{align}
        where (\ref{definition of T ver.3}) is the usual definition, (\ref{definition of T ver.4}) is the result by replacing $T_f + T_{bc}$ with $\frac{1}{2}\sum_{i=1}^{n} T_i -\frac{n}{2}U^2$ under the BRST cohomology,
        and (\ref{definition of T ver.5}) is derived from (\ref{definition of T ver.4}) by using (\ref{extension of T_i}). In particular, since (\ref{definition of T ver.5}) can be written by $U$, $U_F$ and $M^{\pm}_i$,
        the stress tensor $T$ is no longer one of generators, but there is no doubt that $T$ is an important operator.

        Therefore, two VOAs associated with $T^{[1^n]}_{[n-1,1]}(SU(n))$ are related as follows.
        \begin{itemize}
            \item In order to remove $U(1)$ gauge anomalies, we use Heisenberg currents $h_i$ or $J_{f,i} = \psi_i \chi_i - \psi_{i+1} \chi_{i+1}$, where $(\psi_i, \chi_i)$ are fermimultiplets.
            \item When using $J_{f,i}$, the VOA after fermionic extension is generated by $U$, $U_F$, $M^{\pm}_{i}$, $\hat{\mathcal{X}}_{I}$ and $\hat{\mathcal{Y}}_{I}$ for $I \subset \{1,\dots,n \}$.
            These operators are represented in a form that is easy to understand intuitively. 
            \item When using $h_i$, the bosonic VOA is generated by using $U$, $T_i$, $W_i$, $\mathcal{X}_{I}$ and $\mathcal{Y}_{I}$ for $I \subset \{1,\dots,n\}$.
            While $U$, $\mathcal{X}$ and $\mathcal{Y}$ which come from Higgs branch chiral operators can be easily constructed, the others $T_i$, $W_i$, $\mathcal{X}_{I}$ and $\mathcal{Y}_{I}$ for $I \subset \{1,\dots,n\}$ and $I \neq \phi$ are not.
            This difficulty is shown in Sec.\ref{To derive non-trivial operators} to \ref{Linearly independent operators consisting of X I and Y I}. 
            \item When replacing $J_{f,i}$ to $h_i$, that is removing $(\psi_i,\chi_i)$ and constructed by symplectic bosons $(X_i,Y_i)$ only,
            the bosonic VOA cannot be simply derived from the VOA after fermionic extension in generic case. We must discover the relations (\ref{extension of T_i}), (\ref{extension of W_i}) and (\ref{transformation of XI, YI}).
        \end{itemize}

\section{Summary and conclusions}
    In this paper, we discussed issues that were not addressed in detail in \cite{Nishinaka:2025nbe}; how to find $\mathcal{D}_i$ and $\hat{\mathcal{A}}_i$ in Sec.\ref{Preliminary of the bosonic VOA associated with T_{[n-1,1]}^{[1^n]}(SU(n))}, how to construct operators by introducing $\mathcal{D}_{i}$ and $\hat{A}_{i}$ in Sec.\ref{To derive non-trivial operators},
    and how to the number of generators consisting of $\mathcal{X}_{I}$ and $\mathcal{Y}_{I}$ for $T^{[1^n]}_{[n-1,1]}(SU(n))$ from Sec.\ref{How to construct primary fields from T i, mathcal X I, mathcal Y I} to \ref{Linearly independent operators consisting of X I and Y I},
    or $\mathcal{X}^{\pm}_{I}$ and $\mathcal{Y}^{\pm}_{I}$ for $T^{[2,1^{n-1}]}_{[n-1,1^2]}(SU(n+1))$ in Sec.\ref{Generators of the bosonic VOA of T^{[2,1^{n-1}]}_{[n-1,1^2]}(SU(n+1))}. In particular, the number of generators composed of $\mathcal{X}_{I}$ and $\mathcal{Y}_{I}$ is related to the Catalan number.
    Moreover, by calculating the OPEs in Appendix.\ref{OPE 1} for $T^{[1^n]}_{[n-1,1]}(SU(n))$ or Appendix.\ref{OPE 2} for $T^{[2,1^{n-1}]}_{[n-1,1^2]}(SU(n+1))$,
    we solidified the conjecture that the bosonic VOA can be constructed by generators given in the previous paper and this paper. In other words, even if other generators exist, we find out that holomorphic dimension of these operators must be $\frac{11}{2}$ or more. 
    However, it is difficult to verify it through more specific calculations, so establishing other methods is desirable.

    The other result is that we partially discovered how to derive a bosonic VOA, which is consisting of symplectic bosons only, as a sub-VOA of a boundary VOA, which is consisting of symplectic bosons and fermimultiplets, in Section.\ref{Fermionic extension}, through the example for the VOAs associated with $T^{[1^n]}_{[n-1,1]}(SU(n))$.
    Then, we conjecuture that this procedure given in Section. \ref{Fermionic extension} can be applied for other abelian quiver gauge theories. In short, the method is formally described in Appendix.\ref{To generalize fermionic extension for abelian quiver gauge theories}.
    At least I am not dealing with other abelian quiver gauge theories, but it may be possible to construct a sub-VOA using similar procedures compared with \cite{Beem:2023dub,Ballin:2023rmt}.
    
    Furthermore, I consider that there are the following future works;
    \begin{enumerate}
        \item Can the sequence of discussion in this paper be generalized and applied for non-abelian quiver gauge theories \cite{Coman:2023xcq,Kim:2025rog}?
        In other words, can a sub-VOA composed of symplectic bosons be always constructed from a boundary VOA associated with a $3D$ $\mathcal{N}=4$ gauge theory?
        \item If the sequence of discussion in this paper can apply for non-abelian quiver gauge theories, is it possible to extend this technique to other VOAs ?
        For example, would the similar situations occur in the case of $3D$ $\mathcal{N}=2$ gauge theories with a boundary or boundaries \cite{Costello:2020ndc,Alekseev:2022gnr} ?
        \item In \cite{Nishinaka:2025nbe}, T. Nishinaka and I considered the theories $T^{[1^n]}_{[n-1,1]}(SU(n))$ and $T^{[2,1^{n-1}]}_{[n-1,1^2]}(SU(n+1))$,
        where these theories are $3D$ $\mathcal{N}=4$ abelian quiver gauge theories obtained by compactifying the $4D$ $\mathcal{N}=2$ Argyres-Douglas theories called $(A_1,A_{2n-1})$ and $(A_1,D_{2n})$ respectively \cite{Cecotti:2010fi,Xie:2012hs}.
        Then, we compared bosonic VOAs with VOAs given in \cite{Creutzig:2017qyf}. In this paper, the construction of generators related of these theories $T^{[1^n]}_{[n-1,1]}(SU(n))$ and $T^{[2,1^{n-1}]}_{[n-1,1^2]}(SU(n+1))$ are also shown.
        On the other hand, the twisted compactification discussed in \cite{Dedushenko:2023cvd,ArabiArdehali:2024ysy} are not included.
        Then, does this compactification affect the operators by using $\mathcal{D}_i$ and $\hat{\mathcal{A}}_i$, e.g. $T_i$, $W_i$, $\mathcal{X}_{I}$ and $\mathcal{Y}_{I}$ for $T^{[1^n]}_{[n-1,1]}(SU(n))$?
        \item It has been known that the duality called mirror symmetry in $3D$ $\mathcal{N}=4$ gauge theories exists \cite{deBoer:1996mp,Webster:2023jtq}. Then, how do the results in \cite{Nishinaka:2025nbe} and this paper,
        affect the VOA by performing C-twist? In particular, do operators corresponding to the operators constructed by using $\mathcal{D}_i$ and $\hat{\mathcal{A}}_i$ appear?
        \item From the generators of the bosonic VOAs of $T^{[1^n]}_{[n-1,1]}(SU(n))$ and $T^{[2,1^{n-1}]}_{[n-1,1^2]}(SU(n+1))$ derived in this paper, how about constructing the associated varieties \cite{arakawa2010remarkc2cofinitenessconditionvertex} or Zhu-algebras \cite{6afcd487-972b-3da9-824d-e922f84b29f5}?
    \end{enumerate}
\bigskip
\bigskip

\begin{center}
\noindent{\bf Acknowledgments}
\end{center}
    I am grateful to Takahiro Nishinaka, Yutaka Yoshida and Ryo Hamachika for illuminating discussions.
    Most of computations have been carried out with the Mathematica package OPEdefs \cite{doi:10.1142/S0129183191001001}
    provided by K. Thielemans to whom I am grateful. 
    This work was partially supported by JSPS KAKENHI Grant Number JP21H04993.
\clearpage
\appendix
\section{The OPEs regarding with the bosonic VOA associated with \texorpdfstring{$T_{[n-1,1]}^{[1^n]}(SU(n))$}{}}
    \label{OPE 1}
    Let us write down the candidates of generators of VOA associated with $T_{[n-1,1]}^{[1^n]}(SU(n))$.

    The candidates of generators are defined by 
    \begin{align}
        \label{U}
        U \coloneqq -\frac{1}{n} \sum_{i=1}^{n} X_i Y_i~,\\
        \label{X Y}
        \mathcal{X} \coloneqq \prod_{i=1}^{n}X_i~,\quad \mathcal{Y} \coloneqq \prod_{i=1}^{n} Y_i~,\\
        \label{T_i}
        T_i \coloneqq \frac{1}{2}D_i \mathcal{A}_i~,\\
        \label{W_i}
        W_i \coloneqq \frac{1}{3}\sqrt{\frac{2}{3}} \left(\mathcal{D}_i \left(\mathcal{D}_i \mathcal{A}_i\right) + \frac{1}{2}\hat{\mathcal{A}}_i \left(\mathcal{D}_i \mathcal{A}_i \right)+ \frac{1}{2}\mathcal{D}_i \left(\hat{\mathcal{A}}_i \mathcal{A}_i\right) + \hat{\mathcal{A}}_i \left(\hat{\mathcal{A}}_i \mathcal{A}_i \right)\right)~,\\
        \label{XI YI}
        \mathcal{X}_{I} = \prod_{i \in I} D_i \mathcal{X}~,\quad \mathcal{Y}_{I} = \prod_{i \in I} D_i \mathcal{Y}~,
    \end{align}
    and the stress tensor $T$ is given by
    \begin{align}
        T \equiv \sum_{i=1}^{n} T_i -\frac{n}{2}U^2~.
    \end{align}

    Then, the OPEs with the above operators, except for the OPEs between $\mathcal{X}_{I}$ and $\mathcal{Y}_{J}$, are derived as follows.\footnote{We set $\Lambda_i \coloneqq T_iT_i -\frac{3}{10}\partial^2 T_i$.}
    \begin{align}
        \label{TT}
        T(z)T(0) &\sim \frac{-(2n-1)}{2 z^4} + \frac{2 T}{z^2} +\frac{\partial T}{z}~,\\
        \label{TU}
        T(z)U(0) &\sim \frac{U}{z^2} +\frac{\partial U}{z}~,\\
        \label{TT_i}
        T(z)T_{i}(0) &\sim \frac{-2}{2 z^4} + \frac{2 T_{i}}{z^2} +\frac{\partial T_{i}}{z}~,\\
        \label{TW_i}
        T(z)W_{i}(0) &\sim \frac{3 W_{i}}{z^2} +\frac{\partial W_{i}}{z}~,\\
        \label{TX}
        T(z) \mathcal{X}(0) &\sim \frac{\frac{n}{2} \mathcal{X}}{z^2} +\frac{\partial \mathcal{X}}{z} ~,\\
        \label{T XI}
        T(z) \mathcal{X}_{I}(0) &\sim \frac{2\sum_{i \in I}\mathcal{X}_{I \setminus \{i\}}}{z^3}+\frac{\left(\frac{n}{2} +|I|\right)\mathcal{X}_{I}}{z^2} +\frac{\partial \mathcal{X}_{I}}{z}
        &\quad (I \neq \phi)~,
    \end{align}
    \begin{align}
        \label{TY}
        T(z) \mathcal{Y}(0) &\sim \frac{\frac{n}{2} \mathcal{Y}}{z^2} +\frac{\partial \mathcal{Y}}{z} ~,\\
        \label{T YI}
        T(z) \mathcal{Y}_{I}(0) &\sim -\frac{2\sum_{i \in I}\mathcal{Y}_{I \setminus \{i\}}}{z^3} +\frac{\left(\frac{n}{2} +|I|\right)\mathcal{Y}_{I}}{z^2} +\frac{\partial \mathcal{Y}_{I}}{z}
        &\quad (I \neq \phi)~,\\
        \label{UU}
        U(z)U(0) &\sim -\frac{1}{n z^2}~,\\
        \label{UT_i}
        U(z)T_{i}(0) &\sim 0~,\\
        \label{UW_i}
        U(z)W_{i}(0) &\sim 0~,\\
        \label{U XI}
        U(z)\mathcal{X}_{I}(0) &\sim \frac{\mathcal{X}_{I}}{z}~,\\
        \label{U YI}
        U(z)\mathcal{Y}_{I}(0) &\sim -\frac{\mathcal{Y}_{I}}{z}~, \\
        \label{T-1}
        T_{i}(z) T_{j}(0) &\sim \delta_{ij}\left\{\frac{-2}{2 z^4} +\frac{2 T_{i}}{z^2} +\frac{\partial T_{i}}{z}\right\}~,\\
        \label{T-2}
        T_{i}(z) W_{j}(0) &\sim \delta_{ij}\left\{\frac{3 W_{i}}{z^2} +\frac{\partial W_{i}}{z} \right\}~,\\
        \label{T-3}
        T_{i}(z) \mathcal{X}_{I}(0) &\sim \frac{2\mathcal{X}_{I\setminus \{i\}}}{z^3} +\frac{2 \mathcal{X}_{I}}{z^2} +\frac{2T_{i} \mathcal{X}_{I \setminus \{i\}}}{z}
        &\quad (i \in I)~,\\
        \label{T-4}
        T_{i}(z) \mathcal{X}_{I}(0) &\sim \frac{\mathcal{X}_{I}}{z^2} +\frac{\mathcal{X}_{I \cup \{i\}}}{z}
        &\quad (i \notin I)~,\\
        \label{T-5}
        T_{i}(z) \mathcal{Y}_{I}(0) &\sim -\frac{2\mathcal{Y}_{I\setminus \{i\}}}{z^3} +\frac{2 \mathcal{Y}_{I}}{z^2} -\frac{2T_{i} \mathcal{Y}_{I \setminus \{i\}}}{z}
        &\quad (i \in I)~,\\
        \label{T-6}
        T_{i}(z) \mathcal{Y}_{I}(0) &\sim \frac{\mathcal{Y}_{I}}{z^2} -\frac{\mathcal{Y}_{I \cup \{i\}}}{z}
        &\quad (i \notin I)~,\\
        \label{W-1}
        W_{i}(z) W_{j}(0) &\sim \delta_{ij} \left\{\frac{-2}{3 z^6} +\frac{2T_{i}}{z^4} +\frac{\partial T_{i}}{z^3}
        +\frac{1}{z^2}\left\{\frac{32}{22+5\times (-2)}\Lambda_{i} +\frac{3}{10}\partial^2 T_{i} \right\} \right. \notag\\
        &\quad \left.+\frac{1}{z}\left\{\frac{16}{22+5\times(-2)}\partial \Lambda_{i} +\frac{1}{15}\partial^3 T_{i} \right\}\right\}~\\
        \label{W-2}
        W_{i}(z) \mathcal{X}_{I}(0) &\sim \sqrt{\frac{2}{3}}\left\{\frac{3\mathcal{X}_{I\setminus \{i\}}}{z^4} +\frac{4\mathcal{X}_{I}}{z^3} +\frac{5T_{i} \mathcal{X}_{I \setminus \{i\}}}{z^2} \right.\notag \\
        &\quad \left.+\frac{1}{z}\left(3\sqrt{\frac{3}{2}}W_{i} -\frac{3}{2}\partial T_{i}\right)\mathcal{X}_{I \setminus \{i\}}\right\}
        &\quad (i \in I)~, \\
        \label{W-3}
        W_{i}(z) \mathcal{X}_{I}(0) &\sim \sqrt{\frac{2}{3}} \left\{\frac{\mathcal{X}_{I}}{z^3} +\frac{3\mathcal{X}_{I \cup \{i\}}}{2 z^2} +\frac{2 T_{i}\mathcal{X}_{I}}{z} \right\}
        &\quad (i \notin I)~,
    \end{align}
    \begin{align}
        \label{W-4}
        W_{i}(z) \mathcal{Y}_{I}(0) &\sim \sqrt{\frac{2}{3}}\left\{\frac{3\mathcal{Y}_{I\setminus \{i\}}}{z^4} -\frac{4\mathcal{Y}_{I}}{z^3} +\frac{5T_{i} \mathcal{Y}_{I \setminus \{i\}}}{z^2} \right. \notag \\
        &\quad \left.+\frac{1}{z}\left(-3\sqrt{\frac{3}{2}}W_{i} -\frac{3}{2}\partial T_{i}\right)\mathcal{Y}_{I \setminus \{i\}}\right\}
        &\quad (i \in I)~, \\
        \label{W-5}
        W_{i}(z) \mathcal{Y}_{I}(0) &\sim \sqrt{\frac{2}{3}} \left\{-\frac{\mathcal{Y}_{I}}{z^3} +\frac{3\mathcal{Y}_{I \cup \{i\}}}{2 z^2} -\frac{2 T_{i}\mathcal{Y}_{I}(w)}{z-w} \right\}
        &\quad (i \notin I)~,\\
        \mathcal{X}_{I}(z) \mathcal{X}_{I}(0) &\sim 0~,\\
        \mathcal{X}_{I}(z) \mathcal{X}_{J}(0) &\sim 
        \sum_{p=1}^{|L|+|M|}\sum_{(a,b)\in \mathcal{J}_{LM,p}}\sum_{i=1}^{|L_a|}\sum_{j=1}^{|M_b|}\frac{1}{z^p} \frac{(-1)^{a-p}}{\left(a+b-p\right)!}\binom{a+b-1}{p-1} \notag\\
        &\quad \times \left(\partial^{a+b-p}\mathcal{X}_{K\cup (L_a)_i}\right)\mathcal{X}_{K\cup (M_b)_j} &\quad (I \neq J)~, \\      
        \mathcal{Y}_{I}(z) \mathcal{Y}_{I}(0) &\sim 0~,\\
        \mathcal{Y}_{I}(z) \mathcal{Y}_{J}(0) 
        &\sim \sum_{p=1}^{|L|+|M|}\sum_{(a,b)\in \mathcal{J}_{LM,p}}\sum_{i=1}^{|L_a|}\sum_{j=1}^{|M_b|}\frac{1}{z^p} \frac{(-1)^{b-p}}{\left(a+b-p\right)!} \binom{a+b-1}{p-1} \notag \\
        &\quad \times \left(\partial^{a+b-p}\mathcal{Y}_{K\cup (L_a)_i}\right)\mathcal{Y}_{K\cup (M_b)_j} &\quad (I \neq J)~,
    \end{align}
    where $K = I \cap J$, $L = I \setminus K$, $M = J \setminus K$, and $\mathcal{J}_{LM,p}$ is defined by (\ref{J LM p}).

    For the OPEs between $\mathcal{X}_{I}$ and $\mathcal{Y}_{J}$, it is too difficult and complex to calculate them, but we conjecture that these OPEs can be written by using $U$, $T_i$ and $W_i$.
    In order to justify this conjecture, we compute some OPEs. For convenience, let us define the following symmetric product,
    \begin{align}
        \label{symmetric product}
        \left\{
            \begin{aligned}
                \sigma \left[A_1\right] &=A_1 \\
                \sigma \left[A_1^{a_1} \dots A_k^{a_k}\right] &= \sum_{m=1}^{k} \left(A_m \sigma \left[A_1^{a_1} \dots A_m^{a_m-1} \dots A_k^{a_k}\right] \right)_0
            \end{aligned}
        \right.~,
    \end{align}
    where $A_1,\dots,A_{m}$ are $U,\partial U,\dots,T_i, \partial T_i,\dots$, and $W_i, \partial W_i,\dots$ for $i=1,\dots,n$, satisfying $A_i \neq A_j$ for $i \neq j$.
    For convenience, we introduce the following constant
    \begin{align}
        \label{constant zeta}
        \zeta_{I,J} \coloneqq (-2)^{|I \cap J^c| + |I^c \cap J|} \times 6^{|I \cap J|}\quad \left(I,J \in \mathcal{I}_{n}\right)~,
    \end{align}
    where $I^c$ (resp. $J^c$) is the compliment set of $I$ (resp. $J$).
    $\tilde{U}$ and $\tilde{W}_i$ are also defined by
    \begin{align}
        \tilde{U} &\coloneqq n U~, \\
        \tilde{W}_i &\coloneqq 3\sqrt{\frac{3}{2}}W_i~.
    \end{align}
    Then, we can partially derive the singular part of the OPE between $\mathcal{X}_{I}$ and $\mathcal{Y}_{J}$ as follows.
    \begin{align}
        \mathcal{X}_{I} (z) \mathcal{Y}_{J}(0) &\sim \frac{\zeta_{I,J}}{z^{n +|I|+|J|}} +\frac{\zeta_{I,J}}{z^{n +|I| +|J|-1}}\left(-\tilde{U}\right)
        +\frac{\zeta_{I,J}}{z^{n+|I|+|J|-2}} \left(
            \frac{1}{2} \tilde{U}^2 -\sum_{\substack{i \in \mathcal{I}_{n} \\ i \notin I \cup J}} T_{i} -\frac{1}{2} \partial \tilde{U}
        \right) \notag \\
        &\quad +\frac{\zeta_{I,J}}{z^{n +|I| +|J| -3}} \left(
            -\frac{1}{6} \tilde{U}^3 +\frac{1}{2} \sum_{\substack{i \in \mathcal{I}_{n} \\ i \notin I \cup J}} \sigma \left[ \tilde{U} T_{i} \right] +\frac{1}{4} \sigma \left[ \tilde{U} \partial \tilde{U}\right]
            +\frac{1}{6} \sum_{\substack{i \in I \cup J \\ i \notin I \cap J}} \tilde{W}_{i} \right. \notag \\
            &\quad \quad \left. -\frac{1}{3} \sum_{\substack{i \in \mathcal{I}_{n} \\ i \notin I \cup J}} \tilde{W}_{i} +\frac{1}{4} \sum_{\substack{i \in I \\ i \notin I \cap J}} \partial T_{i} -\frac{1}{4} \sum_{\substack{i \in J \\ i \notin I \cap J}} \partial T_{i} -\frac{1}{2} \sum_{\substack{i \in \mathcal{I}_{n} \\ i \notin I \cup J}} \partial T_{i} -\frac{1}{6} \partial^2 \tilde{U}
        \right) \notag \\
        &\quad + \frac{\zeta_{I,J}}{z^{n+|I|+|J|-4}} \left(
            \frac{1}{24} \tilde{U}^4 -\frac{1}{6}\sum_{\substack{i \in \mathcal{I}_{n} \\ i \notin I \cup J}} \sigma\left[\tilde{U}^2 T_{i}\right] -\frac{1}{12} \sigma\left[\tilde{U}^2 \partial \tilde{U} \right]
            -\frac{1}{12} \sum_{\substack{i \in I \cup J \\ i \notin I \cap J}} \sigma \left[ \tilde{U} \tilde{W}_{i} \right] \right. \notag \\
            &\quad \quad +\frac{1}{6} \sum_{\substack{i \in \mathcal{I}_{n} \\ i \notin I \cup J}} \sigma \left[\tilde{U} \tilde{W}_{i} \right] -\frac{1}{8} \sum_{\substack{i \in I \\ i \notin I \cap J}} \sigma \left[\tilde{U} \partial T_{i} \right] 
            +\frac{1}{8} \sum_{\substack{i \in J \\ i \notin I \cap J}} \sigma \left[\tilde{U} \partial T_{i} \right] +\frac{1}{4} \sum_{\substack{i \in \mathcal{I}_{n} \\ i \notin I \cup J}} \sigma \left[\tilde{U} \partial T_{i} \right] \notag \\
            &\quad \quad +\frac{1}{12} \sigma \left[\tilde{U} \partial^2 \tilde{U} \right] -\frac{1}{6} \sum_{i \in I \cap J} \sigma \left[T_{i}^2 \right] +\frac{1}{2} \sum_{\substack{i \in I \cup J \\ i \notin I \cap J}} \sigma \left[ T_{i}^2\right]
            -\frac{1}{2} \sum_{\substack{i \in \mathcal{I}_{n} \\ i \notin I \cup J}} \sigma \left[T_{i}^2 \right] \notag \\
            &\quad \quad +\frac{1}{2} \sum_{\substack{i,j \in \mathcal{I}_{n} \\ i,j \notin I \cup J \\ i < j}} \sigma \left[T_{i} T_{j} \right] +\frac{1}{4} \sum_{\substack{i \in \mathcal{I}_{n} \\ i \notin I \cup J}} \sigma \left[T_{i} \partial \tilde{U}\right]
            +\frac{1}{8} \left(\partial \tilde{U}\right)^2 
            +\frac{1}{6} \sum_{\substack{i \in I \\ i \notin I \cap J}} \partial \tilde{W}_{i} -\frac{1}{6} \sum_{\substack{i \in \mathcal{I}_{n} \\ i \notin I \cup J}} \partial \tilde{W}_{i} \notag \\
            &\quad \quad \left. +\frac{1}{12} \sum_{i \in I \cap J} \partial^2 T_{i} -\frac{1}{4} \sum_{\substack{i \in J \\ i \notin I \cap J}} \partial^2 T_{i} -\frac{1}{24} \partial^3 \tilde{U}
        \right) \notag \\
        &\quad + \frac{\zeta_{I,J}}{z^{n +|I|+|J|-5}} \left(
            -\frac{1}{120} \tilde{U}^5 +\frac{1}{24} \sum_{\substack{i \in \mathcal{I}_{n} \\ i \notin I \cup J}} \sigma \left[\tilde{U}^3 T_{i}\right] +\frac{1}{48} \sigma \left[\tilde{U}^3 \partial \tilde{U} \right] +\frac{1}{36} \sum_{\substack{i \in I \cup J \\ i \notin I \cap J}} \sigma \left[\tilde{U}^2 \tilde{W}_{i} \right] \right. \notag \\
            &\quad \quad -\frac{1}{18} \sum_{\substack{i \in \mathcal{I}_{n} \\ i \notin I \cup J}} \sigma \left[\tilde{U}^2 \tilde{W}_{i} \right]
            +\frac{1}{24} \sum_{\substack{i \in I \\ i \notin I \cap J}} \sigma \left[ \tilde{U}^2 \partial T_i \right] -\frac{1}{24} \sum_{\substack{i \in J \\ i \notin I \cap J}} \sigma \left[\tilde{U}^2 \partial T_{i} \right] -\frac{1}{12} \sum_{\substack{i \in \mathcal{I}_{n} \\ i \notin I \cup J}} \sigma \left[\tilde{U}^2 \partial T_{i} \right] \notag
    \end{align}
    \begin{align}
            &\quad \quad -\frac{1}{36} \sigma \left[\tilde{U}^2 \partial^2 \tilde{U} \right]
            +\frac{1}{18} \sum_{\substack{i \in I \cap J}} \sigma \left[\tilde{U} T_{i}^2 \right] -\frac{1}{6} \sum_{\substack{i \in I \cup J \\ i \notin I \cap J}} \sigma \left[\tilde{U} T_{i}^2 \right] +\frac{1}{6} \sum_{\substack{i \in \mathcal{I}_{n} \\ i \notin I \cup J}} \sigma \left[\tilde{U} T_{i}^2 \right] \notag \\
            &\quad \quad -\frac{1}{6} \sum_{\substack{i,j \in \mathcal{I}_{n} \\ i,j \notin I \cup J \\ i < j}} \sigma \left[\tilde{U} T_{i} T_{j} \right] 
            -\frac{1}{12} \sum_{\substack{i \in \mathcal{I}_{n} \\ i \notin I \cup J}} \sigma \left[\tilde{U} T_{i} \partial \tilde{U} \right]
            -\frac{1}{24} \sigma \left[\tilde{U} \left(\partial \tilde{U}\right)^2 \right]
            -\frac{1}{12} \sum_{\substack{i \in I \\ i \notin I \cap J}} \sigma \left[ \tilde{U} \partial \tilde{W}_{i} \right] \notag \\
            &\quad \quad +\frac{1}{12} \sum_{\substack{i \in \mathcal{I}_{n} \\ i \notin I \cup J}} \sigma \left[\tilde{U} \partial \tilde{W}_{i} \right] 
            -\frac{1}{24} \sum_{i \in I \cap J}\sigma \left[\tilde{U} \partial^2 T_{i} \right] +\frac{1}{8} \sum_{\substack{i \in J \\ i \notin I \cap J}} \sigma \left[\tilde{U} \partial^2 T_{i} \right]
            +\frac{1}{48} \sigma \left[\tilde{U} \partial^3 \tilde{U} \right] \notag \\
            &\quad \quad -\frac{1}{30} \sum_{i \in I \cap J} \sigma \left[T_{i} \tilde{W}_{i} \right] +\frac{1}{20} \sum_{\substack{i \in I \cup J \\ i \notin I \cap J}} \sigma \left[T_{i} \tilde{W}_{i} \right]  -\frac{1}{30} \sum_{\substack{i \in \mathcal{I}_{n}\\ i \notin I \cup J}} \sigma \left[T_{i} \tilde{W}_{i} \right] -\frac{1}{12} \sum_{\substack{i \in \mathcal{I}_{n} \\ i \notin I \cup J}} \sum_{\substack{j \in I \\ j \notin I \cap J}}\sigma \left[T_{i} \tilde{W}_{j} \right] \notag \\
            &\quad \quad -\frac{1}{12} \sum_{\substack{i \in \mathcal{I}_{n} \\ i \notin I \cup J}} \sum_{\substack{j \in J \\ j \notin I \cap J}} \sigma \left[T_{i} \tilde{W}_{j} \right] +\frac{1}{6} \sum_{\substack{i,j \in \mathcal{I}_{n} \\ i,j \notin I \cup J \\ i \neq j}} \sigma \left[T_{i} \tilde{W}_{j} \right]
            -\frac{1}{12} \sum_{i \in I \cap J} \sigma \left[T_{i} \partial T_{i} \right] +\frac{3}{8} \sum_{\substack{i \in I \\ i \notin I \cap J}} \sigma \left[T_{i} \partial T_{i} \right] \notag \\
            &\quad \quad +\frac{1}{8} \sum_{\substack{i \in J \\ i \notin I \cap J}} \sigma \left[T_{i} \partial T_{i} \right] -\frac{1}{4} \sum_{\substack{i \in \mathcal{I}_{n} \\ i \notin I \cup J}} \sigma \left[T_{i} \partial T_{i} \right] 
            -\frac{1}{8} \sum_{\substack{i \in \mathcal{I}_{n} \\ i \notin I \cup J}} \sum_{\substack{j \in I \\ j \notin I \cap J}} \sigma \left[T_{i} \partial T_{j} \right] +\frac{1}{8} \sum_{\substack{i \in \mathcal{I}_{n} \\ i \notin I \cup J}} \sum_{\substack{j \in J \\ j \notin I \cap J}} \sigma \left[T_{i} \partial T_{j} \right] \notag \\
            &\quad \quad +\frac{1}{4} \sum_{\substack{i,j \in \mathcal{I}_{n} \\ i,j \notin I \cup J \\ i \neq j}} \sigma \left[T_{i} \partial T_{j} \right]
            +\frac{1}{12} \sum_{\substack{i \in \mathcal{I}_{n} \\ i \notin I \cup J}} \sigma \left[T_{i} \partial^2 \tilde{U} \right]
            -\frac{1}{24} \sum_{\substack{i \in I \cup J \\ i \notin I \cap J}} \sigma \left[ \tilde{W}_{i} \partial \tilde{U} \right] +\frac{1}{12} \sum_{\substack{i \in \mathcal{I}_{n} \\ i \notin I \cup J}} \sigma \left[\tilde{W}_{i} \partial \tilde{U}\right] \notag \\
            &\quad \quad -\frac{1}{16} \sum_{\substack{i \in I \\ i \notin I \cap J}} \sigma \left[ \partial \tilde{U} \partial T_{i} \right] +\frac{1}{16} \sum_{\substack{i \in J \\ i \notin I \cap J}} \sigma \left[\partial \tilde{U} \partial T_{i} \right] +\frac{1}{8} \sum_{\substack{i \in \mathcal{I}_{n} \\ i \notin I \cup J}} \sigma \left[\partial \tilde{U} \partial T_{i}\right]
            +\frac{1}{24} \sigma \left[\partial \tilde{U} \partial^2 \tilde{U} \right] \notag \\
            &\quad \quad +\frac{7}{180} \sum_{i \in I \cap J} \partial^2 \tilde{W}_{i} +\frac{1}{40} \sum_{\substack{i \in I \\ i \notin I \cap J}} \partial^2 \tilde{W}_i -\frac{7}{120} \sum_{\substack{i \in J \\ i \notin I \cap J}} \partial^2 \tilde{W}_i -\frac{1}{60} \sum_{\substack{i \in \mathcal{I}_{n} \\ i \notin I \cup J}} \partial^2 \tilde{W}_{i} +\frac{1}{24} \sum_{i \in I \cap J} \partial^3 T_{i} \notag \\
            &\quad \quad \left. -\frac{1}{16} \sum_{\substack{i \in I \cup J \\ i \notin I \cap J}} \partial^3 T_{i} +\frac{1}{24} \sum_{\substack{i \in \mathcal{I}_{n} \\ i \notin I \cup J}} \partial^3 T_{i} -\frac{1}{120} \partial^4 \tilde{U}
        \right) \notag \\
        &\quad + \cdots~,
        \label{XY}
    \end{align}
    where the above OPEs are checked for $n=2,3,4,5,6$.
    \subsection{Example: \texorpdfstring{$n=2$}{}}
        \label{Example: T^{[1^n]}_{[n-1,1]}(SU(n)) for n=2}
        In \cite{Nishinaka:2025nbe}, T.Nishinaka and I gave the OPEs for $n=2$.
        Then, since I discover better expressions, I show the OPEs in this subsection.
        
        Let us set the generators consisting of $U$, $\mathcal{X}$, $\mathcal{Y}$, $T_1 -T_2$, $\left(D_1 -D_2\right) \mathcal{X}$ and $\left(D_1 -D_2\right)\mathcal{Y}$ as follows:
        \begin{align}
            J_1 :=\frac{\mathcal{X} -\mathcal{Y}}{2}~,\quad J_2 \coloneqq -\frac{i(\mathcal{X}+\mathcal{Y})}{2}~,\quad J_3 \coloneqq U~,
        \end{align}
        \begin{align}
            \hat{J}_1 \coloneqq \frac{\left(D_1 -D_2\right)\left(\mathcal{X} -\mathcal{Y}\right)}{4}~,\quad \hat{J}_2 \coloneqq \frac{-i \left(D_1 -D_2\right)\left(\mathcal{X} + \mathcal{Y}\right)}{4}~,\quad \hat{J}_3 \coloneqq \frac{T_1 -T_2}{2}~.
        \end{align}
        The stress tensor $T$ is derived from the Sugawara stress tensor for $\mathfrak{su}(2)$:
        \begin{align}
            T \coloneqq \sum_{i=1}^{3}J_i J_i~.
        \end{align}
        
        Then, we can derive the following OPEs with the generators $J_j$ and $\hat{J}_{j}$ for $j=1,2,3$.
        \begin{align}
            T(z)T(0) &\sim \frac{-3}{2z^4} +\frac{2T}{z^2}+\frac{\partial T}{z}~,\\
            T(z)J_j(0) &\sim \frac{J_j}{z^2} +\frac{\partial J_j}{z}~,\\
            T(z) \hat{J}_{j}(0) &\sim \frac{2 \hat{J}_{j}}{z^2} +\frac{\partial \hat{J}_{j}}{z}~,\\
            J_j(z) J_k(0) &\sim \frac{-\delta_{jk}}{2z^2} +\frac{i \sum_{l=1}^{3} f_{jkl} J_l}{z}~,\\
            J_j(z) \hat{J}_{k}(0) &\sim \frac{i \sum_{l=1}^{3} f_{jkl} \hat{J}_l}{z}~,\\
            \hat{J}_1 (z) \hat{J}_1 (0) &\sim \frac{-1}{2z^4} +\frac{J_1 J_1 +T}{2z^2} +\frac{2 i J_3 J_1 J_2 -i J_1 J_2 J_3 -i J_2 J_3 J_1 +2\partial T}{6z}~,\\
            \hat{J}_2 (z) \hat{J}_2 (0) &\sim \frac{-1}{2z^4} +\frac{J_2 J_2 +T}{2z^2} +\frac{-i J_3 J_1 J_2 + 2i J_1 J_2 J_3 -i J_2 J_3 J_1 +2\partial T}{6z}~,\\
            \hat{J}_3 (z) \hat{J}_3 (0) &\sim \frac{-1}{2z^4} +\frac{J_3 J_3 +T}{2z^2} +\frac{-i J_3 J_1 J_2 -i J_1 J_2 J_3 +2 i J_2 J_3 J_1 +2\partial T}{6z}~,\\
            \hat{J}_j (z) \hat{J}_k (0) &\sim \frac{i \sum_{l=1}^{3}f_{jkl}J_l}{z^3} +\frac{2J_j J_k +i \sum_{l=1}^{3} f_{jkl} \partial J_l}{4z^2} \notag \\
            &\quad +\frac{-4 J_j \partial J_k +6 \partial J_j J_k +i \sum_{l=1}^{3} f_{jkl} \partial^2 J_l +4i \sum_{l=1}^{3} f_{jkl} J_l T}{4z} \quad (j \neq k)~,
        \end{align}
        where $f_{jkl}$ are structure constants of $\mathfrak{su}(2)$ satisfying $\sum_{k,l=1}^{3} f_{jkl}f_{mkl} =2 \delta_{jm}$.
\clearpage
\section{The OPEs regarding the bosonic VOA associated with \texorpdfstring{$T_{[n-1,1^2]}^{[2,1^{n-1}]}(SU(n+1))$}{}}
    \label{OPE 2}
    Let us write down the candidates of generators of bosonic VOA associated with $T^{[2,1^{n-1}]}_{[n-1,1^2]}(SU(n+1))$.

    The candidates of generators are defined by
    \begin{align}
        U \coloneqq \frac{1}{2n-1}\left(2 \sum_{i=1}^{n-1}\mathcal{A}_{i} +\sum_{j=1}^{2}\mathcal{A}^{(j)}_{n}\right)~,\\
        H \coloneqq \frac{1}{2}\left(\mathcal{A}^{(1)}_{n} -\mathcal{A}^{(2)}_{n}\right)~,\quad E \coloneqq X_{n}^{(1)}Y_{n,(2)}~,\quad F \coloneqq X_{n}^{(2)}Y_{n,(1)}~,\\
        \mathcal{X}^{+} \coloneqq \left(\prod_{i=1}^{n-1} X_{i}\right) X_{n}^{(1)}~,\quad \mathcal{X}^{-} \coloneqq \left(\prod_{i=1}^{n-1} X_{i}\right) X_{n}^{(2)}~,\\
        \mathcal{Y}^{+} \coloneqq \left(\prod_{i=1}^{n-1} Y_i\right) Y_{n,(2)}~,\quad \mathcal{Y}^{-} \coloneqq \left(\prod_{i=1}^{n-1} Y_i\right) Y_{n,(1)}~,\\
        T_i \coloneqq \frac{1}{2}D_i \mathcal{A}_i~,\\
        \hat{H} \coloneqq T_{n}^{(1)} -T_{n}^{(2)}~,\quad \hat{E} \coloneqq E_{1} + E_{2}~,\quad \hat{F} \coloneqq F_{1} + F_{2}~,\\
        W_i \coloneqq \frac{1}{3}\sqrt{\frac{2}{3}} \left\{\mathcal{D}_i \left(\mathcal{D}_i \mathcal{A}_i\right) +\frac{1}{2}\mathcal{D}_{i} \left(\hat{\mathcal{A}}_{i} \mathcal{A}_{i}\right) 
        +\frac{1}{2}\hat{\mathcal{A}}_i \left(\mathcal{D}_i\mathcal{A}_i\right) +\hat{\mathcal{A}}_{i} \left(\hat{\mathcal{A}}_{i} \mathcal{A}_{i}\right)\right\}~,\\
        \mathcal{X}^{+}_{I} \coloneqq \prod_{i \in I}D_i \mathcal{X}^{+}~,\quad \mathcal{X}^{-}_{I} \coloneqq \prod_{i \in I}D_i \mathcal{X}^{-}~,\quad \mathcal{Y}^{+}_{I} \coloneqq \prod_{i \in I}D_i \mathcal{Y}^{+}~,\quad \mathcal{Y}^{-}_{I} \coloneqq \prod_{i \in I}D_i \mathcal{Y}^{-}~.
    \end{align}
    The stress tensor $T$ is given by
    \begin{align}
        T \equiv \sum_{i=1}^{n-1}T_i + \sum_{j=1}^{2}T_n^{(j)} -\frac{2n-1}{4}U^2 -H^2 ~.
    \end{align}
    
    Then, the OPEs with the above operators, except for the OPEs between $\mathcal{X}^{\pm}_{I}$ and $\mathcal{Y}_{J}^{\pm}$, are derived as follows.\footnote{We set $\Lambda_i \coloneqq T_iT_i -\frac{3}{10}\partial^2 T_i$.}
    \begin{align}
        T(z)T(0) &\sim -\frac{2n}{2z^4} +\frac{2T}{z^2} +\frac{\partial T}{z}~,\\
        T(z)U(0) &\sim \frac{U}{z^2} +\frac{\partial U}{z}~,\\
        T(z)H(0) &\sim \frac{H}{z^2} +\frac{\partial H}{z}~,
    \end{align}
    \begin{align}
        T(z)E(0) &\sim \frac{E}{z^2} +\frac{\partial E}{z}~,\\
        T(z)F(0) &\sim \frac{F}{z^2} +\frac{\partial F}{z}~,\\
        T(z)T_i(0) &\sim -\frac{2}{2z^4} +\frac{2T_i}{z^2} + \frac{\partial T_i}{z}~,\\
        T(z)\hat{H}(0) &\sim \frac{2 \hat{H}}{z^2} + \frac{\partial \hat{H}}{z}~,\\
        T(z)\hat{E}(0) &\sim \frac{2 \hat{E}}{z^2} +\frac{\partial \hat{E}}{z}~,\\
        T(z)\hat{F}(0) &\sim \frac{2 \hat{F}}{z^2} +\frac{\partial \hat{F}}{z}~,\\
        T(z)W_i(0) &\sim \frac{3W_i}{z^2} + \frac{\partial W_i}{z}~,\\
        T(z) \mathcal{X}^{\pm}(0) &\sim \frac{\frac{n}{2} \mathcal{X}^{\pm}}{z^2} +\frac{\partial \mathcal{X}^{\pm}}{z}~,\\
        T(z) \mathcal{X}^{\pm}_{I}(0) &\sim \frac{2 \sum_{i \in I} \mathcal{X}^{\pm}_{I \setminus \{i\}}}{z^3} +\frac{\left(\frac{n}{2}+|I|\right)\mathcal{X}^{\pm}_{I}}{z^2} +\frac{\partial \mathcal{X}^{\pm}_{I}}{z}~,\\
        T(z) \mathcal{Y}^{\pm}(0) &\sim \frac{\frac{n}{2} \mathcal{Y}^{\pm}}{z^2} +\frac{\partial \mathcal{Y}^{\pm}}{z}~,\\
        T(z) \mathcal{Y}^{\pm}_{I}(0) &\sim -\frac{2 \sum_{i \in I} \mathcal{Y}^{\pm}_{I \setminus \{i\}}}{z^3} +\frac{\left(\frac{n}{2}+|I|\right)\mathcal{Y}^{\pm}_{I}}{z^2} +\frac{\partial \mathcal{Y}^{\pm}_{I}}{z}~, \\
        U(z) U(0) &\sim -\frac{2}{2n-1}\frac{1}{z^2}~,\\
        U(z) H(0) &\sim 0~,\\
        U(z) E(0) &\sim 0~,\\
        U(z) F(0) &\sim 0~,\\
        U(z) T_i(0) &\sim 0~,\\
        U(z) \hat{H} (0) &\sim 0~,\\
        U(z) \hat{E}(0) &\sim 0~,\\
        U(z) \hat{F}(0) &\sim 0~,\\
        U(z) W_i(0) &\sim 0~,\\
        U(z) \mathcal{X}^{\pm}_{I} (0) &\sim \frac{\mathcal{X}^{\pm}_{I}}{z}~,\\
        U(z) \mathcal{Y}^{\pm}_{I} (0) &\sim -\frac{\mathcal{Y}^{\pm}_{I}}{z}~,\\
        H(z) H(0) &\sim -\frac{1}{2z^2}~,\\
        H(z) E(0) &\sim \frac{E}{z}~,
    \end{align}
    \begin{align}
        H(z) F(0) &\sim -\frac{F}{z}~,\\
        H(z) T_i(0) &\sim 0~,\\
        H(z) \hat{H}(0) &\sim 0~,\\
        H(z) \hat{E}(0) &\sim \frac{\hat{E}}{z}~,\\
        H(z) \hat{F}(0) &\sim -\frac{\hat{F}}{z}~,\\
        H(z) W_i(0) &\sim 0~,\\
        H(z) \mathcal{X}^{+}_{I} (0) &\sim \frac{\mathcal{X}^{+}_{I}}{2z}~,\\
        H(z) \mathcal{X}^{-}_{I} (0) &\sim -\frac{\mathcal{X}^{-}_{I}}{2z}~,\\
        H(z) \mathcal{Y}^{+}_{I} (0) &\sim \frac{\mathcal{Y}^{+}_{I}}{2z}~,\\
        H(z) \mathcal{Y}^{-}_{I} (0) &\sim -\frac{\mathcal{Y}^{-}_{I}}{2z}~,\\
        E(z) E(0) &\sim 0~,\\
        E(z) F(0) &\sim -\frac{1}{z^2} +\frac{2H}{z}~,\\
        E(z) T_i(0) &\sim 0~,\\
        E(z) \hat{H}(0) &\sim -\frac{\hat{E}}{z}~,\\
        E(z) \hat{E}(0) &\sim 0~,\\
        E(z) \hat{F}(0) &\sim \frac{2 \hat{H}}{z}~,\\
        E(z) W_i(0) &\sim 0~,\\
        E(z) \mathcal{X}^{+}_{I} (0) &\sim 0~,\\
        E(z) \mathcal{X}^{-}_{I} (0) &\sim -\frac{\mathcal{X}^{+}_{I}}{z}~,\\
        E(z) \mathcal{Y}^{+}_{I} (0) &\sim 0~,\\
        E(z) \mathcal{Y}^{-}_{I} (0) &\sim \frac{\mathcal{Y}^{+}_{I}}{z}~,\\
        F(z) F(0) &\sim 0~,\\
        F(z) T_i(0) &\sim 0~,\\
        F(z) \hat{H}(0) &\sim \frac{\hat{F}}{z}~,\\
        F(z) \hat{E} (0) &\sim -\frac{2 \hat{H}}{z}~,
    \end{align}
    \begin{align}
        F(z) \hat{F}(0) &\sim 0~,\\
        F(z) W_i(0) &\sim 0~,\\
        F(z) \mathcal{X}^{+}_{I} (0) &\sim -\frac{\mathcal{X}^{-}_{I}}{z}~,\\
        F(z) \mathcal{X}^{-}_{I} (0) &\sim 0~,\\
        F(z) \mathcal{Y}^{+}_{I} (0) &\sim \frac{\mathcal{Y}^{-}_{I}}{z}~,\\
        F(z) \mathcal{Y}^{-}_{I} (0) &\sim 0~,\\
        T_i(z) T_j(0) &\sim \delta_{ij}\left\{\frac{-2}{2z^4} +\frac{2 T_i}{z^2} +\frac{\partial T_i}{z}\right\}~,\\
        T_i(z) \hat{H}(0) &\sim 0~,\\
        T_i(z) \hat{E}(0) &\sim 0~,\\
        T_i(z) \hat{F}(0) &\sim 0~,\\
        T_i(z) W_j(0) &\sim \delta_{ij} \left\{\frac{3 W_i}{z^2} +\frac{\partial W_i}{z}\right\}~,\\
        T_i(z) \mathcal{X}^{\pm}_{I} (0) &\sim \frac{\mathcal{X}^{\pm}_{I}}{z^2} +\frac{\mathcal{X}^{\pm}_{I \cup \{i\}}}{z} &\quad (i \notin I)~,\\
        T_i(z) \mathcal{X}^{\pm}_{I} (0) &\sim \frac{2 \mathcal{X}^{\pm}_{I \setminus \{i\}}}{z^3}+\frac{2\mathcal{X}^{\pm}_{I}}{z^2} +\frac{2T_i \mathcal{X}^{\pm}_{I \setminus \{i\}}}{z} &\quad (i \in I)~,\\
        T_i(z) \mathcal{Y}^{\pm}_{I} (0) &\sim \frac{\mathcal{Y}^{\pm}_{I}}{z^2} -\frac{\mathcal{Y}^{\pm}_{I \cup \{i\}}}{z} &\quad (i \notin I)~,\\
        T_i(z) \mathcal{Y}^{\pm}_{I} (0) &\sim -\frac{2 \mathcal{Y}^{\pm}_{I \setminus \{i\}}}{z^3}+\frac{2\mathcal{Y}^{\pm}_{I}}{z^2} -\frac{2T_i \mathcal{Y}^{\pm}_{I \setminus \{i\}}}{z} &\quad (i \in I)~, \\
        \hat{H}(z) \hat{H}(0) &\sim \frac{-2}{z^4} + \frac{4H^2 +EF +FE}{z^2} +\frac{4 H \partial H + E \partial F + F \partial E}{z}~,\\
        \hat{H}(z) \hat{E}(0) &\sim \frac{4 E}{z^3} +\frac{2 HE +\partial E}{z^2} \notag \\
        &\quad \quad +\frac{4HHE +4 EEF -12 H\partial E +2 E \partial H +6 \partial^2 E}{z}~, \\
        \hat{H}(z) \hat{F}(0) &\sim -\frac{4 F}{z^3} +\frac{2 HF -\partial F}{z^2} \notag \\
        &\quad \quad -\frac{4 HHF +4 FFE +12 H\partial F -2 F \partial H +6 \partial^2 F}{z} ~,\\
        \hat{H}(z) W_i(0) &\sim 0~,\\
        \hat{H}(z) \mathcal{X}^{+}_{I}(0) &\sim \frac{\mathcal{X}^{+}_{I}}{z^2} +\frac{2H \mathcal{X}^{+}_{I}-E \mathcal{X}^{-}_{I}}{z}~,\\
        \hat{H}(z) \mathcal{X}^{-}_{I}(0) &\sim -\frac{\mathcal{X}^{-}_{I}}{z^2} +\frac{2H \mathcal{X}^{-}_{I} +F \mathcal{X}^{+}_{I}}{z}~,
    \end{align}
    \begin{align}
        \hat{H}(z) \mathcal{Y}^{+}_{I}(0) &\sim -\frac{\mathcal{Y}^{+}_{I}}{z^2} -\frac{2H \mathcal{Y}^{+}_{I} + E \mathcal{Y}^{-}_{I}}{z}~,\\
        \hat{H}(z) \mathcal{Y}^{-}_{I}(0) &\sim \frac{\mathcal{Y}^{-}_{I}}{z^2} -\frac{2H \mathcal{Y}^{-}_{I} - F \mathcal{Y}^{+}_{I}}{z}~,\\
        \hat{E}(z) \hat{E}(0) &\sim \frac{2E^2}{z^2} +\frac{2 E\partial E}{z}~,\\
        \hat{E}(z) \hat{F}(0) &\sim \frac{-4}{z^4} +\frac{8 H}{z^3} +\frac{4 H^2 +6 EF -2 \partial H}{z^2} \notag \\
        &\quad +\frac{8 H^3 -4 H \partial H +8 H E F -2 E \partial F +8 F \partial E + 8 \partial^2 H}{z}~,\\
        \hat{E}(z) W_i(0) &\sim 0~,\\
        \hat{E}(z) \mathcal{X}^{+}_{I} (0) &\sim \frac{E\mathcal{X}^{+}_{I}}{z}~,\\
        \hat{E}(z) \mathcal{X}^{-}_{I} (0) &\sim -\frac{2 \mathcal{X}^{+}_{I}}{z^2} +\frac{-2 H \mathcal{X}^{+}_{I} +3 E \mathcal{X}^{-}_{I}}{z}~,\\
        \hat{E}(z) \mathcal{Y}^{+}_{I} (0) &\sim -\frac{E \mathcal{Y}^{+}_{I}}{z}~,\\
        \hat{E}(z) \mathcal{Y}^{-}_{I} (0) &\sim -\frac{2 \mathcal{Y}^{+}_{I}}{z^2} + \frac{-2 H \mathcal{Y}^{+}_{I} -3 E \mathcal{Y}^{-}_{I}}{z}~,\\
        \hat{F}(z) \hat{F}(0) &\sim \frac{2F^2}{z^2} + \frac{2 F \partial F}{z}~,\\
        \hat{F}(z) W_i(0) &\sim 0~,\\
        \hat{F}(z) \mathcal{X}^{+}_{I} (0) &\sim -\frac{2 \mathcal{X}^{-}_{I}}{z^2} +\frac{2 H \mathcal{X}^{-}_{I} +3 F \mathcal{X}^{+}_{I}}{z}~,\\
        \hat{F}(z) \mathcal{X}^{-}_{I} (0) &\sim \frac{F \mathcal{X}^{-}_{I}}{z}~,\\
        \hat{F}(z) \mathcal{Y}^{+}_{I} (0) &\sim -\frac{2 \mathcal{Y}^{-}_{I}}{z^2} + \frac{2 H \mathcal{Y}^{-}_{I} -3 F \mathcal{Y}^{+}_{I}}{z}~,\\
        \hat{F}(z) \mathcal{Y}^{-}_{I} (0) &\sim -\frac{F \mathcal{Y}^{-}_{I}}{z}~,\\
        W_i(z) W_j(0) &\sim \delta_{ij} \left\{\frac{-2}{3 z^6} +\frac{2T_{i}}{z^4} +\frac{\partial T_{i}}{z^3} \right. \notag\\
        &\quad +\frac{1}{z^2}\left\{\frac{32}{22+5\times (-2)}\Lambda_{i} +\frac{3}{10}\partial^2 T_{i} \right\} \notag\\
        &\quad \left.+\frac{1}{z}\left\{\frac{16}{22+5\times(-2)}\partial \Lambda_{i} +\frac{1}{15}\partial^3 T_{i} \right\}\right\} ~,\\
        W_i(z) \mathcal{X}^{\pm}_{I} (0) &\sim \sqrt{\frac{2}{3}} \left\{\frac{\mathcal{X}^{\pm}_{I}}{z^3} +\frac{3\mathcal{X}^{\pm}_{I \cup \{i\}}}{2 z^2} +\frac{2 T_{i}\mathcal{X}^{\pm}_{I}}{z} \right\} &\quad (i \notin I)~,
    \end{align}
    \begin{align}
        W_i(z) \mathcal{X}^{\pm}_{I} (0) &\sim \sqrt{\frac{2}{3}}\left\{\frac{3\mathcal{X}^{\pm}_{I\setminus \{i\}}}{z^4} +\frac{4\mathcal{X}^{\pm}_{I}}{z^3} +\frac{5T_{i} \mathcal{X}^{\pm}_{I \setminus \{i\}}}{z^2} \right. \notag\\
        &\quad \left.+\frac{1}{z}\left(3\sqrt{\frac{3}{2}}W_{i} -\frac{3}{2}\partial T_{i}\right)\mathcal{X}^{\pm}_{I \setminus \{i\}}\right\} &\quad (i \in I)~,\\
        W_i(z) \mathcal{Y}^{\pm}_{I} (0) &\sim \sqrt{\frac{2}{3}} \left\{-\frac{\mathcal{Y}^{\pm}_{I}}{z^3} +\frac{3\mathcal{Y}^{\pm}_{I \cup \{i\}}}{2 z^2} -\frac{2 T_{i}\mathcal{Y}^{\pm}_{I}}{z} \right\} &\quad (i \notin I)~,\\
        W_i(z) \mathcal{Y}^{\pm}_{I} (0) &\sim \sqrt{\frac{2}{3}}\left\{\frac{3\mathcal{Y}^{\pm}_{I\setminus \{i\}}}{z^4} -\frac{4\mathcal{Y}^{\pm}_{I}}{z^3} +\frac{5T_{i} \mathcal{Y}^{\pm}_{I \setminus \{i\}}}{z^2} \right. \notag\\
        &\quad \left.+\frac{1}{z}\left(-3\sqrt{\frac{3}{2}}W_{i} -\frac{3}{2}\partial T_{i}\right)\mathcal{Y}^{\pm}_{I \setminus \{i\}}\right\} &\quad (i \in I)~,\\
        \mathcal{X}^{\pm}_{I}(z) \mathcal{X}^{\pm}_{I}(0) &\sim 0~, \\
        \mathcal{X}^{\pm}_{I}(z) \mathcal{X}^{\pm}_{J}(0)
        &\sim \sum_{p=1}^{|L| + |M|} \sum_{(a,b) \in J_{LM,p}} \sum_{i=1}^{|L_a|} \sum_{j=1}^{|M_b|} \frac{1}{z^p} \frac{(-1)^{a-p}}{(a+b-p)!} \binom{a+b-1}{p-1} \notag\\
        &\quad \times \left(\partial^{a+b-p} \mathcal{X}^{\pm}_{K \cup \left(L_a \right)} \right)\mathcal{X}^{\pm}_{K \cup \left(M_b\right)_j} &\quad (I \neq J)~, \\
        \mathcal{X}^{\pm}_{I}(z) \mathcal{X}^{\mp}_{I}(0) &\sim 0~, \\
        \mathcal{X}^{\pm}_{I}(z) \mathcal{X}^{\mp}_{J}(0)
        &\sim \sum_{p=1}^{|L| + |M|} \sum_{(a,b) \in J_{LM,p}} \sum_{i=1}^{|L_a|} \sum_{j=1}^{|M_b|} \frac{1}{z^p} \frac{(-1)^{a-p}}{(a+b-p)!} \binom{a+b-1}{p-1} \notag\\
        &\quad \times \left(\partial^{a+b-p} \mathcal{X}^{\pm}_{K \cup \left(L_a \right)} \right)\mathcal{X}^{\mp}_{K \cup \left(M_b\right)_j} &\quad (I \neq J)~, \\
        \mathcal{Y}^{\pm}_{I}(z) \mathcal{Y}^{\pm}_{I}(0) &\sim 0~,\\
        \mathcal{Y}^{\pm}_{I}(z) \mathcal{Y}^{\pm}_{J}(0)
        &\sim \sum_{p=1}^{|L| + |M|} \sum_{(a,b) \in J_{LM,p}} \sum_{i=1}^{|L_a|} \sum_{j=1}^{|M_b|} \frac{1}{z^p} \frac{(-1)^{b-p}}{(a+b-p)!} \binom{a+b-1}{p-1} \notag\\
        &\quad \times \left(\partial^{a+b-p} \mathcal{Y}^{\pm}_{K \cup \left(L_a \right)_i} \right) \mathcal{Y}^{\pm}_{K \cup \left(M_b\right)_j} &\quad (I \neq J)~,\\
         \mathcal{Y}^{\pm}_{I}(z) \mathcal{Y}^{\mp}_{I}(0) &\sim 0~,\\
        \mathcal{Y}^{\pm}_{I}(z) \mathcal{Y}^{\mp}_{J}(0)
        &\sim \sum_{p=1}^{|L| + |M|} \sum_{(a,b) \in J_{LM,p}} \sum_{i=1}^{|L_a|} \sum_{j=1}^{|M_b|} \frac{1}{z^p} \frac{(-1)^{b-p}}{(a+b-p)!} \binom{a+b-1}{p-1} \notag\\
        &\quad \times \left(\partial^{a+b-p} \mathcal{Y}^{\pm}_{K \cup \left(L_a \right)_i} \right) \mathcal{Y}^{\mp}_{K \cup \left(M_b\right)_j} &\quad (I \neq J)~,
    \end{align}
    where $K = I \cap J$, $L = I \setminus J$, $M = J \setminus K$, and $\mathcal{J}_{LM,p}$ is defined by (\ref{J LM p}).

    For the OPEs between $\mathcal{X}_{I}^{\pm}$ and $\mathcal{Y}_{J}^{\pm}$, it is too difficult and complex to calculate them.
    However, we also conjecture that these OPEs can be written by using $U$, $H$, $E$, $F$,
    $T_i$, $\hat{H}$, $\hat{E}$, $\hat{F}$ and $W_i$.
    In order to justify this conjecture, we compute some OPEs. For convenience, let us define the following symmetric product (\ref{symmetric product}) and constant $\zeta_{I,J}$ by (\ref{constant zeta}).
    $\tilde{U}$ and $\tilde{W}_i$ are also defined by
    \begin{align}
        \tilde{U} &\coloneqq \frac{2n-1}{2} U~, \\
        \tilde{W}_i &\coloneqq 3\sqrt{\frac{3}{2}}W_i~.
    \end{align}
    Then, we can partially derive the singular part of the OPE between $\mathcal{X}_{I}$ and $\mathcal{Y}_{J}$ as follows.
    \begin{align}
        \mathcal{X}^{+}_{I}(z) \mathcal{Y}^{+}_{J}(0) &\sim \frac{\zeta_{I,J} E}{z^{n+|I|+|J|-1}} +\frac{\zeta_{I,J}}{z^{n+|I|+|J|-2}} \left(
        -\frac{1}{2} \sigma \left[\tilde{U} E \right] +\frac{1}{2} \hat{E} +\frac{1}{2} \partial E
        \right)\notag \\
        &\quad + \frac{\zeta_{I,J}}{z^{n+|I|+|J|-3}} \left(
            \frac{1}{6} \sigma \left[\tilde{U}^2 E \right] -\frac{1}{4} \sigma \left[\tilde{U} \hat{E} \right] -\frac{1}{4} \sigma \left[\tilde{U} \partial E \right]
            +\frac{1}{6} \sigma \left[H^2 E \right] -\frac{1}{4} \sigma \left[H \hat{E} \right] \right. \notag \\
            &\quad \quad -\frac{1}{2} \sigma \left[H \partial E \right] +\frac{1}{6} \sigma \left[E^2 F \right] -\frac{1}{2} \sum_{\substack{i \in \mathcal{I}_{n-1} \\ i \notin I \cup J}} \sigma \left[E T_{i} \right] +\frac{1}{4} \sigma \left[E \hat{H} \right] 
            -\frac{1}{4} \sigma \left[E \partial \tilde{U} \right] \notag \\
            &\quad \quad \left. +\frac{1}{2} \sigma \left[E \partial H \right] +\frac{1}{3} \partial^2 E
        \right) \notag \\
        &\quad + \frac{\zeta_{I,J}}{z^{n+|I|+|J|-4}} \left(
            -\frac{1}{24} \sigma \left[\tilde{U}^3 E \right] +\frac{1}{12} \sigma \left[\tilde{U}^2 \hat{E} \right] +\frac{1}{12} \sigma \left[\tilde{U}^2 \partial E \right] -\frac{1}{24} \sigma \left[\tilde{U} H^2 E \right] \right. \notag \\
            &\quad \quad +\frac{1}{12} \sigma \left[\tilde{U} H \hat{E} \right] +\frac{1}{6} \sigma \left[\tilde{U} H \partial E \right] -\frac{1}{24} \sigma \left[\tilde{U} E^2 F \right] +\frac{1}{6} \sum_{\substack{i \in \mathcal{I}_{n-1} \\ i \notin I \cup J}} \sigma \left[\tilde{U} E T_{i} \right] \notag \\
            &\quad \quad -\frac{1}{12} \sigma \left[\tilde{U} E \hat{H} \right] +\frac{1}{12} \sigma \left[\tilde{U} E \partial \tilde{U} \right] -\frac{1}{6} \sigma \left[\tilde{U} E \partial H \right] -\frac{1}{6} \sigma \left[\tilde{U} \partial^2 E \right]
            -\frac{3}{2} \sigma \left[H^3 E \right] \notag \\
            &\quad \quad -\frac{1}{108} \sigma \left[H^2 \hat{E} \right] +\frac{37}{12} \sigma \left[H^2 \partial E \right] -\frac{1}{2} \sigma \left[H E^2 F \right] +\frac{1}{54} \sigma \left[H E \hat{H} \right]
            -\frac{17}{12} \sigma \left[H E \partial H \right] \notag \\
            &\quad \quad -\frac{2}{9} \sigma \left[H \partial \hat{E} \right] -\frac{1}{2} \sigma \left[H \partial^2 E \right]
            -\frac{1}{18} \sigma \left[E^2 \hat{F} \right] +\frac{19}{12} \sigma \left[E^2 \partial F \right] +\frac{1}{24} \sigma \left[E F \hat{E} \right] \notag \\
            &\quad \quad -\frac{2}{3} \sigma \left[E F \partial E \right] +\frac{1}{12} \sum_{\substack{i \in I \cup J \\ i \notin I \cap J}} \sigma \left[E \tilde{W}_{i} \right] -\frac{1}{6} \sum_{\substack{i \in \mathcal{I}_{n-1} \\ i \notin I \cup J}} \sigma \left[E \tilde{W}_{i} \right]
            +\frac{1}{8} \sum_{\substack{i \in I \\ i \notin I \cap J}} \sigma \left[E \partial T_{i} \right] \notag \\
            &\quad \quad -\frac{1}{8} \sum_{\substack{i \in J \\ i \notin I \cap J}} \sigma \left[E \partial T_{i} \right] -\frac{1}{4} \sum_{\substack{i \in \mathcal{I}_{n-1} \\ i \notin I \cup J}} \sigma \left[E \partial T_{i} \right] -\frac{1}{12} \sigma \left[E \partial^2 \tilde{U} \right] -\frac{1}{4} \sum_{\substack{i \in \mathcal{I}_{n-1} \\ i \notin I \cup J}} \sigma \left[T_{i} \hat{E} \right] \notag \\
            &\quad \quad -\frac{1}{4} \sum_{\substack{i \in \mathcal{I}_{n-1} \\ i \notin I \cup J}} \sigma \left[T_{i} \partial E \right] +\frac{3}{4} \sigma \left[\hat{H} \hat{E} \right] +\frac{1}{9} \sigma \left[\hat{H} \partial E \right]
            -\frac{1}{8} \sigma \left[\hat{E} \partial \tilde{U} \right] -\frac{1}{8} \sigma \left[\partial \tilde{U} \partial E \right] \notag \\
            &\quad \quad \left. -\frac{15}{4} \sigma \left[\partial H \partial E \right]
            +\frac{1}{8} \partial^3 E
        \right) \notag
    \end{align}
    \begin{align}
        &\quad + \frac{\zeta_{I,J}}{z^{n+|I|+|J|-5}} \left(
            \frac{1}{120} \sigma \left[\tilde{U}^4 E \right] -\frac{1}{48} \sigma \left[\tilde{U}^3 \hat{E} \right] -\frac{1}{48} \sigma \left[\tilde{U}^3 \partial E \right]  +\frac{1}{120} \sigma \left[\tilde{U}^2 H^2 E \right] \right. \notag \\
            &\quad \quad -\frac{1}{48} \sigma \left[\tilde{U}^2 H \hat{E} \right] -\frac{1}{24} \sigma \left[\tilde{U}^2 H \partial E \right] 
            +\frac{1}{120} \sigma \left[\tilde{U}^2 E^2 F \right] -\frac{1}{24} \sum_{\substack{i \in \mathcal{I}_{n-1} \\ i \notin I \cup J}} \sigma \left[\tilde{U}^2 E T_{i} \right] \notag \\
            &\quad \quad +\frac{1}{48} \sigma \left[\tilde{U}^2 E \hat{H} \right] -\frac{1}{48} \sigma \left[\tilde{U}^2 E \partial \tilde{U} \right] +\frac{1}{24} \sigma \left[\tilde{U}^2 E \partial H \right] +\frac{1}{18} \sigma \left[\tilde{U}^2 \partial^2 E \right] \notag \\
            &\quad \quad +\frac{3}{10} \sigma \left[\tilde{U} H^3 E \right]  +\frac{1}{432} \sigma \left[\tilde{U} H^2 \hat{E} \right] -\frac{37}{48} \sigma \left[\tilde{U} H^2 \partial E \right] +\frac{1}{10} \sigma \left[\tilde{U} H E^2 F \right] \notag \\
            &\quad \quad -\frac{1}{216} \sigma \left[\tilde{U} H E \hat{H} \right] 
            +\frac{17}{48} \sigma \left[\tilde{U} H E \partial H \right] +\frac{2}{27} \sigma \left[\tilde{U} H \partial \hat{E} \right] +\frac{1}{6} \sigma \left[\tilde{U} H \partial^2 E \right] +\frac{1}{72} \sigma \left[\tilde{U} E^2 \hat{F} \right] \notag \\
            &\quad \quad -\frac{19}{48} \sigma \left[\tilde{U} E^2 \partial F \right]
            -\frac{1}{96} \sigma \left[\tilde{U} E F \hat{E} \right] +\frac{1}{6} \sigma \left[\tilde{U} E F \partial E \right] -\frac{1}{36} \sum_{\substack{i \in I \cup J \\ i \notin I \cap J}} \sigma\left[\tilde{U} E \tilde{W}_{i} \right] \notag \\
            &\quad \quad +\frac{1}{18} \sum_{\substack{i \in \mathcal{I}_{n-1} \\ i \notin I \cup J}} \sigma \left[\tilde{U} E \tilde{W}_{i} \right] -\frac{1}{24} \sum_{\substack{i \in I \\ i \notin I \cap J}} \sigma \left[\tilde{U} E \partial T_{i} \right] +\frac{1}{24} \sum_{\substack{i \in J \\ i \notin I \cap J}} \sigma \left[\tilde{U} E \partial T_{i} \right] +\frac{1}{12} \sum_{\substack{i \in \mathcal{I}_{n-1} \\ i \notin I \cup J}} \sigma \left[\tilde{U} E \partial T_{i} \right] \notag \\
            &\quad \quad +\frac{1}{36} \sigma \left[\tilde{U} E \partial^2 \tilde{U} \right] +\frac{1}{12} \sum_{\substack{i \in \mathcal{I}_{n-1} \\ i \notin I \cup J}} \sigma \left[\tilde{U} T_{i} \hat{E} \right] +\frac{1}{12} \sum_{\substack{i \in \mathcal{I}_{n-1}\\ i \notin I \cup J}} \sigma \left[\tilde{U} T_{i} \partial E \right]
            -\frac{1}{4} \sigma \left[\tilde{U} \hat{H} \hat{E} \right] -\frac{1}{27} \sigma \left[\tilde{U} \hat{H} \partial E \right] \notag \\
            &\quad \quad +\frac{1}{24} \sigma \left[\tilde{U} \hat{E} \partial \tilde{U} \right] +\frac{1}{24} \sigma \left[\tilde{U} \partial \tilde{U} \partial E \right]
            +\frac{5}{4} \sigma \left[\tilde{U} \partial H \partial E \right] -\frac{1}{16} \sigma \left[\tilde{U} \partial^3 E \right] +\frac{352933}{56760} \sigma \left[H^4 E \right] \notag \\
            &\quad \quad +\frac{11}{432} \sigma \left[H^3 \hat{E} \right] -\frac{715675}{68112} \sigma \left[H^3 \partial E \right] +\frac{127403}{170280} \sigma \left[H^2 E^2 F \right] -\frac{1}{24} \sum_{\substack{i \in \mathcal{I}_{n-1} \\ i \notin I \cup J}} \sigma \left[H^2 E T_{i} \right] \notag \\
            &\quad \quad -\frac{11}{1296} \sigma \left[H^2 E \hat{H} \right]
            -\frac{1}{48} \sigma \left[H^2 E \partial \tilde{U} \right] +\frac{715675}{204336} \sigma \left[H^2 E \partial H \right] +\frac{19}{324} \sigma \left[H^2 \partial \hat{E} \right] \notag \\
            &\quad \quad -\frac{35201}{51084} \sigma \left[H^2 \partial^2 E \right] -\frac{5}{324} \sigma \left[H E^2 \hat{F} \right] -\frac{112765}{25542} \sigma \left[H E^2 \partial F \right] +\frac{17}{864} \sigma \left[H E F \hat{E} \right] \notag \\
            &\quad \quad +\frac{362855}{136224} \sigma \left[H E F \partial E \right]
            -\frac{5}{54} \sigma \left[H E \partial \hat{H} \right] -\frac{112765}{102168} \sigma \left[H E \partial^2 H \right]
            +\frac{1}{12} \sum_{\substack{i \in \mathcal{I}_{n-1} \\ i \notin I \cup J}} \sigma \left[H T_{i} \hat{E} \right] \notag \\
            &\quad \quad +\frac{1}{6}\sum_{\substack{i \in \mathcal{I}_{n-1} \\ i \notin I \cup J}} \sigma \left[H T_{i} \partial E \right]
            -\frac{20843}{25542} \sigma \left[H \hat{H} \hat{E} \right] -\frac{1}{72} \sigma \left[H \hat{H} \partial E \right] +\frac{1}{24} \sigma \left[H \hat{E} \partial \tilde{U} \right] +\frac{191}{1944} \sigma \left[H \hat{E} \partial H \right] \notag \\
            &\quad \quad +\frac{1}{12} \sigma \left[H \partial \tilde{U} \partial E \right] +\frac{1876463}{204336} \sigma \left[H \partial H \partial E \right] -\frac{32709}{37840} \sigma \left[E^3 F^2 \right] -\frac{1}{24} \sum_{\substack{i \in \mathcal{I}_{n-1} \\ i \notin I \cup J}} \sigma \left[E^2 F T_{i} \right] \notag \\
            &\quad \quad -\frac{31}{1296} \sigma \left[E^2 F \hat{H} \right] -\frac{1}{48} \sigma \left[E^2 F \partial \tilde{U} \right] -\frac{186445}{204336} \sigma \left[E^2 F \partial H \right]
            +\frac{49}{648} \sigma \left[E F \partial \hat{E} \right] +\frac{1569}{7568} \sigma \left[E F \partial^2 E \right] \notag
    \end{align}
    \begin{align}
            &\quad \quad -\frac{1}{18} \sum_{i \in I \cap J} \sigma \left[E T_{i}^2 \right] +\frac{1}{6} \sum_{\substack{i \in I \cup J \\ i \notin I \cap J}} \sigma \left[E T_{i}^2 \right] -\frac{1}{6} \sum_{\substack{i \in \mathcal{I}_{n-1} \\ i \notin I \cup J}} \sigma \left[E T_{i}^2 \right]
            +\frac{1}{6} \sum_{\substack{i,j \in \mathcal{I}_{n-1} \\ i,j \notin I \cup J \\ i < j}} \sigma \left[E T_{i} T_{j}\right] \notag \\
            &\quad \quad -\frac{1}{12} \sum_{\substack{i \in \mathcal{I}_{n-1} \\ i \notin I \cup J}} \sigma \left[E T_{i} \hat{H} \right] +\frac{1}{12} \sum_{\substack{i \in \mathcal{I}_{n-1} \\ i \notin I \cup J}} ETDU[i, x, y] \sigma \left[E T_{i} \partial \tilde{U} \right] 
            -\frac{1}{6} \sum_{\substack{i \in \mathcal{I}_{n-1} \\ i \notin I \cup J}} \sigma \left[E T_{i} \partial H \right] \notag \\
            &\quad \quad -\frac{97601}{102168} \sigma \left[E \hat{H}^2 \right] -\frac{1}{24} \sigma \left[E \hat{H} \partial \tilde{U} \right] -\frac{71}{1944} \sigma \left[E \hat{H} \partial H \right]
            +\frac{44557}{204336} \sigma \left[E \hat{E} \hat{F} \right] -\frac{109}{1944} \sigma \left[E \hat{F} \partial E \right] \notag \\
            &\quad \quad +\frac{1}{24} \sigma \left[E \left(\partial \tilde{U}\right)^2 \right] -\frac{1}{12} \sigma \left[E \partial \tilde{U} \partial H \right]
            +\frac{1114447}{102168} \sigma \left[E \left(\partial H \right)^2 \right] -\frac{856303}{408672} \sigma \left[E \partial E \partial F \right] \notag \\
            &\quad \quad +\frac{1}{12} \sum_{\substack{i \in I \\ i \notin I \cap J}} \sigma \left[E \partial \tilde{W}_{i} \right] -\frac{1}{12} \sum_{\substack{i \in \mathcal{I}_{n-1} \\ i \notin I \cup J}} \sigma \left[E \partial \tilde{W}_{i} \right] +\frac{1}{24} \sum_{i \in I \cap J} \sigma \left[E \partial^2 T_{i} \right] 
            -\frac{1}{8} \sum_{\substack{i \in J \\ i \notin I \cap J}} \sigma \left[E \partial^2 T_{i} \right] \notag \\
            &\quad \quad -\frac{1}{48} \sigma \left[E \partial^3 \tilde{U} \right] +\frac{29393}{102168} \sigma \left[F \hat{E}^2 \right]
            +\frac{11}{972} \sigma \left[F \hat{E} \partial E \right] -\frac{604367}{204336} \sigma \left[F \partial E \partial E \right] -\frac{1}{6} \sum_{\substack{i \in \mathcal{I}_{n-1} \\ i \notin I \cup J}} \sigma \left[T_{i} \partial^2 E \right] \notag \\
            &\quad \quad +\frac{5801}{45408} \sigma \left[\hat{H} \partial \hat{E} \right]
            +\frac{1}{24} \sum_{\substack{i \in I \cup J \\ i \notin I \cap J}} \sigma \left[\hat{E} \tilde{W}_{i} \right] -\frac{1}{12} \sum_{\substack{i \in \mathcal{I}_{n-1} \\ i \notin I \cup J}} \sigma \left[\hat{E} \tilde{W}_{i} \right] +\frac{1}{16} \sum_{\substack{i \in I \\ i \notin I \cap J}} \sigma \left[\hat{E} \partial T_{i} \right] \notag \\
            &\quad \quad -\frac{1}{16} \sum_{\substack{i \in J \\ i \notin I \cap J}} \sigma \left[\hat{E} \partial T_{i} \right] -\frac{1}{8} \sum_{\substack{i \in \mathcal{I}_{n-1} \\ i \notin I \cup J}} \sigma \left[\hat{E} \partial T_{i} \right]
            -\frac{5801}{45408} \sigma \left[\hat{E} \partial \hat{H} \right] -\frac{1}{24} \sigma \left[\hat{E} \partial^2 \tilde{U} \right] +\frac{1}{24} \sum_{\substack{i \in I \cup J \\ i \notin I \cap J}} \sigma \left[ \tilde{W}_{i} \partial E \right] \notag \\
            &\quad \quad -\frac{1}{12} \sum_{\substack{i \in \mathcal{I}_{n-1}\\ i \notin I \cup J}} \sigma \left[\tilde{W}_{i} \partial E \right] -\frac{1}{12} \sigma \left[\partial \tilde{U} \partial^2 E \right]
            +\frac{1}{16} \sum_{\substack{i \in I \\ i \notin I \cap J}} \sigma \left[\partial E \partial T_{i} \right] -\frac{1}{16} \sum_{\substack{i \in J \\ i \notin I \cap J}} \sigma \left[ \partial E \partial T_{i} \right] \notag \\
            &\quad \quad \left. -\frac{1}{8} \sum_{\substack{i \in \mathcal{I}_{n-1}\\ i \notin I \cup J}} \sigma \left[\partial E \partial T_{i} \right] -\frac{1}{24} \sigma \left[\partial E \partial^2 \tilde{U} \right]
        \right) \notag \\
        &\quad + \cdots ~,
    \end{align}

    \begin{align}
        \mathcal{X}^{+}_{I}(z) \mathcal{Y}^{-}_{J}(0) &\sim \frac{\zeta_{I,J}}{z^{n+|I|+|J|}} + \frac{\zeta_{I,J}}{z^{n+|I|+|J|-1}}\left(-\tilde{U} -H\right) \notag \\
        &\quad +\frac{\zeta_{I,J}}{z^{n+|I|+|J|-2}} \left(
            \frac{1}{2} \tilde{U}^2 +\frac{1}{2} \sigma \left[\tilde{U} H \right] -\frac{1}{2} H^2 -\frac{1}{4} \sigma \left[E F \right] -\sum_{\substack{i \in \mathcal{I}_{n-1} \\ i \notin I \cup J}} T_{i}
            -\frac{1}{2} \hat{H} \right. \notag \\
            &\quad \quad \left. -\frac{1}{2} \partial \tilde{U} -\frac{1}{2} \partial H
        \right) \notag \\
        &\quad + \frac{\zeta_{I,J}}{z^{n+|I|+|J|-3}} \left(
            -\frac{1}{6} \tilde{U}^3 -\frac{1}{6} \sigma \left[\tilde{U}^2 H \right] +\frac{1}{6} \sigma \left[\tilde{U} H^2 \right] +\frac{1}{12} \sigma \left[\tilde{U} E F \right] +\frac{1}{2} \sum_{\substack{i \in \mathcal{I}_{n-1} \\ i \notin I \cup J}} \sigma \left[\tilde{U} T_{i} \right] \right. \notag \\
            &\quad \quad +\frac{1}{4} \sigma \left[\tilde{U} \hat{H} \right] +\frac{1}{4} \sigma \left[\tilde{U} \partial \tilde{U} \right] +\frac{1}{4} \sigma \left[\tilde{U} \partial H \right] -\frac{1}{2} H^3 -\frac{1}{12} \sigma \left[H E F \right]
            +\frac{1}{2} \sum_{\substack{i \in \mathcal{I}_{n-1} \\ i \notin I \cup J}} \sigma \left[H T_{i} \right] \notag \\
            &\quad \quad -\frac{1}{12} \sigma \left[H \hat{H} \right] +\frac{1}{4} \sigma \left[H \partial \tilde{U} \right] -\frac{1}{4} \sigma \left[H \partial H \right]
            +\frac{1}{12} \sigma \left[E \hat{F} \right] +\frac{1}{8} \sigma \left[E \partial F \right] -\frac{1}{6} \sigma \left[F \hat{E} \right] \notag \\
            &\quad \quad -\frac{3}{8} \sigma \left[F \partial E \right] +\frac{1}{6} \sum_{\substack{i \in I \cup J \\ i \notin I \cap J}} \tilde{W}_{i} -\frac{1}{3} \sum_{\substack{i \in \mathcal{I}_{n-1} \\ i \notin I \cup J}} \tilde{W}_{i}
            +\frac{1}{4} \sum_{\substack{i \in I \\ i \notin I \cap J}} \partial T_{i} -\frac{1}{4} \sum_{\substack{i \in J \\ i \notin I \cap J}} \partial T_{i} -\frac{1}{2} \sum_{\substack{i \in \mathcal{I}_{n-1} \\ i \notin I \cup J}} \partial T_{i} \notag \\
            &\quad \quad \left. -\frac{1}{6} \partial^2 \tilde{U} -\frac{1}{3} \partial^2 H
        \right) \notag \\
        &\quad + \frac{\zeta_{I,J}}{z^{n+|I|+|J|-4}} \left(
            \frac{1}{24} \tilde{U}^4 +\frac{1}{24} \sigma \left[\tilde{U}^3 H \right] -\frac{1}{24} \sigma \left[\tilde{U}^2 H^2 \right] -\frac{1}{48} \sigma \left[\tilde{U}^2 E F \right] \right. \notag \\
            &\quad \quad -\frac{1}{6} \sum_{\substack{i \in \mathcal{I}_{n-1} \\ i \notin I \cup J}} \sigma \left[\tilde{U}^2 T_{i} \right] -\frac{1}{12} \sigma \left[\tilde{U}^2 \hat{H} \right] -\frac{1}{12} \sigma \left[\tilde{U}^2 \partial \tilde{U} \right] -\frac{1}{12} \sigma \left[\tilde{U}^2 \partial H \right] +\frac{1}{8} \sigma \left[\tilde{U} H^3 \right] \notag \\
            &\quad \quad +\frac{1}{48} \sigma \left[\tilde{U} H E F \right]-\frac{1}{6} \sum_{\substack{i \in \mathcal{I}_{n-1} \\ i \notin I \cup J}} \sigma \left[\tilde{U} H T_{i} \right] +\frac{1}{36} \sigma \left[\tilde{U} H \hat{H} \right] -\frac{1}{12} \sigma \left[\tilde{U} H \partial \tilde{U} \right] \notag \\
            &\quad \quad +\frac{1}{12} \sigma \left[\tilde{U} H \partial H \right] -\frac{1}{36} \sigma \left[\tilde{U} E \hat{F} \right] 
            -\frac{1}{24} \sigma \left[\tilde{U} E \partial F \right] +\frac{1}{18} \sigma \left[\tilde{U} F \hat{E} \right] +\frac{1}{8} \sigma \left[\tilde{U} F \partial E \right] \notag \\
            &\quad \quad -\frac{1}{12} \sum_{\substack{i \in I \cup J \\ i \notin I \cap J}} \sigma \left[\tilde{U} \tilde{W}_{i} \right]
            +\frac{1}{6} \sum_{\substack{i \in \mathcal{I}_{n-1} \\ i \notin I \cup J}} \sigma \left[\tilde{U} \tilde{W}_{i} \right] -\frac{1}{8} \sum_{\substack{i \in I \\ i \notin I \cap J}} \sigma \left[\tilde{U} \partial T_{i} \right] + \frac{1}{8} \sum_{\substack{i \in J \\ i \notin I \cap J}} \sigma \left[ \tilde{U} \partial T_{i} \right] \notag \\
            &\quad \quad +\frac{1}{4} \sum_{\substack{i \in \mathcal{I}_{n-1} \\ i \notin I \cup J}} \sigma \left[\tilde{U} \partial T_{i} \right]
            +\frac{1}{12} \sigma \left[\tilde{U} \partial^2 \tilde{U} \right] +\frac{1}{6} \sigma \left[\tilde{U} \partial^2 H \right] +\frac{3}{16} H^4 -\frac{1}{64} \sigma \left[H^2 E F \right] \notag \\
            &\quad \quad +\frac{1}{6} \sum_{\substack{i \in \mathcal{I}_{n-1} \\ i \notin I \cup J}} \sigma \left[H^2 T_{i} \right] -\frac{1}{36} \sigma \left[H^2 \hat{H} \right]
            +\frac{1}{12} \sigma \left[H^2 \partial \tilde{U} \right]
            -\frac{1}{4} \sigma \left[H^2 \partial H \right] -\frac{1}{72} \sigma \left[H E \hat{F} \right] \notag
    \end{align}
    \begin{align}
            &\quad \quad -\frac{35}{192} \sigma \left[H E \partial F \right] +\frac{23}{216} \sigma \left[H F \hat{E} \right]
            +\frac{19}{192} \sigma \left[H F \partial E \right]
            -\frac{1}{12} \sum_{\substack{i \in I \cup J \\ i \notin I \cap J}} \sigma \left[H \tilde{W}_{i} \right] +\frac{1}{6} \sum_{\substack{i \in \mathcal{I}_{n-1} \\ i \notin I \cup J}} \sigma \left[H \tilde{W}_{i} \right] \notag \\
            &\quad \quad -\frac{1}{8} \sum_{\substack{i \in I \\ i \notin I \cap J}} \sigma \left[H \partial T_{i} \right] +\frac{1}{8} \sum_{\substack{i \in J \\ i \notin I \cap J}} \sigma \left[H \partial T_{i} \right] 
            +\frac{1}{4} \sum_{\substack{i \in \mathcal{I}_{n-1} \\ i \notin I \cup J}} \sigma \left[H \partial T_{i} \right] -\frac{13}{36} \sigma \left[H \partial \hat{H} \right] +\frac{1}{12} \sigma \left[H \partial^2 \tilde{U} \right] \notag \\
            &\quad \quad -\frac{13}{64} \sigma \left[H \partial^2 H \right] -\frac{1}{16} \sigma \left[E^2 F^2 \right]
            +\frac{1}{12} \sum_{\substack{i \in \mathcal{I}_{n-1} \\ i \notin I \cup J}} \sigma \left[E F T_{i} \right] -\frac{23}{216} \sigma \left[E F \hat{H} \right] +\frac{1}{24} \sigma \left[E F \partial \tilde{U} \right] \notag \\
            &\quad \quad -\frac{1}{24} \sigma \left[E F \partial H \right]
            -\frac{1}{4} \sigma \left[F \partial^2 E \right] 
            -\frac{1}{6} \sum_{i \in I \cap J} T_{i}^2 +\frac{1}{2} \sum_{\substack{i \in I \cup J \\ i \notin I \cap J}} T_{i}^2 -\frac{1}{2} \sum_{\substack{i \in \mathcal{I}_{n-1} \\ i \notin I \cup J}} T_{i}^2 +\frac{1}{2} \sum_{\substack{i,j \in \mathcal{I}_{n-1} \\ i,j \notin I \cup J\\i < j}} \sigma \left[T_{i} T_{j} \right] \notag \\
            &\quad \quad +\frac{1}{4} \sum_{\substack{i \in \mathcal{I}_{n-1} \\ i \notin I \cup J}} \sigma \left[T_{i} \hat{H} \right]
            +\frac{1}{4} \sum_{\substack{i \in \mathcal{I}_{n-1} \\ i \notin I \cup J }} \sigma \left[T_{i} \partial \tilde{U} \right]
            +\frac{1}{4} \sum_{\substack{i \in \mathcal{I}_{n-1} \\ i \notin I \cup J}} \sigma \left[T_{i} \partial H \right] -\frac{5}{64} \hat{H}^2 
            +\frac{1}{8} \sigma \left[\hat{H} \partial \tilde{U} \right] \notag \\
            &\quad \quad +\frac{17}{72} \sigma \left[\hat{H} \partial H \right] +\frac{1}{32} \sigma \left[\hat{E} \hat{F} \right]
            +\frac{1}{18} \sigma \left[\hat{F} \partial E \right]
            +\frac{1}{8} \left(\partial \tilde{U} \right)^2 +\frac{1}{8} \sigma \left[\partial \tilde{U} \partial H \right] +\frac{45}{64} \left(\partial H \right)^2 \notag \\
            &\quad \quad \left. +\frac{1}{6} \sum_{\substack{i \in I \\ i \notin I \cap J}} \partial \tilde{W}_{i} -\frac{1}{6} \sum_{\substack{i \in \mathcal{I}_{n-1} \\ i \notin I \cup J}} \partial \tilde{W}_{i}
            +\frac{1}{12} \sum_{i \in I \cap J} \partial^2 T_{i} -\frac{1}{4} \sum_{\substack{i \in J \\ i \notin I \cap J}} \partial^2 T_{i} -\frac{1}{24} \partial^3 \tilde{U} -\frac{1}{8} \partial^3 H
        \right) \notag \\
        &\quad +\frac{\zeta_{I,J}}{z^{n+|I|+|J|-5}} \left(
            -\frac{1}{120} \tilde{U}^5 -\frac{1}{120} \sigma \left[\tilde{U}^4 H\right] +\frac{1}{120} \sigma \left[\tilde{U}^3 H^2 \right] +\frac{1}{240} \sigma \left[\tilde{U}^3 E F \right] +\frac{1}{24} \sum_{\substack{i \in \mathcal{I}_{n-1} \\ i \notin I \cup J}} \sigma \left[\tilde{U}^3  T_{i}\right] \right. \notag \\
            &\quad \quad + \frac{1}{48} \sigma \left[\tilde{U}^3 \hat{H}\right] + \frac{1}{48} \sigma \left[\tilde{U}^3 \partial \tilde{U} \right] + \frac{1}{48} \sigma \left[\tilde{U}^3 \partial H \right] -\frac{1}{40} \sigma \left[\tilde{U}^2 H^3 \right] 
            - \frac{1}{240} \sigma \left[\tilde{U}^2 H E F \right] \notag \\
            &\quad \quad +\frac{1}{24} \sum_{\substack{i \in \mathcal{I}_{n-1} \\ i \notin I \cup J}} \sigma \left[\tilde{U}^2 H T_{i} \right] - \frac{1}{144} \sigma \left[\tilde{U}^2 H \hat{H} \right] + \frac{1}{48} \sigma \left[\tilde{U}^2 H \partial \tilde{U} \right] -\frac{1}{48} \sigma \left[\tilde{U}^2 H \partial H \right] +\frac{1}{144} \sigma \left[\tilde{U}^2 E \hat{F} \right] \notag \\
            &\quad \quad +\frac{1}{96} \sigma \left[\tilde{U}^2 E \partial F \right] -\frac{1}{72} \sigma \left[\tilde{U}^2 F \hat{E} \right] -\frac{1}{32} \sigma \left[\tilde{U}^2 F \partial E \right] 
            +\frac{1}{36} \sum_{\substack{i \in I \cup J \\ i \notin I \cap J}} \sigma \left[\tilde{U}^2 \tilde{W}_{i} \right] -\frac{1}{18} \sum_{\substack{i \in \mathcal{I}_{n-1} \\ i \notin I \cup J}} \sigma \left[\tilde{U}^2 \tilde{W}_{i} \right] \notag \\
            &\quad \quad +\frac{1}{24} \sum_{\substack{i \in I \\ i \notin I \cap J}} \sigma \left[\tilde{U}^2 \partial T_{i} \right] -\frac{1}{24} \sum_{\substack{i \in J \\ i \notin I \cap J}} \sigma \left[\tilde{U} \partial T_{i} \right] 
            -\frac{1}{12} \sum_{\substack{i \in \mathcal{I}_{n-1} \\ i \notin I \cup J}} \sigma \left[\tilde{U}^2 \partial T_{i} \right] -\frac{1}{36} \sigma \left[\tilde{U}^2 \partial^2 \tilde{U} \right] \notag \\
            &\quad \quad -\frac{1}{18} \sigma \left[\tilde{U}^2 \partial^2 H \right] -\frac{3}{80} \sigma \left[\tilde{U} H^4 \right] +\frac{1}{320} \sigma \left[H^2 \tilde{U} E F \right] -\frac{1}{24} \sum_{\substack{i \in \mathcal{I}_{n-1} \\ i \notin I \cup J}} \sigma \left[\tilde{U} H^2 T_{i} \right] + \frac{1}{144} \sigma \left[\tilde{U} H^2 \hat{H} \right] \notag \\
            &\quad \quad -\frac{1}{48} \sigma \left[\tilde{U} H^2 \partial \tilde{U} \right] 
            + \frac{1}{16} \sigma \left[\tilde{U} H^2 \partial H \right] +\frac{1}{288} \sigma \left[\tilde{U} H E \hat{F} \right]
            +\frac{35}{768} \sigma \left[\tilde{U} H E \partial F \right] - \frac{23}{864} \sigma \left[\tilde{U} H F \hat{E} \right] \notag
    \end{align}
    \begin{align}
            &\quad \quad -\frac{19}{768} \sigma \left[\tilde{U} H F \partial E \right] +\frac{1}{36} \sum_{\substack{i \in I \cup J \\ i \notin I \cap J}} \sigma \left[\tilde{U} H \tilde{W}_{i}\right] -\frac{1}{18} \sum_{\substack{i \in \mathcal{I}_{n-1} \\ i \notin I \cup J}} \sigma \left[\tilde{U} H \tilde{W}_{i} \right] 
            +\frac{1}{24} \sum_{\substack{i \in I \\ i \notin I \cap J}} \sigma \left[\tilde{U} H \partial T_{i} \right] \notag \\
            &\quad \quad  -\frac{1}{24} \sum_{\substack{i \in J \\ i \notin I \cap J}} \sigma \left[\tilde{U} H \partial T_{i} \right] -\frac{1}{12} \sum_{\substack{i \in \mathcal{I}_{n-1} \\ i \notin I \cup J}} \sigma \left[\tilde{U} H \partial T_{i} \right] +\frac{13}{108} \sigma \left[\tilde{U} H \partial \hat{H} \right] - \frac{1}{36} \sigma \left[\tilde{U} H \partial^2 \tilde{U} \right] \notag \\
            &\quad \quad + \frac{13}{192} \sigma \left[\tilde{U} H \partial^2 H \right] +\frac{1}{80} \sigma \left[\tilde{U} E^2 F^2 \right] -\frac{1}{48} \sum_{\substack{i \in \mathcal{I}_{n-1} \\ i \notin I \cup J}} \sigma \left[\tilde{U} E F T_{i} \right] +\frac{23}{864} \sigma \left[\tilde{U} E F \hat{H} \right] - \frac{1}{96} \sigma \left[\tilde{U} E F \partial \tilde{U} \right] \notag \\
            &\quad \quad +\frac{1}{96} \sigma \left[\tilde{U} E F \partial H \right] + \frac{1}{12} \sigma \left[\tilde{U} F \partial^2 E \right] +\frac{1}{18} \sum_{\substack{i \in I \cap J}} \sigma \left[\tilde{U} T_{i}^2 \right] -\frac{1}{6} \sum_{\substack{i \in I \cup J \\ i \notin I \cap J}} \sigma \left[\tilde{U} T_{i}^2 \right] +\frac{1}{6} \sum_{\substack{i \in \mathcal{I}_{n-1} \\ i \notin I \cup J}} \sigma \left[\tilde{U} T_{i}^2 \right] \notag \\
            &\quad \quad -\frac{1}{6} \sum_{\substack{i,j \in \mathcal{I}_{n-1} \\ i,j \notin I \cup J \\ i < j}} \sigma \left[\tilde{U} T_{i} T_{j} \right] -\frac{1}{12} \sum_{\substack{i \in \mathcal{I}_{n-1} \\ i \notin I \cup J}} \sigma \left[\tilde{U} T_{i} \hat{H} \right] -\frac{1}{12} \sum_{\substack{i \in \mathcal{I}_{n-1} \\ i \notin I \cup J}} \sigma \left[\tilde{U} T_{i} \partial \tilde{U} \right] -\frac{1}{12} \sum_{\substack{i \in \mathcal{I}_{n-1} \\ i \notin I \cup J}} \sigma \left[\tilde{U} T_{i} \partial H \right] \notag \\
            &\quad \quad +\frac{5}{192} \sigma \left[\tilde{U} \hat{H}^2 \right]
            -\frac{1}{24} \sigma \left[\tilde{U} \hat{H} \partial \tilde{U} \right] -\frac{17}{216} \sigma \left[\tilde{U} \hat{H} \partial H \right]
            - \frac{1}{96} \sigma \left[\tilde{U} \hat{E} \hat{F} \right] - \frac{1}{54} \sigma \left[\tilde{U} \hat{F} \partial E \right] \notag \\
            &\quad \quad -\frac{1}{24} \sigma \left[\tilde{U} \left(\partial \tilde{U}\right)^2 \right]- \frac{1}{24} \sigma \left[\tilde{U} \partial \tilde{U} \partial H \right]
            - \frac{15}{64} \sigma \left[\tilde{U}, \left(\partial H\right)^2 \right] -\frac{1}{12} \sum_{\substack{i \in I \\ i \notin I \cap J}} \sigma \left[ \tilde{U} \partial \tilde{W}_{i} \right] \notag \\
            &\quad \quad +\frac{1}{12} \sum_{\substack{i \in \mathcal{I}_{n-1} \\ i \notin I \cup J}} \sigma \left[\tilde{U} \partial \tilde{W}_{i} \right]
            -\frac{1}{24} \sum_{i \in I \cap J} \sigma \left[\tilde{U} \partial^2 T_{i} \right] +\frac{1}{8} \sum_{\substack{i \in J \\ i \notin I \cap J}} \sigma \left[\tilde{U} \partial^2 T_{i} \right] 
            + \frac{1}{48} \sigma \left[\tilde{U} \partial^3 \tilde{U} \right]
            + \frac{1}{16} \sigma \left[\tilde{U} \partial^3 H \right] \notag \\
            &\quad \quad +\frac{236063}{124872} H^5 + \frac{2317}{58080} \sigma \left[H^3 E F \right] +\frac{1}{8} \sum_{\substack{i \in \mathcal{I}_{n-1} \\ i \notin I \cup J}} \sigma \left[H^3 T_{i} \right]
            - \frac{1}{240} \sigma \left[H^3 \hat{H} \right] + \frac{1}{16} \sigma \left[H^3 \partial \tilde{U} \right] \notag \\
            &\quad \quad +\frac{5}{168} \sigma \left[H^3 \partial H \right] - \frac{1}{27} \sigma \left[H^2 E \hat{F} \right]
            - \frac{406843}{998976} \sigma \left[H^2 E \partial F \right] +\frac{257}{2160} \sigma \left[H^2 F \hat{E} \right] +\frac{139083}{332992} \sigma \left[H^2 F \partial E \right] \notag \\
            &\quad \quad -\frac{1}{36} \sum_{\substack{i \in I \cup J \\ i \notin I \cap J}} \sigma \left[ H^2 \tilde{W}_{i} \right] +\frac{1}{18} \sum_{\substack{i \in \mathcal{I}_{n-1} \\ i \notin I \cup J}} \sigma \left[H^2 \tilde{W}_{i} \right] -\frac{1}{24} \sum_{\substack{i \in I \\ i \notin I \cap J}} \sigma \left[H^2 \partial T_{i} \right] 
            +\frac{1}{24} \sum_{\substack{i \in J \\ i \notin I \cap J}} \sigma \left[H^2 \partial T_{i}\right] \notag \\
            &\quad \quad +\frac{1}{12} \sum_{\substack{i \in \mathcal{I}_{n-1} \\ i \notin I \cup J}} \sigma \left[H^2 \partial T_{i} \right] -\frac{49}{90} \sigma \left[H^2 \partial \hat{H} \right]
            + \frac{1}{36} \sigma \left[H^2 \partial^2 \tilde{U} \right] + \frac{111005}{749232} \sigma \left[H^2 \partial^2 H\right]
            - \frac{8527}{234135} \sigma \left[H E^2 F^2 \right] \notag \\
            &\quad \quad +\frac{1}{48} \sum_{\substack{i \in \mathcal{I}_{n-1} \\ i \notin I \cup J}} \sigma \left[H E F T_{i} \right]
            - \frac{61}{1440} \sigma \left[H E F \hat{H}\right] + \frac{1}{96} \sigma \left[H E F \partial \tilde{U} \right] - \frac{11}{4032} \sigma \left[H E F \partial H \right] \notag \\
            &\quad \quad -\frac{13}{720} \sigma \left[H E \partial \hat{F} \right]
            + \frac{522065}{4495392} \sigma \left[H E \partial^2 F \right] +\frac{107}{720} \sigma \left[H F \partial \hat{E} \right] + \frac{173441}{1123848} \sigma \left[H F \partial^2 E \right] \notag \\
            &\quad \quad +\frac{1}{18} \sum_{i \in I \cap J} \sigma \left[H T_{i}^2 \right] 
            -\frac{1}{6} \sum_{\substack{i \in I \cup J \\ i \notin I \cap J}} \sigma \left[H T_{i}^2 \right] +\frac{1}{6} \sum_{\substack{i \in \mathcal{I}_{n-1} \\ i \notin I \cup J}} \sigma \left[H T_{i}^2 \right]
            -\frac{1}{6} \sum_{\substack{i,j \in \mathcal{I}_{n-1} \\ i,j \notin I \cup J \\ i < j}} \sigma \left[H T_{i} T_{j} \right] \notag
    \end{align}
    \begin{align}
            &\quad \quad +\frac{1}{36} \sum_{\substack{i \in \mathcal{I}_{n-1} \\ i \notin I \cup J}} \sigma \left[H T_{i} \hat{H} \right]
            -\frac{1}{12} \sum_{\substack{i \in \mathcal{I}_{n-1} \\ i \notin I \cup J}} \sigma \left[H T_{i} \partial \tilde{U} \right] +\frac{1}{12} \sum_{\substack{i \in \mathcal{I}_{n-1} \\ i \notin I \cup J}} \sigma \left[H T_{i} \partial H \right] - \frac{40211}{249744} \sigma \left[H \hat{H}^2 \right] \notag \\
            &\quad \quad + \frac{1}{72} \sigma \left[H \hat{H} \partial \tilde{U} \right] + \frac{221}{1080} \sigma \left[H \hat{H} \partial H \right] + \frac{20319}{332992} \sigma \left[H \hat{E} \hat{F}\right] + \frac{13}{540} \sigma \left[H \hat{E} \partial F \right] + \frac{1}{180} \sigma \left[H \hat{F} \partial E \right] \notag \\
            &\quad \quad - \frac{1}{24} \sigma \left[H \left(\partial \tilde{U}\right)^2 \right] 
            + \frac{1}{24} \sigma \left[H \partial \tilde{U} \partial  H \right] + \frac{7247}{7568} \sigma \left[H \left(\partial H\right)^2 \right]
            - \frac{237319}{272448} \sigma \left[H \partial E \partial F \right] \notag \\
            &\quad \quad -\frac{1}{12} \sum_{\substack{i \in I \\ i \notin I \cap J}} \sigma \left[H \partial \tilde{W}_{i} \right] +\frac{1}{12} \sum_{\substack{i \in \mathcal{I}_{n-1} \\ i \notin I \cup J}} \sigma \left[H \partial \tilde{W}_{i} \right] -\frac{1}{24} \sum_{i \in I \cap J} \sigma \left[H \partial^2 T_{i} \right] +\frac{1}{8} \sum_{\substack{i \in J \\ i \notin I \cap J}} \sigma \left[H \partial^2 T_{i} \right] \notag \\
            &\quad \quad - \frac{1}{16} \sigma \left[H \partial^2 \hat{H} \right]
            + \frac{1}{48} \sigma \left[H \partial^3 \tilde{U} \right] + \frac{1}{64} \sigma \left[H \partial^3 H \right] - \frac{101}{4320} \sigma \left[E^2 F \hat{F} \right] + \frac{26951}{327789} \sigma \left[E^2 F \partial F \right] \notag \\
            &\quad \quad + \frac{19}{864} \sigma \left[E F^2 \hat{E} \right] - \frac{457231}{5244624} \sigma \left[E F^2 \partial E \right] -\frac{1}{72} \sum_{\substack{i \in I \cup J \\ i \notin I \cap J}} \sigma \left[E F \tilde{W}_{i} \right] +\frac{1}{36} \sum_{\substack{i \in \mathcal{I}_{n-1} \\ i \notin I \cup J}} \sigma \left[E F \tilde{W}_{i} \right] \notag \\
            &\quad \quad -\frac{1}{48} \sum_{\substack{i \in I \\ i \notin I \cap J}} \sigma \left[E F \partial T_{i} \right] + \frac{1}{48} \sum_{\substack{i \in J \\ i \notin I \cap J}} \sigma \left[E F \partial T_{i} \right] +\frac{1}{24} \sum_{\substack{i \in \mathcal{I}_{n-1} \\ i \notin I \cup J}} \sigma \left[E F \partial T_{i} \right] - \frac{17}{240} \sigma \left[E F \partial \hat{H} \right] \notag \\
            &\quad \quad + \frac{1}{72} \sigma \left[E F \partial^2 \tilde{U} \right] 
            - \frac{648143}{8990784} \sigma \left[E F \partial^2 H \right]
            -\frac{1}{36} \sum_{\substack{i \in \mathcal{I}_{n-1} \\ i \notin I \cup J}} \sigma \left[E T_{i} \hat{F} \right] -\frac{1}{24} \sum_{\substack{i \in \mathcal{I}_{n-1} \\ i \notin I \cup J}} \sigma \left[E T_{i} \partial F \right] \notag \\
            &\quad \quad - \frac{14251}{20978496} \sigma \left[E \hat{H} \hat{F} \right] - \frac{1}{72} \sigma \left[E \partial \tilde{U} \hat{F} \right] + \frac{1}{54} \sigma \left[E \partial H \hat{F} \right] -\frac{1}{48} \sigma \left[E \partial \tilde{U} \partial F \right] +\frac{1}{18} \sum_{\substack{i \in \mathcal{I}_{n-1} \\ i \notin I \cup J}} \sigma \left[F T_{i} \hat{E} \right] \notag \\
            &\quad \quad +\frac{1}{8} \sum_{\substack{i \in \mathcal{I}_{n-1} \\ i \notin I \cup J}} \sigma \left[F T_{i} \partial E\right] -\frac{347243}{20978496} \sigma \left[F \hat{H} \hat{E}\right]
            -\frac{7}{108} \sigma \left[F \hat{H} \partial E\right] +\frac{1}{36} \sigma \left[F \hat{E} \partial \tilde{U} \right] +\frac{1}{16} \sigma \left[F \partial \tilde{U} \partial E \right] \notag \\
            &\quad \quad -\frac{5}{288} \sigma \left[F \partial H \partial E \right] -\frac{1}{30} \sum_{i \in I \cap J} \sigma \left[T_{i} \tilde{W}_{i} \right] +\frac{1}{20} \sum_{\substack{i \in I \cup J \\ i \notin I \cap J}} \sigma \left[T_{i} \tilde{W}_{i} \right]
            -\frac{1}{30} \sum_{\substack{i \in \mathcal{I}_{n-1} \\ i \notin I \cup J}} \sigma \left[T_{i} \tilde{W}_{i} \right] \notag \\
            &\quad \quad -\frac{1}{12} \sum_{\substack{i \in \mathcal{I}_{n-1} \\ i \notin I \cup J}} \sum_{\substack{j \in I \cup J \\ j \notin I \cap J}} \sigma \left[T_{i} \tilde{W}_{j} \right] +\frac{1}{6} \sum_{\substack{i,j \in \mathcal{I}_{n-1} \\ i,j \notin I \cup J}} \sigma \left[T_{i} \tilde{W}_{j} \right]
            -\frac{1}{12} \sum_{i \in I \cap J} \sigma \left[T_{i} \partial T_{i} \right] 
            + \frac{3}{8} \sum_{\substack{i \in I \\ i \notin I \cap J}} \sigma \left[T_{i} \partial T_{i} \right] \notag \\
            &\quad \quad +\frac{1}{8} \sum_{\substack{i \in J \\ i \notin I \cap J}} \sigma \left[T_{i} \partial T_{i} \right] -\frac{1}{4} \sum_{\substack{i \in \mathcal{I}_{n-1} \\ i \notin I \cap J}} \sigma \left[T_{i} \partial T_{i} \right] -\frac{1}{8} \sum_{\substack{i \in \mathcal{I}_{n-1} \\ i \notin I \cup J}} \sum_{\substack{j \in I \\ j \notin I \cap J}} \sigma \left[T_{i} \partial T_{j} \right] 
            +\frac{1}{8} \sum_{\substack{i \in \mathcal{I}_{n-1} \\ i \notin I \cup J}} \sum_{\substack{j \in J \\ j \notin I \cap J}} \sigma \left[T_{i} \partial T_{j} \right] \notag \\
            &\quad \quad +\frac{1}{4} \sum_{\substack{i, j \in \mathcal{I}_{n-1} \\ i,j \notin I \cup J}} \sigma \left[T_{i} \partial T_{j} \right] +\frac{1}{12} \sum_{\substack{i \in \mathcal{I}_{n-1} \\ i \notin I \cup J}} \sigma \left[T_{i} \partial^2 \tilde{U} \right]
            +\frac{1}{6} \sum_{\substack{i \in \mathcal{I}_{n-1} \\ i \notin I \cup J}} \sigma \left[T_{i} \partial^2 H \right]
            -\frac{1}{24} \sum_{\substack{i \in I \cup J \\ i \notin I \cap J}} \sigma \left[\hat{H} \tilde{W}_{i} \right] \notag \\
            &\quad \quad +\frac{1}{12} \sum_{\substack{i \in \mathcal{I}_{n-1} \\ i \notin I \cup J}} \sigma \left[\hat{H} \tilde{W}_{i} \right] -\frac{1}{16} \sum_{\substack{i \in I \\ i \notin I \cap J}} \sigma \left[\hat{H} \partial T_{i} \right] +\frac{1}{16} \sum_{\substack{i \in J \\ i \notin I \cap J}} \sigma \left[\hat{H} \partial T_{i} \right] +\frac{1}{8} \sum_{\substack{i \in \mathcal{I}_{n-1} \\ i \notin I \cup J}} \sigma \left[\hat{H} \partial T_{i}\right] \notag
    \end{align}
    \begin{align}
            &\quad \quad + \frac{1}{1344} \sigma \left[\hat{H} \partial \hat{H} \right] +\frac{1}{24} \sigma \left[\hat{H} \partial^2 \tilde{U} \right] -\frac{48175}{635712} \sigma \left[\hat{E} \partial \hat{F}\right] +\frac{11013}{211904} \sigma \left[\hat{F} \partial \hat{E} \right] -\frac{1}{24} \sum_{\substack{i \in I \cup J\\ i \notin I \cap J}} \sigma \left[ \tilde{W}_{i} \partial \tilde{U} \right] \notag \\
            &\quad \quad +\frac{1}{12} \sum_{\substack{i \in \mathcal{I}_{n-1} \\ i \notin I \cup J}} \sigma \left[\tilde{W}_{i} \partial \tilde{U} \right]
            -\frac{1}{24} \sum_{\substack{i \in I \cup J \\ i \notin I \cap J}} \sigma \left[\tilde{W}_{i} \partial H \right] +\frac{1}{12} \sum_{\substack{i \in \mathcal{I}_{n-1} \\ i \notin I \cup J}} \sigma \left[\tilde{W}_{i} \partial H\right] -\frac{1}{16} \sum_{\substack{i \in I \\ i \notin I \cap J}} \sigma \left[\partial \tilde{U} \partial T_{i} \right] \notag \\
            &\quad \quad +\frac{1}{16} \sum_{\substack{i \in J \\ i \notin I \cap J}} \sigma \left[\partial \tilde{U} \partial T_{i} \right] 
            +\frac{1}{8} \sum_{\substack{i \in \mathcal{I}_{n-1} \\ i \notin I \cup J}} \sigma \left[\partial \tilde{U} \partial T_{i}\right] +\frac{1}{24} \sigma \left[\partial \tilde{U} \partial^2 \tilde{U} \right]
            + \frac{1}{12} \sigma \left[\partial \tilde{U} \partial^2 H\right] -\frac{1}{16} \sum_{\substack{i \in I \\ i \notin I \cap J}} \sigma \left[\partial H \partial T_{i} \right] \notag \\
            &\quad \quad +\frac{1}{16} \sum_{\substack{i \in J \\ i \notin I \cap J}} \sigma \left[\partial H \partial T_{i}\right] +\frac{1}{8} \sum_{\substack{i \in \mathcal{I}_{n-1} \\ i \notin I \cup J}} \sigma \left[\partial H \partial T_{i} \right] + \frac{1}{24} \sigma \left[\partial H \partial^2 \tilde{U}\right] +\frac{7}{180} \sum_{i \in I \cap J} \partial^2 \tilde{W}_{i} + \frac{1}{40} \sum_{\substack{i \in I \\ i \notin I \cap J}} \partial^2 \tilde{W}_{i} \notag \\
            &\quad \quad \left. -\frac{7}{120} \sum_{\substack{i \in J \\ i \notin I \cap J}} \partial^2 \tilde{W}_{i} -\frac{1}{60} \sum_{\substack{i \in \mathcal{I}_{n-1} \\ i \notin I \cup J}} \partial^2 \tilde{W}_{i} +\frac{1}{24} \sum_{i \in I \cap J} \partial^3 T_{i} -\frac{1}{16} \sum_{\substack{i \in I \cup J \\ i \notin I \cap J}} \partial^3 T_{i} +\frac{1}{24} \sum_{\substack{i \in \mathcal{I}_{n-1} \\ i \notin I \cup J}} \partial^3 T_{i} - \frac{1}{120} \partial^4 \tilde{U}
        \right) \notag \\
        &\quad +\cdots ~,
    \end{align}

    \begin{align}
        \mathcal{X}^{-}_{I}(z) \mathcal{Y}^{+}_{J}(0) &\sim \frac{\zeta_{I,J}}{z^{n+|I|+|J|}} +\frac{\zeta_{I,J}}{z^{n+|I|+|J|-1}} \left(-\tilde{U} +H\right) \notag \\
        &\quad +\frac{\zeta_{I,J}}{z^{n+|I|+|J|-2}} \left(
            \frac{1}{2} \tilde{U}^2 -\frac{1}{2} \sigma \left[\tilde{U} H \right] -\frac{1}{2} H^2 -\frac{1}{4} \sigma \left[E F \right] -\sum_{\substack{i \in \mathcal{I}_{n-1} \\ i \notin I \cup J}} T_{i} +\frac{1}{2} \hat{H} \right. \notag \\
            &\quad \quad \left. -\frac{1}{2} \partial \tilde{U} +\frac{1}{2} \partial H
        \right) \notag \\
        &\quad + \frac{\zeta_{I,J}}{z^{n+|I|+|J|-3}} \left(
            -\frac{1}{6} \tilde{U}^3 +\frac{1}{6} \sigma \left[\tilde{U}^2 H \right] +\frac{1}{6} \sigma \left[\tilde{U} H^2 \right] +\frac{1}{12} \sigma \left[\tilde{U} E F \right] +\frac{1}{2} \sum_{\substack{i \in \mathcal{I}_{n-1} \\ i \notin I \cup J}} \sigma \left[\tilde{U} T_{i} \right] \right. \notag \\
            &\quad \quad -\frac{1}{4} \sigma \left[\tilde{U} \hat{H} \right] 
            +\frac{1}{4} \sigma \left[\tilde{U} \partial \tilde{U} \right] -\frac{1}{4} \sigma \left[\tilde{U} \partial H \right] +\frac{1}{2} H^3  +\frac{1}{12} \sigma \left[H E F \right] -\frac{1}{2} \sum_{\substack{i \in \mathcal{I}_{n-1} \\ i \notin I \cup J}} \sigma \left[H T_{i} \right] \notag \\
            &\quad \quad -\frac{1}{12} \sigma \left[H \hat{H} \right]
            -\frac{1}{4} \sigma \left[H \partial \tilde{U} \right] -\frac{1}{4} \sigma \left[H \partial H \right] -\frac{1}{6} \sigma \left[E \hat{F} \right] -\frac{3}{8} \sigma \left[E \partial F \right]
            +\frac{1}{12} \sigma \left[F \hat{E} \right] \notag \\
            &\quad \quad +\frac{1}{8} \sigma \left[F \partial E \right] +\frac{1}{6} \sum_{\substack{i \in I \cup J \\ i \notin I \cap J}} \tilde{W}_{i}
            -\frac{1}{3} \sum_{\substack{i \in \mathcal{I}_{n-1} \\ i \notin I \cup J}} \tilde{W}_{i} +\frac{1}{4} \sum_{\substack{i \in I \\ i \notin I \cap J}} \partial T_{i} -\frac{1}{4} \sum_{\substack{i \in J \\ i \notin I \cap J}} \partial T_{i} 
            -\frac{1}{2} \sum_{\substack{i \in \mathcal{I}_{n-1} \\ i \notin I \cup J}} \partial T_{i} \notag \\
            &\quad \quad \left. -\frac{1}{6} \partial^2 \tilde{U} +\frac{1}{3} \partial^2 H
        \right) \notag
    \end{align}
    \begin{align}
        &\quad + \frac{\zeta_{I,J}}{z^{n+|I|+|J|-4}} \left(
            \frac{1}{24} \tilde{U}^4 -\frac{1}{24} \sigma \left[\tilde{U}^3 H \right] -\frac{1}{24} \sigma \left[\tilde{U}^2 H^2 \right] -\frac{1}{48} \sigma \left[\tilde{U}^2 E F \right] \right. \notag \\
            &\quad \quad -\frac{1}{6} \sum_{\substack{i \in \mathcal{I}_{n-1} \\ i \notin I \cup J}} \sigma \left[\tilde{U}^2 T_{i} \right] +\frac{1}{12} \sigma \left[\tilde{U}^2 \hat{H} \right] -\frac{1}{12} \sigma \left[\tilde{U}^2 \partial \tilde{U} \right] +\frac{1}{12} \sigma \left[\tilde{U}^2 \partial H \right] -\frac{1}{8} \sigma \left[\tilde{U} H^3 \right] \notag \\
            &\quad \quad -\frac{1}{48} \sigma \left[\tilde{U} H E F \right] +\frac{1}{6} \sum_{\substack{i \in \mathcal{I}_{n-1} \\ i \notin I \cup J}} \sigma \left[\tilde{U} H T_{i} \right] +\frac{1}{36} \sigma \left[\tilde{U} H \hat{H} \right] +\frac{1}{12} \sigma \left[\tilde{U} H \partial \tilde{U} \right] \notag \\
            &\quad \quad +\frac{1}{12} \sigma \left[\tilde{U} H \partial H \right] +\frac{1}{18} \sigma \left[\tilde{U} E \hat{F} \right] +\frac{1}{8} \sigma \left[\tilde{U} E \partial F \right] -\frac{1}{36} \sigma \left[\tilde{U} F \hat{E} \right] -\frac{1}{24} \sigma \left[\tilde{U} F \partial E \right] \notag \\
            &\quad \quad -\frac{1}{12} \sum_{\substack{i \in I \cup J \\ i \notin I \cap J}} \sigma \left[\tilde{U} \tilde{W}_{i} \right] +\frac{1}{6} \sum_{\substack{i \in \mathcal{I}_{n-1} \\ i \notin I \cup J}} \sigma \left[\tilde{U} \tilde{W}_{i} \right] 
            -\frac{1}{8} \sum_{\substack{i \in I \\ i \notin I \cap J}} \sigma \left[\tilde{U} \partial T_{i} \right] +\frac{1}{8} \sum_{\substack{i \in J \\ i \notin I \cap J}} \sigma \left[\tilde{U} \partial T_{i} \right] \notag \\
            &\quad \quad +\frac{1}{4} \sum_{\substack{i \in \mathcal{I}_{n-1} \\ i \notin I \cup J}} \sigma \left[\tilde{U} \partial T_{i} \right] +\frac{1}{12} \sigma \left[\tilde{U} \partial^2 \tilde{U} \right] -\frac{1}{6} \sigma \left[\tilde{U} \partial^2 H \right] 
            -\frac{9}{16} H^4 +\frac{1}{192} \sigma \left[H^2 E F \right] \notag \\
            &\quad \quad +\frac{1}{6} \sum_{\substack{i \in \mathcal{I}_{n-1} \\ i \notin I \cup J}} \sigma \left[H^2 T_{i} \right] +\frac{1}{36} \sigma \left[H^2 \hat{H} \right] +\frac{1}{12} \sigma \left[H^2 \partial \tilde{U} \right] +\frac{1}{4} \sigma \left[H^2 \partial H \right]
            -\frac{5}{72} \sigma \left[H E \hat{F} \right] \notag \\
            &\quad \quad +\frac{65}{192} \sigma \left[H E \partial F \right] -\frac{5}{216} \sigma \left[H F \hat{E} \right] -\frac{49}{192} \sigma \left[H F \partial E \right] +\frac{1}{12} \sum_{\substack{i \in I \cup J \\ i \notin I \cap J}} \sigma \left[H \tilde{W}_{i} \right] -\frac{1}{6} \sum_{\substack{i \in \mathcal{I}_{n-1} \\ i \notin I \cup J}} \sigma \left[H \tilde{W}_{i} \right] \notag \\
            &\quad \quad +\frac{1}{8} \sum_{\substack{i \in I \\ i \notin I \cap J}} \sigma \left[H \partial T_{i} \right] -\frac{1}{8} \sum_{\substack{i \in J \\ i \notin I \cap J}} \sigma \left[H \partial T_{i} \right] -\frac{1}{4} \sum_{\substack{i \in \mathcal{I}_{n-1} \\ i \notin I \cup J}} \sigma \left[H \partial T_{i} \right]
            -\frac{5}{36} \sigma \left[H \partial \hat{H} \right] -\frac{1}{12} \sigma \left[H \partial^2 \tilde{U} \right] \notag \\
            &\quad \quad +\frac{29}{192} \sigma \left[H \partial^2 H \right]
            +\frac{5}{48} \sigma \left[E^2 F^2 \right] +\frac{1}{12} \sum_{\substack{i \in \mathcal{I}_{n-1} \\ i \notin I \cup J}} \sigma \left[E F T_{i} \right] +\frac{23}{216} \sigma \left[E F \hat{H} \right] +\frac{1}{24} \sigma \left[E F \partial \tilde{U} \right] \notag \\
            &\quad \quad +\frac{1}{24} \sigma \left[E F \partial H \right] +\frac{1}{4} \sigma \left[F \partial^2 E \right] -\frac{1}{6} \sum_{i \in I \cap J} T_{i}^2 +\frac{1}{2} \sum_{\substack{i \in I \cup J \\ i \notin I \cap J}} T_{i}^2 -\frac{1}{2} \sum_{\substack{i \in \mathcal{I}_{n-1} \\ i \notin I \cup J}} T_{i}^2 +\frac{1}{2} \sum_{\substack{i,j \in \mathcal{I}_{n-1} \\ i,j \notin I \cup J \\ i < j}} \sigma \left[T_{i} T_{j} \right] \notag \\
            &\quad \quad -\frac{1}{4} \sum_{\substack{i \in \mathcal{I}_{n-1} \\ i \notin I \cup J}} \sigma \left[T_{i} \hat{H} \right] +\frac{1}{4} \sum_{\substack{i \in \mathcal{I}_{n-1} \\ i \notin I \cup J}} \sigma \left[T_{i} \partial \tilde{U} \right] -\frac{1}{4} \sum_{\substack{i \in \mathcal{I}_{n-1} \\ i \notin I \cup J}} \sigma \left[T_{i} \partial H \right] 
            +\frac{7}{64} \hat{H}^2 -\frac{1}{8} \sigma \left[\hat{H} \partial \tilde{U} \right] \notag \\
            &\quad \quad +\frac{1}{72} \sigma \left[\hat{H} \partial H \right] -\frac{3}{32} \sigma \left[\hat{E} \hat{F} \right] -\frac{1}{18} \sigma \left[\hat{F} \partial E \right]
            +\frac{1}{8} \left(\partial \tilde{U} \right)^2 -\frac{1}{8} \sigma \left[\partial \tilde{U} \partial H \right]
            -\frac{95}{64} \left(\partial H \right)^2 \notag \\
            &\quad \quad \left. +\frac{1}{6} \sum_{\substack{i \in I \\ i \notin I \cap J}} \partial \tilde{W}_{i} -\frac{1}{6} \sum_{\substack{i \in \mathcal{I}_{n-1} \\ i \notin I \cup J}} \partial \tilde{W}_{i}
            +\frac{1}{12} \sum_{i \in I \cap J} \partial^2 T_{i} -\frac{1}{4} \sum_{\substack{i \in J \\ i \notin I \cap J}} \partial^2 T_{i}  -\frac{1}{24} \partial^3 \tilde{U} +\frac{1}{8} \partial^3 H
        \right) \notag
    \end{align}
    \begin{align}
        &\quad + \frac{\zeta_{I,J}}{z^{n+|I|+|J|-5}} \left(
            -\frac{1}{120} \tilde{U}^5 +\frac{1}{120} \sigma \left[\tilde{U}^4 H \right] + \frac{1}{120} \sigma \left[\tilde{U}^3 H^2 \right] + \frac{1}{240} \sigma \left[\tilde{U}^3 E F \right]
            +\frac{1}{24} \sum_{\substack{i \in \mathcal{I}_{n-1}\\ i \notin I \cup J}} \sigma \left[\tilde{U}^3 T_{i} \right] \right. \notag \\
            &\quad \quad - \frac{1}{48} \sigma \left[\tilde{U}^3 \hat{H} \right] +\frac{1}{48} \sigma \left[\tilde{U}^3 \partial \tilde{U} \right]
            - \frac{1}{48} \sigma \left[\tilde{U}^3 \partial H \right] + \frac{1}{40} \sigma \left[\tilde{U}^2 H^3 \right] + \frac{1}{240} \sigma \left[\tilde{U}^2 H E F \right] \notag \\
            &\quad \quad -\frac{1}{24} \sum_{\substack{i \in \mathcal{I}_{n-1}\\ i \notin I \cup J}} \sigma \left[\tilde{U}^2 H T_{i} \right] - \frac{1}{144} \sigma \left[\tilde{U}^2 H \hat{H} \right] - \frac{1}{48} \sigma \left[\tilde{U}^2 H \partial \tilde{U} \right]
            - \frac{1}{48} \sigma \left[\tilde{U}^2 H \partial H \right] - \frac{1}{72} \sigma \left[\tilde{U}^2 E \hat{F} \right] \notag \\
            &\quad \quad - \frac{1}{32} \sigma \left[\tilde{U}^2 E \partial F \right] 
            + \frac{1}{144} \sigma \left[\tilde{U}^2 F \hat{E} \right] + \frac{1}{96} \sigma \left[\tilde{U}^2 F \partial E \right] +\frac{1}{36} \sum_{\substack{i \in I \cup J \\ i \notin I \cap J}} \sigma \left[\tilde{U}^2 \tilde{W}_{i} \right] -\frac{1}{18} \sum_{\substack{i \in \mathcal{I}_{n-1} \\ i \notin I \cup J}} \sigma \left[\tilde{U}^2 \tilde{W}_{i} \right] \notag \\
            &\quad \quad +\frac{1}{24} \sum_{\substack{i \in I \\ i \notin I \cap J}} \sigma \left[\tilde{U}^2 \partial T_{i} \right] -\frac{1}{24} \sum_{\substack{i \in J \\ i \notin I \cap J}} \sigma \left[\tilde{U}^2 \partial T_{i} \right]
            -\frac{1}{12} \sum_{\substack{i \in \mathcal{I}_{n-1} \\ i \notin I \cup J}} \sigma \left[\tilde{U}^2 \partial T_{i} \right] -\frac{1}{36} \sigma \left[\tilde{U}^2 \partial^2 \tilde{U} \right] \notag \\
            &\quad \quad + \frac{1}{18} \sigma \left[\tilde{U}^2 \partial^2 H \right] + \frac{9}{80} \sigma \left[\tilde{U} H^4 \right] - \frac{1}{960} \sigma \left[\tilde{U} H^2 E F \right] -\frac{1}{24} \sum_{\substack{i \in \mathcal{I}_{n-1}\\i \notin I \cup J}} \sigma \left[\tilde{U} H^2 T_{i} \right] - \frac{1}{144} \sigma \left[\tilde{U} H^2 \hat{H} \right] \notag \\
            &\quad \quad - \frac{1}{48} \sigma \left[\tilde{U} H^2 \partial \tilde{U} \right]
            - \frac{1}{16} \sigma \left[\tilde{U} H^2 \partial H \right] + \frac{5}{288} \sigma \left[\tilde{U} H E \hat{F} \right]
            - \frac{65}{768} \sigma \left[\tilde{U} H E \partial F \right]
            + \frac{5}{864} \sigma \left[\tilde{U} H F \hat{E} \right] \notag \\
            &\quad \quad + \frac{49}{768} \sigma \left[\tilde{U} H F \partial E \right]
            -\frac{1}{36} \sum_{\substack{i \in I \cup J \\ i \notin I \cap J}} \sigma \left[\tilde{U} H \tilde{W}_{i} \right] +\frac{1}{18} \sum_{\substack{i \in \mathcal{I}_{n-1} \\ i \notin I \cup J}} \sigma \left[\tilde{U} H \tilde{W}_{i} \right]
            -\frac{1}{24} \sum_{\substack{i \in I \\ i \notin I \cap J}} \sigma \left[\tilde{U} H \partial T_{i} \right] \notag \\
            &\quad \quad +\frac{1}{24} \sum_{\substack{i \in J \\ i \notin I \cap J}} \sigma \left[\tilde{U} H \partial T_{i} \right] +\frac{1}{12} \sum_{\substack{i \in \mathcal{I}_{n-1} \\ i \notin I \cup J}} \sigma \left[\tilde{U} H \partial T_{i} \right]
            + \frac{5}{108} \sigma \left[\tilde{U} H \partial \hat{H} \right] + \frac{1}{36} \sigma \left[\tilde{U} H \partial^2 \tilde{U} \right] \notag \\
            &\quad \quad - \frac{29}{576} \sigma \left[\tilde{U} H \partial^2 H \right]
            - \frac{1}{48} \sigma \left[\tilde{U} E^2 F^2 \right] -\frac{1}{48} \sum_{\substack{i \in \mathcal{I}_{n-1} \\ i \notin I \cup J}} \sigma \left[\tilde{U} E F T_{i} \right] - \frac{23}{864} \sigma \left[\tilde{U} E F \hat{H} \right]
            - \frac{1}{96} \sigma \left[\tilde{U} E F \partial \tilde{U} \right] \notag \\
            &\quad \quad - \frac{1}{96} \sigma \left[\tilde{U} E F \partial H \right]
            - \frac{1}{12} \sigma \left[\tilde{U} F \partial^2 E \right] +\frac{1}{18} \sum_{i \in I \cap J} \sigma \left[\tilde{U} T_{i}^2 \right] -\frac{1}{6} \sum_{\substack{i \in I \cup J \\ i \notin I \cap J}} \sigma \left[\tilde{U} T_{i}^2 \right] +\frac{1}{6} \sum_{\substack{i \in \mathcal{I}_{n-1} \\ i \notin I \cup J}} \sigma \left[\tilde{U} T_{i}^2 \right] \notag \\
            &\quad \quad -\frac{1}{6} \sum_{\substack{i,j \in \mathcal{I}_{n-1} \\ i,j \notin I \cup J \\ i < j}} \sigma \left[\tilde{U} T_{i} T_{j} \right]
            +\frac{1}{12} \sum_{\substack{i \in \mathcal{I}_{n-1} \\ i \notin I \cup J}} \sigma \left[\tilde{U} T_{i} \hat{H} \right] -\frac{1}{12} \sum_{\substack{i \in \mathcal{I}_{n-1} \\ i \notin I \cup J}} \sigma \left[\tilde{U} T_{i} \partial \tilde{U} \right] +\frac{1}{12} \sum_{\substack{i \in \mathcal{I}_{n-1} \\ i \notin I \cup J}} \sigma \left[\tilde{U} T_{i} \partial H \right] \notag \\
            &\quad \quad - \frac{7}{192} \sigma \left[\tilde{U} \hat{H}^2 \right]
            + \frac{1}{24} \sigma \left[\tilde{U} \hat{H} \partial \tilde{U} \right] - \frac{1}{216} \sigma \left[\tilde{U} \hat{H} \partial H \right]
            + \frac{1}{32} \sigma \left[\tilde{U} \hat{E} \hat{F} \right] + \frac{1}{54} \sigma \left[\tilde{U} \hat{F} \partial E \right] \notag \\
            &\quad \quad - \frac{1}{24} \sigma \left[\tilde{U} \left(\partial \tilde{U}\right)^2 \right]
            + \frac{1}{24} \sigma \left[\tilde{U} \partial \tilde{U} \partial H \right]
            + \frac{95}{192} \sigma \left[\tilde{U} \left(\partial H\right)^2 \right] -\frac{1}{12} \sum_{\substack{i \in I \\ i \notin I \cap J}} \sigma \left[\tilde{U} \partial \tilde{W}_{i} \right] \notag \\
            &\quad \quad +\frac{1}{12} \sum_{\substack{i \in \mathcal{I}_{n-1} \\ i \notin I \cup J}} \sigma \left[\tilde{U} \partial \tilde{W}_{i} \right]
            -\frac{1}{24} \sum_{i \in I \cap J} \sigma \left[\tilde{U} \partial^2 T_{i} \right] +\frac{1}{8} \sum_{\substack{i \in J \\ i \notin I \cap J}} \sigma \left[\tilde{U} \partial^2 T_{i} \right]
            + \frac{1}{48} \sigma \left[\tilde{U} \partial^3 \tilde{U} \right] - \frac{1}{16} \sigma \left[\tilde{U} \partial^3 H \right] \notag
    \end{align}
    \begin{align}
            &\quad \quad - \frac{226603}{124872} H^5 - \frac{809}{19360} \sigma \left[H^3 E F \right] -\frac{1}{8} \sum_{\substack{i \in \mathcal{I}_{n-1} \\ i \notin I \cup J}} \sigma \left[H^3 T_{i} \right] - \frac{1}{240} \sigma \left[H^3 \hat{H} \right] - \frac{1}{16} \sigma \left[H^3 \partial \tilde{U} \right] \notag \\
            &\quad \quad + \frac{5}{168} \sigma \left[H^3 \partial H \right] + \frac{931}{2160} \sigma \left[H^2 E \hat{F} \right] + \frac{1176067}{2996928} \sigma \left[H^2 E \partial F \right]
            - \frac{377}{1080} \sigma \left[H^2 F \hat{E} \right] \notag \\
            &\quad \quad - \frac{1144849}{2996928} \sigma \left[H^2 F \partial E \right] -\frac{1}{36} \sum_{\substack{i \in I \cup J \\ i \notin I \cap J}} \sigma \left[H^2 \tilde{W}_{i} \right] +\frac{1}{18} \sum_{\substack{i \in \mathcal{I}_{n-1} \\ i \notin I \cup J}} \sigma \left[H^2 \tilde{W}_{i} \right]
            -\frac{1}{24} \sum_{\substack{i \in I \\ i \notin I \cap J}} \sigma \left[H^2 \partial T_{i} \right] \notag \\
            &\quad \quad + \frac{1}{24} \sum_{\substack{i \in J \\ i \notin I \cap J}} \sigma \left[H^2 \partial T_{i} \right] +\frac{1}{12} \sum_{\substack{i \in \mathcal{I}_{n-1} \\ i \notin I \cup J}} \sigma \left[H^2 \partial T_{i} \right] + \frac{8}{3} \sigma \left[H^2 \partial \hat{H} \right] + \frac{1}{36} \sigma \left[H^2 \partial^2 \tilde{U} \right]
            - \frac{115735}{749232} \sigma \left[H^2 \partial^2 H \right] \notag \\
            &\quad \quad + \frac{61121}{1873080} \sigma \left[H E^2 F^2 \right]
            -\frac{1}{48} \sum_{\substack{i \in \mathcal{I}_{n-1} \\ i \notin I \cup J}} \sigma \left[H E F T_{i} \right] - \frac{61}{1440} \sigma \left[H E F \hat{H} \right] - \frac{1}{96} \sigma \left[H E F \partial \tilde{U} \right] \notag \\
            &\quad \quad - \frac{11}{4032} \sigma \left[H E F \partial H \right]
            - \frac{17}{48} \sigma \left[H E \partial \hat{F} \right] - \frac{686669}{4495392} \sigma \left[H E \partial^2 F \right] - \frac{3}{16} \sigma \left[H F \partial \hat{E} \right] \notag \\
            &\quad \quad - \frac{257485}{2247696} \sigma \left[H F \partial^2 E \right] -\frac{1}{18} \sum_{i \in I \cap J} \sigma \left[H T_{i}^2 \right] +\frac{1}{6} \sum_{\substack{i \in I \cup J \\ i \notin I \cap J}} \sigma \left[H T_{i}^2 \right] -\frac{1}{6} \sum_{\substack{i \in \mathcal{I}_{n-1} \\ i \notin I \cup J}} \sigma \left[H T_{i}^2 \right] \notag \\
            &\quad \quad +\frac{1}{6} \sum_{\substack{i,j \in \mathcal{I}_{n-1} \\ i,j \notin I \cup J \\ i < j}} \sigma \left[H T_{i} T_{j} \right]
            +\frac{1}{36} \sum_{\substack{i \in \mathcal{I}_{n-1} \\ i \notin I \cup J}} \sigma \left[H T_{i} \hat{H} \right] +\frac{1}{12} \sum_{\substack{i \in \mathcal{I}_{n-1}\\ i \notin I \cup J}} \sigma \left[H T_{i} \partial \tilde{U} \right]
            +\frac{1}{12} \sum_{\substack{i \in \mathcal{I}_{n-1} \\ i \notin I \cup J}} \sigma \left[H T_{i} \partial H \right] \notag \\
            &\quad \quad + \frac{115903}{749232} \sigma \left[H \hat{H}^2 \right] + \frac{1}{72} \sigma \left[H \hat{H} \partial \tilde{U} \right]
            - \frac{233}{216} \sigma \left[H \hat{H} \partial H \right] - \frac{178141}{2996928} \sigma \left[H \hat{E} \hat{F} \right]
            + \frac{5}{108} \sigma \left[H \hat{E} \partial F \right] \notag \\
            &\quad \quad + \frac{1}{36} \sigma \left[H \hat{F} \partial E \right]
            + \frac{1}{24} \sigma \left[H \left(\partial \tilde{U} \right)^2 \right] + \frac{1}{24} \sigma \left[H \partial \tilde{U} \partial H \right] - \frac{60493}{68112} \sigma \left[H \left(\partial H \right)^2 \right] \notag \\
            &\quad \quad + \frac{232589}{272448} \sigma \left[H \partial E \partial F \right] +\frac{1}{12} \sum_{\substack{i \in I \\ i \notin I \cap J}} \sigma \left[H \partial \tilde{W}_{i} \right] -\frac{1}{12} \sum_{\substack{i \in \mathcal{I}_{n-1} \\ i \notin I \cup J}} \sigma \left[H \partial \tilde{W}_{i} \right]
            +\frac{1}{24} \sum_{i \in I \cap J} \sigma \left[H \partial^2 T_{i} \right] \notag \\
            &\quad \quad -\frac{1}{8} \sum_{\substack{i \in J \\ i \notin I \cap J}} \sigma \left[H \partial^2 T_{i} \right]
            - \frac{1}{16} \sigma \left[H \partial^2 \hat{H} \right] - \frac{1}{48} \sigma \left[H \partial^3 \tilde{U} \right] + \frac{1}{64} \sigma \left[H  \partial^3 H \right] + \frac{487}{4320} \sigma \left[E^2 F \hat{F} \right] \notag \\
            &\quad \quad - \frac{424121}{5244624} \sigma \left[E^2 F \partial F \right] - \frac{493}{4320} \sigma \left[E F^2 \hat{E} \right]
            + \frac{22117}{291368} \sigma \left[E F^2 \partial E \right] -\frac{1}{72} \sum_{\substack{i \in I \cup J \\ i \notin I \cap J}} \sigma \left[E F \tilde{W}_{i} \right] \notag \\
            &\quad \quad +\frac{1}{36} \sum_{\substack{i \in \mathcal{I}_{n-1} \\ i \notin I \cup J}} \sigma \left[E F \tilde{W}_{i}\right]
            -\frac{1}{48} \sum_{\substack{i \in I \\ i \notin I \cap J}} \sigma \left[E F \partial T_{i} \right]
            +\frac{1}{48} \sum_{\substack{i \in J \\ i \notin I \cap J}} \sigma \left[E F \partial T_{i} \right] +\frac{1}{24} \sum_{\substack{i \in \mathcal{I}_{n-1} \\ i \notin I \cup J}} \sigma \left[E F \partial T_{i} \right] \notag \\
            &\quad \quad + \frac{31}{144} \sigma \left[E F \partial \hat{H} \right] + \frac{1}{72} \sigma \left[E F \partial^2 \tilde{U} \right]
            + \frac{662333}{8990784} \sigma \left[E F \partial^2 H \right]
            +\frac{1}{18} \sum_{\substack{i \in \mathcal{I}_{n-1} \\ i \notin I \cup J}} \sigma \left[E T_{i} \hat{F} \right] \notag
    \end{align}
    \begin{align}
            &\quad \quad +\frac{1}{8} \sum_{\substack{i \in \mathcal{I}_{n-1} \\ i \notin I \cup J}} \sigma \left[E T_{i} \partial F \right]
            + \frac{380353}{20978496} \sigma \left[E \hat{H} \hat{F} \right] + \frac{1}{36} \sigma \left[E \partial \tilde{U} \hat{F} \right] + \frac{1}{54} \sigma \left[E \partial H \hat{F} \right]
            + \frac{1}{16} \sigma \left[E \partial \tilde{U} \partial F \right] \notag \\
            &\quad \quad -\frac{1}{36} \sum_{\substack{i \in \mathcal{I}_{n-1} \\ i \notin I \cup J}} \sigma \left[F T_{i} \hat{E} \right] -\frac{1}{24} \sum_{\substack{i \in \mathcal{I}_{n-1} \\ i \notin I \cup J}} \sigma \left[F T_{i} \partial E \right]
            + \frac{15787}{6992832} \sigma \left[F \hat{H} \hat{E} \right] - \frac{7}{108} \sigma \left[F \hat{H} \partial E \right] \notag \\
            &\quad \quad - \frac{1}{72} \sigma \left[F \hat{E} \partial \tilde{U} \right]
            - \frac{1}{48} \sigma \left[F \partial \tilde{U} \partial E \right] - \frac{5}{288} \sigma \left[F \partial H \partial E \right] 
            -\frac{1}{30} \sum_{i \in I \cap J} \sigma \left[T_{i} \tilde{W}_{i} \right] +\frac{1}{20} \sum_{\substack{i \in I \cup J \\ i \notin I \cap J}} \sigma \left[T_{i} \tilde{W}_{i} \right] \notag \\
            &\quad \quad -\frac{1}{30} \sum_{\substack{i \in \mathcal{I}_{n-1} \\ i \notin I \cup J}} \sigma \left[T_{i} \tilde{W}_{i} \right] -\frac{1}{12} \sum_{\substack{i \in \mathcal{I}_{n-1} \\ i \notin I \cup J}} \sum_{\substack{j \in I \cup J \\ i \notin I \cap J}} \sigma \left[T_{i} \tilde{W}_{j} \right]
            +\frac{1}{6} \sum_{\substack{i,j \in \mathcal{I}_{n-1} \\ i,j \notin I \cup J \\ i \neq j}} \sigma \left[T_{i} \tilde{W}_{j} \right] -\frac{1}{12} \sum_{i \in I \cap J} \sigma \left[T_{i} \partial T_{i} \right] \notag \\
            &\quad \quad +\frac{3}{8} \sum_{\substack{i \in I \\ i \notin I \cap J}} \sigma \left[T_{i} \partial T_{i} \right] +\frac{1}{8} \sum_{\substack{i \in J \\ i \notin I \cap J}} \sigma \left[T_{i} \partial T_{i} \right]
            -\frac{1}{4} \sum_{\substack{i \in \mathcal{I}_{n-1} \\ i \notin I \cup J}} \sigma \left[T_{i} \partial T_{i} \right]
            -\frac{1}{8} \sum_{\substack{i \in \mathcal{I}_{n-1} \\ i \notin I \cup J}} \sum_{\substack{j \in I \\ j \notin I \cap J}} \sigma \left[T_{i} \partial T_{j} \right] \notag \\
            &\quad \quad +\frac{1}{8} \sum_{\substack{i \in \mathcal{I}_{n-1} \\ i \notin I \cup J}} \sum_{\substack{j \in J \\ j \notin I \cap J}} \sigma \left[T_{i} \partial T_{j} \right] +\frac{1}{4} \sum_{\substack{i,j \in \mathcal{I}_{n-1} \\ i,j \notin I \cup J \\ i \neq j}} \sigma \left[T_{i} \partial T_{j} \right] +\frac{1}{12} \sum_{\substack{i \in \mathcal{I}_{n-1}\\ i \notin I \cup J}} \sigma \left[T_{i} \partial^2 \tilde{U} \right]
            -\frac{1}{6} \sum_{\substack{i \in \mathcal{I}_{n-1} \\ i \notin I \cup J}} \sigma \left[T_{i} \partial^2 H \right] \notag \\
            &\quad \quad +\frac{1}{24} \sum_{\substack{i \in I \cup J \\ i \notin I \cap J}} \sigma \left[\hat{H} \tilde{W}_{i} \right] -\frac{1}{12} \sum_{\substack{i \in \mathcal{I}_{n-1} \\ i \notin I \cup J}} \sigma \left[\hat{H} \tilde{W}_{i} \right] +\frac{1}{16} \sum_{\substack{i \in I \\ i \notin I \cap J}} \sigma \left[\hat{H} \partial T_{i} \right] -\frac{1}{16} \sum_{\substack{i \in J \\ i \notin I \cap J}} \sigma \left[\hat{H} \partial T_{i} \right] \notag \\
            &\quad \quad -\frac{1}{8} \sum_{\substack{i \in \mathcal{I}_{n-1} \\ i \notin I \cup J}} \sigma \left[\hat{H} \partial T_{i} \right] + \frac{1}{1344} \sigma \left[\hat{H} \partial \hat{H} \right]
            - \frac{1}{24} \sigma \left[\hat{H} \partial^2 \tilde{U} \right] + \frac{11013}{211904} \sigma \left[\hat{E} \partial \hat{F} \right] - \frac{48175}{635712} \sigma \left[\hat{F} \partial \hat{E} \right] \notag \\
            &\quad \quad -\frac{1}{24} \sum_{\substack{i \in I \cup J \\ i \notin I \cap J}} \sigma \left[\tilde{W}_{i} \partial \tilde{U} \right] 
            +\frac{1}{12} \sum_{\substack{i \in \mathcal{I}_{n-1} \\ i \notin I \cup J}} \sigma \left[\tilde{W}_{i} \partial \tilde{U} \right] +\frac{1}{24} \sum_{\substack{i \in I \cup J \\ i \notin I \cap J}} \sigma \left[\tilde{W}_{i} \partial H\right] -\frac{1}{12} \sum_{\substack{i \in \mathcal{I}_{n-1} \\ i \notin I \cup J}} \sigma \left[\tilde{W}_{i} \partial H \right] \notag \\
            &\quad \quad -\frac{1}{16} \sum_{\substack{i \in I \\ i \notin I \cap J}} \sigma \left[\partial \tilde{U} \partial T_{i} \right]
            +\frac{1}{16} \sum_{\substack{i \in J \\ i \notin I \cap J}} \sigma \left[\partial \tilde{U} \partial T_{i} \right] +\frac{1}{8} \sum_{\substack{i \in \mathcal{I}_{n-1} \\ i \notin I \cup J}} \sigma \left[\partial \tilde{U} \partial T_{i} \right] + \frac{1}{24} \sigma \left[\partial \tilde{U} \partial^2 \tilde{U} \right] \notag \\
            &\quad \quad - \frac{1}{12} \sigma \left[\partial \tilde{U} \partial^2 H \right] +\frac{1}{16}\sum_{\substack{i \in I \\ i \notin I \cap J}} \sigma \left[\partial H \partial T_{i} \right] 
            -\frac{1}{16} \sum_{\substack{i \in J \\ i \notin I \cap J}} \sigma \left[\partial H \partial T_{i} \right] -\frac{1}{8} \sum_{\substack{i \in \mathcal{I}_{n-1} \\ i \notin I \cup J}} \sigma \left[\partial H \partial T_{i} \right] \notag \\
            &\quad \quad - \frac{1}{24} \sigma \left[\partial H \partial^2 \tilde{U} \right]
            +\frac{7}{180} \sum_{i \in I \cap J} \partial^2 \tilde{W}_{i} +\frac{1}{40} \sum_{\substack{i \in I \\ i \notin I \cap J}} \partial^2 \tilde{W}_{i} -\frac{7}{120} \sum_{\substack{i \in J \\ i \notin I \cap J}} \partial^2 \tilde{W}_{i} -\frac{1}{60} \sum_{\substack{i \in \mathcal{I}_{n-1} \\ i \notin I \cup J}} \partial^2 \tilde{W}_{i} \notag \\
            &\quad \quad \left. +\frac{1}{24} \sum_{i \in I \cap J} \partial^3 T_{i} -\frac{1}{16} \sum_{\substack{i \in I \cup J \\ i \notin I \cap J}} \partial^3 T_{i} +\frac{1}{24} \sum_{\substack{i \in \mathcal{I}_{n-1} \\ i \notin I \cup J}} \partial^3 T_{i} - \frac{1}{120} \partial^4 \tilde{U}
        \right) \notag \\
        &\quad + \cdots ~,
    \end{align}

    \begin{align}
        \mathcal{X}^{-}_{I}(z) \mathcal{Y}^{-}_{J}(0) &\sim \frac{\zeta_{I,J} F}{z^{n+|I|+|J|-1}} +\frac{\zeta_{I,J}}{z^{n+|I|+|J|-2}} \left(
            -\frac{1}{2} \sigma \left[\tilde{U} F \right] +\frac{1}{2} \hat{F} +\frac{1}{2} \partial F
        \right) \notag \\
        &\quad + \frac{\zeta_{I,J}}{z^{n+|I|+|J|-3}} \left(
            \frac{1}{6} \sigma \left[\tilde{U}^2 F \right] -\frac{1}{4} \sigma \left[\tilde{U} \hat{F} \right] -\frac{1}{4} \sigma \left[\tilde{U} \partial F \right] +\frac{1}{6} \sigma \left[H^2 F \right] +\frac{1}{4} \sigma \left[H \hat{F} \right] \right. \notag \\
            &\quad \quad +\frac{1}{2} \sigma \left[H \partial F \right] +\frac{1}{6} \sigma \left[E F^2 \right] -\frac{1}{2} \sum_{\substack{i \in \mathcal{I}_{n-1} \\ i \notin I \cup J}} \sigma \left[F T_{i} \right] -\frac{1}{4} \sigma \left[F \hat{H} \right] -\frac{1}{4} \sigma \left[F \partial \tilde{U} \right] \notag \\
            &\quad \quad \left. -\frac{1}{2} \sigma \left[F \partial H \right] +\frac{1}{3} \partial^2 F
        \right) \notag \\
        &\quad + \frac{1}{z^{n-4}} \left(
            -\frac{1}{24} \sigma \left[\tilde{U}^3 F \right] +\frac{1}{12} \sigma \left[\tilde{U}^2 \hat{F} \right] +\frac{1}{12} \sigma \left[\tilde{U}^2 \partial F \right] -\frac{1}{24} \sigma \left[\tilde{U} H^2 F \right] -\frac{1}{12} \sigma \left[\tilde{U} H \hat{F} \right] \right. \notag \\
            &\quad \quad -\frac{1}{6} \sigma \left[\tilde{U} H \partial F \right]
            -\frac{1}{24} \sigma \left[\tilde{U} E F^2 \right] +\frac{1}{6} \sum_{\substack{i \in \mathcal{I}_{n-1} \\ i \notin I \cup J}} \sigma \left[\tilde{U} F T_{i} \right] +\frac{1}{12} \sigma \left[\tilde{U} F \hat{H} \right] +\frac{1}{12} \sigma \left[\tilde{U} F \partial \tilde{U} \right] \notag \\
            &\quad \quad +\frac{1}{6} \sigma \left[\tilde{U} F \partial H \right]
            -\frac{1}{6} \sigma \left[\tilde{U} \partial^2 F \right] -\frac{1}{108} \sigma \left[H^2 \hat{F} \right] +\frac{1}{12} \sigma \left[H^2 \partial F \right] +\frac{1}{54} \sigma \left[H F \hat{H} \right] \notag \\
            &\quad \quad +\frac{1}{12} \sigma \left[H F \partial H \right] +\frac{2}{9} \sigma \left[H \partial \hat{F} \right] +\frac{1}{4} \sigma \left[H \partial^2 F\right]
            +\frac{1}{24} \sigma \left[E F \hat{F} \right] +\frac{1}{12} \sigma \left[E F \partial F \right] \notag \\
            &\quad \quad -\frac{1}{18} \sigma \left[F^2 \hat{E} \right] +\frac{1}{12} \sigma \left[F^2 \partial E \right] +\frac{1}{12} \sum_{\substack{i \in I \cup J \\ i \notin I \cap J}} \sigma \left[F \tilde{W}_{i} \right] -\frac{1}{6} \sum_{\substack{i \in \mathcal{I}_{n-1} \\ i \notin I \cup J}} \sigma \left[F \tilde{W}_{i} \right] \notag \\
            &\quad \quad +\frac{1}{8} \sum_{\substack{i \in I \\ i \notin I \cap J}} \sigma \left[F \partial T_{i} \right] -\frac{1}{8} \sum_{\substack{i \in J \\ i \notin I \cap J}} \sigma \left[F \partial T_{i} \right]
            -\frac{1}{4} \sum_{\substack{i \in \mathcal{I}_{n-1} \\ i \notin I \cup J}} \sigma \left[F \partial T_{i} \right] -\frac{1}{12} \sigma \left[F \partial^2 \tilde{U} \right] \notag \\
            &\quad \quad -\frac{1}{4} \sigma \left[F \partial^2 H \right] -\frac{1}{4} \sum_{\substack{i \in \mathcal{I}_{n-1} \\ i \notin I \cup J}} \sigma \left[T_{i} \hat{F} \right] -\frac{1}{4} \sum_{\substack{i \in \mathcal{I}_{n-1} \\ i \notin I \cup J}} \sigma \left[T_{i} \partial F \right]
            -\frac{1}{9} \sigma \left[\hat{H} \partial F \right] -\frac{1}{8} \sigma \left[\hat{F} \partial \tilde{U} \right] \notag \\
            &\quad \quad \left. -\frac{1}{8} \sigma \left[\partial \tilde{U} \partial F \right]
            +\frac{1}{8} \partial^3 F
        \right) \notag \\
        &\quad +\frac{\zeta_{I,J}}{z^{n+|I|+|J|-5}} \left(
            \frac{1}{120} \sigma \left[\tilde{U}^4 F \right] - \frac{1}{48} \sigma \left[\tilde{U}^3 \hat{F} \right] - \frac{1}{48} \sigma \left[\tilde{U}^3 \partial F \right] + \frac{1}{120} \sigma \left[\tilde{U}^2 H^2 F \right] \right. \notag \\
            &\quad \quad + \frac{1}{48} \sigma \left[\tilde{U}^2 H \hat{F} \right]
            + \frac{1}{24} \sigma \left[\tilde{U}^2 H \partial F \right] + \frac{1}{120} \sigma \left[\tilde{U}^2 E F^2 \right]
            -\frac{1}{24} \sum_{\substack{i \in \mathcal{I}_{n-1} \\ i \notin I \cup J}} \sigma \left[\tilde{U}^2 F T_{i} \right] \notag \\
            &\quad \quad - \frac{1}{48} \sigma \left[\tilde{U}^2 F \hat{H} \right] - \frac{1}{48} \sigma \left[\tilde{U}^2 F \partial \tilde{U} \right] - \frac{1}{24} \sigma \left[\tilde{U}^2 F \partial H \right] + \frac{1}{18} \sigma \left[\tilde{U}^2 \partial^2 F \right] + \frac{1}{432} \sigma \left[\tilde{U} H^2 \hat{F} \right] \notag \\
            &\quad \quad - \frac{1}{48} \sigma \left[\tilde{U} H^2 \partial F \right] - \frac{1}{216} \sigma \left[\tilde{U} H F \hat{H} \right] 
            - \frac{1}{48} \sigma \left[\tilde{U} H F \partial H \right]
            - \frac{2}{27} \sigma \left[\tilde{U} H \partial \hat{F} \right] \notag \\
            &\quad \quad - \frac{1}{12} \sigma \left[\tilde{U} H \partial^2 F \right] - \frac{1}{96} \sigma \left[\tilde{U} E F \hat{F} \right]
            - \frac{1}{48} \sigma \left[\tilde{U} E F \partial F \right]
            + \frac{1}{72} \sigma \left[\tilde{U} F^2 \hat{E} \right] - \frac{1}{48} \sigma \left[\tilde{U} F^2 \partial E \right] \notag
    \end{align}
    \begin{align}
            &\quad \quad -\frac{1}{36} \sum_{\substack{i \in I \cup J \\ i \notin I \cap J}} \sigma \left[\tilde{U} F \tilde{W}_{i} \right] +\frac{1}{18} \sum_{\substack{i \in \mathcal{I}_{n-1} \\ i \notin I \cup J}} \sigma \left[\tilde{U} F \tilde{W}_{i} \right]
            -\frac{1}{24} \sum_{\substack{i \in I \\ i \notin I \cap J}} \sigma \left[\tilde{U} F \partial T_{i} \right] + \frac{1}{24} \sum_{\substack{i \in J \\ i \notin I \cap J}} \sigma \left[\tilde{U} F \partial T_{i} \right] \notag \\
            &\quad \quad +\frac{1}{12} \sum_{\substack{i \in \mathcal{I}_{n-1} \\ i \notin I \cup J}} \sigma \left[\tilde{U} F \partial T_{i} \right]
            + \frac{1}{36} \sigma \left[\tilde{U} F \partial^2 \tilde{U} \right] 
            + \frac{1}{12} \sigma \left[\tilde{U} F \partial^2 H \right] +\frac{1}{12} \sum_{\substack{i \in \mathcal{I}_{n-1} \\ i \notin I \cup J}} \sigma \left[\tilde{U} T_{i} \hat{F} \right] \notag \\
            &\quad \quad +\frac{1}{12} \sum_{\substack{i \in \mathcal{I}_{n-1} \\ i \notin I \cup J}} \sigma \left[\tilde{U} T_{i} \partial F \right] + \frac{1}{27} \sigma \left[\tilde{U} \hat{H} \partial F \right] + \frac{1}{24} \sigma \left[\tilde{U} \hat{F} \partial \tilde{U} \right] 
            + \frac{1}{24} \sigma \left[\tilde{U} \partial \tilde{U} \partial F \right] - \frac{1}{16} \sigma \left[\tilde{U} \partial^3 F \right] \notag \\
            &\quad \quad - \frac{952673}{624360} \sigma \left[H^4 F \right]
            - \frac{403}{336} \sigma \left[H^3 \hat{F} \right] - \frac{599149}{249744} \sigma \left[H^3 \partial F \right]
            - \frac{102917}{468270} \sigma \left[H^2 E F^2 \right] \notag \\
            &\quad \quad -\frac{1}{24} \sum_{\substack{i \in \mathcal{I}_{n-1} \\ i \notin I \cup J}} \sigma \left[H^2 F T_{i} \right] + \frac{403}{1008} \sigma \left[H^2 F \hat{H} \right] - \frac{1}{48} \sigma \left[H^2 F \partial \tilde{U} \right]
            + \frac{599149}{749232} \sigma \left[H^2 F \partial H \right] \notag \\
            &\quad \quad + \frac{35}{18} \sigma \left[H^2 \partial \hat{F} \right] + \frac{54289}{140481} \sigma \left[H^2 \partial^2 F \right]
            + \frac{53}{672} \sigma \left[H E F \hat{F} \right]
            + \frac{6539}{11616} \sigma \left[H E F \partial F \right] - \frac{281}{1008} \sigma \left[H F^2 \hat{E} \right] \notag \\
            &\quad \quad - \frac{180335}{187308} \sigma \left[H F^2 \partial E \right] + \frac{25}{18} \sigma \left[H F \partial \hat{H} \right]
            - \frac{31681}{1123848} \sigma \left[H F \partial^2 H \right]
            -\frac{1}{12} \sum_{\substack{i \in \mathcal{I}_{n-1} \\ i \notin I \cup J}} \sigma \left[H T_{i} \hat{F} \right] \notag \\
            &\quad \quad -\frac{1}{6} \sum_{\substack{i \in \mathcal{I}_{n-1} \\ i \notin I \cup J}} \sigma \left[H T_{i} \partial F \right] + \frac{48607}{187308} \sigma \left[H \hat{H} \hat{F}\right]
            - \frac{95}{168} \sigma \left[H \hat{H} \partial F \right] - \frac{1}{24} \sigma \left[H \hat{F} \partial \tilde{U}\right] - \frac{277}{504} \sigma \left[H \hat{F} \partial H \right] \notag \\
            &\quad \quad - \frac{1}{12} \sigma \left[H \partial \tilde{U} \partial F \right]
            - \frac{180335}{68112} \sigma \left[H \partial H \partial F \right] 
            + \frac{129337}{1248720} \sigma \left[E^2 F^3 \right] -\frac{1}{24} \sum_{\substack{i \in \mathcal{I}_{n-1} \\ i \notin I \cup J}} \sigma \left[E F^2 T_{i} \right] \notag \\
            &\quad \quad + \frac{61}{504} \sigma \left[E F^2 \hat{H} \right]
            - \frac{1}{48} \sigma \left[E F^2 \partial \tilde{U}\right] - \frac{122191}{749232} \sigma \left[E F^2 \partial H \right] - \frac{1}{168} \sigma \left[E F \partial \hat{F} \right]
            - \frac{1609}{26136} \sigma \left[E F \partial^2 F \right] \notag \\
            &\quad \quad + \frac{14093}{749232} \sigma \left[E \hat{F}^2 \right]
            - \frac{143}{504} \sigma \left[F^2 \partial \hat{E} \right] - \frac{604367}{2247696} \sigma \left[F^2 \partial^2 E \right]
            -\frac{1}{18} \sum_{i \in I \cap J} \sigma \left[F T_{i}^2 \right] +\frac{1}{6} \sum_{\substack{i \in I \cup J \\ i \notin I \cap J}} \sigma \left[F T_{i}^2 \right] \notag \\
            &\quad \quad -\frac{1}{6} \sum_{\substack{i \in \mathcal{I}_{n-1} \\ i \notin I \cup J}} \sigma \left[F T_{i}^2 \right]
            +\frac{1}{6} \sum_{\substack{i,j \in\mathcal{I}_{n-1} \\ i,j \notin I \cup J \\ i < j}} \sigma \left[F T_{i} T_{j} \right]
            +\frac{1}{12} \sum_{\substack{i \in \mathcal{I}_{n-1} \\ i \notin I \cup J}} \sigma \left[F T_{i} \hat{H} \right] +\frac{1}{12} \sum_{\substack{i \in \mathcal{I}_{n-1} \\ i \notin I \cup J}} \sigma \left[F T_{i} \partial \tilde{U} \right] \notag \\
            &\quad \quad +\frac{1}{6} \sum_{\substack{i \in \mathcal{I}_{n-1} \\ i \notin I \cup J}} \sigma \left[F T_{i} \partial H \right]
            + \frac{45041}{374616} \sigma \left[F \hat{H}^2 \right] + \frac{1}{24} \sigma \left[F \hat{H} \partial \tilde{U} \right]
            - \frac{277}{504} \sigma \left[F \hat{H} \partial H \right] - \frac{2380}{46827} \sigma \left[F \hat{E} \hat{F} \right] \notag \\
            &\quad \quad + \frac{1}{126} \sigma \left[F \hat{F} \partial E \right]
            + \frac{1}{24} \sigma \left[F \left(\partial \tilde{U} \right)^2 \right] + \frac{1}{12} \sigma \left[F \partial \tilde{U} \partial H \right]
            - \frac{31429}{34056} \sigma \left[F \left(\partial H \right)^2 \right]
            + \frac{13053}{15136} \sigma \left[F \partial E \partial F \right] \notag \\
            &\quad \quad +\frac{1}{12} \sum_{\substack{i \in I \\ i \notin I \cap J}} \sigma \left[F \partial \tilde{W}_{i} \right] -\frac{1}{12} \sum_{\substack{i \in \mathcal{I}_{n-1} \\ i \notin I \cup J}} \sigma \left[F \partial \tilde{W}_{i} \right]
            +\frac{1}{24} \sum_{i \in I \cap J} \sigma \left[F \partial^2 T_{i} \right] -\frac{1}{8} \sum_{\substack{i \in J \\ i \notin I \cap J}} \sigma \left[F \partial^2 T_{i} \right] \notag
    \end{align}
    \begin{align}
            &\quad \quad - \frac{1}{48} \sigma \left[F \partial^3 \tilde{U} \right]
            -\frac{1}{6} \sum_{\substack{i \in \mathcal{I}_{n-1} \\ i \notin I \cup J}} \sigma \left[T_{i} \partial^2 F \right] - \frac{5801}{45408} \sigma \left[\hat{H} \partial \hat{F} \right] +\frac{1}{24} \sum_{\substack{i \in I \cup J \\ i \notin I \cap J}} \sigma \left[\hat{F} \tilde{W}_{i} \right] -\frac{1}{12} \sum_{\substack{i \in \mathcal{I}_{n-1} \\ i \notin I \cup J}} \sigma \left[\hat{F} \tilde{W}_{i} \right] \notag \\
            &\quad \quad +\frac{1}{16} \sum_{\substack{i \in I \\ i \notin I \cap J}} \sigma \left[\hat{F} \partial T_{i} \right] -\frac{1}{16} \sum_{\substack{i \in J \\ i \notin I \cap J}} \sigma \left[\hat{F} \partial T_{i} \right] -\frac{1}{8} \sum_{\substack{i \in \mathcal{I}_{n-1} \\ i \notin I \cup J}} \sigma \left[\hat{F} \partial T_{i} \right] + \frac{5801}{45408} \sigma \left[\hat{F} \partial \hat{H} \right]
            - \frac{1}{24} \sigma \left[\hat{F} \partial^2 \tilde{U} \right] \notag \\
            &\quad \quad +\frac{1}{24} \sum_{\substack{i \in I \cup J \\ i \notin I \cap J}} \sigma \left[\tilde{W}_{i} \partial F\right] -\frac{1}{12} \sum_{\substack{i \in \mathcal{I}_{n-1} \\ i \notin I \cup J}} \sigma \left[\tilde{W}_{i} \partial F \right] - \frac{1}{12} \sigma \left[\partial \tilde{U} \partial^2 F \right] +\frac{1}{16} \sum_{\substack{i \in I \\ i \notin I \cap J}} \sigma \left[\partial F \partial T_{i} \right] \notag \\
            &\quad \quad \left. -\frac{1}{16} \sum_{\substack{i \in J \\ i \notin I \cap J}} \sigma \left[\partial F \partial T_{i} \right] -\frac{1}{8} \sum_{\substack{i \in \mathcal{I}_{n-1} \\ i \notin I \cup J}} \sigma \left[\partial F \partial T_{i} \right]
            - \frac{1}{24} \sigma \left[\partial F \partial^2 \tilde{U} \right]
        \right) \notag \\
        &\quad +\cdots ~,
    \end{align}
    where the above OPEs are checked for $n=2,3,4,5$.
\section{To generalize fermionic extension for abelian quiver gauge theories}
    \label{To generalize fermionic extension for abelian quiver gauge theories}
    Through the specific example in Sec.\ref{Fermionic extension}, we checked that the operators $T_i$, $W_i$, $\mathcal{X}_{I}$, $\mathcal{Y}_{I}$, for $I \neq \phi$ which were composed of symplectic bosons $X_i$ and $Y_i$ could be constructed by the generators after fermionic extension, that is (\ref{bosonic only}), (\ref{U_F}), (\ref{M pm}) and (\ref{X I and Y I}).
    In this subsection, we try to generalize this statement and understand how to extend and restrict the algebraic structure.\footnote{Here, we deal with a quiver gauge theory with an abelian gauge group.} 
    
    We know that it is necessary to modify the stress tensor from (\ref{definition of T}) to (\ref{definition of T ver.4}) and apply (\ref{extension of T_i}) in our case, $T_{[n-1,1]}^{[1^n]}(SU(n))$. 
    Therefore, let us generalize the proper modification of the stress tensor as follows.

    First, we must check whether $C_{ij}$ is invertible. Let us define the currents $J_{XY,i}$ as follows.
    \begin{align}
        J_{XY,i} \coloneqq \sum_{k} A_{ik} \mathcal{A}_{k}~,
    \end{align} 
    where $J_{XY,i}$ are used by the definition of the BRST current $J_{BRST} \coloneqq \sum_{i}c^i \left(J_{XY,i} + h_i\right)~$. 
    Then, we can calculate the OPEs between $J_{XY,i}$ and $J_{XY,j}$.
    \begin{align}
        \label{definition of C_ij}
        J_{XY,i}(z) J_{XY,j}(0) \sim \frac{-C_{ij}}{z^2}~,
    \end{align}
    where $C_{ij}$ is defined by using $A_{ik}$ as follows.
    \begin{align}
        \label{AA}
        C_{ij} \coloneqq \sum_{k} A_{ik} A_{jk}~.
    \end{align}
    Therefore, when the following condition holds,
    \begin{align}
        \label{inversion}
        \text{Rank}\left(A\right) 
        &= \left(\text{the number of } U(1) \text{ gauge groups in the theory}\right)~,
    \end{align}
    $C_{ij}$ can be invertible.
    Then, we can use the stress tensor
    \begin{align*}
        T &= \frac{1}{2}\sum_{i} \left(X_i \partial Y_i -\partial X_i Y_i\right) + \frac{1}{2}\sum_{i,j}C^{ij} h_i h_j -\sum_{a} b_a \partial c^a \\
        &\equiv \frac{1}{2}\sum_{i} \left(X_i \partial Y_i -\partial X_i Y_i\right) +\frac{1}{2}\sum_{i,j}C^{ij} J_{XY,i} J_{XY,j}~.
    \end{align*}
    where $C^{ij}$ is the inverse of $C_{ij}$.

    Second, let us conjecture that we can generally construct $T_{BC}$ satisfying
    \begin{align}
        \label{T BC}
        \frac{1}{2}\sum_{i,j}C^{ij} J_{XY,i} J_{XY,j} = \frac{1}{2} \sum_{i} \mathcal{A}_i \mathcal{A}_i +T_{BC}~,
    \end{align}
    by the generators of dimension one using only symplectic bosons.
    Since $T_{BC}$ can be composed of some linear sum of $\mathcal{A}_i$, we set $m$ closed operators $U_1,\dots,U_m$ satisfying
    \begin{align}
        U_i(z) U_j(0) \sim \frac{k_i\delta_{ij}}{z^2}~,
    \end{align}
    where $k_i$ is a negative integer and the level of $U(1)$ current $U_i$. Since we can define $U_i$ as follows,
    \begin{align}
        U_i \coloneqq \sum_{j} B_{ij} \mathcal{A}_j~,
    \end{align}
    we can derive the following relations from the OPEs between $U_i$ and $U_j$,
    \begin{align}
        \label{BB}
        \sum_{k} B_{ik} B_{jk} = -k_i \delta_{ij}~.
    \end{align}
    Therefore, $T_{BC}$ can be defined by
    \begin{align}
        \label{the definition of T_BC}
        T_{BC} \coloneqq \frac{1}{2}\sum_{a} \sum_{i,j} \frac{B_{ai}B_{aj}}{k_a} \mathcal{A}_i \mathcal{A}_j~. 
    \end{align}
    In fact, we can derive the OPEs of $T_{BC}$ as follows.
    \begin{align}
        T_{BC}(z) T_{BC}(0) &\sim \frac{m}{2z^4} + \frac{2 T_{BC}}{z^2} + \frac{\partial T_{BC}}{z}~, \\
        T_{BC}(z) U_k(0) &\sim \frac{U_k}{z^2} + \frac{\partial U_k}{z}~.
    \end{align}
    By the definition of $T_{BC}$, $T_{BC}$ is clearly a closed operator, namely $J_{BRST}(z) T_{BC}(0) \sim 0~$.
    This closed condition is equivalent of the following equality.
    \begin{align}
        \label{AB}
        \sum_{k}A_{ik} B_{jk}= 0~.
    \end{align}

    Finally, we expand the predicted relation (\ref{T BC}) by $A$ and $B$. Then, we can derive the following equivalence relation.
    \begin{align}
        \label{T BC 2}
        \sum_{a} \left\{\sum_{b} C^{ab}A_{ai} A_{bj}\right\} +\sum_{a}\left(-\frac{B_{ai} B_{aj}}{k_a} \right) = \delta_{ij}~.
    \end{align}
    We can find out that $\left\{\sum_{b}C^{ab} A_{ai} A_{bj}\right\}_{a}$ and $\left\{-\frac{B_{ai} B_{aj}}{k_a}\right\}_a$ are satisfying the following relations by (\ref{AA}), (\ref{BB}) and (\ref{AB}).
    \begin{align}
        \label{orthogonal decomposition}
        \left\{
            \begin{aligned}
                \sum_{j}\left(\sum_{b} C^{ab} A_{ai} A_{bj}\right) \left(\sum_{d} C^{cd} A_{cj} A_{dk}\right) &= \delta_{ac} \sum_{b} C^{ab} A_{ai} A_{bk} \\
                \sum_{j} \left(\sum_{b} C^{ab} A_{ai} A_{bj}\right) \left(-\frac{B_{cj}B_{ck}}{k_{c}}\right) &= 0 \\
                \sum_{j} \left(-\frac{B_{ai}B_{aj}}{k_{a}} \right) \left(\sum_{c} C^{bc} A_{bj} A_{ck}\right) &= 0\\
                \sum_{j} \left(-\frac{B_{ai}B_{aj}}{k_{a}} \right) \left(-\frac{B_{bj}B_{bk}}{k_{b}} \right) &= \delta_{ab} \left(-\frac{B_{ai}B_{ak}}{k_a}\right)
            \end{aligned}
        \right.~.
    \end{align}
    In other words, the predicted relation (\ref{T BC 2}) is equivalent to the orthogonal decomposition by $\left\{\sum_{b}C^{ab} A_{ai} A_{bj} \right\}$ and $\left\{-\frac{B_{ai} B_{aj}}{k_a} \right\}$.
    Thus (\ref{T BC 2}) holds if and only if the following condition is needed.
    \begin{align}
        \label{T BC 3}
        &\quad \left(\text{the number of } U(1) \text{ gauge groups in the theory}\right) + \left(\text{the number of } U_i \text{ in the theory}\right) \notag \\
        &= \left(\text{the number of symplectic bosons } (X_i,Y_i) \text{ in the theory}\right)~.
    \end{align}
    When this condition (\ref{T BC 3}) holds,
    the equality (\ref{T BC}) holds and the stress tensor $T$ is also equivalent of $\sum_i T_i + T_{BC}$.
    In particular, when $U_i$ does not exist and (\ref{T BC 3}) is satisfied, $T \equiv \sum_{i} T_i$ holds.

    Similarly, when (\ref{T BC 3}) holds, let us set $T_{FC}$ satisfying the following equality.
    \begin{align}
        \label{T F}
        T_f +T_{FC}= \frac{1}{2}\sum_{i}\left(\partial \psi_i \chi_i - \psi_i \partial \chi_i\right)~,
    \end{align}
    where $T_{FC}$ consists of generators of dimension one using fermi multiplets only. 
    Since the calculation of (\ref{T BC}) is ordinary similar to one of (\ref{T F}), that $(X_i,Y_i)$ are respectively replacd for $(\psi_i, \chi_i)$ corresponds
    that $T_{FC}$ can be derived from $T_{BC}$ as follows.
    \begin{align}
        \label{T FC}
        T_{FC} \coloneqq \left.T_{BC}\right|_{(X_i,Y_i) \mapsto (\psi_i,\chi_i),k_a \mapsto -k_a}~.
    \end{align}

    In order to derive (\ref{T FC}), when we define $J_{f,i} = \sum_{k} A_{ik} \mathcal{A}_{f,k}$ which are used by $J_{BRST} = \sum_{i}c^i \left(J_{XY,i} + J_{f,i}\right)$,
    and $U_{F,i} = \sum_{k} B_{ik} \mathcal{A}_{f,i}$ which is the replacement of $(X_i,Y_i)$ with $(\psi_i,\chi_i)$ with respect to $U_i$ and satisfying\footnote{The difference of (\ref{BB}) is related to the following OPE regarding with $\mathcal{A}_{f,i}$,
    \begin{align*}
        \mathcal{A}_{f,i}(z) \mathcal{A}_{f,j} (0) \sim \frac{\delta_{ij}}{z^2}~.
    \end{align*}}
    \begin{align}
        U_{F,i}(z) U_{F,j} (0) \sim \frac{-k_i \delta_{ij}}{z^2}~.
    \end{align}
    When the condition (\ref{T BC 3}) holds,
    we can also derive the following equality by (\ref{T BC 2})
    \begin{align}
        T_f + \frac{1}{2} \sum_{a} \frac{U_{F,a} U_{F,a}}{-k_a} = \frac{1}{2} \sum_{i} \left(\partial \psi_i \chi_i - \psi_i \partial \chi_i\right)~.
    \end{align}
    Therefore, compared with (\ref{T F}), we can find out that $T_{FC}$ is satisfying the following relationship.
    \begin{align}
        T_F = \frac{1}{2} \sum_{a} \frac{U_{F,a} U_{F,a}}{-k_a} = \left.\frac{1}{2} \sum_{a} \frac{U_{a} U_{a}}{k_a}\right|_{\left(X_i,Y_i\right) \mapsto \left(\psi_i,\chi_i\right), k_a \mapsto -k_a}
        = \left. T_{BC} \right|_{\left(X_i,Y_i\right) \mapsto \left(\psi_i,\chi_i\right), k_a \mapsto -k_a}~,
    \end{align}
    where the above relationship is equal to (\ref{T FC}).

    Thus, we can prove that the stress tensor can be constructed by using $T_{BC}$ and $T_{FC}$.
    \begin{align}
        \label{definition of T ver.6}
        T &= T_{sb} +T_f + T_{bc} +T_{FC} \\
        \label{definition of T ver.7}
        &\equiv \sum_{i}T_i +T_{BC} +T_{FC}~.
    \end{align}

    In the previous subsection, we derived the equivalences (\ref{extension of T_i}), (\ref{extension of W_i}) and (\ref{transformation of XI, YI}). In order to derive the similar relations,
    let us consider the general equivalency of $X_i Y_i + \psi_i \chi_i$ like (\ref{equivalency of U+U_F}).
    We prove that we can set this formula $X_i Y_i +\psi_i \chi_i$ as follows.
    \begin{align}
        \label{special relation}
        \mathcal{A}_i + \mathcal{A}_{f,i} \equiv \check{U}_i~,
    \end{align}
    where $\check{U}_i$ is equal to $0$ or composed of linear sum of the currents of one dimension which consist of generators after fermionic extension.\footnote{For $T_{[n-1,1]}^{[1^n]}(SU(n))$,
    $\check{U}_i$ are equal to $U+U_F$ by (\ref{equivalency of U+U_F}). In another example, $1$-flavor SQED \cite{Costello:2018fnz}, $\check{U}$ is $0$.}
    By using $A_{ik}$ and $B_{ik}$, the following equivalency holds.
    \begin{align}
        \begin{bmatrix}
            A \\
            B
        \end{bmatrix}
        \begin{bmatrix}
            \mathcal{A}_{1} + \mathcal{A}_{f,1} \\
            \vdots
        \end{bmatrix}
        \equiv
        \begin{bmatrix}
            0 \\
            \vdots \\
            0 \\
            U_1 + U_{F,1} \\
            \vdots \\
            U_m + U_{F,m}
        \end{bmatrix}~.
    \end{align}
    We define $\tilde{A}$ (resp. $\tilde{B}$) by normalizing and orthogonalizing the row vectors of $A$ (resp. $B$). Then, 
    \begin{align}
        \begin{bmatrix}
            \tilde{A} \\
            \tilde{B}
        \end{bmatrix}
        \begin{bmatrix}
            \mathcal{A}_{1} + \mathcal{A}_{f,1} \\
            \vdots
        \end{bmatrix}
        \equiv
        \begin{bmatrix}
            0 \\
            \vdots \\
            0 \\
            \frac{1}{\sqrt{-k_1}} \left(U_1 + U_{F,1}\right) \\
            \vdots \\
            \frac{1}{\sqrt{-k_m}} \left(U_m + U_{F,m}\right)
        \end{bmatrix}~,
    \end{align}
    where $k_i < 0$ for all $i =1,\dots,m$. Since the matrix of the left-hand side is the orthogonal matrix, the following equivalency can be derived.
    \begin{align}
        \begin{bmatrix}
            \mathcal{A}_{1} + \mathcal{A}_{f,1} \\
            \vdots
        \end{bmatrix}
        \equiv
        \begin{bmatrix}
            \tilde{A}^T & \tilde{B}^T \\
        \end{bmatrix}
        \begin{bmatrix}
            0 \\
            \vdots \\
            0 \\
            \frac{1}{\sqrt{-k_1}} \left(U_1 + U_{F,1}\right) \\
            \vdots \\
            \frac{1}{\sqrt{-k_m}} \left(U_m + U_{F,m}\right)
        \end{bmatrix}~,
    \end{align}
    where $\tilde{A}^T$ (resp. $\tilde{B}^T$) is the transpose of $\tilde{A}$ (resp. $\tilde{B})$. 
    Therefore, when $m \geq 1$, the following relationships hold,
    \begin{align}
        \label{exactness}
        \mathcal{A}_{i} + \mathcal{A}_{f,i} \equiv \sum_{a} \frac{B_{ai}}{-k_a} \left(U_a + U_{F,a}\right)~,
    \end{align}
    and we respectively define $\check{U}_i$ as follows.
    \begin{align}
        \check{U}_i \coloneqq \sum_{a} \frac{B_{ai}}{-k_a} \left(U_a + U_{F,a}\right)~.
    \end{align}
    Similarly, when $m=0$, the following equivalences hold,
    \begin{align}
        \mathcal{A}_i + \mathcal{A}_{f,i} \equiv 0~,
    \end{align}
    and we set $\check{U}_i \coloneqq 0$.
    Thus, (\ref{special relation}) can be derived by the definition of the BRST current, generators $U_i$ and $U_{F,i}$, and generators regarding each flavor symmetry.
    
    When (\ref{special relation}) can be constructed, the relations (\ref{extension of T_i}), (\ref{extension of W_i}), (\ref{transformation of XI, YI}) can be written as follows.
    \begin{align}
        \label{summations}
        \left\{
            \begin{aligned}
                T_i &\equiv \frac{1}{2}\check{U}_i^2 -\frac{1}{2}[M^{+}_i,M^{-}_i]\\
                W_i &\equiv \frac{1}{3}\sqrt{\frac{2}{3}} \left\{\frac{3}{2}\left(M^{+}_{i} \partial M^{-}_{i} +M^{-}_{i} \partial M^{+}_{i}\right)
                -\frac{3}{2}\check{U}_i [M^{+}_{i},M^{-}_{i}] 
                +\check{U}_i^3 +\partial^2 \check{U}_i \right\}\\
                D_j \hat{\mathcal{X}}_{I,J \setminus \{j\}} &\equiv \check{U}_j \hat{\mathcal{X}}_{I,J \setminus \{j\}} + (-1)^{f_j} M_{j}^{+} \hat{\mathcal{X}}_{I \cup \{j\},J \setminus \{j\}} \\
                D_j \hat{\mathcal{Y}}_{I,J \setminus \{j\}} &\equiv \check{U}_j \hat{\mathcal{Y}}_{I,J \setminus \{j\}} - (-1)^{f_j} M_{j}^{-} \hat{\mathcal{Y}}_{I \cup \{j\},J \setminus \{j\}}
            \end{aligned}
        \right.~.
    \end{align}
    The stress tensor can be also given by 
    \begin{align}
        \label{definition of the stress tensor after f.e.}
        T &\equiv \frac{1}{2}\sum_{i} \tilde{U}_i^2 +T_{BC} +T_{FC} -\frac{1}{2}\sum_{i}[M^{+}_i,M^{-}_i]~.
    \end{align}
    Thus, we conjecture that the algebraic relations between a bosonic VOA and a boundary VOA are similar to (\ref{summations}).
    In particular, the stress tensor $T$ in boundary VOAs can be always constructed by the generators $M^{\pm}_i, U_i$ and $U_{F,i}$.\footnote{When $U_i$ and $U_{F,i}$ do not exist in boundary VOAs, the stress tensor $T$ is only consisting of $M^{\pm}_i$.}
    Thus, we understand that the operations related to $\mathcal{D}_i$ and $\hat{\mathcal{A}}_i$ can be replaced by $M^{\pm}_i$ and $\check{U}_i$.
\bibliography{reference}
\bibliographystyle{ytphys}
\end{document}